\documentclass[a4paper,11pt]{article}

\usepackage{authblk}

\usepackage[
    left=25mm,
    right=25mm,
    top=25mm,
    bottom=25mm
]{geometry}
\usepackage[x11names]{xcolor}

\usepackage{hyperref}
\hypersetup{
    colorlinks, citecolor=DarkRed, linkcolor=DarkRed
}

\usepackage{graphicx}
\usepackage{tikz}
\usetikzlibrary{quantikz2}
\usetikzlibrary{arrows.meta}
\usetikzlibrary{decorations.pathmorphing}
\usetikzlibrary{shapes.geometric}
\usetikzlibrary{shapes.symbols}
\usepackage{lscape}
\usepackage{float}
\usepackage{subcaption}
\usepackage{wrapfig}

\usepackage{amssymb}
\usepackage{amsfonts}
\usepackage{bm}
\usepackage{youngtab}
\usepackage{mathdots}
\usepackage{mathtools}
\usepackage{relsize}
\usepackage{tikz-cd}

\newcommand{\cont}{\mathrm{cont}}
\newcommand{\AC}{\mathrm{AC}}
\newcommand{\Cells}{\mathrm{C}}
\newcommand{\RC}{\mathrm{RC}}

\newcommand{\U}{\mathrm{U}}

 \newcommand{\Span}{\mathrm{span}}
\newcommand{\sign}{\mathrm{sign}}
\newcommand{\Res}{\mathrm{Res}}
\newcommand{\Ind}{\mathrm{Ind}}

\newcommand{\F}{\mathfrak{F}}
\newcommand{\C}{\mathbb{C}}

\newcommand{\T}{\mathcal{T}}
\newcommand{\Rep}{\mathrm{R}}
\newcommand{\Path}{\mathcal{P}}
\newcommand{\pt}{\mathbin{\vdash}}

\newcommand{\N}{\mathcal{N}}
\ExplSyntaxOn
\NewDocumentCommand{\IndMap}{O{} O{}}{\ensuremath{\mathrm{U}\sb{\mathrm{Ind}\tl_if_blank:nF {#1} {, #1}}\tl_if_blank:nF {#2} {^{(#2)}}}}

\NewDocumentCommand{\IndMapDag}{O{} O{}}{\ensuremath{\mathrm{U}\sb{\mathrm{Ind}\tl_if_blank:nF {#1} {, #1}}\tl_if_blank:nTF {#2}
      {^{\dagger}}
      {^{(#2)\dagger}}}}

\NewDocumentCommand{\IndMapDash}{O{} O{}}{\ensuremath{\mathrm{U}\sb{\mathrm{Ind}\tl_if_blank:nF {#1} {, #1}}\tl_if_blank:nTF {#2}
      {'}
      {'^{(#2)}}}}
\ExplSyntaxOff

\usepackage{amsthm} \usepackage{thm-restate}
\usepackage{thmtools} \declaretheoremstyle[bodyfont=\normalfont]{plainstyle}
\declaretheorem[name=Definition, within=section]{definition}
\declaretheorem[name=Theorem, sibling=definition]{theorem}
\declaretheorem[name=Lemma, sibling=definition]{lemma}
\declaretheorem[name=Corollary, sibling=definition]{corollary}
 
\declaretheorem[name=Proposition, sibling=definition]{proposition}

\usepackage{array}
\usepackage{booktabs}

\usepackage{nomencl}
\makenomenclature

\usepackage{algorithm}
\usepackage{algpseudocode}

\usepackage[nameinlink,capitalize,noabbrev]{cleveref}

\usepackage[backend=biber,style=apa]{biblatex}
\tikzset{snake it/.style={decorate, decoration=snake}}

\newcommand{\defeq}{\vcentcolon=}

\newcolumntype{M}[1]{>{\centering\arraybackslash}m{#1}}

\makeatletter
\pgfdeclareshape{reptensor}{
\savedmacro{\reptensor@width}{\def\reptensor@width{1.1cm}}
    \savedmacro{\reptensor@rectheight}{\def\reptensor@rectheight{0.6cm}}
    \savedmacro{\reptensor@arcHeight}{\def\reptensor@arcHeight{0.5cm}}

\anchor{center}{
        \pgfpoint{0}{-0.2*\reptensor@rectheight}
    }
    \anchor{west}{
        \pgfpoint{-0.5*\reptensor@width}{-0.2*\reptensor@rectheight}
    }
    \anchor{east}{
        \pgfpoint{0.5*\reptensor@width}{-0.2*\reptensor@rectheight}
    }
    \anchor{south}{
        \pgfpoint{0}{-0.8*\reptensor@rectheight}
    }
    \anchor{text}{
        \pgfmathsetlength{\pgf@x}{-0.5*\wd\pgfnodeparttextbox}
        \pgfmathsetlength{\pgf@y}{-0.5*\ht\pgfnodeparttextbox}    
    }
    
\backgroundpath{
        \pgfpathmoveto{\pgfpoint{-0.5*\reptensor@width}{0.2*\reptensor@rectheight}}
        \pgfpathlineto{\pgfpoint{-0.5*\reptensor@width}{-0.8*\reptensor@rectheight}}
        \pgfpathlineto{\pgfpoint{0.5*\reptensor@width}{-0.8*\reptensor@rectheight}}
        \pgfpathlineto{\pgfpoint{0.5*\reptensor@width}{0.2*\reptensor@rectheight}}
        \pgfpatharc{0}{180}{0.5*\reptensor@width and 0.5*\reptensor@arcHeight + 0.2*\reptensor@rectheight}
        \pgfpathclose
    }
}
\makeatother

\usepackage{xstring} \usepackage{ifthen}  

\newcommand{\w}{0.4}
\newcommand{\p}{0.1}

\newcommand{\drawsquare}[3][]{\StrBefore{#2}{,}[\x]
  \StrBehind{#2}{,}[\y]
  \def\option{#1}
  \def\labeltext{#3}
  \pgfmathsetmacro{\xl}{\x - \w/2}
  \pgfmathsetmacro{\xr}{\x + \w/2}
  \pgfmathsetmacro{\yt}{\y + \w/2}
  \pgfmathsetmacro{\yb}{\y - \w/2}
  \pgfmathsetmacro{\xlp}{\x - \w/2 + \p}
  \pgfmathsetmacro{\xrp}{\x + \w/2 - \p}
  \pgfmathsetmacro{\ytp}{\y + \w/2 - \p}
  \pgfmathsetmacro{\ybp}{\y - \w/2 + \p}

  \ifthenelse{\equal{\option}{dashed}}{\draw[dashed] (\xl,\yt) rectangle (\xr,\yb);
  }{\draw (\xl,\yt) rectangle (\xr,\yb);
  }

  \ifthenelse{\equal{\option}{cross}}{\draw (\xl,\yt) -- (\xr,\yb);
    \draw (\xr,\yt) -- (\xl,\yb);
  }{}

  \ifthenelse{\not\equal{\labeltext}{}}{\ifthenelse{\equal{\option}{cross}}{\draw[fill=white,draw=white] (\xlp,\ytp) rectangle (\xrp,\ybp);
    }{}
    \node at (\x,\y) {\(\labeltext\)};
  }{}
}

\title{Quantum Fourier transform for the symmetric group}

\author[1,2]{Carli Bruinsma}
\author[1,2,3]{Dmitry Grinko}
\author[1,2,3]{Maris Ozols}

\affil[1]{QuSoft, Amsterdam, The Netherlands}
\affil[2]{Institute for Logic, Language and Computation, University of Amsterdam, The Netherlands}
\affil[3]{Korteweg-de Vries Institute for Mathematics, University of Amsterdam, The Netherlands}
\affil[ ]{Email addresses: \href{mailto: carlibruinsma@gmail.com}{carlibruinsma@gmail.com},
\href{mailto: d.grinko@uva.nl}{d.grinko@uva.nl}, \href{mailto: marozols@gmail.com}{marozols@gmail.com}}

\date{}

\begin{document}
\definecolor{DarkRed}{rgb}{0.6471, 0.1098, 0.1882}
\maketitle

\begin{abstract}
    Quantum Fourier transforms (QFT) for general groups were recognized to be fundamental already early in the field. A canonical example of non-abelian QFT for the symmetric group was outlined by \textcite{Beals}. Later, a more detailed analysis of this algorithm was carried out by \textcite{SnQFT_Speedup}.
    In this paper, we revisit that construction.
    After a careful analysis, we revise their gate complexity to $\widetilde{\mathcal{O}}(n^{3.5})$ and circuit depth to $\widetilde{\mathcal{O}}(n^3)$.
Moreover, we observe that their construction is not optimal in the choice of transversal elements, so we propose simpler realization of the symmetric group QFT.
\end{abstract}

\tableofcontents

\section{Introduction}\label{chap: introduction}

Before diving into the main subject of this work, let us appreciate the quantum Fourier transform (QFT) as a fundamental tool for quantum algorithm design.
Famous examples include the algorithms of \textcite{Deutsch1985,DeutschJozsa1992,Shor1994,BernsteinVazirani1997}, where the abelian QFT plays a key role.
Another important, more general result about abelian QFTs was found by \textcite{Kitaev1995}, who implements the QFT of \emph{any} abelian group and uses it to solve the abelian hidden subgroup problem (HSP).
Even the humble Hadamard gate can be interpreted as the QFT for $\mathbb{Z}_2$.

For non-abelian QFTs, we know that they often fail to solve the HSP efficiently, as discussed by \textcite{RevModPhys.82.1}.
Even for the symmetric group, which is the focus of this work, \textcite{FourierNotForHSPSn2005} show that strong Fourier sampling does \emph{not} efficiently solve the HSP, despite previous hopeful expectations.
Also, no quantum algorithms for the related graph isomorphism problem have beaten the classical quasipolynomial-time algorithm by \textcite{10.1145/2897518.2897542} so far.
However, the QFT for the symmetric group $S_n$ still has many other important applications, which motivate this work.

One of them is computing representation-theoretic multiplicities.
Recently, \textcite{PRXQuantum.5.010329} used it to approximate Kronecker coefficients. However, they use the existence of an efficient QFT for $S_n$ more as an argument to discuss the complexity class of finding Kronecker coefficients.
\textcite{paperMultiplicities} use the QFT over $S_n$ to approximate Kostka numbers, Littlewood--Richardson coefficients, Kronecker coefficients, and certain plethysm coefficients efficiently.
While they argue that we cannot expect superpolynomial speedups for these problems, there is likely a large polynomial gap between the quantum and classical algorithms for plethysm coefficients, and possibly for Kronecker coefficients, in certain regimes.
As the QFT for $S_n$ plays such a central role in these algorithms, quantifying the polynomial gaps between the quantum and classical algorithms requires an implementation of the $S_n$-QFT that is as efficient as possible.

Another important application of the quantum Fourier transform over the symmetric group is in the quantum Schur transform algorithm proposed by \textcite{SchurTransformKrovi}.
\textcite{newschur} recently showed that when the local dimension $d$ is much larger than the number of copies $n$,
Krovi's approach outperforms the original Schur transform \cite{OGSchur}.
Moreover, they showed that the complexity of the $S_n$-QFT is currently the bottleneck for the overall Schur transform’s asymptotic gate count.

This leads to our main objective of this work: finding a low-level implementation of the $S_n$-QFT which is as efficient as possible.
A high-level idea on how to implement the QFT over the symmetric group $S_n$ was first proposed by \textcite{Beals} and later generalized to other groups by \textcite{GenQFT}.
This latter work finds the gate count for the $S_n$-QFT as $\widetilde{\mathcal{O}}(n^4)$.
The first explicit circuit for the $S_n$-QFT was derived by \textcite{KS2013} and uses $\widetilde{\mathcal{O}}(n^4)$ elementary operations, too.
At a high level, this circuit has the same inductive structure as the algorithm by \textcite{Beals}, but at a lower level, the work introduces a second induction relation which is completely new.
Three years later, \textcite{SnQFT_Speedup}
improve the circuit's gate count, claiming to need $\widetilde{\mathcal{O}}(n^3)$ elementary operations only.
However, some details that are key to their complexity claim have not been fully worked out in their paper.
In particular, their complexity analysis treats a certain multi-qubit gate as implementable in $\widetilde{\mathcal{O}}(n)$ elementary operations, while their proposed implementation method does not result in a linear-size decomposition into elementary one- and two-qubit gates when worked out in detail.
This seems to be a very non-trivial task.

To achieve our goal, we will study both approaches for implementing the $S_n$-QFT.
For the approach related to the work of \textcite{SnQFT_Speedup}, we derive how the second induction relation follows from Mackey theory, which is an application of one of the techniques for constructing QFTs described by \cite{cookbook}.
As a result, we not only correct the complexity claim to a more usual elementary-gate model, but also simplify the circuit and fix a few mistakes in the proposed implementation.
However, picking a suitable input encoding in the approach of \textcite{Beals} as generalized by \textcite{GenQFT} allows for a significant simplification, which surprisingly gives us the $S_n$-QFT algorithm with the best known asymptotic gate complexities compared to the literature. 

\subsection{Results and their relation to previous work}

\cref{def: Fourier transform} defines the unitary $\mathfrak{F}_n$ implementing the Fourier transform for the symmetric group.
\cref{thm: the big complexity theorem Beals,thm: the big complexity theorem KS} state the main results of this work, which we prove in \cref{sec: complexity}.
In both theorems, the complexities scale as $\mathrm{polylog}(1/\epsilon)$ with diamond norm error $\epsilon$.

\begin{restatable}[Beals algorithm]{theorem}{thmBealsResult}\label{thm: the big complexity theorem Beals}
For $n \geq 1$, the algorithm by \textcite{Beals} leads to a circuit implementing the quantum Fourier transform for the symmetric group $S_n$ with $\widetilde{\mathcal{O}}(n^{3})$ depth, $\widetilde{\mathcal{O}}(n^{3})$ one- and two-qubit gates and $\widetilde{\mathcal{O}}(n^{1.5})$ qubits.
\end{restatable}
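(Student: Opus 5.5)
We build $\F_n$ inductively along the subgroup chain $S_1 \subset S_2 \subset \dots \subset S_n$, following \textcite{Beals}. We encode a permutation $\sigma \in S_n$ as a product of coset representatives $\sigma = t_n t_{n-1}\cdots t_2$, where $t_k$ ranges over a transversal of $S_{k-1}$ in $S_k$. For the transversal we choose the adjacent-transposition cycles $t_k^{(j)} = (j\;j{+}1\;\cdots\;k)$, written as $s_j s_{j+1}\cdots s_{k-1}$. Here $s_i = (i\;i{+}1)$ acts in Young's orthogonal form on Gelfand--Tsetlin (Young tableau) basis vectors, and each $s_i$ is a $2\times 2$ rotation between tableaux that differ by swapping $i$ and $i+1$.

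The recursion is
\[
\F_n = \Bigl(\textstyle\sum_{j}\, \rho(t_n^{(j)}) \otimes \cdots\Bigr)\,(\F_{n-1}\otimes I),
\]
that is, $\F_n$ is obtained from $\F_{n-1}$ applied to the $S_{n-1}$ part, followed by a unitary that does two things. First, it prepares the coset register $j\in[n]$ in a suitable Fourier-like superposition, a uniform superposition over $j$ weighted by $\sqrt{d_\lambda/(n\,d_\mu)}$, using the branching rule. Second, it applies $\rho_\lambda(t_n^{(j)})$ controlled on $j$.

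The key observation is that controlled on $j$, the product $s_j\cdots s_{n-1}$ can be realized as a fixed sequence of $n-1$ steps, where step $i$ applies $s_i$ conditioned on $j\le i$. Each step is a controlled Young-orthogonal-form rotation. Its angle depends only on the axial distance $c(i+1)-c(i)$ of the current tableau.

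\textbf{Step 1: cost of one gate $s_i$.} We encode each tableau by its content vector, the sequence of box contents, which takes $\widetilde{\mathcal{O}}(n)$ qubits. Applying $s_i$ then amounts to the following:
\begin{itemize}
\item read $c_i$ and $c_{i+1}$;
\item compute $1/(c_{i+1}-c_i)$ and the rotation angle to $\mathrm{polylog}(1/\epsilon)$ precision with reversible arithmetic;
\item perform a controlled rotation that coherently swaps $c_i \leftrightarrow c_{i+1}$;
\item uncompute.
\end{itemize}
This costs $\mathrm{polylog}$ gates, plus $\widetilde{\mathcal{O}}(1)$ addressing if the contents sit at fixed register locations. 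One level of the recursion therefore costs $\widetilde{\mathcal{O}}(n)$ gates.

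\textbf{Step 2: the coset-register preparation.} This step needs the dimension ratios $d_\mu/d_\lambda$. By the hook length formula these are rational functions of the contents of addable and removable corners. They can be computed coherently in $\widetilde{\mathcal{O}}(\sqrt n)$ or $\widetilde{\mathcal{O}}(n)$ operations, since a partition has $\mathcal{O}(\sqrt n)$ corners. The superposition over $j$ is then created by standard state-preparation rotations.

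\textbf{Step 3: counting.} The main obstacle is that $t_n$ acts on the full $\lambda$-irrep, not just locally, so the transversal multiplication must be done for all $n-1$ rather than $\mathcal{O}(1)$ generators. I expect the unrolled recursion to cost $\sum_k \widetilde{\mathcal{O}}(k^2)=\widetilde{\mathcal{O}}(n^3)$: level $k$ applies up to $k$ controlled $s_i$ gates, and each of those costs $\widetilde{\mathcal{O}}(k)$ once the controls $j\le i$ and the content lookups are counted. Because these gates are sequential, the depth is also $\widetilde{\mathcal{O}}(n^3)$.

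\textbf{Step 4: qubit count.} The $\widetilde{\mathcal{O}}(n^{1.5})$ qubits come from the registers of the representation labels. A partition with its corners can be stored in $\widetilde{\mathcal{O}}(\sqrt n)$ bits. For the tableau we would try a path encoding of $\sqrt n$-bit corner indices across $n$ levels, which gives $\mathcal{O}(n\sqrt n \log n)$ qubits. The arithmetic ancillas are reused across levels.

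Finally, the per-gate errors $\epsilon/\mathrm{poly}(n)$ add up by the triangle inequality in diamond norm, which gives the stated polylog dependence on $1/\epsilon$.
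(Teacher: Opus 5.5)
\textbf{The induction step you describe does not implement $\mathrm{U}_{\mathrm{Ind}}^{(n)}$, and the step-count that follows is not grounded.}

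Your outer recursion matches the paper: apply $\F_{n-1}$ to the subgroup part, then do one induction step per level, with $g=t_n\cdots t_2$. The induction step, however, is exactly where Beals' idea is needed, and it is missing.

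The coset register is not something you can ``prepare''. It holds the input index $j$ of $g=t_jh$. The required map $\ket{j}\ket{P}\mapsto\sum_{\mu\in\N^+(\lambda)}\sqrt{d_\mu/(n d_\lambda)}\,\Rep_\mu(t_j)\ket{P\to\mu}$ must consume that register. The amplitudes $\sqrt{d_\mu/(nd_\lambda)}$ sit on the successor label $\mu$, not on $j$.

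Suppose you create the successor superposition in a fresh register (this is $\mathrm{U}_\uparrow$) and then apply $\Rep_\mu(t_j)$ controlled on $j$. You end with $\ket{j}\otimes\sum_\mu\sqrt{d_\mu/(nd_\lambda)}\,\Rep_\mu(t_j)\ket{P\to\mu}$. Here $j$ is still entangled with the output and is not a classical function of the output basis state, so reversible arithmetic cannot uncompute it. The result is not $\F_n$.

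The missing ingredient is Beals' loop $\mathrm{U}_{\mathrm{Ind}}=\prod_{k=1}^{n}\mathrm{R}(t_k)\,\mathrm{U}_\uparrow\,\mathrm{V}_k\,\mathrm{U}_\uparrow^\dagger\,\mathrm{R}^\dagger(t_k)$, with $k=1$ acting first. In it, $\mathrm{V}_k$ exchanges $\ket{t_k}\leftrightarrow\ket{\star}$ only when the current irrep label still equals the incoming $\lambda$. The transversal register therefore ends in $\ket{\star}$ and can be discarded.

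This loop is also what produces the cubic count. By your own Step~1, each $s_i$ costs $\widetilde{\mathcal{O}}(1)$, and the control $j\le i$ is a $\widetilde{\mathcal{O}}(1)$ comparison. Your circuit would then cost $\widetilde{\mathcal{O}}(k)$ per level and $\widetilde{\mathcal{O}}(n^2)$ overall, below every known bound, which signals that something is missing. The extra factor you insert (``each controlled $s_i$ costs $\widetilde{\mathcal{O}}(k)$'') has no justification.

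In the paper, level $m$ costs $\widetilde{\mathcal{O}}(m^2)$ because the loop has $m$ sequential rounds. Each round contains:
\begin{itemize}
\item a $\mathrm{U}_\uparrow$, costing $\widetilde{\mathcal{O}}(m)$ gates and $\widetilde{\mathcal{O}}(\sqrt m)$ depth;
\item a $\mathrm{V}_k$, costing $\mathcal{O}(m)$ gates, since it compares two $\mathcal{O}(m)$-qubit shape registers;
\item the contracted product $\mathrm{R}^\dagger(t_{k+1})\mathrm{R}(t_k)=\mathrm{R}(t_{k+1}^{-1}t_k)$. For $t_k=(1,\dots,m)^k$ this is a fixed $m$-cycle, i.e.\ $m-1$ gates $\mathrm{R}(\sigma_i)$ of cost $\widetilde{\mathcal{O}}(1)$ each.
\end{itemize}
Your transversal $(j,\dots,m)$ would also work here, since $t_{k+1}^{-1}t_k=(k,m)$ is a product of $\mathcal{O}(m)$ generators. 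Adding $\widetilde{\mathcal{O}}(m)$ for $T_m$ and summing over $m$ gives $\widetilde{\mathcal{O}}(n^3)$ size and depth.

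Your qubit bound is a valid, if loose, upper bound; the paper's registers are $\widetilde{\mathcal{O}}(n)$. Your content-vector implementation of a single $s_i$ is also fine.
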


For the Beals algorithm, we worked out the circuit in detail. 
By fixing a suitable choice of transversals for a chain of subgroups of the symmetric group, which is used to encode group elements, the resulting gate cancellations reduce the gate count from $\widetilde{\mathcal{O}}(n^4)$---as analyzed by \cite{GenQFT}---to $\widetilde{\mathcal{O}}(n^3)$.

\begin{restatable}[Kawano--Sekigawa algorithm]{theorem}{thmKSResult}
\label{thm: the big complexity theorem KS}
For $n \geq 1$, a simplified and corrected algorithm of \textcite{SnQFT_Speedup} gives a quantum circuit implementing the quantum Fourier transform for the symmetric group $S_n$ with depth of $\widetilde{\mathcal{O}}(n^{3})$, $\widetilde{\mathcal{O}}(n^{7/2})$ one- and two-qubit gates and $\widetilde{\mathcal{O}}(n^{3/2})$ qubits.
\end{restatable}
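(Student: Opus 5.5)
We build $\mathfrak{F}_n$ inductively along the chain $S_1 \subset S_2 \subset \dots \subset S_n$, following the Beals/Moore--Rockmore--Russell recursion. Write each $g \in S_n$ uniquely as $g = t_k \cdot h$, where $h \in S_{n-1}$ and $t_k$ is a coset representative. Then
\[
  \mathfrak{F}_n = \IndMap[n] \,(\mathfrak{F}_{n-1} \otimes I),
\]
where $\IndMap[n]$ is the unitary that maps the direct sum of $S_{n-1}$-irreps, one copy for each coset, to the $S_n$-irreps via the branching rule. Unrolling the recursion gives a product of $n$ induction steps, so the theorem reduces to showing that the step $S_{m-1}\to S_m$ costs $\widetilde{\mathcal{O}}(m^{5/2})$ gates, $\widetilde{\mathcal{O}}(m^2)$ depth and $\widetilde{\mathcal{O}}(m^{3/2})$ qubits. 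Summing over $m \le n$ then gives $\widetilde{\mathcal{O}}(n^{7/2})$ gates and $\widetilde{\mathcal{O}}(n^3)$ depth, and since ancillas are reused across steps the width stays at $\widetilde{\mathcal{O}}(n^{3/2})$.

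Irreps are encoded by Young–Yamanouchi bases, i.e.\ by standard tableaux written as paths in the Young lattice. This takes $\mathcal{O}(n\log n)$ qubits for the path, plus registers for $\lambda$ and the transversal index. In this encoding the matrices of adjacent transpositions $(j,j+1)$ are block $2\times 2$ rotations whose angles depend only on axial distances. Each such rotation can therefore be applied by (i) computing the contents of $j$ and $j+1$ reversibly from the path register into an ancilla, (ii) applying a controlled rotation synthesised to precision $\epsilon/\mathrm{poly}(n)$ with $\mathrm{polylog}$ overhead (this is where the $\mathrm{polylog}(1/\epsilon)$ diamond-norm dependence enters), and (iii) uncomputing the ancilla.

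Following Kawano--Sekigawa, we use Mackey theory to decompose the induction step further, via the double cosets $S_{m-1}\backslash S_m / S_{m-1}$ and the intermediate subgroup $S_{m-2}$. This gives a second recursion: $\IndMap[m]$ factors as a sequence of $m$ controlled operations, each combining the coset transposition $(k,k+1)$ with a "relabeling'' of which tableau branch carries $m$. We correct their circuit by implementing these operations literally as multi-controlled adjacent-transposition rotations controlled on the coset register.

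The main obstacle is the cost of one such multi-qubit gate. Kawano--Sekigawa charge $\widetilde{\mathcal{O}}(m)$ for it, but the content computation and the control over the $\mathcal{O}(m\log m)$-qubit path register cannot all be done in linear size. Our plan is to split the path register into $\sqrt{m}$ blocks of $\sqrt{m}$ steps each. We precompute block-wise partial shapes, namely $\sqrt m$ snapshots of the shape, each of $\widetilde{\mathcal{O}}(m)$ bits. This uses $\widetilde{\mathcal{O}}(m^{3/2})$ ancilla qubits, which is exactly where the $n^{3/2}$ width comes from. With the snapshots in place, any single content lookup costs $\widetilde{\mathcal{O}}(\sqrt{m})$ gates at polylog depth. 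Over the $m$ controlled rotations of one induction step this totals $\widetilde{\mathcal{O}}(m^{3/2})$ gates, while refreshing the snapshots costs $\widetilde{\mathcal{O}}(m^{5/2})$ gates per step and depth $\widetilde{\mathcal{O}}(m)$ per rotation layer, i.e.\ $\widetilde{\mathcal{O}}(m^2)$ per step. The step needing the most care is verifying that the snapshot update after each transposition touches only $\mathcal{O}(1)$ blocks. If it touched more, the gate count would blow up to $n^4$. I would check this first by showing that a transposition $(j,j+1)$ changes the shape only at a single step of the path.
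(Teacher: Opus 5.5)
Your top-level reduction matches the paper's: sum the level-$m$ costs of $\IndMap[][m]$ over $m\le n$, with each step costing $\widetilde{\mathcal{O}}(m^{5/2})$ gates, $\widetilde{\mathcal{O}}(m^2)$ depth and $\widetilde{\mathcal{O}}(m^{3/2})$ qubits. The gap is in how you justify that per-step cost. The step you describe is not what the Mackey recursion produces, and the gate that carries the cost never appears in your plan.

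The second induction reads $\IndMap[\lambda][m] = \mathrm{A}\,(\mathrm{I}_{d_\lambda}\oplus\bigoplus_{\kappa\in\N^-(\lambda)}\IndMap[\kappa][m-1])\,(\mathrm{M}_m\otimes\mathrm{I}_{d_\lambda})$. It unrolls into $m$ embedding operations $\mathrm{A}'_1,\dots,\mathrm{A}'_{m-1},\mathrm{A}''_m$, not into $m$ controlled adjacent-transposition rotations. The nontrivial block $\mathrm{A}_\lambda$ is a dense orthogonal matrix on the multiplicity space, of size $|\AC(\lambda)|=O(\sqrt m)$, mapping $\{\star\}\cup\RC(\lambda)$ to $\AC(\lambda)$. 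Its entries are $\sqrt{d_\mu/(m d_\lambda)}$ and $\sqrt{(m-1)d_\mu d_\kappa/(m d_\lambda^2)}\,/\,(\cont(\mu\setminus\lambda)-\cont(\lambda\setminus\kappa))$.

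Already the $t=e$ column of $\IndMap[\lambda][m]$, namely $\sum_{\mu}\sqrt{d_\mu/(m d_\lambda)}\,|P\to\mu\rangle$, is a $\lambda$-dependent superposition over different endpoint shapes. The rotations $\Rep(\sigma_k)$ are block-diagonal in the endpoint shape, so they cannot create it, and nothing else in your plan does. Since $\lambda$ ranges over $e^{\Theta(\sqrt m)}$ partitions, the $O(m)$ Givens angles of $\mathrm{A}_\lambda$ cannot be tabulated classically and must be computed coherently. Each angle is followed by an $O(\sqrt m)$-arithmetic update of two columns of the stored matrix, with old values kept as garbage for uncomputation. That gives $\widetilde{\mathcal{O}}(m^{3/2})$ gates, $\widetilde{\mathcal{O}}(m)$ depth and $\widetilde{\mathcal{O}}(m^{3/2})$ qubits per embedding. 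Hence $\IndMap[][m]$ costs $\widetilde{\mathcal{O}}(m^{5/2})$ gates and $\widetilde{\mathcal{O}}(m^2)$ depth. This coherent decomposition is also exactly the overhead Kawano and Sekigawa left out.

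Your snapshot mechanism solves a non-problem, and its accounting does not support the numbers you attach to it. Store the column index of each entry next to the Yamanouchi word, as the paper does; this only doubles an $\mathcal{O}(n\log n)$ register. Then every content is $x-y$, available in $\widetilde{\mathcal{O}}(1)$. $\Rep(\sigma_i)$ then costs $\widetilde{\mathcal{O}}(1)$: one controlled $\mathrm{R}_y$ rotation plus controlled swaps of the $i$-th and $(i{+}1)$-th register pairs.

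The arithmetic is also inconsistent. If each refresh touches $O(1)$ blocks of $\widetilde{\mathcal{O}}(m)$ bits, $m$ refreshes cost $\widetilde{\mathcal{O}}(m^2)$ per step, not $\widetilde{\mathcal{O}}(m^{5/2})$. You get $m^{5/2}$ only if every refresh rewrites all $\sqrt m$ snapshots, and then the total is $n^{7/2}$, not the $n^4$ you say you are guarding against. So your bounds are fitted to the target rather than derived, and the $\widetilde{\mathcal{O}}(n^{3/2})$ width is attributed to the wrong object. To repair the argument, replace the snapshot analysis with an implementation and cost analysis of the embedding operation $\mathrm{A}_\lambda$.
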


For the second implementation, note that the complexity seems slightly worse than the gate count of $\widetilde{\mathcal{O}}(n^3)$ claimed by \textcite{SnQFT_Speedup}.
This is the result of a difference in gate count model.
When we count one- and two-qubit gates, rather than the gates considered elementary by \textcite{SnQFT_Speedup}, the circuit uses $\widetilde{\mathcal{O}}(n^{7/2})$ gates. 

While both theorems focus on the complexities of two different $S_n$-QFT algorithms, their proofs involve a low-level decomposition of the algorithm to the level of one- and two-qubit gates and coherent arithmetic operations, which we consider implementable using a polylogarithmic number of gates in the input size.
This paves the way towards running the QFT on an actual quantum computer.

\section{Preliminaries and notation}\label{sec: preliminaries}
This section briefly introduces the background concepts central to this work.

To shorten notation and save horizontal space, we will use the following notation for matrix elements of a matrix $X$:
\begin{equation}
    [X]^{P}_{Q} := \bra{P} X \ket{Q}.
    \label{eq: matrix entry convention}
\end{equation}
This notation replaces the usual braket notation in this paper.

\subsection{The symmetric group and its irreducible representations}
For $n \in \mathbb{N}$, let $[n] := \{1,\dotsc,n\}$.
The \emph{symmetric group} $S_n$ consists of all bijections $\pi : [n] \to [n]$, with composition of functions as group operation.
We will use \emph{cycle notation} to denote group elements.
We denote a cycle $c \in S_n$ of length $k \leq n$ as $c = (c_0, \dots, c_{k-1})$, where all entries $c_i$ are disjoint elements of $[n]$,
and $c$ acts on $c_i \in [n]$ by $c(c_i) = c_{i+1 \bmod k}$, and trivially on all elements of $[n]$ that do not appear as entries in the cycle. 
Disjoint cycles commute, and any symmetric group element can be written as a product of disjoint cycles. 

Denote by $\widehat{S}_n$ a set indexing the irreducible representations (irreps) of $S_n$ over $\C$.
It is well-known that $\widehat{S}_n$ can be identified with \emph{integer partitions}.

\begin{definition}[Integer partition]
    For $n \in \mathbb{N}$, an \emph{integer partition} $\lambda$ is a weakly decreasing sequence of natural numbers summing to $n$.
    We denote this by $\lambda \pt n$.
\end{definition}

Hence, $\widehat{S}_n \defeq \{\lambda : \lambda \pt n\}$.
We can visualize an integer partition $\lambda = (\lambda_1, \dots, \lambda_k) \pt n$ by a \emph{Young diagram} consisting of $n$ \emph{boxes} or \emph{cells}, with $\lambda_1$ boxes in the first row, $\lambda_2$ boxes in the second row etc., and $\lambda_k$ boxes in the last row, where all rows are aligned to the left (see \cref{fig: bratteli diagram} for some examples).

Any representation of a finite group can be made unitary by an appropriate basis change.
For $\lambda \in \widehat{S}_n$, denote by $\Rep_\lambda : S_n \to \U(V_\lambda)$ the unitary irreducible representation of $S_n$ labeled by $\lambda$, where $V_\lambda$ is the corresponding simple $\C[S_n]$-module of dimension $d_\lambda := \dim V_\lambda$.
We will often refer to $V_\lambda$ or just $\lambda$ as ``representation'', with the map $\Rep_\lambda$ being implicit.

\subsection{Chain of subgroups and multiplicity-free restrictions for \texorpdfstring{$S_n$}{Sn}}\label{subsec: group setup}
In this work, we will extensively use the following chain of subgroups for $S_n$:
\begin{equation}\label{eq: Sn chain of subgroups}
    S_1 \subset S_2 \subset \dots \subset S_n.
\end{equation}
For $1 < m < n$, the subgroup $S_m \subset S_n$ stabilizes $\{m+1, \dots, n\} \subset [n]$.

An important property of this chain of subgroups is that all restrictions of irreducible representations along the chain are \emph{multiplicity-free}.
Namely, if we \emph{restrict} an $S_m$ irrep $V_\lambda$ to $S_{m-1}$, which we denote by $\Res^{S_m}_{S_{m-1}} V_\lambda$, it decomposes into a direct sum of $S_{m-1}$-irreps, each appearing with multiplicity at most one. 
Hence, $\lambda$ has a set of \emph{predecessors} $\N^{-}(\lambda) \subseteq \widehat{S}_{m-1}$ such that
\begin{equation}
    \Res^{S_m}_{S_{m-1}} V_\lambda \cong \bigoplus_{\kappa \in \N^-(\lambda)} V_\kappa.
    \label{eq: restriction}
\end{equation}
Similarly, we define the set $\N^+(\lambda) \subseteq \widehat{S}_{m+1}$ of \emph{successors} of $\lambda$ to contain all $S_{m+1}$-irreps appearing in the \emph{induced representation} $\Ind^{S_{m+1}}_{S_m}V_\lambda$. 
By Frobenius reciprocity, $\mu \in \N^+(\lambda)$ if and only if $\lambda \in \N^-(\mu)$.
Inducing an irrep one step along the chain also yields a multiplicity-free decomposition into irreps.

The \emph{branching rule} for $\lambda \in \widehat{S}_{m}$ gives the following explicit description of predecessors:
\begin{equation}
    \mathcal{N}^-(\lambda) = \{\kappa \in \widehat{S}_{m-1} \mid \lambda - \kappa \text{ is a standard basis vector}\}
    \label{eq: N-}
\end{equation}
where standard basis vectors are of the form $(0,\dotsc,0,1,0,\dotsc,0)$.
As Young diagrams, we say that $\kappa \in \mathcal{N}^-(\lambda)$ if the Young diagram $\kappa$ can be obtained by \emph{removing a box} from $\lambda$.
Similarly, $\mu \in \mathcal{N}^+(\lambda)$ if $\mu$ can be obtained by \emph{adding a box} to $\lambda$.

\subsection{Bratteli diagram for \texorpdfstring{$S_n$}{Sn}}\label{subsec: Bratteli diagram}
The \emph{Bratteli diagram} for the subgroup chain $S_1 \subset S_2 \subset \dots \subset S_n$ of $S_n$ is a directed graph with nodes $\bigcup_{m = 1}^{n}\widehat{S}_m$ arranged in $n$ \emph{levels}, with $\widehat{S}_m$ belonging to level $m$.
There is an edge $(\kappa, \lambda)$ from $\kappa \in \widehat{S}_{m-1}$ to $\lambda \in \widehat{S}_{m}$ if $\kappa$ appears in the restriction of $\lambda$ to $S_{m-1}$, or equivalently, if $\lambda$ appears in the induction of $\kappa$ to $S_{m}$.
\cref{fig: bratteli diagram} shows the Bratteli diagram for $S_5$, with integer partitions visualized using Young diagrams.

\begin{figure}[h]
    \centering
    \includegraphics[]{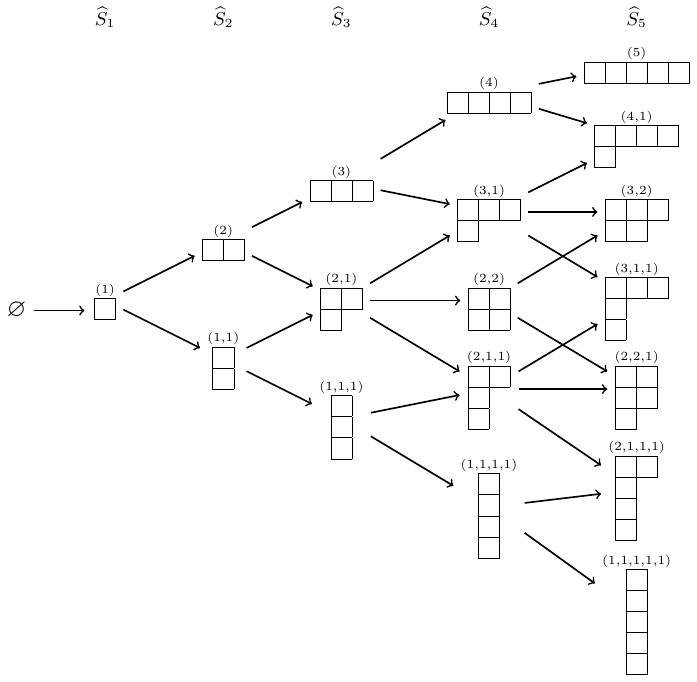}
    \caption{The Bratteli diagram for the symmetric group $S_5$. The nodes are shown as both integer partitions and their corresponding Young diagrams.}
    \label{fig: bratteli diagram}
\end{figure}

As a consequence of the multiplicity-free restriction along the chain of subgroups, the set of directed paths in the Bratteli diagram from $\varnothing \in \widehat{S}_0$ to any node $\lambda \in \widehat{S}_m$ indexes a basis for $V_\lambda$.
More formally, we define the set of all \emph{paths} to $\lambda \in \widehat{S}_m$ as
\begin{equation}
    \Path(\lambda) := \{(P_0 = \varnothing) \to P_1 \to P_2 \to \dots \to (P_m = \lambda) \mid \forall k \in [m] : P_k \in \widehat{S}_k, P_{k-1} \in \mathcal{N}^-(P_k)\}.
\end{equation}
We say that $\Path(\lambda)$ indexes the \emph{Gelfand--Tsetlin} basis of $V_\lambda$ with respect to the subgroup chain~\eqref{eq: Sn chain of subgroups} and write
\begin{equation}
    V_\lambda = \Span \{ \ket{P} : P \in \Path(\lambda)\}.
    \label{eq: P spans V}
\end{equation}
Note that $d_\lambda = \dim V_\lambda = |\Path(\lambda)|$.

To denote the extension of a path $P \in \Path(\lambda)$ to a node $\mu \in \mathcal{N}^+(\lambda)$, we write $P \to \mu$. 
If we shorten the path $P$ by one node, we write $P'$.
In particular, $P = P' \to \lambda$.

\subsection{Young--Yamanouchi basis for \texorpdfstring{$S_n$}{Sn}}\label{subsec: Young's orthogonal form}
Let $\lambda \in \widehat{S}_{n}$. 
Just like Young diagrams visualize integer partitions, the paths in $\Path(\lambda)$ indexing a basis of the simple $\C[S_n]$-module $V_\lambda$ can be identified with the \emph{standard Young tableaux} of shape $\lambda$.
\begin{definition}[Standard Young tableau]\label{def: SYT}
    For an integer partition $\lambda \pt n$, a \emph{standard Young tableau (SYT)} of shape $\lambda$ is a filling of the boxes of the Young diagram $\lambda$ with numbers $1$ to $n$, such that the entries strictly increase from left to right along rows and from top to bottom along columns.
\end{definition}
If we follow a path $P \in \Path(\lambda)$ from $\varnothing$ to $\lambda$ and add a box at each step accordingly, we get a Young diagram of shape $\lambda$ with $n$ boxes.
We can obtain the SYT corresponding to $P$ by inserting entry $m$ into the box added at the $m$-th step.
This gives a bijection between paths $\Path(\lambda)$ and SYTs of shape $\lambda$.

Recall that $S_n$ is generated by adjacent transpositions $\sigma_i \defeq (i, i{+}1)$ where $i \in [n-1]$.
\emph{Young's orthogonal form} \parencite[Section 3.4]{JamesKerber} defines the irreducible representations $\Rep_\lambda$ with respect to the Gelfand--Tsetlin basis indexed by paths in the Bratteli diagram.
For path $P \in \Path(\lambda)$, $\Rep_\lambda(\sigma_i)$ acts on basis vector $\ket{P}$ as
\begin{equation}\label{eq: young's orthogonal form}
    \Rep_\lambda(\sigma_i)\ket{P} = \frac{1}{r_P(i)}\ket{P} + \sqrt{1 - \frac{1}{r_P(i)^2}}\ket{\sigma_i P}.
\end{equation}
Here, $r_P(i)$ denotes the \emph{Manhattan distance} from the box with entry $i$ to the box with entry $i+1$ in the SYT of $P$, where going up or right counts as positive distance and otherwise as negative distance,
and $\sigma_i P \in \Path(\lambda)$ denotes the path corresponding to the SYT of $P$ with entries $i$ and $i{+}1$ swapped. 
By \cref{lemma: absolute manhattan distance of 1}, such a path exists if and only if $|r_P(i)| \neq 1$.

Young's orthogonal form is \emph{subgroup-adapted}, meaning that $S_n$ irreps evaluated on elements from subgroups earlier in the chain~\eqref{eq: Sn chain of subgroups} are block-diagonal.
More precisely, for any $\lambda \in \widehat{S}_n$, $P, Q \in \Path(\lambda)$ and subgroup element $\pi \in S_{n-1}$, the matrix entries of $\Rep_{\lambda}(\pi)$ satisfy
\begin{equation}\label{eq: subgroup adapted basis restriction version}
    [\Rep_{\lambda}(\pi)]^{P}_{Q} = \begin{cases}[\Rep_{P_{n-1}}(\pi)]^{P'}_{Q'} &\text{if } P_{n-1} = Q_{n-1},\\
    0 &\text{otherwise}.\end{cases} 
\end{equation}
Conversely, consider $\kappa \in \widehat{S}_{n{-}1}$ and $\lambda \in \widehat{S}_n$ such that $\kappa \in \N^-(\lambda)$. 
Then for any paths $P, Q \in \Path(\kappa)$ and $\pi \in S_{n-1}$ it holds that 
\begin{equation}\label{eq: subgroup adapted basis induction version}
    [\Rep_{\kappa}(\pi)]^{P}_{Q} = [\Rep_{\lambda}(\pi)]^{P \to \lambda}_{Q \to \lambda}.
\end{equation}
In terms of modules, subgroup-adapted means that \cref{eq: restriction} holds with equality:
\begin{equation}
    \Res^{S_{n}}_{S_{n-1}} V_\lambda = \bigoplus_{\kappa \in \N^-(\lambda)} V_\kappa.
\end{equation}

\subsection{Addable and removable cells, Manhattan distances and contents}\label{subsec: ACs RCs distances and contents}

For each box in the Young diagram of $\lambda \in \widehat{S}_n$ we can assign coordinates $(x,y) \in [n] \times [n]$.
We assign the upper-left box coordinates $(1,1)$. 
Traversing the rows to the right increases the first coordinate by one per box, and traversing columns from top to bottom increases the second coordinate by one per box.

The following definition relates to the notions of predecessors $\N^-(\lambda)$ and successors $\N^-(\lambda)$ of $\lambda$ introduced in \cref{subsec: group setup}.

\begin{definition}[Addable and removable cells]\label{def: AC and RC implicit}
    For $\lambda \in \widehat{S}_n$, denote by $\AC(\lambda) \subset [n{+}1] \times [n {+} 1]$ the \emph{set of addable cells} of $\lambda$, containing all cells $a = (x_a, y_a)$ that can be added to $\lambda$ to obtain a valid integer partition $\lambda + a \in \widehat{S}_{n+1}$.
    Similarly, when $n > 0$, denote by $\RC(\lambda) \subset [n]\times [n]$ the \emph{set of removable cells} of $\lambda$, containing all cells $r = (x_r, y_r)$ of $\lambda$ that can be removed from $\lambda$ to obtain a valid integer partition $\lambda - r \in \widehat{S}_{n-1}$.
\end{definition}

There is a bijection from $\AC({\lambda})$ to $\N^+(\lambda)$ given by $a \mapsto \lambda {+} a$ with inverse $\mu \mapsto \mu {\setminus} \lambda$, where $\mu {\setminus} \lambda$ denotes the unique cell that is in $\mu$ but not in $\lambda$.
Similarly, there is a bijection from $\RC(\lambda)$ to $\N^-(\lambda)$ given by $r \mapsto \lambda {-} r$ with inverse $\kappa \mapsto \lambda {\setminus} \kappa$.

For computing the Manhattan distance defined below \cref{eq: young's orthogonal form}, we will use the notion of content.
The \emph{content} of a cell $c = (x_c, y_c)$ is
\begin{equation}
    \cont(c) \defeq x_c - y_c.
\end{equation}
Note that the content is constant along the diagonals from top left to bottom right.
Now, consider a path $P \in \Path(\lambda)$, or equivalently an SYT, and 
denote by $\cont_P(i)$ the content of the box $i$ of the SYT.
Then, we can express the Manhattan distance for $i \in [n{-}1]$ as
\begin{equation}\label{eq: Manhattan distance with contents}
    r_P(i) = \cont_{P}(i{+}1) - \cont_{P}(i).
\end{equation}

We will later use the following identity \parencite[eq. 3.2.3]{Kerov1993} that expresses the ratio of two irrep dimensions in terms of contents.

\begin{lemma}\label{lemma: twiddle factor}
Let $n > 1$, $\lambda \in \widehat{S}_{n-1}$, $d_\lambda = \dim V_\lambda$, and $a \in \AC(\lambda)$. Then
\begin{equation}
    \frac{d_{\lambda+a}}{n d_\lambda} = \frac{\prod_{x \in \RC(\lambda)}(\mathrm{cont}(x)-\mathrm{cont}(a))}{\prod_{x \in \AC(\lambda)\setminus\{a\}}(\mathrm{cont}(x)-\mathrm{cont}(a))}.
\end{equation}
\end{lemma}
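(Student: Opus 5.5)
We will derive the identity from the hook length formula $d_\mu = |\mu|!/\prod_{c\in\mu} h_\mu(c)$, where $h_\mu(c)$ is the hook length of cell $c$ in $\mu$. With $\mu=\lambda+a$ and $|\mu|=n$, the factorials contribute exactly $n!/(n-1)! = n$. Hence
\[
\frac{d_{\lambda+a}}{n\,d_\lambda}=\prod_{c\in\lambda}\frac{h_\lambda(c)}{h_{\lambda+a}(c)}\cdot\frac{1}{h_{\lambda+a}(a)},
\]
and $h_{\lambda+a}(a)=1$. Adding $a=(x_a,y_a)$ increases by one exactly the hooks of the cells in row $y_a$ to the left of $a$ and in column $x_a$ above $a$; all other hooks are unchanged. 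So the right-hand side reduces to
\[
\prod_{c\text{ left of }a}\frac{h}{h+1}\prod_{c\text{ above }a}\frac{h}{h+1},
\]
with $h = h_\lambda(c)$ in each factor.

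The next step is to telescope these products along the row and along the column. In row $y_a$, walking from $x=1$ to $x=x_a-1$, the hook length $h_\lambda(x,y_a)$ decreases by one at each step, except where column lengths drop. Group the cells into maximal blocks of consecutive columns of equal length $\lambda'_x$. Within one block the hooks form a consecutive run $h_{\min},\dots,h_{\max}$, so $\prod h/(h+1)$ over the block telescopes to $h_{\min}/(h_{\max}+1)$.

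These endpoint hook lengths should then be recognised as content differences. The last cell of a block lies directly above a removable corner $r$ (in the same column). The quantity $h+1$ at the first cell of a block corresponds to an addable cell $b$ located just after the previous block's drop. Explicitly, I expect $h_{\min} = \cont(a)-\cont(r)$ up to sign, and $h_{\max}+1 = \cont(a)-\cont(b)$. The removable cells $r$ involved are precisely those of $\lambda$ lying strictly below and to the left of $a$. The addable cells $b$ involved are those strictly below and to the left of $a$ other than $a$ itself; the very first block starts at column $1$, whose relevant addable cell is the bottom-left one. The column product treats symmetrically the removable and addable cells lying above and to the right of $a$.

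Since the addable and removable cells interlace along the boundary of $\lambda$, every element of $\RC(\lambda)$ and of $\AC(\lambda)\setminus\{a\}$ appears exactly once across the two products. The signs also match: each ratio of content differences has numerator and denominator of the same sign on a given side of $a$, and there is one fewer removable than addable cell on each side once $a$ is excluded. Multiplying the row and column results then gives the claimed formula.

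The main obstacle is the careful bookkeeping in the telescoping: pinning down the exact correspondence between block endpoints and corners, handling the boundary cases ($a$ in the first row or column, empty blocks, $\lambda=\varnothing$-like degenerate shapes), and getting the signs of the content differences right. As a sanity check I would verify small cases such as $\lambda=(1)$ with $a=(2,1)$, where the formula should give $d_{(2)}/(2\cdot 1) = 1/2$, i.e. $(0-1)/(-1-1)=1/2$.

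Alternatively, since the paper cites \parencite[eq. 3.2.3]{Kerov1993}, one may simply invoke that reference. Kerov obtains the identity as the residue of the transition measure's generating function $\prod_{x \in \RC}(z-\cont(x))/\prod_{x \in \AC}(z-\cont(x))$ at $z=\cont(a)$, a factorization that is itself proven by the same hook telescoping.
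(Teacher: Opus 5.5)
The paper does not prove this lemma. It only cites \parencite[eq.~3.2.3]{Kerov1993}, so your fallback is exactly the paper's route. Your hook-length telescoping is a genuinely different, self-contained proof, and it is correct. For a maximal block of columns $u,\dots,v$ of common length $L$ to the left of $a$, one has $h_\lambda(x,y_a)=\cont(a)-(x-L)$. The block therefore contributes $\frac{\cont(a)-\cont(r)}{\cont(a)-\cont(b)}$ with $r=(v,L)\in\RC(\lambda)$ and $b=(u,L+1)\in\AC(\lambda)$, and the rows above $a$ work symmetrically.

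Two of your bookkeeping statements are wrong, though neither breaks the argument.

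First, the removable cells picked up by the row product are those in rows $\geq y_a$, not strictly below $a$. For example, with $\lambda=(3,1)$ and $a=(2,2)$ the relevant cell is $r=(1,2)$, directly left of $a$. Dually, the column product may pick up $(x_a,y_a-1)$ directly above $a$. Exhaustiveness then follows because a removable cell $(\lambda_k,k)$ has $\lambda_k\le x_a-1$ for $k\ge y_a$, and $\lambda_k\ge x_a$ for $k<y_a$; the latter uses $\lambda_{y_a-1}>\lambda_{y_a}=x_a-1$.

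Second, every block yields exactly one removable and one addable cell. So the two counts are \emph{equal} on each side of $a$, as they must be given $|\RC(\lambda)|=|\AC(\lambda)\setminus\{a\}|$; they are not off by one. No global sign count is needed at all, because each block factor already equals $\frac{\cont(r)-\cont(a)}{\cont(b)-\cont(a)}$ term by term.

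Compared with the bare citation, your route needs only the hook length formula. It also makes explicit the corner interlacing $\cont(a_1)>\cont(r_1)>\cont(a_2)>\cdots$ that the paper later relies on in the determinant computation for $\mathrm{A}_\lambda$. The citation is shorter but leaves the identity as a black box.
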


\subsection{Yamanouchi words}\label{subsec: Yamanouchi words}
Our quantum algorithm will need to store paths in the Bratteli diagram.
Since there are different ways of encoding a path, we need to choose the most suitable one.

Let $\Path_n$ denote the full set of paths to level $n$:
\begin{equation*}
    \Path_n \defeq \bigsqcup_{\lambda \in \widehat{S}_n} \Path(\lambda).
\end{equation*}
One way to encode $P = (P_1, \dots, P_n)\in \Path_n$ in a quantum register is to individually store each node label $P_m$, $m \in [n]$ in a subregister as $\ket{P_m}$.
We call this \emph{absolute path encoding}.

Another option is to use a relative path encoding. 
Recall from the symmetric group branching rule in \cref{eq: N-} that the Young diagrams $P_{m-1}$ and $P_m$ differ by one box $P_m {\setminus} P_{m-1} \in \AC(P_{m-1}) \cap \RC(P_m)$.
As any Young diagram has at most one addable or removable cell per row, we can identify addable cells with their row number.

\begin{definition}[Yamanouchi word]
    The \emph{Yamanouchi word} of path $P \in \Path_n$ is the string $p = (p_1, \dots, p_n) \in [n]^n$ where $p_m$ for $m \in [n]$ denotes the row occupied by the box $P_{m} {\setminus} P_{m-1}$. 
    Note that we always have $p_1 = 1$. 
\end{definition}

Let us show by induction on $n$ that the Yamanouchi word $p$ fully encodes the path $P$.
When $n = 1$, we always have $p = (p_1) = (1)$ and $P_1 = (1)$.
For the induction step, assume that $p' = (p_1, \dots, p_{n-1})$ encodes $P' = (P_0, P_1, \dots, P_{n-1})$. 
Then, we know $P_{n-1}$ by the induction hypothesis, and $p_n$ identifies an addable cell in $a \in \AC(P_{n-1})$ by its row number.
We recover the last node as $P_n = P_{n-1} + a$, and the full path is $P = P' \to P_n$.

For any encoding of paths, the minimal number of bits is the base 2 logarithm of the total number of paths.
For the symmetric group, \cref{cor: optimal encoding Sn paths} finds this minimal number of bits to scale as $\Theta(n \log n)$.
Hence, we consider any encoding of paths using $\mathcal{O}(n\log(n))$ bits to have an optimal asymptotic space complexity.
From \cref{lemma: Yamanouchi word optimal scaling}, we can show that the Yamanouchi word encoding has an optimal asymptotic space complexity for encoding paths in the $S_n$-Bratteli diagram.
\cref{lemma: Sn absolute path min qubits} shows that an absolute encoding of paths for $S_n$ never has optimal asymptotic space complexity.

\section{Fourier transform along a multiplicity-free subgroup chain}\label{sec: Fourier transforms}

In this section, we define the Fourier transform, and give a quantum circuit that implementing it.
This is a summary of chapter 3 of \cite{cookbook}, where the circuit is introduced as an implementation of the QFT for any finite group with a multiplicity-free subgroup chain.
This induction relation is the main ingredient of classical fast Fourier transforms,

and captures the structure of the QFT algorithms by \cite{Beals, GenQFT, SnQFT_Speedup}.

\subsection{Regular representations and the Fourier transform}\label{subsec: Sn regular reps}

Let us recap basic facts about Fourier transform $\F_n$ for the symmetric group $S_n$.The group algebra $\C[S_n]$ can be seen as a left and right $\C[S_n]$ module, corresponding to the \emph{left} and \emph{right regular representations} $L, R : S_n \to \C^{S_n \times S_n}$.
The left and right actions of $h \in S_n$ on basis vectors $\ket{g} \in \C[S_n]$, with $g \in S_n$, are defined by
\begin{align}
    L(h)\ket{g} &= \ket{hg} & 
     R(h)\ket{g} &= \ket{gh^{-1}}.
\end{align}
As the actions of the left and right regular representations commute, we can simultaneously decompose both representations into irreps.
We have
\begin{equation}\label{eq: Peter--Weyl duality Sn}
    \C[S_n] \simeq \bigoplus_{\lambda \in \widehat{S}_n} V_\lambda \otimes V_\lambda^*.
\end{equation}
Here, $V_\lambda^*$ denotes the dual representation of $V_\lambda$. 

Note that the intertwiners for the isomorphism of \cref{eq: Peter--Weyl duality Sn} depends on our choice of basis for the irreducible representations.
The following definition ensures that the Fourier transform $\F_n$ of $S_n$ is unitary.
Recall that we use row and column index conventions from \cref{eq: matrix entry convention}.

\begin{definition}\label{def: Fourier transform}
For the symmetric group $S_n$, the Fourier transform $\F_n$

is a matrix whose columns are indexed by group elements $g \in S_n$ and rows are indexed by triples $(\lambda,Q,R)$ where $\lambda \in \widehat{S}_n$ and $Q,R \in \Path(\lambda)$.
For simplicity, we will use $(Q,R)$ instead of $(\lambda,Q,R)$ and treat $\lambda$ as implicit in both $Q$ and $R$.
The corresponding matrix entry is then given by
\begin{equation}
    \label{eq: definition Fourier transform}
    \left[\F_{n}\right]^{Q,R}_{g} := \sqrt{\frac{d_\lambda}{n!}} \left[\Rep_{\lambda}(g)\right]^{Q}_{R}.
\end{equation}
    Here, $\Rep_{\lambda}$ is given by Young's orthogonal form from \cref{subsec: Young's orthogonal form}. 
\end{definition}

Recall that $(Q,R)$ ranges over all pairs of paths in the Bratteli diagram of $S_n$ (see \cref{subsec: Bratteli diagram}) leading to the same final node $\lambda \in \widehat{S}_n$.
Equivalently, we can range $(Q,R)$ over all standard Young tableaux of the same shape $\lambda$ (see \cref{def: SYT}).

\subsection{Transversals and induced representations}
Chapter 3  of \cite{cookbook} introduces a Fourier transform algorithm based on induced representations along a multiplicity free chain of subgroups. 
Applied to the symmetric group, this gives a recursive implementation of $\F_n$ from \cref{def: Fourier transform}

For a finite group $G$ with subgroup $H \subset G$, let $V$ be a $\C[H]$-module with a corresponding matrix representation $\Rep : H \to \U(V)$.
Inducing $V$ from $H$ to $G$ is a canonical way of turning it from a $\C[H]$-module into a $\C[G]$-module.
We denote the induced module by $\Ind^G_H V$, and the corresponding induced representation by $\Ind^G_H \Rep$.
To define induced representations with an explicit basis, we first introduce a few important concepts from group theory.

\begin{definition} [Transversal]
For a group $G$ with subgroup $H \subset G$, a \emph{(left) transversal} $\T$ is a subset of $G$ that contains exactly one representative of every (left) coset $gH = \{gh\mid h \in H\}$ of $H$ in $G$.
\end{definition}

For the rest of this section, fix a left transversal $\T$ indexing the left cosets of $H$ in $G$.

\begin{lemma}\label{lemma: decomposing group elements with transversal}
For all $g \in G$, there exists a unique decomposition $g = th$ with $h \in H$ and $t \in \T$.
\end{lemma}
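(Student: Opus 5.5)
The plan is to prove existence and uniqueness separately, using only the definition of a left transversal and the fact that the left cosets of $H$ partition $G$.

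\emph{Existence.} Fix $g \in G$. Since $e \in H$, we have $g = ge \in gH$, so every element lies in its own left coset. By definition, $\T$ contains a representative $t$ of the coset $gH$. Then $t \in gH$, so $t = gh'$ for some $h' \in H$. Setting $h \defeq h'^{-1} \in H$ gives $g = t h$ with $t \in \T$ and $h \in H$.

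\emph{Uniqueness.} Suppose $g = t_1 h_1 = t_2 h_2$ with $t_1, t_2 \in \T$ and $h_1, h_2 \in H$.
\begin{itemize}
    \item From $g = t_i h_i$ we get $t_i = g h_i^{-1} \in gH$ for $i = 1, 2$.
    \item So $t_1$ and $t_2$ are both representatives in $\T$ of the same coset $gH$.
    \item Since $\T$ contains exactly one representative of each left coset, $t_1 = t_2$.
    \item Left-multiplying $t_1 h_1 = t_1 h_2$ by $t_1^{-1}$ then gives $h_1 = h_2$.
\end{itemize}

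The only step that needs care is the definition of a transversal. ``Exactly one representative of every coset'' has to be read as: $\T$ meets each left coset in exactly one element. This rests on the standard fact that distinct left cosets are disjoint. That fact guarantees that an element of $\T$ lying in $gH$ cannot also represent some other coset, and the uniqueness argument depends on it.

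Equivalently, the map $\T \times H \to G$, $(t,h) \mapsto th$, is a bijection. Injectivity is the uniqueness argument above. Surjectivity is the existence argument. As a consistency check, $|\T|\,|H| = [G:H]\,|H| = |G|$.
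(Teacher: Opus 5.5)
Your proof is correct and takes essentially the same route as the paper: both rest on $\T$ meeting the coset $gH$ in exactly one element $t$, and then set $h = t^{-1}g$. The only difference is that you make uniqueness explicit (any decomposition $g = t'h'$ forces $t' \in gH$), a step the paper leaves implicit.
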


\begin{proof}
As left cosets $gH = \{gh \mid h \in H\}$ partition $G$, we know that $g$ belongs to a unique left coset.
As $\T$ has exactly one representative per left coset, there is a unique $t \in \T$ so that $gH \cap \T = \{t\}$.
As $t$ and $g$ are in the same left coset, there exists an $h \in H$ such that $g = th$, which we find as $h = t^{-1}g$.
\end{proof}

\begin{definition}[Transversal representation]\label{def: transversal representation}
   The \emph{transversal representation} is defined by the $\C[G]$-module $\C[\T]$.
   Let $m_\T : G \to \T$ be the map that assigns the unique transversal element $t$ to each group element $g$ as in \cref{lemma: decomposing group elements with transversal}. 
   Then, the action of group elements $g \in G$ on basis vectors $\ket{t} \in \C[\T]$ with $t \in \T$ is given by 
   \begin{equation}\label{eq: transversal module action}
       g \cdot \ket{t} = \ket{m_\T(gt)}.
   \end{equation}
\end{definition}

To validate that this indeed defines a representation of $G$, we show that for all $g_1, g_2 \in G$ and $t \in \T$, we have that $g_1 \cdot (g_2 \cdot \ket{t}) = (g_1g_2)\cdot \ket{t}$. 
Using \cref{eq: transversal module action}, we get $g_1 \cdot (g_2 \cdot \ket{t}) = g_1 \cdot \ket{m(g_2t)} = \ket{m(g_1(m(g_2t))}$.
As $m(g_2t) \in g_2tH$, there exists a subgroup element $h \in H$ so that $m(g_2t) = g_2t h$.
Using that $m(gh) = m(g)$ for all $g \in G$ and $h \in H$, we conclude that $\ket{m(g_1(m(g_2t))} = \ket{m(g_1 g_2t h)} = \ket{m(g_1g_2t)} = (g_1 g_2) \cdot \ket{t}$. 

\begin{definition}[Induced representation]\label{def: induced rep} 
    For an $H$-representation $\Rep : H \to \U(V)$, define the \emph{induced representation} $\Ind ^G_H \Rep : G \to \U(\C[\T] \otimes V)$ by its action on basis vectors as
    \begin{equation}\label{eq: induced rep action}
        \Ind^G_H \Rep(g) \bigl(\ket{t} \otimes \ket{P}\bigr) := \ket{m_\T(gt)}\otimes \Rep(m_\T(g)^{-1}g) \ket{P},
    \end{equation}
    where $g \in G$, $t \in \T$ and $\ket{P} \in V$.
    Note that $\Rep(m_\T(g)^{-1}g)$ is well-defined since $m_\T(g)^{-1} g \in H$.
\end{definition}

Hence, the induced module $\Ind^G_H V$ consists of $[G : H]$ copies of $V$, and fixing a transversal $\T$ gives us an explicit labeling for these copies. 
As a matrix, the induced representation $\Ind^G_H \Rep(g)$ contains $[G : H]$ blocks $\Rep(m_{\T}(g)^{-1}g)$ that are permuted according to the transversal representation from \cref{def: transversal representation}.

\begin{definition}[Induction map]\label{def: IndMap}
Assume that the restrictions of all irreps of $G$ to $H$ are multiplicity-free.
For all $\lambda \in \widehat{H}$, define the linear map $\IndMap[\lambda] : \C[\T] \otimes V_\lambda \to \bigoplus_{\mu \in \N^+(\lambda)} V_\mu$ by its entries
\begin{equation}
[\IndMap[\lambda]]^{Q', \mu}_{t, P} = \sqrt{\frac{d{\mu}}{d_\lambda|\T|}}\left[\Rep_{\mu}(t)\right]^{Q}_{P \to \mu}.
\end{equation}
Here, we have indices $t \in \T$, $P \in \Path(\lambda)$ for basis vectors of $\C[\T] \otimes V_\lambda$ and $\mu \in  \N^+(\lambda)$ and $Q \in \Path(\mu)$.
We use path truncation notation $Q'$, so that $Q = Q' \to \mu$.
As in \cref{subsec: group setup}, the set of successors of $\N^+(\lambda)\subseteq \widehat{G}$ includes all irreps of $G$ that restrict to $\lambda$.  
\end{definition}

\begin{lemma}\label{lemma: IndMap basis transform}
For all $\lambda \in \widehat{H}$, we have that $\Ind^G_H V_\lambda \overset{\IndMap[\lambda]}{\simeq} \bigoplus_{\mu \in \N^+(\lambda)} V_\mu$
as $\C[G]$-modules.
\end{lemma}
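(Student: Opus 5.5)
The plan is to prove two things about $\IndMap[\lambda]$: it intertwines the $G$-actions, and it is invertible. Invertibility will come from the stronger fact that it is unitary. Throughout I read the action in \cref{eq: induced rep action} as
$\Ind^G_H\Rep(g)(\ket{t}\otimes\ket{P}) = \ket{m_\T(gt)}\otimes \Rep(h)\ket{P}$ with $h := m_\T(gt)^{-1}gt \in H$. This is the standard induced action, and it is what makes the $H$-label well defined. I will also use that $\T$ contains a representative of $H$ itself, and that the restriction of each $\mu\in\N^+(\lambda)$ to $H$ is block-diagonal in the subgroup-adapted basis, as in \cref{eq: subgroup adapted basis restriction version,eq: subgroup adapted basis induction version}, with the $V_\lambda$-block spanned by the paths $P\to\mu$.

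\textbf{Step 1 (intertwining).} I compare matrix entries of $\IndMap[\lambda]\,\Ind^G_H\Rep_\lambda(g)$ and $\bigl(\bigoplus_\mu \Rep_\mu(g)\bigr)\IndMap[\lambda]$ at row $(Q',\mu)$ and column $(t,P)$. Put $t' = m_\T(gt)$, so that $gt = t'h$.

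The left-hand side equals
\[
\sqrt{\tfrac{d_\mu}{d_\lambda|\T|}}\sum_{P''\in\Path(\lambda)}[\Rep_\mu(t')]^{Q}_{P''\to\mu}\,[\Rep_\lambda(h)]^{P''}_{P}.
\]
By \cref{eq: subgroup adapted basis induction version}, $[\Rep_\lambda(h)]^{P''}_{P} = [\Rep_\mu(h)]^{P''\to\mu}_{P\to\mu}$. By block-diagonality, all other entries $[\Rep_\mu(h)]^{X}_{P\to\mu}$ with $X$ not of the form $P''\to\mu$ vanish. Hence the sum is a full matrix product and equals $[\Rep_\mu(t'h)]^{Q}_{P\to\mu} = [\Rep_\mu(gt)]^{Q}_{P\to\mu}$.

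Expanding $\Rep_\mu(gt) = \Rep_\mu(g)\Rep_\mu(t)$ then gives exactly the right-hand side. This step is routine bookkeeping.

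\textbf{Step 2 (co-isometry via Schur orthogonality).} Next I show that $\IndMap[\lambda]\IndMapDag[\lambda] = I$. Fix $P\in\Path(\lambda)$ and apply the Schur orthogonality relations for $G$:
\[
\sum_{g\in G}[\Rep_\mu(g)]^{Q}_{P\to\mu}\overline{[\Rep_\nu(g)]^{R}_{P\to\nu}} = \tfrac{|G|}{d_\mu}\,\delta_{\mu\nu}\delta_{QR}.
\]
Write each $g$ uniquely as $g = th$ using \cref{lemma: decomposing group elements with transversal}, and expand $\Rep_\mu(th) = \Rep_\mu(t)\Rep_\mu(h)$. Block-diagonality together with \cref{eq: subgroup adapted basis induction version} again means only the $\lambda$-block of $\Rep_\mu(h)$ contributes. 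The sum over $h$ then becomes the Schur orthogonality relation for $H$: $\sum_h [\Rep_\lambda(h)]^{A}_{P}\overline{[\Rep_\lambda(h)]^{B}_{P}} = \tfrac{|H|}{d_\lambda}\delta_{AB}$. This gives
\[
\tfrac{|H|}{d_\lambda}\sum_{t\in\T}\sum_{A\in\Path(\lambda)}[\Rep_\mu(t)]^{Q}_{A\to\mu}\overline{[\Rep_\nu(t)]^{R}_{A\to\nu}} = \tfrac{|G|}{d_\mu}\delta_{\mu\nu}\delta_{QR}.
\]
Using $|G| = |\T|\,|H|$ and multiplying through by $\sqrt{d_\mu d_\nu}/(d_\lambda|\T|)$ turns the right-hand side into $\delta_{\mu\nu}\delta_{QR}$ (on the diagonal $\mu=\nu$), which is precisely the statement $[\IndMap[\lambda]\IndMapDag[\lambda]]^{(Q',\mu)}_{(R',\nu)} = \delta_{\mu\nu}\delta_{QR}$.

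\textbf{Step 3 (dimensions).} It remains to see that the map is square. Frobenius reciprocity plus multiplicity-freeness give $\Ind^G_H V_\lambda \cong \bigoplus_{\mu\in\N^+(\lambda)}V_\mu$ abstractly. Hence $|\T|\,d_\lambda = \sum_{\mu\in\N^+(\lambda)} d_\mu$. A square co-isometry is unitary, so $\IndMap[\lambda]$ is a unitary intertwiner, which proves the lemma.

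The main obstacle is Step 2. One has to handle the index juggling carefully, in particular justifying that summing over only the $\lambda$-block of $\Rep_\mu(h)$ loses nothing, and tracking the normalisation constants. Steps 1 and 3 are straightforward.
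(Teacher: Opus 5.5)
Your proof is correct. The paper gives no proof of this lemma: it states it, restates it as \cref{eq: IndMap intertwines induced rep}, and attributes the construction to \cite{cookbook}. So there is no in-paper argument to compare routes with, and your argument serves as a complete, self-contained proof valid for any left transversal. Each of your steps works:
the entrywise intertwining check uses exactly the block structure of $\Rep_\mu|_H$ in the subgroup-adapted basis;
Schur orthogonality on $G$, split along $g=th$ and collapsed by Schur orthogonality on $H$, gives $\IndMap[\lambda]\IndMapDag[\lambda]=I$ with the normalisation $\sqrt{d_\mu/(d_\lambda|\T|)}$ coming out exactly;
and Frobenius reciprocity plus multiplicity-freeness makes the matrix square.

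Your argument also yields unitarity rather than mere invertibility. The paper tacitly needs this when it writes $\IndMapDag[\lambda]$ in \cref{eq: IndMap intertwines induced rep} and treats $\IndMap[\lambda]$ as a gate.

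Your reinterpretation of \cref{eq: induced rep action} is necessary, not cosmetic. With the literal $t$-independent block $\Rep(m_\T(g)^{-1}g)$, the map $g\mapsto\Ind^G_H\Rep(g)$ is in general not a representation. For one-dimensional $\Rep$ it is multiplicative only if $g\mapsto\Rep(m_\T(g)^{-1}g)$ is a character of $G$. Take $S_2\subset S_3$ with the sign irrep and $\T_3=\{(1,2,3),(2,3),e\}$: this function is $-1$ on $(1,2)$ but $+1$ on the conjugate element $(2,3)$, so it is not a character. The lemma therefore holds only with the standard block $\Rep(m_\T(gt)^{-1}gt)$.

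Your remark that $\T$ contains a representative of $H$ is never used and can be dropped.
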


Hence, $\IndMap[\lambda]$ is the basis transform decomposing induced irreducible representations of $H$ to $G$ into a direct sum of irreducible representations of $G$.
Equivalently, we say that for all $g \in G$, we have
\begin{equation}\label{eq: IndMap intertwines induced rep}
\IndMap[\lambda] \left( \Ind^{G}_{H}\Rep_\lambda (g)\right)\IndMapDag[\lambda] = \bigoplus_{\mu \in \N^+(\lambda)}\Rep_{\mu}(g) .
\end{equation}

\subsection{Induction along the chain of subgroups}\label{subsec: First induction}

For all $m \in [n]$, let $\T_m$ be a left transversal for $S_{m-1} \subset S_m$, where $\T_1 = {e}$.
Denote by $\IndMap[][m]$ the linear map 
\begin{equation}
    \IndMap[][m] \defeq \bigoplus_{\lambda \in \widehat{S}_{m-1}} \IndMap[\lambda],
\end{equation}
where we have $\IndMap[\lambda]$ as in \cref{def: IndMap}.
We will now introduce a quantum circuit implementing the $S_n$-QFT matrix $\F_n$ from \cref{def: Fourier transform} using these induction maps.
The proof for its correctness is given in \cite{cookbook}.

\begin{figure}[H]
\centering
\includegraphics[width=\textwidth]{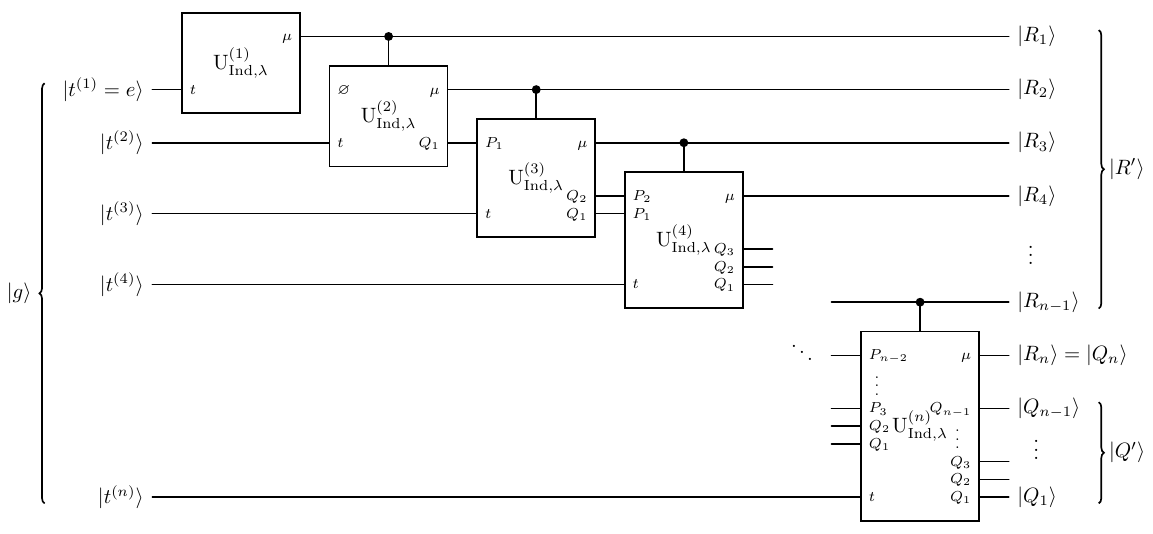}
\caption{Quantum circuit for the Fourier transform $\F_n$. Group elements $g \in S_n$ are encoded with transversal elements $t^{(m)} \in T_m$ as $g = t^{(n)}\dots t^{(3)}t^{(2)}$.For Fourier basis vectors, we let $P = P_1 \to P_2 \to \dots \to P_n$ and $Q = Q_1 \to Q_2 \to \dots \to Q_n$ encodes paths in the Bratteli diagram, where $P_m, Q_m \in \widehat{S}_m$.}
\label{fig: theorem first induction full circuit}
\end{figure}

\begin{theorem}\label{thm: first induction}
We can decompose the quantum Fourier transform for $G$ as a sequence of induction maps $\IndMap[][1], \IndMap[][2], \dots, \IndMap[][n]$ on the path register corresponding to the left regular representation.
Here, the irrep label is controlled by the endpoint of the other path.
\cref{fig: theorem first induction full circuit} shows the corresponding quantum circuit for the Fourier transform.
\end{theorem}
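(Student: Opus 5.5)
We prove the statement by induction on $n$. The induction hypothesis is that the circuit built from $\IndMap[][1],\dots,\IndMap[][m]$ implements $\F_m$ when it acts on $\C[S_m]$, with inputs encoded as $g = t^{(m)}\cdots t^{(2)}$. The base case $m=1$ is trivial: $S_1=\{e\}$, $\T_1=\{e\}$, there is one path, and $\F_1=1$.

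For the induction step, write $g = t\,h$ with $t=t^{(n)}\in\T_n$ and $h\in S_{n-1}$, using \cref{lemma: decomposing group elements with transversal}. This identifies
\[
\C[S_n]\cong\C[\T_n]\otimes\C[S_{n-1}], \qquad \ket{g}=\ket{t}\otimes\ket{h}.
\]
Applying $\F_{n-1}$ to the second factor gives, by the hypothesis, the state
\[
\sum_{\lambda,P,R}\sqrt{d_\lambda/(n-1)!}\,[\Rep_\lambda(h)]^{P}_{R}\,\ket{t}\ket{P,R},
\]
with $\lambda\in\widehat{S}_{n-1}$ and $P,R\in\Path(\lambda)$. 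Next we apply $\IndMap[][n]$ to the register pair $(t,P)$, controlled on the endpoint $\lambda$ of $R$. By \cref{def: IndMap}, the amplitude on $\ket{Q,\,R\to\mu}$ is
\[
\sqrt{\frac{d_\lambda}{(n-1)!}}\sqrt{\frac{d_\mu}{d_\lambda n}}\sum_{P}[\Rep_\mu(t)]^{Q}_{P\to\mu}[\Rep_\lambda(h)]^{P}_{R}.
\]
Here the register $Q'$ of $\IndMap$ is paired with the extended $R$. The prefactor collapses to $\sqrt{d_\mu/n!}$. By \eqref{eq: subgroup adapted basis induction version},
\[
[\Rep_\lambda(h)]^{P}_{R}=[\Rep_\mu(h)]^{P\to\mu}_{R\to\mu}.
\]
By \eqref{eq: subgroup adapted basis restriction version}, $[\Rep_\mu(h)]^{P''}_{R\to\mu}=0$ whenever $P''$ does not end in $\lambda$. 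So the sum over $P\in\Path(\lambda)$ extends to a sum over all of $\Path(\mu)$, and matrix multiplication gives $[\Rep_\mu(th)]^{Q}_{R\to\mu}=[\Rep_\mu(g)]^{Q}_{R\to\mu}$. This is exactly $[\F_n]^{Q,R\to\mu}_g$ from \cref{def: Fourier transform}.

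The remaining bookkeeping step is to check that relabeling outputs as $(Q,\tilde R)$ with $\tilde R=R\to\mu$ covers each pair exactly once. This holds because a path $\tilde R\in\Path(\mu)$ determines both $R$ and $\lambda$ uniquely, so the map is a bijection onto the row index set. Unitarity then follows automatically: $\F_n$ is unitary, and each $\IndMap[\lambda]$ is unitary, since it is square by $n\,d_\lambda=\sum_{\mu\in\N^+(\lambda)}d_\mu$ and is an orthonormal basis change by \cref{lemma: IndMap basis transform}. Unrolling the recursion produces the circuit of \cref{fig: theorem first induction full circuit}: the $\IndMap[][m]$ act successively on the register holding $t^{(m)}$ together with the growing $P$-path, controlled by the endpoint of the other path $R$.

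I expect the main obstacle to be index and register bookkeeping rather than any genuine mathematical difficulty. One must match which register the controlled $\IndMap$ acts on and which path is extended by $\mu$, and check that the ordering $g=t^{(n)}\cdots t^{(2)}$ agrees with the left-multiplication convention in the matrix entry $[\Rep_\mu(t)]^{Q}_{P\to\mu}$. If the conventions come out reversed, I would swap the roles of the $Q$ and $R$ registers, or use right cosets, to fix this.
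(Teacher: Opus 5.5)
Your proof is correct and is the standard argument. The paper gives no in-text proof and defers to \cite{cookbook}; your induction (split $g=th$, apply $\F_{n-1}$, then the $\lambda$-controlled $\IndMap[][n]$ on $(t,P)$, and collapse the sum to $[\Rep_\mu(th)]^{Q}_{R\to\mu}$ via \eqref{eq: subgroup adapted basis induction version} and \eqref{eq: subgroup adapted basis restriction version}) uses the same manipulations as the paper's appendix proof of \cref{prop: induction map with different transversals}.

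Your conventions already match as written (left transversal, $t$ acting on the left, $P$ being the left-regular register), so no fallback swap is needed. The only soft spot is the unitarity aside: \cref{lemma: IndMap basis transform} alone gives an isomorphism, not an isometry, and unitarity of $\IndMap[\lambda]$ needs a short Schur-lemma normalization check. The entrywise identity you derive for $\F_n$ does not depend on it.
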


In this theorem, we encode group elements using a sequence of transversal registers.
This is possible by \cref{lemma: decomposing group elements with transversal}, which can be used to show that every group element $g \in S_{n}$ has a unique decomposition 
\begin{equation}
    g = t^{(n)}t^{(n-1)}\dots t^{(2)}t^{(1)}
\end{equation}
with $t^{(m)}\in \T_m$.

\subsection{Relative path encoding}\label{subsec: Fourier transform with relative paths}
The circuit of \cref{fig: theorem first induction full circuit} encodes paths by storing all nodes of the paths in separate registers.
For the symmetric group, we saw in \cref{subsec: Yamanouchi words} that such an absolute encoding does not have an optimal asymptotic space complexity.
However, encoding paths as Yamanouchi words does achieve optimal scaling.

To implement the Fourier transform as in \cref{thm: first induction} with paths encoded as Yamanouchi words, we need two more classical gates, based on the work of \cite{SnQFT_Speedup}.

First, consider a path $P \in \Path(\lambda)$ of length $n{-}1$ and a path $Q \in \Path(\mu)$ of length $n$, with $\lambda \in \N^-(\mu)$.
Recall that $P \to \mu$ denotes the extension of $P$ to $\mu$.
As Yamanouchi words, extending $p$ to the endpoint of $q$ involves adding a new entry to $o$.
We will denote this new entry by $q {\setminus} p$.
We will denote the corresponding gate mapping $\ket{p}\ket{0}\ket{q}$ to $\ket{p}\ket{q {\setminus} p}\ket{q}$ by $q {\setminus} p$.

Second, with $\lambda$, $\mu$, $p$ and $q$ as before, suppose we have a register $\ket{\lambda}$ and a register $\ket{q {\setminus} p}$.
Then, denote by $+\square$ the gate which converts $\lambda$ to $\mu$, controlled on the register $p {\setminus} q$.

Note that the input $\ket{t} \otimes \ket{p}$ of $\IndMap[][n]$ has exactly the same register structure as the output $\ket{q}$, as the subregister $\ket{q'} = \ket{q_1} \dots \ket{q_{n-1}}$ has the same structure as $\ket{p} = \ket{p_1} \dots \ket{p_{n-1}}$, and both $\ket{t}$ and $\ket{q_m}$ are $|\T_m|$-dimensional registers.
The information of the output $\ket{\mu}$ is implicitly stored in $\ket{q}$, but as this register is needed as a control register in the QFT circuit of \cref{thm: first induction}, and also needed when the QFT is used for weak Fourier sampling, we will not uncompute it.

At every layer of the circuit of \cref{thm: first induction}, we replace the block $\IndMap[\lambda][n]$ by the circuit in \cref{fig:IndMap_induction_step_rel_version}.
\begin{figure}[H]
\centering
\includegraphics[width=\textwidth]{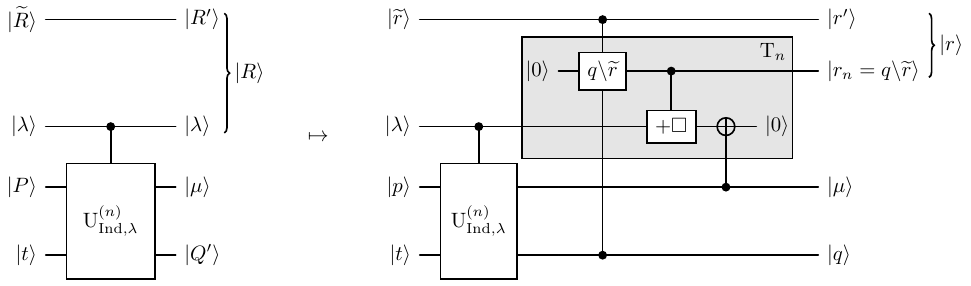}
\caption{Updated induction step for the Fourier transform with relative path encoding.
The output of the quantum Fourier transform for $S_{n-1}$ is stored in $\ket{\widetilde{r}}\ket{\lambda}\ket{p}$, with $\widetilde{r}$ and $p$ encoding path $\widetilde{R}, Q \in \Path(\lambda)$.
We have $\ket{t}$ a transversal register for $S_{n-1} \subset S_n$ like before.}
\label{fig:IndMap_induction_step_rel_version}
\end{figure}

With the relative encoding, we update the circuit of \cref{fig: theorem first induction full circuit} as in \cref{fig: theorem first induction full circuit relative paths}.
\begin{figure}[H]
\centering
\includegraphics[width=\textwidth]{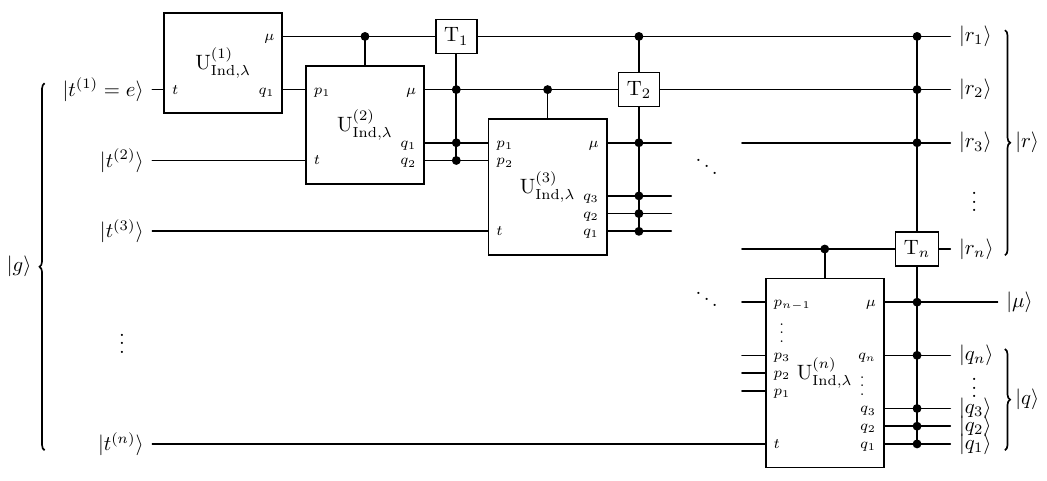}
\caption{Updated implementation of the Fourier transform for $S_n$.
The circuit is the same as the circuit in \cref{fig: theorem first induction full circuit}, but uses a Yamanouchi word encoding instead of storing the full paths.}
\label{fig: theorem first induction full circuit relative paths}
\end{figure}

\cite{SnQFT_Speedup} provide quantum algorithms for $q {\setminus} p$ and $+\square$, using only coherent arithmetic. 
Now, we will discuss two different approaches of the induction map, leading to two different QFT implementations. 
The first one uses a method related to the work of \cite{Beals}, as worked out in more detail by \cite{GenQFT}.
The second one is inspired on the algorithm by \cite{SnQFT_Speedup}, and uses Mackey theory to derive a second induction implementing the induction map.

\section{Constructing \texorpdfstring{$\IndMap$}{Uind} for \texorpdfstring{$S_n$}{Sn}: Beals approach}\label{sec: induction map Beals}

In this section, we describe how \cite{Beals} implements $\IndMap$ for constructing the QFT of the symmetric group.
This algorithm was generalized to a method for constructing general QFTs by \cite{GenQFT}.
We start by introducing the circuit implementing $\IndMap[]$ from \cref{def: IndMap} for any finite group $G$ with subgroup $H \subset G$, where irreps are unitary, with multiplicity-free and subgroup adapted restrictions.
Let $n \defeq [G : H]$, and fix a left transversal $\T = \{t_1, \dots, t_n\}$.
The algorithm implementing $\IndMap[]$ treats transversal elements one by one, replacing them by a special state $\ket{\star}$. 
At the end of the circuit, the transversal register will be entirely in the state $\ket{\star}$, so that it can be uncomputed.

\subsection{General approach: Looping through transversal elements}

In our algorithm, we have a transversal register, a path-register in relative encoding, an irrep label register storing the end-point of the path explicitly, and a control register with the end-point $\lambda$ of the incoming path.
Hence, the registers are the same as in \cref{fig: theorem first induction full circuit relative paths}, but with an added end-point register for the path.
Let us now introduce the three gates which we will use in our construction. 

\begin{definition}\label{def: U gate Beals}
    Let $\lambda \in \widehat{H}$ and $P \in \Path(\lambda)$, so that $\ket{P}$ is a basis vector for the simple $\C[H]$-module $V_\lambda$.
    The gate $\mathrm{U}_\uparrow$ acts on $\ket{P}$ by extending it to a superposition of successors of $\lambda$ as
    \begin{align}
        \mathrm{U}_\uparrow \ket{\star}\ket{P}\ket{\lambda}\ket{\lambda} &= \sum_{\mu \in \N^+(\lambda)} \sqrt{\frac{d_{\mu}}{n d_\lambda}}\ket{\star}\ket{P \to \mu}\ket{\mu}\ket{\lambda},
    \end{align}
    and acts as identity when the transversal register is not $\ket{\star}$.
\end{definition}

\begin{definition}\label{def: V gate beals}
    The gate $\mathrm{V}_k$ acts as an $\mathrm{X}$-gate on the subspace of the transversal register spanned by $\ket{t_k}$ and $\ket{\star}$ when the irrep label register is equal to the control register.
    Consider $\lambda \in \widehat{H}$ and $P \in \Path(\lambda)$.
    Then, $\mathrm{V}_k$ acts as
    \begin{align}
        \mathrm{V}_k \ket{\star}\ket{P}\ket{\lambda}\ket{\lambda} &= \ket{t_k}\ket{P}\ket{\lambda}\ket{\lambda} \\
        \mathrm{V}_k \ket{t_k}\ket{P}\ket{\lambda}\ket{\lambda} &=  \ket{\star}\ket{P}\ket{\lambda}\ket{\lambda}.
    \end{align}
    When the irrep label register and the control register, corresponding to $\ket{\lambda}\ket{\lambda}$ in the equations above, are not equal, the gate $\mathrm{V}_k$ acts as identity.
\end{definition}

\begin{definition}\label{def: irrep gate Beals}
    The gate $\mathrm{R}(t_k)$ acts as the irrep of the transversal element $t_k$ on the path register when the transversal register is in the state $\ket{\star}$. 
    Consider $\mu \in \widehat{G}$ and $P \in \Path(\mu)$.
    Then, $\mathrm{R}(t_k)$ acts as
    \begin{equation*}
        \mathrm{R}(t_k)\ket{\star}\ket{P}\ket{\mu} = \ket{\star}\left(\Rep_\mu(t_k)\ket{P}\right)\ket{\mu}
    \end{equation*}
    When the transversal register is not in the state $\ket{\star}$, the gate $\mathrm{R}(t_k)$ acts as identity.
\end{definition}

\begin{theorem}
    The induction transform can be decomposed as
    \begin{equation}\label{eq: Beals circuit as an equation}
        \IndMap = \prod_{k=1}^n \mathrm{R}(t_k) \mathrm{U}_{\uparrow} \mathrm{V}_k \mathrm{U}_{\uparrow}^\dagger \mathrm{R}^\dagger(t_k).
    \end{equation}
    We use the convention where $k=1$ acts first and $k = n$ acts last.
    \cref{fig: Beals circuit} shows the corresponding quantum circuit implementing $\IndMap$.
\end{theorem}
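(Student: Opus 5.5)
We verify the identity on basis inputs $\ket{t_j}\ket{P}\ket{\lambda}\ket{\lambda}$ with $P\in\Path(\lambda)$ and $\lambda\in\widehat H$. By linearity, it suffices to show that the product maps each such input to $\sum_{\mu\in\N^+(\lambda)}\sum_{Q\in\Path(\mu)}[\IndMap[\lambda]]^{Q',\mu}_{t_j,P}\ket{\star}\ket{Q}\ket{\mu}\ket{\lambda}$. This is exactly $\ket{\star}\otimes\bigl(\sum_\mu\sqrt{d_\mu/(n d_\lambda)}\,\Rep_\mu(t_j)\ket{P\to\mu}\ket{\mu}\bigr)\ket{\lambda}$, using \cref{def: IndMap} with $|\T|=n$.

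The key structural observation is an invariant. After steps $k=1,\dots,j-1$, the input is still $\ket{t_j}\ket{P}\ket{\lambda}\ket{\lambda}$. After step $j$, the state lies in the span of $\ket{\star}\ket{Q}\ket{\mu}\ket{\lambda}$ with $\mu\ne\lambda$, because $\mu\in\N^+(\lambda)$ lives in $\widehat G$ while $\lambda\in\widehat H$. I would treat the label registers as tagged by their group, so $\mu$ is never ``equal'' to $\lambda$. Each later step then acts trivially.

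\emph{Steps $k<j$.} The transversal register is $\ket{t_j}\ne\ket{\star}$, so $\mathrm R^\dagger(t_k)$ and $\mathrm U_\uparrow^\dagger$ act as identity. $\mathrm V_k$ is also the identity, since it only swaps $\ket{t_k}\leftrightarrow\ket{\star}$. Then $\mathrm U_\uparrow$ and $\mathrm R(t_k)$ act as identity as well.

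\emph{Step $k=j$.} $\mathrm R^\dagger(t_j)$ and $\mathrm U_\uparrow^\dagger$ act as identity. $\mathrm V_j$ maps the state to $\ket{\star}\ket{P}\ket{\lambda}\ket{\lambda}$, since the labels agree. $\mathrm U_\uparrow$ produces $\sum_\mu\sqrt{d_\mu/(nd_\lambda)}\ket{\star}\ket{P\to\mu}\ket{\mu}\ket{\lambda}$. $\mathrm R(t_j)$ then applies $\Rep_\mu(t_j)$, which gives exactly the target vector.

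\emph{Steps $k>j$.} Here we must check that $\mathrm U_\uparrow\mathrm V_k\mathrm U_\uparrow^\dagger$ fixes every $\ket{\star}\ket{Q}\ket{\mu}\ket{\lambda}$ with $\mu\in\widehat G$. Conjugation by $\mathrm R(t_k)$ then cancels. This is the main obstacle, because $\mathrm U_\uparrow$ is specified only on inputs $\ket{\star}\ket{P}\ket{\lambda}\ket{\lambda}$, so $\mathrm U_\uparrow^\dagger$ on general $\ket{\star}\ket{Q}\ket{\mu}\ket{\lambda}$ needs care.

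To handle it, I would fix the implicit extension of $\mathrm U_\uparrow$ to a unitary on the $\ket{\star}$ sector. For fixed $\lambda$ and $P$, the vector $\ket{\psi_{P}}=\sum_\mu\sqrt{d_\mu/(nd_\lambda)}\ket{P\to\mu}\ket{\mu}$ is normalized, since $\sum_\mu d_\mu=n d_\lambda$ by Frobenius reciprocity and dimension counting of $\Ind^G_H V_\lambda$. The vectors $\ket{\psi_P}$ are orthonormal across $P$. So $\mathrm U_\uparrow$ can be taken to be a unitary, controlled on the last register, that maps $\ket{P}\ket{\lambda}\mapsto\ket{\psi_P}$ and acts within the span of $\{\ket{P}\ket{\lambda}\}\cup\{\ket{P\to\mu}\ket{\mu}\}$.

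Then $\mathrm U_\uparrow^\dagger\ket{\star}\ket{Q}\ket{\mu}\ket{\lambda}$ decomposes into a component along $\ket{\star}\ket{Q'}\ket{\lambda}\ket{\lambda}$ plus components orthogonal to every $\ket{\star}\ket{P}\ket{\lambda}\ket{\lambda}$. On the latter, the label register is not $\lambda$, so $\mathrm V_k$ is the identity. The former component is the one that could be sent to $\ket{t_k}$.

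So the needed cancellation must come from the earlier conjugations, and I would reorganize the argument accordingly. The cleanest route is to show that $W_k:=\mathrm R(t_k)\mathrm U_\uparrow\mathrm V_k\mathrm U_\uparrow^\dagger\mathrm R^\dagger(t_k)$ is an involution swapping exactly two orthogonal subspaces. The first is $\ket{t_k}\ket{P}\ket{\lambda}\ket{\lambda}$. The second is $\mathcal S_k:=\mathrm R(t_k)\mathrm U_\uparrow\bigl(\ket{\star}\ket{P}\ket{\lambda}\ket{\lambda}\bigr)$, which is precisely the image of $\ket{t_k}\otimes V_\lambda$ under $\IndMap[\lambda]$. $W_k$ is the identity on their orthogonal complement.

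Since $\IndMap[\lambda]$ is unitary by \cref{lemma: IndMap basis transform}, the spaces $\mathcal S_1,\dots,\mathcal S_n$ are mutually orthogonal. Hence each $W_k$ with $k\ne j$ fixes both $\ket{t_j}\ket{P}\ket{\lambda}\ket{\lambda}$ and $\mathcal S_j$. The product of the $W_k$ therefore maps each $\ket{t_j}\otimes V_\lambda$ onto $\mathcal S_j$ via the computation above, which proves the claimed decomposition of $\IndMap$.
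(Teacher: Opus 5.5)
The paper does not prove this statement itself; it only refers to chapter~4 of \cite{cookbook}. There is therefore no in-text argument to compare against, and your proof stands on its own: it is correct.

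The core of your argument is as follows. Write $W_k=X_k\mathrm{V}_kX_k^\dagger$ with $X_k=\mathrm{R}(t_k)\mathrm{U}_\uparrow$. Then $X_k$ is the identity off the $\star$ sector and sends $\ket{\star}\ket{P}\ket{\lambda}\ket{\lambda}$ to $\ket{\star}\otimes\IndMap[\lambda](\ket{t_k}\otimes\ket{P})\otimes\ket{\lambda}$. So $W_k$ is the involution exchanging $A_k=\ket{t_k}\otimes V_\lambda$ with $\mathcal S_k$ and fixing $(A_k\oplus\mathcal S_k)^\perp$. Mutual orthogonality of all the $A_k$ and $\mathcal S_k$ then finishes the proof.

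The one nontrivial input is that $\IndMap[\lambda]$ is an isometry. Concretely, this means $\sum_{\mu\in\N^+(\lambda)}d_\mu=nd_\lambda$ and $\sum_{\mu\in\N^+(\lambda)}d_\mu[\Rep_\mu(g)]^{R\to\mu}_{P\to\mu}=0$ for $g\notin H$. Note that \cref{lemma: IndMap basis transform} as stated only gives a module isomorphism. The paper uses unitarity implicitly, by writing $\IndMapDag$ as the inverse. If you want full self-containment, the vanishing follows from Schur orthogonality for $H$, which identifies the matrix units of $V_\lambda$ in $\C[H]$, together with the fact that the regular character of $G$ vanishes off the identity.

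Two remarks on the write-up.

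First, your opening ``invariant'' paragraph says the steps $k>j$ act trivially because the label register holds $\mu\neq\lambda$. As you then notice yourself, that reason is invalid, since $\mathrm{U}_\uparrow^\dagger$ rotates label-$\mu$ states back to label $\lambda$. What actually makes $W_k$ fix $\mathcal S_j$ is $\mathcal S_j\perp\mathcal S_k$. Drop that paragraph and lead with the involution argument.

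Second, your choice of completion of $\mathrm{U}_\uparrow$ is essential, not cosmetic. You complete it as a unitary on the span of valid states, so the only label-$\lambda$ states in the $\star$ sector are $\ket{\star}\otimes V_\lambda$, which is all $\mathrm{V}_k$ can touch. The claim that $W_k$ is the identity on $(A_k\oplus\mathcal S_k)^\perp$ depends on exactly this.

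To see why, suppose $\mathrm{U}_\uparrow$ were realized by rotating a fresh Yamanouchi entry out of $\ket{0}$ and then applying $+\square$. Then $\mathrm{U}_\uparrow^\dagger\mathrm{R}^\dagger(t_k)$ maps the part of $\mathcal S_j$ passing through $\lambda$ to states whose label register is $\lambda$ and whose fresh entry is orthogonal to $\ket{0}$. Already for $S_2\subset S_3$ with $\lambda=(2)$, this part has squared norm $1/2$. A $\mathrm{V}_k$ controlled only on label $=$ control would move it to $\ket{t_k}$. This is worth keeping in mind for the appendix's realization, where $\mathrm{V}_k$ is described as controlled only on the two label registers being equal.

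So state explicitly, as a hypothesis on the gates, that $\mathrm{V}_k$ acts only on $\ket{t_k}\otimes V_\lambda\oplus\ket{\star}\otimes V_\lambda$, or equivalently that all gates preserve the valid subspace.
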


\begin{figure}[H]
    \centering
    \includegraphics[]{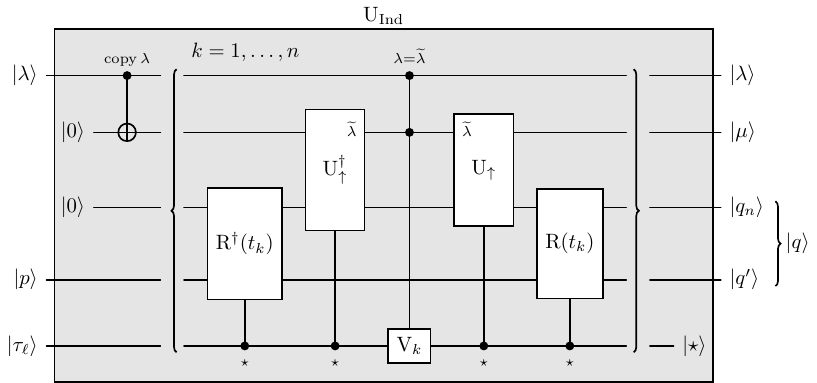}
    \caption{Implementation for $\IndMap[][n]$ of \cite{Beals}, later generalized by \cite{GenQFT}.}
    \label{fig: Beals circuit}
\end{figure}

The proof is given in chapter 4 of \cite{cookbook}.

\subsection{Implementation for \texorpdfstring{$S_n$}{Sn}}\label{subsec: Beals for Sn}
To use the Beals algorithm for the implementation of the $S_n$-QFT, we will first look at some general simplifications.

First, when $k=1$, the gate $\mathrm{R}^\dagger(t_1)$ and $\mathrm{U}_{\uparrow}$ will always act trivially, and can therefore be neglected.
Similarly, when $k = n$, the gate $\mathrm{U}_{\uparrow}^\dagger$ and $\mathrm{R}(t_n)$ act trivially.
Lastly, at the end of step $k$ and the start of step $k{+}1$ for $k \in [n{-}1]$, we can contract 
\begin{equation}
    \mathrm{R}^\dagger(t_k)\mathrm{R}(t_k) = \mathrm{R}(t_k^{-1}t_k),
\end{equation}
using group-homomorphism and that irreps are unitary.
If we fix our transversals $\T_n$ for $S_{n-1} \subset S_n$ as
\begin{equation}
    \T_n \defeq \{t^{(n)}_i \defeq (1, \dots, n)^{i} \mid i \in [n]\},
\end{equation}
we find that 
\begin{equation}
    \mathrm{R}^\dagger(t_k)\mathrm{R}(t_{k-1}) = \mathrm{R}((1, \dots, n)^{-i})) = \mathrm{R}(\sigma_{n-1}\dots\sigma_1) = \mathrm{R}(\sigma_{n-1})\dots\mathrm{R}(\sigma_1).
\end{equation}
We update \cref{fig: Beals circuit} as in \cref{fig: simplified Beals circuit for Sn}.

\begin{figure}[H]
    \centering
    \includegraphics[]{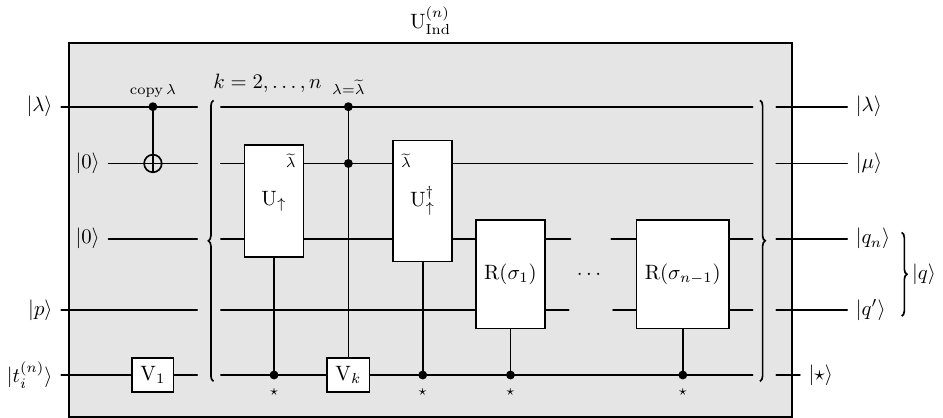}
    \caption{Beals' implementation of $\IndMap[]$ for $S_{n}$, with simplifications.}
    \label{fig: simplified Beals circuit for Sn}
\end{figure}
\cref{app: low level circuits Beals} includes descriptions of low-level implementations of the gates used in \cref{fig: simplified Beals circuit for Sn}. 
There, it becomes clear that the low-level implementation of $\mathrm{R}(\sigma_{k-1})$ requires access to $x$-coordinates of all cells in the SYT corresponding to the path, besides the Yamanouchi-word register $\ket{p}$.
This is not shown in \cref{fig: simplified Beals circuit for Sn}, but the corresponding extra register has the same structure as a Yamanouchi word register.

\subsection{Complexity}
For the gates in \cref{fig: simplified Beals circuit for Sn}, \cref{app: low level circuits Beals} finds the following complexities as in \cref{tab:Beals complexities table}.

\begin{table}[H]
\centering
\begin{tabular}{l|ccc}
\textbf{Gate} & \textbf{Space} & \textbf{Gatecount}  & \textbf{Depth} \\ \hline
$\mathrm{V}_k $    &       $\mathcal{O}(n)$                          &   $\mathcal{O}(n)$     &         $\widetilde{\mathcal{O}}(1)$ \\
$\mathrm{U}_\uparrow$ &     $\widetilde{\mathcal{O}}(n)$          &  $\widetilde{\mathcal{O}}(n)$ & $\widetilde{\mathcal{O}}(\sqrt{n})$                \\
$\mathrm{R}(\sigma_k) $       &     $\widetilde{\mathcal{O}}(n)$           &                   $\widetilde{\mathcal{O}}(1)$ & $\widetilde{\mathcal{O}}(1)$            
\end{tabular}
\caption{Complexities of the gates used in Beals' implementation of $\IndMap[][n]$.}
\label{tab:Beals complexities table}
\end{table}

\begin{lemma}\label{lemma: IndMap Beals complexity}
    For $n \geq 1$, the circuit in \cref{fig: simplified Beals circuit for Sn} implementing $\IndMap[][n]$ uses $\widetilde{\mathcal{O}}(n)$ qubits, $\widetilde{\mathcal{O}}(n^2)$ elementary gates and has depth $\widetilde{\mathcal{O}}(n^{2})$.
\end{lemma}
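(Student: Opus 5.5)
The plan is to count gates in the simplified circuit of \cref{fig: simplified Beals circuit for Sn}, using the per-gate costs in \cref{tab:Beals complexities table}. The circuit for $\IndMap[][n]$ has $n$ rounds, one for each transversal element $t_k$ with $k\in[n]$. First I will check the structure after the simplifications of \cref{subsec: Beals for Sn}. Round $k$ consists of one $\mathrm{U}_\uparrow^\dagger$, one $\mathrm{V}_k$ and one $\mathrm{U}_\uparrow$, with trivial gates dropped at $k=1$ and $k=n$. Consecutive rounds are joined by the contracted product $\mathrm{R}^\dagger(t_k)\mathrm{R}(t_{k-1})$. With the choice $\T_n=\{(1,\dots,n)^i\}$, the successive transversal elements differ by a single $n$-cycle, so each join becomes a product of at most $n-1$ adjacent-transposition gates $\mathrm{R}(\sigma_{n-1})\cdots\mathrm{R}(\sigma_1)$. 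Counting over all rounds then gives $\mathcal{O}(n)$ copies each of $\mathrm{U}_\uparrow$, $\mathrm{U}_\uparrow^\dagger$ and $\mathrm{V}_k$, and $\mathcal{O}(n)\cdot\mathcal{O}(n)=\mathcal{O}(n^2)$ gates $\mathrm{R}(\sigma_j)$.

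For the \textbf{gate count} I will multiply by the table entries. The $\mathrm{V}_k$ gates contribute $n\cdot\mathcal{O}(n)=\mathcal{O}(n^2)$. The $\mathrm{U}_\uparrow^{(\dagger)}$ gates contribute $n\cdot\widetilde{\mathcal{O}}(n)=\widetilde{\mathcal{O}}(n^2)$. The $\mathrm{R}(\sigma_j)$ gates contribute $\mathcal{O}(n^2)\cdot\widetilde{\mathcal{O}}(1)=\widetilde{\mathcal{O}}(n^2)$. The total is therefore $\widetilde{\mathcal{O}}(n^2)$.

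For the \textbf{depth} I will bound the circuit sequentially, with no parallelization across rounds, since every gate touches the shared transversal and path registers. The $\mathrm{V}_k$ gates contribute $n\cdot\widetilde{\mathcal{O}}(1)$. The $\mathrm{U}_\uparrow^{(\dagger)}$ gates contribute $n\cdot\widetilde{\mathcal{O}}(\sqrt n)=\widetilde{\mathcal{O}}(n^{1.5})$. The chains of $\mathrm{R}(\sigma_j)$ contribute $\mathcal{O}(n^2)\cdot\widetilde{\mathcal{O}}(1)=\widetilde{\mathcal{O}}(n^2)$. The $\mathrm{R}$-chains dominate, which gives depth $\widetilde{\mathcal{O}}(n^2)$.

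For the \textbf{space} I will list the registers. The transversal register (with $\ket{\star}$) takes $\mathcal{O}(\log n)$ or $\mathcal{O}(n)$ qubits depending on the encoding. The Yamanouchi word register takes $\mathcal{O}(n\log n)$ qubits, which is optimal by \cref{lemma: Yamanouchi word optimal scaling}. The auxiliary $x$-coordinate register needed by $\mathrm{R}(\sigma_j)$ has the same structure, so it is also $\mathcal{O}(n\log n)$. The two irrep-label registers each store a partition of at most $n$ in $\widetilde{\mathcal{O}}(n)$ qubits. Each gate in the table uses at most $\widetilde{\mathcal{O}}(n)$ ancillas, and I will argue these are returned clean after every gate so they can be reused. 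The total is then $\widetilde{\mathcal{O}}(n)$.

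The main obstacle is justifying that the join between rounds really costs only $\mathcal{O}(n)$ transpositions. This needs an explicit product identity for $(t_k)^{-1}t_{k-1}$ (or $t_{k+1}^{-1}t_k$) with the cyclic transversal. The quotient is a fixed $n$-cycle whose decomposition $\sigma_{n-1}\cdots\sigma_1$ does not depend on $k$. I will verify this decomposition carefully, paying attention to the direction of the cycle and the ordering convention in \cref{eq: Beals circuit as an equation}. The remaining question is whether the ancilla-reuse claim holds for all gates. If some gate leaves garbage, I will uncompute it within that gate, which changes counts only by constant factors.
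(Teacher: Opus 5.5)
Your proposal is correct and follows essentially the same argument as the paper: each of the $n$ rounds costs $\widetilde{\mathcal{O}}(n)$ gates and $\widetilde{\mathcal{O}}(n)$ depth by \cref{tab:Beals complexities table}, with the join of $n{-}1$ gates $\mathrm{R}(\sigma_j)$ dominating the depth, and the space is bounded by the $\widetilde{\mathcal{O}}(n)$-qubit shape, path and reusable ancilla registers. The only extra item in the paper is the initialization step (copying $\lambda$ into the irrep-label register and the uncontrolled $\mathrm{V}_1$), which costs $\mathcal{O}(n)$ gates in constant depth and does not change your bounds.
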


\begin{proof}
    First, consider the space complexity. 
    During the initialization, $V_1$ is not controlled, and only needs a polylog ($\widetilde{\mathcal{O}}(1)$) number of ancillas. 
    The space complexity is dominated by the $\mathcal{O}(n)$-qubit shape registers (\cref{lemma: shape register qubit count}), and the gate count of the initialization step is $\mathcal{O}(n)$.
    A single loop also uses at most $\widetilde{\mathcal{O}}(n)$ ancillary qubits at a time, as can be seen from \cref{tab:Beals complexities table}.
    Hence, $\IndMap[][n]$ can be implemented using $\widetilde{\mathcal{O}}(n)$ qubits. 

    Second, for the size and depth, the initialization step uses $\mathcal{O}(n)$ parallel gates for copying $\ket{\lambda}$, which is a circuit of constant depth. 
    The same is true for $\mathrm{V}_1$. 
    Hence, the initialization step uses $\widetilde{\mathcal{O}}(n)$ gates and has a depth of $\widetilde{\mathcal{O}}(1)$.
    From \cref{tab:Beals complexities table}, we know that a single loop uses $\widetilde{\mathcal{O}}(n)$ gates and has the same depth.
    As there are $n{-}1$ loops in total, the gate count and depth of $\IndMap[][n]$ is $\widetilde{\mathcal{O}}(n^2)$.
\end{proof}

\section{Constructing \texorpdfstring{$\IndMap$}{Uind} for \texorpdfstring{$S_n$}{Sn}: Mackey approach}\label{sec: induction map Mackey}

In this section, we will introduce a simplified and corrected implementation of the algorithm introduced by \cite{SnQFT_Speedup}.
This paper implements $\IndMap[\lambda][n]$---referred to as $\mathrm{H}_{n,\lambda}$ by Kawano and Sekigawa---recursively, using a sequence of four gates. 
First, a gate named $\mathrm{K}_n$ acts on the incoming path $\ket{p}$ as the left regular representation of group elements conditioned on the transversal register.
Second, the gate $\IndMap[][n{-}1]$ is used recursively, wrapped inside a few classical gates.
Third, a classical gate named $\mathrm{P}_n$ relabels basis vectors. 
Lastly, a gate $\mathrm{A}_n$, referred to as the \emph{embedding operation}, is at the core of relating $\IndMap[][n{-}1]$ to $\IndMap[][n]$.

However, there are a few issues.
First, the described basis after recursively applying $\IndMap[][n{-}1]$ does not appear to span the codomain in its definition.
While $\IndMap[\lambda]$ has all paths to successors of $\lambda$ in its image, \cite{SnQFT_Speedup} only consider the subset of these paths going through $\lambda$, and do not mention the other paths which we will refer to as \emph{detours}.
Second, their complexity analysis treats the gate $\mathrm{A}_n$ as implementable using $\mathcal{O}(n)$ elementary operations, using an argument based on matrix decomposition methods.
However, this matrix decomposition algorithm needs to be carried out coherently to obtain a polynomial size circuit, and the complexity analysis does mention the related computational overhead.
Unfortunately, the size of $\mathcal{O}(n)$ for implementing $\mathrm{A}_n$ does not hold in the usual gate count model.

By taking a step back, we will show how the second induction relation follows from Mackey theory. 
This allows us to fix a new choice of transversals which removes the need for gate $\mathrm{K}_n$ entirely.
Second, by correctly identifying the basis after the recursion, including detours, we show that we also do not need the classical relabeling gate $\mathrm{P}_n$.
Lastly, by explicitly providing a low-level implementation of $\mathrm{A}_n$, we show that the that the asymptotic gate count of the embedding operation is actually $\mathcal{O}(n^{3/2})$ instead of $\mathcal{O}(n)$, while the complexity $\mathcal{O}(n)$ only holds for the depth of the circuit.

\subsection{Mackey theory: Background}

This section summarizes the techniques from \cite{cookbook}, simplified and adapted to the context of the Symmetric group. 
First, let introduce Mackey theory in the way that we will use it for finding the second induction relation for implementing $\IndMap[][n]$.

\begin{definition}
    For a group $G$ and subgroups $H, K \subset G$, denote by $HgK \defeq \{hgk\mid h \in H, k \in K\}$ the \emph{$(H,K)$-double coset of $g$}.
    Denote by $H \backslash G / K = \{HgK\mid g \in G\}$ the set of all $(H, K)$-double cosets.
    If $H = K$, we call $H g H$ the \emph{$H$-double coset of $g$}.
\end{definition}

\begin{lemma}\label{lemma: double cosets partition group}
    For $H$, $K$ and $G$ as before, the set of double cosets $H \backslash G / K$ partitions $G$.
\end{lemma}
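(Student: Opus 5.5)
The plan is to realise double cosets as the equivalence classes of a suitable equivalence relation on $G$; a partition then follows automatically. Define $g_1 \sim g_2$ if and only if there exist $h \in H$ and $k \in K$ with $g_2 = h g_1 k$. The $(H,K)$-double coset $Hg_1K$ is exactly the set of $g_2$ with $g_1 \sim g_2$. Since the equivalence classes of any equivalence relation on $G$ partition $G$, it suffices to check that $\sim$ is reflexive, symmetric and transitive.

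The three axioms each follow from one subgroup property of $H$ and $K$.
\begin{itemize}
    \item \emph{Reflexivity} uses the identity: $g = e g e$ with $e \in H \cap K$, so $g \in HgK$.
    \item \emph{Symmetry} uses closure under inverses: if $g_2 = h g_1 k$, then $g_1 = h^{-1} g_2 k^{-1}$ with $h^{-1} \in H$ and $k^{-1} \in K$.
    \item \emph{Transitivity} uses closure under products: if $g_2 = h g_1 k$ and $g_3 = h' g_2 k'$, then $g_3 = (h'h) g_1 (k k')$ with $h'h \in H$ and $kk' \in K$.
\end{itemize}

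Alternatively, one can argue directly, in the same spirit as the proof of \cref{lemma: decomposing group elements with transversal}. Every $g \in G$ lies in $HgK$, so the double cosets cover $G$. For disjointness, suppose $x \in Hg_1K \cap Hg_2K$, say $x = h_1 g_1 k_1 = h_2 g_2 k_2$. Then $g_1 = h_1^{-1}h_2\, g_2\, k_2 k_1^{-1}$, so $g_1 \in Hg_2K$. It follows that $Hg_1K \subseteq H(Hg_2K)K = Hg_2K$, and by symmetry the two double cosets are equal.

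None of these steps is a genuine obstacle. The only point needing care is the distinction between the double cosets as sets, which are the members of $H\backslash G/K$, and their representatives $g$. Two distinct representatives may give the same double coset, so the partition consists of the distinct sets $HgK$ rather than one block per element $g$.
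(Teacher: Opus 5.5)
Your proof is correct and follows the paper's argument: both define the relation through $g_2 = h g_1 k$ (the paper phrases it as $Hg_1K = Hg_2K$, then rewrites it in this form) and verify reflexivity, symmetry and transitivity using the identity, inverses and products in $H$ and $K$. Your direct covering-plus-disjointness argument is a valid alternative that the paper does not include. Defining $\sim$ directly via the existence of $h$ and $k$, and identifying its classes with the sets $Hg_1K$, is also slightly cleaner than the paper's framing.
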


\begin{proof}
    To show that $(H, K)$-double cosets partition $G$, we will prove that for $g_1, g_2 \in G$, the binary relation $g_1 \sim g_2$ if $H g_1 K = H g_2 K$ is an equivalence relation.
    Note that $g_1 \sim g_2$ if and only if there exists a $h \in H$ and $k \in K$ such that $g_1 = h g_2 k$.
    \begin{itemize}
        \item Symmetry: if $g_1 \sim g_2$ with $h \in H$ and $k \in K$, then $g_2 \sim g_1$ with $h^{-1} \in H$ and $k^{-1} \in K$;
        \item Reflexivity: use $e \in H$ and $e \in K$ to show that $g \sim g$;
        \item Transitivity: if $g_1 \sim g_2$ with $h_1 \in H$ and $k_1 \in K$, and $g_2 \sim g_3$ with  $h_2 \in H$ and $k_2 \in K$, then $g_1 \sim g_3$ with $h_1 h_2 \in H$ and $k_2 k_1 \in K$.\qedhere
    \end{itemize}
\end{proof}

\begin{lemma}\label{lemma: Mackey subgroups}
    For $H$, $K$ and $G$ as before, let $\Omega$ be a set of $(H, K)$-double coset representatives. 
    Hence, for all $g \in G$, we have $|\Omega \cap HgK| = 1$.
    Then, for any $\omega \in \Omega$, the set $H_\omega \defeq H \cap \omega K \omega^{-1} \subseteq H$ is a subgroup.
\end{lemma}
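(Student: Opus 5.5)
The plan is to verify the subgroup axioms for $H_\omega = H \cap \omega K \omega^{-1}$ by first showing that $\omega K \omega^{-1}$ is itself a subgroup of $G$. Since the intersection of two subgroups is a subgroup, and $H_\omega \subseteq H$ holds by construction, this is enough. The choice of $\Omega$ as a set of double coset representatives plays no role: the argument works for any $\omega \in G$.

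The key step is that conjugation $c_\omega : G \to G$, $g \mapsto \omega g \omega^{-1}$, is a group homomorphism. Indeed,
\begin{equation*}
    c_\omega(g_1)c_\omega(g_2) = \omega g_1 \omega^{-1}\omega g_2 \omega^{-1} = \omega g_1 g_2 \omega^{-1} = c_\omega(g_1 g_2).
\end{equation*}
The image of a subgroup under a homomorphism is a subgroup, so $\omega K \omega^{-1} = c_\omega(K)$ is a subgroup. We can also check this directly:
\begin{itemize}
    \item $e = \omega e \omega^{-1}$ lies in $\omega K \omega^{-1}$;
    \item $(\omega k_1 \omega^{-1})(\omega k_2\omega^{-1}) = \omega k_1k_2\omega^{-1}$, so the set is closed under products;
    \item $(\omega k \omega^{-1})^{-1} = \omega k^{-1}\omega^{-1}$, so the set is closed under inverses.
\end{itemize}

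Next, we check that $H \cap \omega K\omega^{-1}$ satisfies the subgroup axioms. It contains $e$, since $e$ lies in both sets. If $x, y$ lie in both $H$ and $\omega K \omega^{-1}$, then $xy$ and $x^{-1}$ lie in each of them, because each is closed under products and inverses. Finiteness is not needed.

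There is no real obstacle here. The only point that needs care is to state explicitly that membership in $\Omega$ is irrelevant.
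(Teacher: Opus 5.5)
Your proof is correct and follows the same route as the paper. Both show that $\omega K \omega^{-1}$ is a subgroup of $G$ and conclude that the intersection $H \cap \omega K \omega^{-1}$, being contained in $H$, is a subgroup of $H$; you simply write out the subgroup checks that the paper leaves implicit.
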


\begin{proof}
    Since $K$ is a subgroup of $G$, we also know that $\omega K \omega^{-1}$ is a subgroup of $G$. 
    As $H_\omega$ is defined as an intersection of subgroups, and as it is a subset of $H$ by definition, we conclude that $H_\omega$ is a subgroup of $H$.
\end{proof}

\begin{lemma}\label{lemma: Mackey transversals}
    Let $H$, $K$, $G$ and $\Omega$ as before. 
    For all $\omega \in \Omega$, let $\T_\omega$ be a left transversal for $H_\omega \subseteq H$. 
    Then, the set 
    \begin{equation}\label{eq: transversal from Mackey subgroup transversals}
        \T \defeq \bigsqcup_{\omega \in \Omega} \T_\omega \omega
    \end{equation}
    is a left transversal for $K \subset G$.
\end{lemma}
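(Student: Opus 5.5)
The plan is to show that every left coset $gK$ of $K$ in $G$ contains exactly one element of $\T$. I would organise this by double cosets. By \cref{lemma: double cosets partition group}, $G$ is partitioned by the double cosets $H\omega K$ with $\omega \in \Omega$. Each left coset $gK$ lies inside exactly one of them, namely $HgK$. An element $t\omega$ with $t \in \T_\omega \subseteq H$ lies in $H\omega K$. So the elements of $\T_\omega\omega$ can only represent left cosets inside $H\omega K$. It therefore suffices to fix $\omega \in \Omega$ and show that $\T_\omega\omega$ contains exactly one representative of each left $K$-coset contained in $H\omega K$. As a by-product, the union in \cref{eq: transversal from Mackey subgroup transversals} is then genuinely disjoint, since distinct $\omega$ give disjoint double cosets.

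Within a fixed double coset, every left $K$-coset in $H\omega K$ has the form $h\omega K$ with $h \in H$. The key step is the equivalence
\[
h_1\omega K = h_2\omega K \iff \omega^{-1}h_1^{-1}h_2\omega \in K \iff h_1^{-1}h_2 \in \omega K\omega^{-1} \cap H = H_\omega \iff h_1 H_\omega = h_2 H_\omega,
\]
where $h_1^{-1}h_2 \in H$ is automatic. Consequently the map $hH_\omega \mapsto h\omega K$ is a well-defined bijection. It goes from the left cosets of $H_\omega$ in $H$ (a subgroup by \cref{lemma: Mackey subgroups}) to the left $K$-cosets contained in $H\omega K$. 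Surjectivity is clear, because every element of $H\omega K$ is $h\omega k$ for some $h$ and $k$.

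With this bijection, existence and uniqueness follow. Since $\T_\omega$ contains exactly one representative $t$ of each coset $hH_\omega$, each coset $h\omega K$ contains $t\omega$, which gives existence. For uniqueness, suppose $t_1\omega K = t_2\omega K$ with $t_1, t_2 \in \T_\omega$. Then $t_1H_\omega = t_2H_\omega$, so $t_1 = t_2$. Combining over all $\omega$ gives a left transversal.

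The only step needing care is the middle equivalence, namely checking that the well-definedness and injectivity conditions both reduce exactly to membership in $H_\omega$. This is a direct computation, so I expect no real obstacle.
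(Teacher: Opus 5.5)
Your proof is correct and follows essentially the same route as the paper's: both work inside a single double coset $H\omega K$ and rest on the computation $t_1\omega K = t_2\omega K \iff t_1^{-1}t_2 \in H \cap \omega K\omega^{-1} = H_\omega$. The paper uses this computation for uniqueness, and uses it implicitly for existence through the decomposition $h = t'(\omega k'\omega^{-1})$. Packaging it as the bijection $hH_\omega \mapsto h\omega K$ is a slightly cleaner presentation, and it makes explicit a point the paper handles only in passing by appealing to the uniqueness of $\omega$: an element of $\T_{\omega'}\omega'$ with $\omega' \neq \omega$ can never lie in $gK$.
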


\begin{proof}
    Let $g \in G$. 
    To show that $\T$ is a left transversal for $K$ in $G$, we show that $|gK \cap \T| = 1$.
    Note that by \cref{lemma: double cosets partition group}, there exist a $h \in H$ and $k \in K$ and a unique $\omega \in \Omega$ such that $g = h \omega k$. 
    Also, note that by \cref{lemma: decomposing group elements with transversal}, we have that $h = t' (\omega k' \omega^{-1})$ for a unique $\omega k' \omega^{-1} \in H_\omega$ and $t' \in \T_\omega$.
    We rewrite $g = t' \omega k' \omega^{-1} \omega k = t' \omega k'k$.
    It follows that $t' \omega \in g K$. 
    By definition of $\T$ in \cref{eq: transversal from Mackey subgroup transversals}, note that $t' \omega \in \T$ so that $|gK \cap \T| \geq 1$.

    While $\omega$ is unique by definition of $\Omega$, we still need to show uniqueness of $t'$. 
    Suppose that for a different choice of $h$ and $k$, we find that $h = t'' \omega k''\omega^{-1}$ with a possibly different $t'' \in \T_\omega$.
    With the same analysis as before, we find that that $t'' \omega \in g K$, which implies that $t''\omega K = t' \omega K$.
    We rewrite that $t''^{-1} t' \in \omega K \omega^{-1}$. 
    As $t'', t' \in \T_\omega \subseteq H$, we know that $t''^{-1} t' \in H$ so that $t''^{-1} t' \in H_\omega$.
    Hence, $t' H_\omega = t'' H_\omega$, which means that $t' = t''$ by definition of $\T_\omega$ as a transversal, showing uniqueness of $t'$.
    We conclude that $|gK \cap \T| = |\{t'\omega\}| = 1$.
\end{proof}

\begin{definition}[Mackey transform]\label{def: Mackey transform}
    Let $H$, $K$, $G$ and $\Omega$ as before.
    Let $\T = \sqcup_{\omega \in \Omega} \T_\omega \omega $ be as in \cref{lemma: Mackey transversals}.
    Define the \emph{Mackey transform} as the linear map $\C[\T] \to \bigoplus_{\omega \in \Omega} \C[\T_\omega]$ by its action on basis vectors $\ket{t} \in \C[\T]$ as 
    \begin{equation}
        \mathrm{M} \ket{t} = \ket{t\omega^{-1}},
    \end{equation}
    where the vector on the right is in the $\omega$-summand, with $\omega \in \Omega$ such that $t \in H \omega K$.
\end{definition}

Note that $t\omega^{-1} \in \T_\omega$.

\begin{lemma}[Mackey decomposition]\label{lemma: Mackey}
    Let $H$, $K$, $G$ and $\Omega$ as before, and pick a transversal $\T = \sqcup_{\omega \in \Omega} \T_\omega \omega $ as in \cref{lemma: Mackey transversals} to fix a basis for induced representations from $K$ to $G$ and from all $H_\omega$ to $H$.
    For any $\C[K]$-module $V$, with corresponding matrix representation $\Rep : K \to \U(V)$, we have 
    \begin{equation}
        \Res^G_H \Ind^G_K V \overset{\mathrm{M} \otimes \mathrm{I}_{V}}{\simeq} \bigoplus_{\omega \in \Omega} \Ind_{H_\omega}^H V^{(\omega)},
    \end{equation}
    where $V^{(\omega)}$ denotes a ``twisted'' $\C[H_\omega]$-module.
    The action of $h \in H_\omega$ on a vector $\ket{P} \in V^{(\omega)}$ is
    \begin{equation}
        h (\ket{P}) = \Rep(\omega^{-1}h \omega)\ket{P}.
    \end{equation}
    As $h = \omega k \omega^{-1}$ for some $k \in K$, note that $\omega^{-1}h \omega = k$, which means that we can recycle the definition of $V$ as $\C[K]$-module to define the twisted $\C[H_\omega]$-module $V^{(\omega)}$.
\end{lemma}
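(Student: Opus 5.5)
The plan is to verify directly that $\mathrm{M} \otimes \mathrm{I}_V$ is a unitary intertwiner between the two $\C[H]$-modules.

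\emph{Unitarity.} By \cref{lemma: double cosets partition group} and \cref{lemma: Mackey transversals}, every $t \in \T$ lies in exactly one double coset $H\omega K$ and can be written uniquely as $t = t'\omega$ with $t' \in \T_\omega$. Hence $\mathrm{M}$ sends the basis $\{\ket{t}\}$ bijectively onto the disjoint union of the bases $\{\ket{t'}\}$ of the summands $\C[\T_\omega]$. So $\mathrm{M}$ is a permutation of basis vectors, and in particular unitary. It is then enough to check that, for every $h \in H$ and every basis vector $\ket{t}\otimes\ket{P}$, applying $\mathrm{M}\otimes \mathrm{I}_V$ after $\Ind^G_K\Rep(h)$ gives the same result as applying $\bigoplus_\omega \Ind^H_{H_\omega}\Rep^{(\omega)}(h)$ after $\mathrm{M}\otimes\mathrm{I}_V$.

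\emph{Left-hand side.} Write $t = t'\omega$. By \cref{def: induced rep}, the left-hand side is $\ket{m_\T(ht)} \otimes \Rep(m_\T(ht)^{-1}ht)\ket{P}$. (I read the paper's $m_\T(g)^{-1}g$ as $m_\T(gt)^{-1}gt$, which is the standard and well-defined form.) Now $ht = ht'\omega$ with $ht' \in H$. Decompose $ht' = s\,h_\omega$ with $s = m_{\T_\omega}(ht') \in \T_\omega$ and $h_\omega \in H_\omega$, so $h_\omega = \omega k \omega^{-1}$ for some $k \in K$. Then
\[ ht = s\omega\,(\omega^{-1}h_\omega\omega) = s\omega k. \]
By uniqueness in \cref{lemma: Mackey transversals}, $m_\T(ht) = s\omega$ and $m_\T(ht)^{-1}ht = k = \omega^{-1}h_\omega\omega$. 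Since $s\omega$ lies in the same double coset $H\omega K$ as $t$, applying $\mathrm{M}$ gives
\[ \ket{s}\otimes \Rep(\omega^{-1}h_\omega\omega)\ket{P} \]
in the $\omega$-summand.

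\emph{Right-hand side.} Apply $\Ind^H_{H_\omega}\Rep^{(\omega)}(h)$ to $\ket{t'}\otimes\ket{P}$, with transversal $\T_\omega$. This gives $\ket{m_{\T_\omega}(ht')}\otimes \Rep^{(\omega)}(m_{\T_\omega}(ht')^{-1}ht')\ket{P} = \ket{s}\otimes\Rep^{(\omega)}(h_\omega)\ket{P}$. By the definition of the twisted module, this equals $\ket{s}\otimes \Rep(\omega^{-1}h_\omega\omega)\ket{P}$, which matches the left-hand side.

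The main obstacle is bookkeeping, specifically showing that $m_\T(ht) = s\omega$ with $s$ equal to $m_{\T_\omega}(ht')$. This relies on the uniqueness argument already proved in \cref{lemma: Mackey transversals}, together with the fact that the action of $H$ preserves each double coset, so that the $\omega$-summands are $H$-invariant. I will also briefly note that $\Rep^{(\omega)}$ is a representation, because $h \mapsto \omega^{-1}h\omega$ is a homomorphism from $H_\omega$ into $K$.
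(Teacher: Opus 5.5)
Your proof is correct and complete. The paper gives no proof of this lemma: it is quoted without argument as part of its summary of \cite{cookbook}, so there is no in-paper proof to compare against.

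What you give is the standard proof of Mackey's restriction formula, written in the paper's transversal language. Grouping the basis vectors $\ket{t'\omega}\otimes\ket{P}$ of $\Ind^G_K V$ by double coset gives $H$-invariant blocks. On the $\omega$-block, the relabeling $t'\omega\mapsto t'$ intertwines the two actions, because $ht'\omega = m_{\T_\omega}(ht')\,\omega\,(\omega^{-1}h_\omega\omega)$ with $\omega^{-1}h_\omega\omega\in K$.

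You were right to read the twisting factor in \cref{def: induced rep} as $m_\T(gt)^{-1}gt$. The printed $m_\T(g)^{-1}g$ is a typo: it is independent of $t$, does not give a homomorphism in general, and the lemma only makes sense with the corrected formula.

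One minor bookkeeping remark: $m_\T(ht)=s\omega$ follows directly from $s\omega\in\T$ and $ht\in s\omega K$. That is, it comes from \cref{lemma: decomposing group elements with transversal} applied to the transversal $\T$, not from the uniqueness argument inside the proof of \cref{lemma: Mackey transversals}.
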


\subsection{Mackey theory and transversals for \texorpdfstring{$S_n$}{Sn}}

Now, we apply the theory from \cite{cookbook} to the symmetric group.
For this section, fix a level $n > 1$ on the chain of subgroups from \cref{eq: Sn chain of subgroups}.

\begin{lemma}\label{lemma: Sn-1 double cosets of Sn}
    There are two $S_{n-1}$-double cosets of $S_n$, and we can use $\Omega_n \defeq \{e, \sigma_{n-1}\}$ as a set of double coset representatives. 
    We find that the corresponding subgroups as in \cref{lemma: Mackey subgroups} are $H_e = S_{n-1}$ and $H_{\sigma_{n-1}} = S_{n-2}$.
\end{lemma}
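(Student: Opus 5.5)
The plan is to characterize the double cosets $S_{n-1} g S_{n-1}$ through the value $g(n)$. The key observation is that left multiplication by $h \in S_{n-1}$ and right multiplication by $k \in S_{n-1}$ satisfy $(hgk)(n) = h(g(n))$, because $k(n) = n$. Since $h$ fixes $n$ and can map any element of $[n-1]$ to any other element of $[n-1]$, the predicate ``$g(n) = n$'' is invariant on each double coset. This already shows there are at least two double cosets. The first, $\{g : g(n) = n\}$, equals $S_{n-1} = S_{n-1} e S_{n-1}$. The second contains $\sigma_{n-1}$, since $\sigma_{n-1}(n) = n-1 \neq n$.

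Next I show that every $g$ with $g(n) = j \neq n$ lies in $S_{n-1}\sigma_{n-1}S_{n-1}$. Pick $h \in S_{n-1}$ with $h(n-1) = j$; if $j = n-1$ take $h = e$, otherwise take the transposition $(j, n-1)$. Then $k \defeq \sigma_{n-1} h^{-1} g$ satisfies $k(n) = \sigma_{n-1}(h^{-1}(j)) = \sigma_{n-1}(n-1) = n$, so $k \in S_{n-1}$. Moreover $g = h\sigma_{n-1}k$, using $\sigma_{n-1}^{-1} = \sigma_{n-1}$. Together with \cref{lemma: double cosets partition group}, this shows there are exactly two double cosets and that $\Omega_n = \{e, \sigma_{n-1}\}$ is a valid set of representatives.

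For the subgroups of \cref{lemma: Mackey subgroups}, the case $\omega = e$ is immediate: $H_e = S_{n-1} \cap S_{n-1} = S_{n-1}$. For $\omega = \sigma_{n-1}$, conjugation gives $\sigma_{n-1} S_{n-1} \sigma_{n-1}$, which is the stabilizer of $\sigma_{n-1}(n) = n-1$. Intersecting it with $S_{n-1}$, the stabilizer of $n$, gives the permutations fixing both $n-1$ and $n$, which is exactly $S_{n-2}$.

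The main obstacle is the explicit factorization $g = h\sigma_{n-1}k$, and it is a minor one. It reduces to the stabilizer computation above. The edge case $n = 2$ also checks out: $S_1 = S_0$ is trivial and the double cosets are $\{e\}$ and $\{\sigma_1\}$.
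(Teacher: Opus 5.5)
Your proof is correct, but it establishes completeness of $\Omega_n$ by a different route than the paper.

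\textbf{The paper's route.} It first computes $H_{\sigma_{n-1}} = S_{n-2}$ by two inclusions: $\sigma_{n-1}$ commutes with $S_{n-2}$, and any $g = \sigma_{n-1} h \sigma_{n-1} \in S_{n-1}$ forces $h(n-1) = n-1$. It then counts. By the double coset formula (\cref{lemma: double coset formula}), $|S_{n-1}\sigma_{n-1}S_{n-1}| = ((n-1)!)^2/(n-2)! = (n-1)(n-1)!$. Together with $|S_{n-1}| = (n-1)!$ this exhausts all $n!$ elements.

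\textbf{Your route.} You separate the two double cosets with the invariant ``$g(n) = n$''. You then give an explicit factorization $g = h\sigma_{n-1}k$ to show that every $g$ with $g(n) \neq n$ lies in $S_{n-1}\sigma_{n-1}S_{n-1}$. This is more elementary and self-contained: it needs neither the double coset formula nor prior knowledge of $H_{\sigma_{n-1}}$. It also describes the two double cosets explicitly as $\{g : g(n) = n\}$ and $\{g : g(n) \neq n\}$, with a constructive decomposition. The paper's counting argument instead reuses general machinery, and completeness comes almost for free once $H_{\sigma_{n-1}}$ is known.

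Your computation of $H_{\sigma_{n-1}}$ via $\sigma_{n-1}\,\mathrm{Stab}(n)\,\sigma_{n-1} = \mathrm{Stab}(n-1)$ is a more compact packaging of the same two-inclusion argument the paper gives.
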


The proof is given in \cref{app: Sn-1 double cosets of Sn}.
Now, we can use \cref{lemma: Mackey transversals} to we find a suitable set of transversals along the chain of subgroups of $S_n$.

\begin{proposition}[Transversals for along the chain of subgroups of the symmetric group]\label{prop: Mackey transversals for Sn}
    Denote by $\T_n$ left transversal for $S_{n-1} \subset S_n$, and let $\T_1 = \{e\}$.
    Then, \cref{lemma: Mackey transversals} together with $\Omega_n$ from \cref{lemma: Sn-1 double cosets of Sn} as a set of double coset representatives fixes
    \begin{equation}\label{eq: transversals explicit form}
        \T_n \defeq \{t_{i}^{(n)} \defeq (i, \dots, n) \mid i \in [n]\},
    \end{equation}
    where $t_{n}^{(n)} = e$. 
    With this indexing of transversal elements, $i < n$ implies that $t_{i}^{(n)} = t_{i}^{(n-1)} \sigma_{n-1}$.
\end{proposition}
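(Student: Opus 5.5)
We prove the proposition by induction on $n$, applying \cref{lemma: Mackey transversals} with $G = S_n$ and $H = K = S_{n-1}$. By \cref{lemma: Sn-1 double cosets of Sn}, the double coset representatives are $\Omega_n = \{e, \sigma_{n-1}\}$, with $H_e = S_{n-1}$ and $H_{\sigma_{n-1}} = S_{n-2}$. Hence \cref{lemma: Mackey transversals} gives
\begin{equation*}
    \T_n = \T_{e}\, e \;\sqcup\; \T_{\sigma_{n-1}}\, \sigma_{n-1},
\end{equation*}
where $\T_e$ is a left transversal of $S_{n-1}$ in itself and $\T_{\sigma_{n-1}}$ is a left transversal of $S_{n-2}$ in $S_{n-1}$.

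The natural choices are $\T_e = \{e\}$, since $S_{n-1}$ has only one coset in itself, and $\T_{\sigma_{n-1}} = \T_{n-1}$, the transversal already fixed at the previous level. This is exactly what makes the construction recursive along the chain. The $e$-summand contributes the element $t^{(n)}_n = e$. The $\sigma_{n-1}$-summand contributes the elements $t^{(n-1)}_i \sigma_{n-1}$ for $i \in [n-1]$, and we define $t^{(n)}_i$ to be this product. This gives the claimed relation $t^{(n)}_i = t^{(n-1)}_i \sigma_{n-1}$ for $i < n$ by construction.

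It remains to check the explicit form \eqref{eq: transversals explicit form} by induction on $n$. For the base case $n = 2$, we have $\T_1 = \{e\}$, so $t^{(2)}_1 = \sigma_1 = (1,2)$ and $t^{(2)}_2 = e$. For the induction step, suppose $t^{(n-1)}_i = (i, \dots, n-1)$, where for $i = n-1$ this is the trivial cycle $e$. We then compute $(i, \dots, n-1)(n-1, n)$, composing right to left, and track each point:
\begin{itemize}
    \item $n \mapsto n-1 \mapsto i$;
    \item $n-1 \mapsto n$, which the first cycle fixes;
    \item $j \mapsto j+1$ for $i \le j < n-1$.
\end{itemize}
This is the cycle $(i, \dots, n)$. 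The case $i = n-1$ gives $e \cdot \sigma_{n-1} = (n-1, n)$, which is consistent.

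The main point needing care is the convention for composing permutations. The paper composes functions, so $gh$ means "apply $h$ first," and the verification above must use this order. With the opposite order, the product would be a different $n$-cycle, and only the relation $t^{(n)}_i = t^{(n-1)}_i\sigma_{n-1}$ would need re-indexing. We therefore carry out this cycle check explicitly with $c(c_i) = c_{i+1}$.

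Finally, as a sanity check that $\T_n$ really is a transversal, the elements $(i, \dots, n)$ send $n$ to $i$ (with $e$ sending $n$ to $n$). So they lie in distinct left cosets of $S_{n-1}$, the stabilizer of $n$. This also follows directly from \cref{lemma: Mackey transversals}.
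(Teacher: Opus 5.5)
Your proposal is correct and follows essentially the same route as the paper's proof. Both write $\T_n = \T_e\, e \sqcup \T_{\sigma_{n-1}}\sigma_{n-1}$ via the Mackey transversal lemma, choose $\T_e=\{e\}$ and $\T_{\sigma_{n-1}}=\T_{n-1}$, and verify $(i,\dots,n{-}1)\sigma_{n-1}=(i,\dots,n)$ inductively; your explicit point-tracking under the right-to-left convention and the coset check via $g(n)$ fill in details the paper leaves implicit. Your side remark on the opposite composition order is imprecise: for $i<n{-}1$ the reversed product is $(i,\dots,n{-}2,n,n{-}1)$, a cycle of length $n-i+1$, so the explicit form itself would change, not just the indexing. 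That remark plays no role in your argument.
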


\begin{proof}
    Recall that \cref{lemma: Mackey transversals} states that $\T_n = \bigsqcup_{\omega \in \Omega_n} \T_\omega \omega = \T_e e \sqcup T_{\sigma_{n-1}} \sigma_{n-1}$ is a left transversal for $S_{n-1} \in S_n$, where $\T_\omega$ is a left transversal for $H_\omega \subseteq S_{n-1}$.
    From \cref{lemma: Sn-1 double cosets of Sn}, we know that $H_e = S_{n-1}$, which has $\T_e = \{e\}$ as a most canonical choice of transversal, and that $H_{\sigma_{n-1}} = S_{n-2}$, so that we can inductively use $\T_{\sigma_{n-1}} = \T_{n-1}$.
    We find that
    \begin{equation}
        \T_n = \{e\} \sqcup T_{n-1} \sigma_{n-1}.
    \end{equation}
    We will now prove the explicit form of $\T_n$ by induction on $n$
    
    For the base case of $n = 1$, we have $\T_1 = \{e\}$ by definition, and there is no $\sigma_{n-1}$. 
    For $n > 1$, the induction hypothesis states that $\T_{n-1} \defeq \{t_{i}^{(n-1)} \defeq (i, \dots, n{-}1) \mid i \in [n{-}1]\}$.
    We find 
    \begin{align}
        \T_n &= \{e\} \sqcup \{t_{i}^{(n-1)} = (i, \dots, n{-}1) \mid i \in [n{-}1]\} \sigma_{n-1}\\
        &= \{t_{i}^{(n-1)}\sigma_{n-1} = (i, \dots, n{-}1)\sigma_{n-1} \mid i \in [n{-}1]\} \sqcup \{e\}\\
        &= \{t_{i}^{(n-1)}\sigma_{n-1} = (i, \dots, n) \mid i \in [n{-}1]\} \sqcup \{e\}.
    \end{align}
    If we define $t_{i}^{(n)} \defeq t_{i}^{(n-1)}\sigma_{n-1} = (i, \dots, n)$ for $i < n$, and $t_{n}^{(n)} \defeq e$, we find back the explicit form as 
    \begin{equation}
        \T_n = \{t_{i}^{(n)} = (i, \dots, n) \mid i \in [n]\}.
    \end{equation}\qedhere
\end{proof}

Next, we will introduce two $\C[S_{n-1}]$-module isomorphisms, both decomposing the module $\Res_{S_{n-1}}^{S_n}\Ind_{S_{n-1}}^{S_n} V_\lambda$ as a direct sum of simple $S_{n-1}$-modules, with different multiplicity spaces.
For the first isomorphism, which is a corollary of \cref{lemma: IndMap basis transform}, define the following space.

\begin{definition}
    Fix $\lambda \in \widehat {S}_{n-1}$. 
    For all $\widetilde{\lambda} \in \widehat{S}_{n-1}$, define the vector space 
    \begin{equation}
        M^{+}_{\lambda, \widetilde{\lambda}} \defeq \C^{\N^+(\lambda) \cap \N^+(\widetilde{\lambda})}.
    \end{equation}
\end{definition}
Clearly, if $\lambda = \widetilde{\lambda}$, we have that $M^{+}_{\lambda, \widetilde{\lambda}} = \C^{\N^+(\lambda)}$.
Otherwise, if $\lambda \neq \widetilde{\lambda}$, \cref{lemma: diamond} implies that $M^{+}_{\lambda, \widetilde{\lambda}}$ is zero- or one-dimensional. 

\begin{corollary}\label{cor: indmap restricted to Sn-1}
    Fix $\lambda \in \widehat{S}_{n-1}$.
    Then, we have the following isomorphism of $\C[S_{n-1}]$-modules:
    \begin{equation}\label{eq: isomorhism indmap restricted to Sn-1}
        \Res^{S_{n}}_{S_{n-1}} \Ind^{S_{n}}_{S_{n-1}} V_\lambda \overset{\IndMap[\lambda][n]}{\simeq} \bigoplus_{\widetilde{\lambda} \in \widehat{S}_{n-1}} V_{\widetilde{\lambda}} \otimes M^{+}_{\lambda, \widetilde{\lambda}}.
    \end{equation}
\end{corollary}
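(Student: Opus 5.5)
Since \cref{lemma: IndMap basis transform} already gives an isomorphism of $\C[S_n]$-modules, the plan is to restrict it to $S_{n-1}$ and then regroup the summands. Restriction is a functor, and every $S_n$-equivariant map is automatically $S_{n-1}$-equivariant. So $\IndMap[\lambda][n]$ is also an isomorphism of $\C[S_{n-1}]$-modules
\begin{equation*}
    \Res^{S_n}_{S_{n-1}}\Ind^{S_n}_{S_{n-1}} V_\lambda \simeq \bigoplus_{\mu \in \N^+(\lambda)} \Res^{S_n}_{S_{n-1}} V_\mu .
\end{equation*}
Concretely, \cref{eq: IndMap intertwines induced rep} holds for all $g \in S_n$, and in particular for $g \in S_{n-1}$.

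Next, decompose each summand using the subgroup-adapted property of Young's orthogonal form. By \cref{eq: subgroup adapted basis restriction version}, the $S_{n-1}$ action on $V_\mu$ in the Gelfand--Tsetlin basis is block-diagonal, with blocks indexed by $\widetilde{\lambda} = Q_{n-1} \in \N^-(\mu)$. The block for $\widetilde{\lambda}$ is spanned by $\{\ket{Q' \to \mu} : Q' \in \Path(\widetilde{\lambda})\}$, and on it $S_{n-1}$ acts by exactly $\Rep_{\widetilde{\lambda}}$, via the identification $\ket{Q'\to\mu} \mapsto \ket{Q'}$. This gives the equality $\Res V_\mu = \bigoplus_{\widetilde{\lambda} \in \N^-(\mu)} V_{\widetilde{\lambda}}$, realized in the output indices $(Q',\mu)$ of \cref{def: IndMap}.

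Finally, exchange the order of summation. The pairs $(\mu,\widetilde{\lambda})$ with $\mu \in \N^+(\lambda)$ and $\widetilde{\lambda} \in \N^-(\mu)$ are the same as the pairs with $\widetilde{\lambda} \in \widehat{S}_{n-1}$ and $\mu \in \N^+(\lambda)\cap\N^+(\widetilde{\lambda})$; this uses Frobenius reciprocity, $\widetilde{\lambda}\in\N^-(\mu) \iff \mu \in \N^+(\widetilde{\lambda})$. Collecting terms then gives
\begin{equation*}
    \bigoplus_{\widetilde{\lambda}\in\widehat{S}_{n-1}} \bigoplus_{\mu \in \N^+(\lambda)\cap\N^+(\widetilde{\lambda})} V_{\widetilde{\lambda}} = \bigoplus_{\widetilde{\lambda}} V_{\widetilde{\lambda}} \otimes \C^{\N^+(\lambda)\cap\N^+(\widetilde{\lambda})} = \bigoplus_{\widetilde{\lambda}} V_{\widetilde{\lambda}}\otimes M^+_{\lambda,\widetilde{\lambda}},
\end{equation*}
with the basis vector $\ket{Q'}\otimes\ket{\mu}$ corresponding to $\ket{Q'\to\mu}$. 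If $M^+_{\lambda,\widetilde{\lambda}} = 0$, the summand vanishes, so summing over all of $\widehat{S}_{n-1}$ is harmless.

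There is no real obstacle. The only point needing care is the bookkeeping: the output index $(Q',\mu)$ of $\IndMap[\lambda]$ must be read as an element of $V_{Q'_{n-1}} \otimes M^+_{\lambda, Q'_{n-1}}$, and we must check that $S_{n-1}$ acts only on the first factor. That check is precisely \cref{eq: subgroup adapted basis restriction version}, which also shows that the action is independent of $\mu$.
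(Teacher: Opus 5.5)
Your proposal is correct and matches the paper's proof essentially step for step. Like the paper, you restrict the $\C[S_n]$-isomorphism of \cref{lemma: IndMap basis transform} to $S_{n-1}$, split each $\Res^{S_n}_{S_{n-1}} V_\mu$ using the subgroup-adapted Gelfand--Tsetlin basis, and reindex the pairs $(\widetilde{\lambda},\mu)$ via Frobenius reciprocity to assemble the multiplicity spaces $M^+_{\lambda,\widetilde{\lambda}}$. Your explicit check that $S_{n-1}$ acts only on the $V_{\widetilde{\lambda}}$ factor, independently of $\mu$, is a slightly more careful version of the paper's bookkeeping with the one-dimensional labels $\C^{\{\mu\}}$.
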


\begin{proof}
    In \cref{lemma: IndMap basis transform}, set $H = S_{n-1}$ and $G = S_{n}$, so that
    \begin{equation}
        \Ind^{S_{n}}_{S_{n-1}} V_\lambda \overset{\IndMap[\lambda][n]}{\simeq} \bigoplus_{\mu \in \N^+(\lambda)} V_\mu.
    \end{equation}
    As restriction does not change the representation spaces, restricting both sides back to $S_{n-1}$ gives
    \begin{align}
        \Res^{S_{n}}_{S_{n-1}} \Ind^{S_{n}}_{S_{n-1}} V_\lambda \overset{\IndMap[\lambda][n]}{\simeq} \Res^{S_{n}}_{S_{n-1}}\bigoplus_{\mu \in \N^+(\lambda)} V_\mu = \bigoplus_{\mu \in \N^+(\lambda)}  \Res^{S_{n}}_{S_{n-1}} V_\mu
    \end{align}
    Using that all $\mu \in \widehat{S}_n$ are specified in a subgroup-adapted basis with respect to $S_{n-1}$, we rewrite
    \begin{align}
        \Res^{S_{n}}_{S_{n-1}} \Ind^{S_{n}}_{S_{n-1}} V_\lambda \overset{\IndMap[\lambda][n]}{\simeq} \bigoplus_{\mu \in \N^+(\lambda)} \bigoplus_{\widetilde{\lambda} \in \N^-(\mu)} V_{\widetilde{\lambda}} \otimes \C^{\{\mu\}}.
    \end{align}
    Here, we keep the one-dimensional space $\C^{\{\mu\}}$ to make it explicit in which summand a state lives. 
    Note that all spaces $V_{\widetilde{\lambda}} \otimes \C^{\{\mu\}}$ in the big double direct sum on the RHS bijectively correspond to an element of the set
    \begin{align}
        &\{(\widetilde{\lambda}, \mu) \in \widehat{S}_{n-1} \times \widehat{S}_n \mid \mu \in \N^+(\lambda) \text{ and } \widetilde{\lambda} \in \N^-(\mu)\}\\ &= \{(\widetilde{\lambda}, \mu) \in \widehat{S}_{n-1} \times \widehat{S}_n \mid \mu \in \N^+(\lambda) \text{ and } \mu \in \N^+(\widetilde{\lambda})\} && \text{(by Frobenius reciprocity)}\\
        &= \{(\widetilde{\lambda}, \mu) \in \widehat{S}_{n-1} \times \widehat{S}_n \mid \mu \in \N^+(\lambda) \cap \N^+(\widetilde{\lambda})\}. &&\text{(by definition of intersection)}
    \end{align}
    We continue rewriting
    \begin{align}
        \Res^{S_{n}}_{S_{n-1}} \Ind^{S_{n}}_{S_{n-1}} V_\lambda \overset{\IndMap[\lambda][n]}{\simeq} \bigoplus_{\displaystyle \overset{(\widetilde{\lambda}, \mu) \in \widehat{S}_{n-1} \times \widehat{S}_n}{\scriptstyle \mu \in \N^+(\lambda) \cap \N^+(\widetilde{\lambda})}} V_{\widetilde{\lambda}} \otimes \C^{\{\mu\}} = \bigoplus_{\widetilde{\lambda} \in \widehat{S}_{n-1}} V_{\widetilde{\lambda}} \otimes \bigoplus_{\mu \in \N^+(\lambda) \cap \N^+(\widetilde{\lambda})} \C^{\{\mu\}}.
    \end{align}
    Finally, using that $M^+_{\lambda, \widetilde{\lambda}} = \bigoplus_{\mu \in \N^+(\lambda) \cap \N^+(\widetilde{\lambda})} \C^{\{\mu\}}$, we find back \cref{eq: isomorhism indmap restricted to Sn-1}.
\end{proof}

For the second isomorphism, which is a corollary of \cref{lemma: Mackey}, consider the following vector space.
\begin{definition}
    Fix $\lambda \in \widehat{S}_{n-1}$. 
    For all $\widetilde{\lambda} \in \widehat{S}_{n-1}$, define the short-hand notation
    \begin{align}
        M^{-}_{\lambda,\widetilde{\lambda}} &\defeq \begin{cases}
            \C^{\{\star\}} \oplus \C^{\N^-(\lambda)} &\text{if } \lambda = \widetilde{\lambda},\\
            \C^{\N^-(\lambda) \cap \N^-(\widetilde{\lambda})} &\text{if } \lambda \neq \widetilde{\lambda}.
        \end{cases}
    \end{align}
\end{definition}

Using $\Omega_n$ from \cref{lemma: Sn-1 double cosets of Sn} as set of $S_{n-1}$-double coset representatives of $S_n$ and transversal $\T_n$ for $S_{n-1}\subset S_n$ as in \cref{prop: Mackey transversals for Sn}, we find the Mackey transform from \cref{def: Mackey transform} with $H = K = S_{n-1}$ and $G = S_n$ as in the following definition

\begin{definition}[Mackey transform for $S_{n-1}$-double cosets of $S_n$]
    Define the linear map $\mathrm{M}_n : \C [\T_n] \to \C^{\{\star\}} \oplus \C[\T_{n-1}]$ to act on basis vectors $\ket{t^{(n)}_i}$ of $\C[\T_n]$ as 
    \begin{equation}
        \mathrm{M}_n \ket{t^{(n)}_i} = 
        \begin{cases}
            \ket{t^{(n-1)}_i} &\text{if } i < n,\\
            \ket{\star} &\text{if } i = n.
        \end{cases}
    \end{equation}
    Where we introduced a special symbol $\star$ for labeling the trivial double coset.
\end{definition}

Note that if we encode transversal elements by there index, and $\star$ by $n$, the implementation of $\mathrm{M}_n$ is trivial.

\begin{corollary}\label{cor: mackey for Sn}
    For $\lambda \in \widehat{S}_{n-1}$, we have
    \begin{equation}\label{eq: Mackey for Sn}
        \Res_{S_{n-1}}^{S_n}\Ind_{S_{n-1}}^{S_n} V_\lambda \overset{\left(\mathrm{I}_{d_\lambda} \oplus \bigoplus_{\kappa \in \N^-(\lambda)}\IndMap[\kappa ][n-1]\right) \left(\mathrm{M}_n \otimes \mathrm{I}_{d_\lambda}\right)}{\simeq} \bigoplus_{\widetilde{\lambda} \in \widehat{S}_{n-1}} V_{\widetilde{\lambda}} \otimes M^{-}_{\lambda, \widetilde{\lambda}}
    \end{equation}
\end{corollary}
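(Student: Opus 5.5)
We apply the Mackey decomposition of \cref{lemma: Mackey} with $G = S_n$ and $H = K = S_{n-1}$, then decompose each summand with the induction maps one level down. Use the double coset representatives $\Omega_n = \{e, \sigma_{n-1}\}$ of \cref{lemma: Sn-1 double cosets of Sn}, and the transversal $\T_n = \{e\} \sqcup \T_{n-1}\sigma_{n-1}$ of \cref{prop: Mackey transversals for Sn}. This transversal has exactly the form required by \cref{lemma: Mackey transversals}, and with it the Mackey transform $\mathrm{M}$ of \cref{def: Mackey transform} is precisely $\mathrm{M}_n$. Indeed, $t^{(n)}_n = e$ lies in the trivial double coset and is sent to the single transversal element of $H_e = S_{n-1}$, labelled $\star$. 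For $i<n$ we have $t^{(n)}_i\sigma_{n-1}^{-1} = t^{(n-1)}_i$.

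\cref{lemma: Mackey} therefore gives, via $\mathrm{M}_n \otimes \mathrm{I}_{d_\lambda}$,
\begin{equation*}
\Res^{S_n}_{S_{n-1}}\Ind^{S_n}_{S_{n-1}} V_\lambda \simeq \Ind^{S_{n-1}}_{S_{n-1}} V_\lambda^{(e)} \oplus \Ind^{S_{n-1}}_{S_{n-2}} V_\lambda^{(\sigma_{n-1})}.
\end{equation*}
The first summand is $\C^{\{\star\}}\otimes V_\lambda$ with the untwisted action, so it is $V_\lambda$ itself and the identity $\mathrm{I}_{d_\lambda}$ is the right intertwiner.

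For the second summand, $h \in S_{n-2}$ acts by $\Rep_\lambda(\sigma_{n-1} h \sigma_{n-1}) = \Rep_\lambda(h)$, because $\sigma_{n-1}$ commutes with $S_{n-2}$. So the twist is trivial and $V_\lambda^{(\sigma_{n-1})} = \Res^{S_{n-1}}_{S_{n-2}} V_\lambda$. Since Young's orthogonal form is subgroup adapted (\cref{eq: subgroup adapted basis restriction version}), this equals $\bigoplus_{\kappa\in\N^-(\lambda)} V_\kappa$ literally in the path basis, with $\ket{P}$ lying in the $\kappa = P_{n-2}$ block. Induction commutes with direct sums in the fixed transversal basis $\T_{n-1}$, so the second summand is $\bigoplus_\kappa \Ind^{S_{n-1}}_{S_{n-2}} V_\kappa$. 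Applying $\IndMap[\kappa][n-1]$ to each piece and using \cref{lemma: IndMap basis transform} at level $n-1$ gives
\begin{equation*}
\bigoplus_{\kappa\in\N^-(\lambda)}\bigoplus_{\widetilde\lambda\in\N^+(\kappa)} V_{\widetilde\lambda}.
\end{equation*}

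It remains to regroup by $\widetilde\lambda$, as in the proof of \cref{cor: indmap restricted to Sn-1}. The pairs $(\kappa,\widetilde\lambda)$ with $\kappa \in \N^-(\lambda)$ and $\widetilde\lambda\in\N^+(\kappa)$ are, by Frobenius reciprocity, exactly those with $\kappa \in \N^-(\lambda)\cap\N^-(\widetilde\lambda)$. The multiplicity space of $V_{\widetilde\lambda}$ from the second summand is therefore $\C^{\N^-(\lambda)\cap\N^-(\widetilde\lambda)}$. When $\widetilde\lambda=\lambda$ this is $\C^{\N^-(\lambda)}$, and the first summand contributes the extra $\C^{\{\star\}}$. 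This matches $M^-_{\lambda,\widetilde\lambda}$ in both cases.

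The composite intertwiner is $(\mathrm{I}_{d_\lambda}\oplus\bigoplus_\kappa\IndMap[\kappa][n-1])(\mathrm{M}_n\otimes \mathrm{I}_{d_\lambda})$, as claimed. The main point needing care is the identification of the twisted module and of the basis conventions. Concretely, one must check that the basis in which \cref{lemma: Mackey} presents $\Ind^{S_{n-1}}_{S_{n-2}}V^{(\sigma_{n-1})}_\lambda$ coincides with the $\C[\T_{n-1}]\otimes V_\lambda$ basis on which each $\IndMap[\kappa][n-1]$ is defined, so that no extra relabelling gate is needed. This rests on $\sigma_{n-1}$ centralizing $S_{n-2}$ and on the subgroup-adapted block structure of $\Rep_\lambda$.
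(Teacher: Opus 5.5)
Your proposal is correct and follows the paper's proof essentially step for step. Both arguments use the Mackey decomposition via $\mathrm{M}_n \otimes \mathrm{I}_{d_\lambda}$ with $\Omega_n = \{e, \sigma_{n-1}\}$, note the twist is trivial because $\sigma_{n-1}$ centralizes $S_{n-2}$, split $\mathrm{Res}^{S_{n-1}}_{S_{n-2}} V_\lambda$ using the subgroup-adapted basis, apply the level-$(n-1)$ induction maps, and regroup by $\widetilde{\lambda}$ via Frobenius reciprocity, with the extra $\star$ copy when $\widetilde{\lambda} = \lambda$; your explicit check that the general Mackey transform coincides with $\mathrm{M}_n$ for these transversals is a small detail the paper leaves implicit.
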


\begin{proof}
    The isomorphism consists of a composition of two maps.
    The first map is given by $\mathrm{M}_n \otimes \mathrm{I}_{d_\lambda}$, which exactly implements the Mackey decomposition of \cref{lemma: Mackey}, with $G = S_n$ and $H = K = S_{n-1}$.
    Explicitly, we get 
    \begin{equation}
        \Res_{S_{n-1}}^{S_n}\Ind_{S_{n-1}}^{S_n} V_\lambda \overset{\mathrm{M}_n \otimes \mathrm{I}_{d_\lambda}}{\simeq} \Ind^{S_{n-1}}_{H_e} V_\lambda^{(e)} \oplus \Ind^{S_{n-1}}_{H_{\sigma_{n-1}}} V_\lambda^{(\sigma_{n-1})}.
    \end{equation}
    Recall from \cref{lemma: Sn-1 double cosets of Sn} that $H_e = S_{n-1}$ and $H_{\sigma_{n-1}} = S_{n-2}$. 
    Conjugating elements of $S_{n-1}$ by the identity $e$, and conjugating elements of $S_{n-2}$ by $\sigma_{n-1}$ corresponds to an identity map.
    Hence, the action of subgroup elements on the twisted module is the same as on the the original simple module $V_{\lambda}$.
    We simplify
    \begin{equation}
        \Res_{S_{n-1}}^{S_n}\Ind_{S_{n-1}}^{S_n} V_\lambda \overset{\mathrm{M}_n \otimes \mathrm{I}_{d_\lambda}}{\simeq} \C^{\{\star\}} \otimes V_\lambda \oplus \Ind^{S_{n-1}}_{S_{n-2}} \Res^{S_{n-1}}_{S_{n-2}}V_\lambda.
    \end{equation}
    Using that we picked an $S_{n-2}$-adapted basis for $V_{\lambda}$, we find that 
    \begin{equation}
        \Res_{S_{n-1}}^{S_n}\Ind_{S_{n-1}}^{S_n} V_\lambda \overset{\mathrm{M}_n \otimes \mathrm{I}_{d_\lambda}}{\simeq} \C^{\{\star\}} \otimes V_\lambda \oplus  \Ind^{S_{n-1}}_{S_{n-2}} \bigoplus_{\kappa \in \N^-(\lambda)} V_\kappa \otimes \C^{\{\kappa\}}.
    \end{equation}
    Just like in the proof of \cref{cor: indmap restricted to Sn-1}, we keep the one dimensional space $\C^{\{\kappa\}}$ to explicitly label the summand to which each state belongs. 
    As we do not change the set indexing basis vectors for the RHS, but only the ordering, we say that 
    \begin{equation}
        \Res_{S_{n-1}}^{S_n}\Ind_{S_{n-1}}^{S_n} V_\lambda \overset{\mathrm{M}_n \otimes \mathrm{I}_{d_\lambda}}{\simeq} \C^{\{\star\}} \otimes V_\lambda \oplus \bigoplus_{\kappa \in \N^-(\lambda)} \Ind^{S_{n-1}}_{S_{n-2}} V_\kappa \otimes \C^{\{\kappa\}}.
    \end{equation}

    The second isomorphism in the composition is the map $\mathrm{I}_{d_\lambda} \oplus \bigoplus_{\kappa \in \N^-(\lambda)}\IndMap[\kappa ][n-1]$. 
    To match the ordering of wires in the final quantum algorithm, we define $\mathrm{I}_{d_\lambda}$ to swap the order of $\C^{\{\star\}}$ and $V_\lambda$ in the tensor product before the direct sum, which does not change the mathematical structure, as $\C^{\{\star\}}$ is one-dimensional.
    In other words, it implicitly includes the canonical tensor swap isomorphism.
    By \cref{lemma: IndMap basis transform}, we know that $\IndMap[\kappa ][n-1]$ decomposes each induced irrep $\Ind^{S_{n-1}}_{S_{n-2}} V_\kappa$ into a direct sum of $S_{n-1}$-irreps. 
    We find that $\mathrm{I}_{d_\lambda} \oplus \bigoplus_{\kappa \in \N^-(\lambda)}\IndMap[\kappa ][n-1]$ implements the isomorphism
    \begin{align}
        \C^{\{\star\}} \otimes V_\lambda \oplus \bigoplus_{\kappa \in \N^-(\lambda)} \Ind^{S_{n-1}}_{S_{n-2}} V_\kappa \otimes \C^{\{\kappa\}} \simeq V_\lambda \otimes \C^{\{\star\}}  \oplus \left(\bigoplus_{\kappa \in \N^-(\lambda)} \bigoplus_{\widetilde{\lambda} \in \N^+(\kappa)} V_{\widetilde{\lambda}} \otimes \C^{\{\kappa\}}\right). 
    \end{align}
    For the part right of the direct sum on the RHS, every space $V_{\widetilde{\lambda}} \otimes \C^{\{\kappa\}}$ corresponds bijectively to an element of the set 
    \begin{align}
        &\{(\widetilde{\lambda}, \kappa) \in \widehat{S}_{n-1} \times \widehat{S}_{n-2} \mid \kappa \in \N^-(\lambda)  \text{ and } \widetilde{\lambda} \in \N^+(\kappa)\}\\
        &= \{(\widetilde{\lambda}, \kappa) \in \widehat{S}_{n-1} \times \widehat{S}_{n-2} \mid \kappa \in \N^-(\lambda) \text{ and } \kappa \in \N^-(\widetilde{\lambda})\} &&\text{(by Frobenius reciprocity)}\\
        &= \{(\widetilde{\lambda}, \kappa) \in \widehat{S}_{n-1} \times \widehat{S}_{n-2} \mid \kappa \in \N^-(\lambda) \cap \N^-(\widetilde{\lambda})\} &&\text{(by definition of intersection)}.
    \end{align}
    We rewrite 
    \begin{align}\label{eq: last eq in proof of cor mackey for Sn}
        \Res_{S_{n-1}}^{S_n}\Ind_{S_{n-1}}^{S_n} V_\lambda \simeq V_\lambda \otimes \C^{\{\star\}} \oplus \left(\bigoplus_{\widetilde{\lambda} \in \widehat{S}_{n-1}}  V_{\widetilde{\lambda}} \otimes \bigoplus_{\kappa \in \N^-(\lambda) \cap \N^-(\widetilde{\lambda})} \C^{\{\kappa\}}\right).
    \end{align}
    When $\widetilde{\lambda} = \lambda$, note that $\N^-(\lambda) \cap \N^-(\widetilde{\lambda}) = \N^-(\lambda)$, and that there is one extra copy of $V_{\lambda}$ with label $\star$ coming. 
    Without changing basis vector labels, but only changing the order of basis vectors, the RHS of \cref{eq: last eq in proof of cor mackey for Sn} is exactly the RHS of \cref{eq: Mackey for Sn}.
    To summarize, we found that 
    \begin{align}
        \Res_{S_{n-1}}^{S_n}\Ind_{S_{n-1}}^{S_n} V_\lambda &\overset{\mathrm{M}_n \otimes \mathrm{I}_{d_\lambda}}{\simeq}   \C^{\{\star\}} \otimes V_\lambda \oplus \bigoplus_{\kappa \in \N^-(\lambda)} \Ind^{S_{n-1}}_{S_{n-2}} V_\kappa \otimes \C^{\{\kappa\}} \notag \\
        &\overset{\mathrm{I}_{d_\lambda} \oplus \bigoplus_{\kappa \in \N^-(\lambda)}\IndMap[\kappa ][n-1]}{\simeq} \bigoplus_{\widetilde{\lambda} \in \widehat{S}_{n-1}} V_{\widetilde{\lambda}} \otimes M^{-}_{\lambda, \widetilde{\lambda}} 
    \end{align}
    which concludes our proof.
\end{proof}

\begin{lemma}\label{lemma: commutative diagram for embedding operation}
    Let $\mathrm{A}$ be an intertwiner implementing the $\C[S_{n-1}]$-module isomorphism 
    \begin{equation}\label{eq: isomorphism for embedding operation}
        \bigoplus_{\widetilde{\lambda} \in \widehat{S}_{n-1}} V_{\widetilde{\lambda}} \otimes M^-_{\lambda, \widetilde{\lambda}} \simeq \bigoplus_{\widetilde{\lambda} \in \widehat{S}_{n-1}} V_{\widetilde{\lambda}} \otimes M^+_{\lambda, \widetilde{\lambda}}
    \end{equation}
    Then, $\mathrm{A}$ is of the form
    \begin{equation}\label{eq: block structure A}
        \mathrm{A} = \bigoplus_{\widetilde{\lambda} \in \widehat{S}_{n-1}} \mathrm{I}_{d_{\widetilde{\lambda}}} \otimes \mathrm{A}_{\lambda, \widetilde{\lambda}} ,
    \end{equation}
    with invertible matrices $\mathrm{A}_{\lambda, \widetilde{\lambda}} : M^-_{\lambda, \widetilde{\lambda}} \to M^+_{\lambda, \widetilde{\lambda}}$.
\end{lemma}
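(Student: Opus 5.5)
This is Schur's lemma applied to isotypic components. I would view both sides of \cref{eq: isomorphism for embedding operation} as $\C[S_{n-1}]$-modules in which $S_{n-1}$ acts as $\bigoplus_{\widetilde{\lambda}} \Rep_{\widetilde{\lambda}}(\pi) \otimes \mathrm{I}_{M^\pm_{\lambda,\widetilde{\lambda}}}$. Both are written in the same Young's orthogonal form basis of each $V_{\widetilde{\lambda}}$; for the left side this follows from \cref{cor: mackey for Sn}, and for the right side from \cref{cor: indmap restricted to Sn-1}. The intertwining condition for $\mathrm{A}$ then reads
\begin{equation}
\mathrm{A}\Bigl(\bigoplus_{\widetilde{\lambda}} \Rep_{\widetilde{\lambda}}(\pi)\otimes \mathrm{I}_{M^-_{\lambda,\widetilde{\lambda}}}\Bigr) = \Bigl(\bigoplus_{\widetilde{\lambda}} \Rep_{\widetilde{\lambda}}(\pi)\otimes \mathrm{I}_{M^+_{\lambda,\widetilde{\lambda}}}\Bigr)\mathrm{A} \quad \text{for all } \pi \in S_{n-1}.
\end{equation}

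First, I would split $\mathrm{A}$ into blocks $\mathrm{A}_{\widetilde{\lambda}',\widetilde{\lambda}} : V_{\widetilde{\lambda}} \otimes M^-_{\lambda,\widetilde{\lambda}} \to V_{\widetilde{\lambda}'} \otimes M^+_{\lambda,\widetilde{\lambda}'}$. Fix basis vectors $\ket{a} \in M^-_{\lambda,\widetilde{\lambda}}$ and $\ket{b} \in M^+_{\lambda,\widetilde{\lambda}'}$, and take the partial matrix element $(\mathrm{I} \otimes \bra{b})\,\mathrm{A}_{\widetilde{\lambda}',\widetilde{\lambda}}\,(\mathrm{I} \otimes \ket{a})$. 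This is a linear map $V_{\widetilde{\lambda}} \to V_{\widetilde{\lambda}'}$ that intertwines the irreps $\Rep_{\widetilde{\lambda}}$ and $\Rep_{\widetilde{\lambda}'}$.

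Second, I would apply Schur's lemma. When $\widetilde{\lambda} \neq \widetilde{\lambda}'$ the irreps are inequivalent, so the map is zero and $\mathrm{A}$ is block diagonal in $\widetilde{\lambda}$. When $\widetilde{\lambda} = \widetilde{\lambda}'$, the map is a scalar $c_{b,a}\,\mathrm{I}_{d_{\widetilde{\lambda}}}$. Here Schur's lemma over $\C$ needs only that $V_{\widetilde{\lambda}}$ is simple and both copies use the identical matrix representation. Collecting the scalars into a matrix $\mathrm{A}_{\lambda,\widetilde{\lambda}} := (c_{b,a})_{b,a}$ gives the block $\mathrm{I}_{d_{\widetilde{\lambda}}} \otimes \mathrm{A}_{\lambda,\widetilde{\lambda}}$, which yields \cref{eq: block structure A}.

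Finally, invertibility follows because $\mathrm{A}$ is an isomorphism. A block-diagonal map is invertible exactly when each block is, and $\mathrm{I}_{d_{\widetilde{\lambda}}} \otimes \mathrm{A}_{\lambda,\widetilde{\lambda}}$ is invertible exactly when $\mathrm{A}_{\lambda,\widetilde{\lambda}}$ is. As a consistency check, this forces $\dim M^-_{\lambda,\widetilde{\lambda}} = \dim M^+_{\lambda,\widetilde{\lambda}}$. I expect no real obstacle. The only point needing care is the first step: checking that both corollaries really present the module with the same Young's orthogonal form matrices on each $V_{\widetilde{\lambda}}$, not merely equivalent ones. For the Mackey side this uses \cref{eq: subgroup adapted basis induction version} and the fact that $\IndMap[\kappa][n-1]$ outputs $\Rep_{\widetilde{\lambda}}$ exactly, by \cref{eq: IndMap intertwines induced rep}. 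Without identical matrices, Schur would only give $\mathrm{A}$ up to fixed intertwiners, not a scalar times the identity.
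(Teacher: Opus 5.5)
Your proposal is correct and follows the same route as the paper: both sides carry the identical Young's-orthogonal-form matrices $\Rep_{\widetilde{\lambda}}$ on each $V_{\widetilde{\lambda}}$, so the intertwiner can act only on the multiplicity spaces, which is Schur's lemma applied blockwise. Your version is a more explicit rendering of the paper's short argument, and it usefully pins down why the matrices are identical rather than merely equivalent, a point the paper asserts without much justification.
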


\begin{proof}
    First, note that an isomorphism $\mathrm{A}$ must exist, as both the LHS and RHS of \cref{eq: isomorphism for embedding operation} are isomorphic to $\Res^{S_n}_{S_{n-1}}\Ind^{S_n}_{S_{n-1}} V_\lambda$ by \cref{cor: indmap restricted to Sn-1,cor: mackey for Sn}, and module-isomorphism is transitive. 
    As both sides fully decompose $\Res^{S_n}_{S_{n-1}}\Ind^{S_n}_{S_{n-1}} V_\lambda$ into simple modules, we know that the multiplicity spaces of $V_{\widetilde{\lambda}}$ on both sides must be of equal dimension. 
    Moreover, the action of group elements on the $V_{\widetilde{\lambda}}$ part on both sides must be the equal.
    Therefore, the intertwiner can only act non-trivially on the multiplicity spaces, and must be of the form of \cref{eq: block structure A}.
\end{proof}

The following commutative diagram summarizes \cref{cor: indmap restricted to Sn-1,cor: mackey for Sn,lemma: commutative diagram for embedding operation}.
\begin{figure}[H]
    \centering
    \begin{tikzcd}[column sep=5.5cm, row sep=1.2cm, nodes={inner sep=3pt}]
        \Res^{S_n}_{S_{n-1}}\Ind^{S_n}_{S_{n-1}}V_\lambda
            \arrow[r, "\mathrm{U}_{\mathrm{Ind},\lambda}^{(n)}"]
            \arrow[d, "\mathrm{M}_n \otimes \mathrm{I}_{d_\lambda}"']
        &
        \displaystyle
        \bigoplus_{\widetilde{\lambda} \in \widehat{S}_{n-1}}
        V_{\widetilde{\lambda}} \otimes M^{+}_{\lambda, \widetilde{\lambda}}
        \\
        \left(\C^{\{\star\}} \otimes V_\lambda \right) \oplus \left(\bigoplus_{\kappa\in \N^-(\lambda)} \Ind^{S_{n-1}}_{S_{n-2}} V_\kappa \right) \arrow[r, "\mathrm{I}_{d_\lambda} \oplus \left(\bigoplus_{\kappa \in \N^-(\lambda)}\mathrm{U}_{\mathrm{Ind}, \kappa}^{(n-1)}\right)"']
        &
        \bigoplus_{\widetilde{\lambda}
        \in \widehat{S}_{n-1}}
        V_{\widetilde{\lambda}} \otimes M^{-}_{\lambda, \widetilde{\lambda}}
        \arrow[u, "\mathrm{A}"']
    \end{tikzcd}
    \label{fig: comm diagram for Sn-1 modules}
    \caption{Commutative diagram of $\C[S_{n-1}]$-modules, where all morphisms are isomorphisms. 
    The upper path corresponds to \cref{cor: indmap restricted to Sn-1}. 
    The two arrows from top left to bottom right implement \cref{cor: mackey for Sn}. 
    The map $\mathrm{A}$ must be as in \cref{lemma: commutative diagram for embedding operation} to make the diagram commute.}
\end{figure}

We will now extend this commutative diagram of $\C[S_{n-1}]$-modules to a diagram of $\C[S_n]$-modules.
Then, the path with the Mackey relabeling, the map $\mathrm{I}_{d_\lambda} \oplus \bigoplus_{\kappa \in \N^-(\lambda)} \IndMap[\kappa][n-1]$ and matrix $\mathrm{A}$ gives us an implementation for $\IndMap[\lambda][n]$. 
Here, the map $\mathrm{A}$ becomes exactly the \emph{embedding operation} in the algorithm of \cite{SnQFT_Speedup}. 

\subsection{The embedding operation}

Again, fix some level $n > 1$ on the chain of subgroups.
To turn the diagram of $\C[S_{n-1}]$-modules in \cref{fig: comm diagram for Sn-1 modules} into a diagram for $\C[S_n]$-modules, we use the following lemma.

\begin{lemma}\label{lemma: extended module isomorphism}
    For a group $G$ with subgroup $H \subset G$, let $V$ be a $\C[G]$-module and let $W$ be a $\C[H]$-module. 
    Let $\phi : \Res^G_H V \to W$ be a $\C[H]$-module isomorphism.
    Then, 
    \begin{enumerate}
        \item there exists a $\C[G]$-module structure on $W$;
        \item the $\C[G]$-module structure on $W$ extends its $\C[H]$-module structure;
        \item $\phi$ is a $\C[G]$-module isomorphism for $V$ and the $\C[G]$-module $W$.
    \end{enumerate}
\end{lemma}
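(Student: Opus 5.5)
The plan is transport of structure. Let $\rho_V : G \to \mathrm{GL}(V)$ be the action of $G$ on $V$. We define an action of $G$ on $W$ by conjugating with $\phi$:
\begin{equation*}
    g \cdot_W w \defeq \phi\bigl(\rho_V(g)\,\phi^{-1}(w)\bigr), \qquad g \in G,\ w \in W.
\end{equation*}
This is well defined because $\phi$ is a bijection; as a linear map it is invertible, being a module isomorphism.

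For item 1, we check that $\rho_W(g) \defeq \phi \rho_V(g) \phi^{-1}$ is a group homomorphism $G \to \mathrm{GL}(W)$. We have $\rho_W(e) = \phi\phi^{-1} = \mathrm{I}_W$, and
\begin{equation*}
    \rho_W(g_1)\rho_W(g_2) = \phi\rho_V(g_1)\phi^{-1}\phi\rho_V(g_2)\phi^{-1} = \rho_W(g_1g_2).
\end{equation*}
Each $\rho_W(g)$ is linear and invertible. Extending linearly to $\C[G]$ then gives a $\C[G]$-module structure on $W$.

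For item 2, we use that $\phi$ is an $H$-intertwiner, so $\phi \rho_V(h) = \sigma_W(h)\phi$ for all $h \in H$, where $\sigma_W$ is the given $H$-action on $W$. Hence $\rho_W(h) = \phi\rho_V(h)\phi^{-1} = \sigma_W(h)$, so the new action restricted to $H$ agrees with the original one.

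For item 3, the definition of $\rho_W$ gives $\phi\rho_V(g) = \rho_W(g)\phi$ for all $g \in G$. Since $\phi$ is also bijective, it is a $\C[G]$-module isomorphism.

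None of these steps is a real obstacle. The only point that needs care is item 2: the restriction of the transported structure must coincide with the \emph{given} $H$-structure on $W$, not merely be isomorphic to it, and this is exactly where the intertwining property of $\phi$ is used. It is also worth noting that if $\rho_V$ is unitary and $\phi$ is unitary, as for the maps in the commutative diagram above, then $\rho_W$ is unitary as well. This is what is needed later to view $\mathrm{A}$ and the other arrows as maps of $\C[S_n]$-modules.
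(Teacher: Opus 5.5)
Your proof is correct and follows the paper's argument essentially step for step. The paper also transports the action via $g \cdot_W w \defeq \phi(g \cdot_V \phi^{-1} w)$, checks the identity and multiplicativity axioms, uses the $H$-intertwining property of $\phi$ to recover the original $H$-action, and reads off $G$-equivariance of $\phi$ directly from the definition. Your closing remark on unitarity is an extra that the paper does not make.
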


\begin{definition}[Embedding operation]\label{def: embedding operation}
    For $\lambda, \widetilde{\lambda} \in \widehat{S}_{n-1}$, define the orthogonal matrix
    \begin{equation}
        [\mathrm{A}_{\lambda,\widetilde{\lambda}}]^{\mu}_{\kappa} \defeq 
        \begin{cases}
            \sqrt{\frac{d_\mu}{n d_\lambda}} &\text{if } \kappa = \star,\\
            \sqrt{\frac{d_\mu}{n d_\lambda}\frac{(n{-}1) d_\kappa}{d_{\widetilde{\lambda}}}}
            \frac{1}{\cont(\mu {\setminus} \lambda) - \cont(\lambda {\setminus} \kappa)} &\text{if } \lambda = \widetilde{\lambda} \text{ and } \kappa \neq \star,\\
            1 &\text{if } \lambda \neq \widetilde{\lambda}.
        \end{cases}
    \end{equation}
    Here, we have indices $\kappa$ labeling basis vectors of $M^-_{\lambda, \widetilde{\lambda}}$ and $\mu$ labeling basis vectors of $M^+_{\lambda, \widetilde{\lambda}}$.
    For the contents, note that $\mu$ defines an addable cell $\mu {\setminus} \lambda \in \AC(\lambda)$, and $\kappa$ defines a removable cell $\lambda {\setminus} \kappa \in \RC(\lambda)$ if $\kappa \neq \star$.
    Define the \emph{embedding operation} $\mathrm{A}$ for $S_{n-1} \subset S_n$ as
    \begin{equation}
        \mathrm{A} \defeq \bigoplus_{\widetilde{\lambda}} \mathrm{I}_{d_{\widetilde{\lambda}}} \otimes \mathrm{A}_{\lambda, \widetilde{\lambda}} 
    \end{equation}
\end{definition}

\begin{theorem}\label{thm: second induction}
    With $\mathrm{A}$ be as in \cref{def: embedding operation}, the diagram from \cref{fig: comm diagram for Sn-1 modules}, without restriction to $S_{n-1}$, commutes as a diagram of $\C[S_{n}]$-modules.
\end{theorem}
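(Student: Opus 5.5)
Since every map in the diagram is a $\C[S_{n-1}]$-module isomorphism, the claim reduces to a single concrete identity. We need
\begin{equation*}
    \IndMap[\lambda][n] = \mathrm{A}\left(\mathrm{I}_{d_\lambda} \oplus \bigoplus_{\kappa\in\N^-(\lambda)}\IndMap[\kappa][n-1]\right)\left(\mathrm{M}_n\otimes \mathrm{I}_{d_\lambda}\right)
\end{equation*}
as linear maps, with $\mathrm{A}$ the explicit matrix of \cref{def: embedding operation}. Once this holds, we apply \cref{lemma: extended module isomorphism} to the composite $\C[S_{n-1}]$-isomorphism from $\Res\Ind V_\lambda$ (carrying its $S_n$-structure) to each intermediate space. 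This transports the $\C[S_n]$-structure to the bottom-left and bottom-right corners. Each map then becomes a $\C[S_n]$-isomorphism, and the top arrow is the $\C[S_n]$-isomorphism of \cref{lemma: IndMap basis transform}. So the diagram commutes as a diagram of $\C[S_n]$-modules. The real content is therefore the matrix identity, which I will verify entrywise on basis vectors $\ket{t^{(n)}_i}\ket{P}$ with $P\in\Path(\lambda)$.

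\textbf{Case $i=n$ ($t=e$).} The left side gives $[\IndMap[\lambda][n]]^{Q',\mu}_{e,P} = \sqrt{d_\mu/(n d_\lambda)}\,\delta_{Q,P\to\mu}$. On the right, $\mathrm{M}_n$ sends the vector to $\ket{\star}\ket{P}$. The identity block leaves it in the $\widetilde\lambda=\lambda$ summand, and $\mathrm{A}_{\lambda,\lambda}$ maps $\star$ to $\sum_\mu \sqrt{d_\mu/(nd_\lambda)}\ket{\mu}$. The two sides agree.

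\textbf{Case $i<n$.} Here $t^{(n)}_i = t^{(n-1)}_i\sigma_{n-1}$ by \cref{prop: Mackey transversals for Sn}. The left entry is
\begin{equation*}
\sqrt{\tfrac{d_\mu}{nd_\lambda}}\,[\Rep_\mu(t^{(n-1)}_i)\Rep_\mu(\sigma_{n-1})]^{Q}_{P\to\mu}.
\end{equation*}
I expand $\Rep_\mu(\sigma_{n-1})\ket{P\to\mu}$ using Young's orthogonal form \eqref{eq: young's orthogonal form}. Only the last two entries of the path are touched, so the result is a combination of $\ket{P\to\mu}$ and the path obtained by swapping boxes $n-1$ and $n$. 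That swapped path passes through $\widetilde\lambda = P''\to(\mu\setminus$box$)$, and its coefficients are $1/r$ and $\sqrt{1-1/r^2}$, where $r=\cont(\mu\setminus\lambda)-\cont(\lambda\setminus\kappa)$ and $\kappa=P_{n-2}$. Then $\Rep_\mu(t^{(n-1)}_i)$, with $t^{(n-1)}_i\in S_{n-1}$, acts block-diagonally via \eqref{eq: subgroup adapted basis restriction version}, as $\Rep_{\widetilde\lambda}(t^{(n-1)}_i)$ on each branch. On the right side, $\mathrm{M}_n$ gives $\ket{t^{(n-1)}_i}\ket{P}$ in the $\kappa=P_{n-2}$ summand. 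Then $\IndMap[\kappa][n-1]$ produces
\begin{equation*}
\sqrt{\tfrac{d_{\widetilde\lambda}}{(n-1)d_\kappa}}\,[\Rep_{\widetilde\lambda}(t^{(n-1)}_i)]^{\,\cdot}_{P'\to\widetilde\lambda}
\end{equation*}
summed over $\widetilde\lambda\in\N^+(\kappa)$. Finally, $\mathrm{A}_{\lambda,\widetilde\lambda}$ redistributes the multiplicity label $\kappa$ to $\mu$. Matching coefficients reduces to two scalar identities. For $\widetilde\lambda=\lambda$ we need
\begin{equation*}
\sqrt{\tfrac{d_\mu}{nd_\lambda}}\cdot\tfrac1r = [\mathrm{A}_{\lambda,\lambda}]^\mu_\kappa\sqrt{\tfrac{d_\lambda}{(n-1)d_\kappa}},
\end{equation*}
which holds by the definition of $\mathrm{A}$. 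For $\widetilde\lambda\ne\lambda$ (the ``detour'' case, a diamond by \cref{lemma: diamond}) we need
\begin{equation*}
\sqrt{\tfrac{d_\mu}{nd_\lambda}}\sqrt{1-\tfrac1{r^2}}=\sqrt{\tfrac{d_{\widetilde\lambda}}{(n-1)d_\kappa}}.
\end{equation*}

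\textbf{Main obstacle.} The detour identity is the step I expect to be hardest, together with keeping the sign conventions of Young's orthogonal form consistent with the sign of $\sqrt{1-1/r^2}$, so that $\mathrm{A}_{\lambda,\widetilde\lambda}=1$ rather than $-1$. To prove the identity, I would apply \cref{lemma: twiddle factor} twice: once for $d_\mu/(nd_\lambda)$ with $\lambda=\kappa+b$, $\mu=\lambda+a$, and once for $d_{\widetilde\lambda}/((n-1)d_\kappa)$ with $\widetilde\lambda=\kappa+a$. I would then compare the addable and removable cell sets of $\lambda$ and $\kappa$, which differ only near $a$ and $b$. Most content factors should cancel, leaving $(r^2-1)/r^2$.

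A separate concern is that the paper asserts $\mathrm{A}$ is orthogonal without proof. It would be cleaner to deduce this rather than check it. Since the composite equals the unitary $\IndMap[\lambda][n]$ and the other factors are unitary, $\mathrm{A}$ is unitary; and since it is real, it is orthogonal.
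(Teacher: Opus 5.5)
Your proof is correct, and its skeleton matches the paper's: the split into $t=e$ versus $t\in\T_{n-1}\sigma_{n-1}$, Young's orthogonal form for $\sigma_{n-1}$, and block-diagonality of $\Rep_\mu(t^{(n-1)}_i)$. The difference is direction. You verify that the given $\mathrm{A}$ works, whereas the paper takes $\mathrm{A}$ to be whatever makes the diagram commute. It argues that $\mathrm{A}$ must be orthogonal because every other map is, then reads off the entries by partial traces after cancelling $\Rep(t^{(n-1)}_i)$.

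This matters exactly at the step you flag as the main obstacle. On a detour block the paper only obtains $[\mathrm{A}_{\lambda,\widetilde{\lambda}}]^{\mu}_{\kappa}=\sqrt{\frac{(n-1)d_\mu d_\kappa}{n d_\lambda d_{\widetilde{\lambda}}}}\sqrt{1-1/r^2}$. It concludes this equals $1$ because a $1\times 1$ orthogonal block is $\pm 1$ and the quantity is positive. So the dimension identity is never proved directly there; it falls out of orthogonality.

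Your route must prove it, and your plan works. By \cref{lemma: twiddle factor}, $d_{\lambda+a}/(n d_\lambda)$ is the residue at $z=\cont(a)$ of $\prod_{x\in\RC(\lambda)}(z-\cont(x))/\prod_{y\in\AC(\lambda)}(z-\cont(y))$; the sign flip is harmless since $|\RC(\lambda)|=|\AC(\lambda)|-1$. Adding the box $b=\lambda\setminus\kappa$ to $\kappa$ multiplies this rational function by $(z-\cont(b))^2/\bigl((z-\cont(b))^2-1\bigr)$, which is exactly the cancellation you anticipate. Since $a=\mu\setminus\lambda$ and $b$ are distinct addable cells of $\kappa$, we have $|\cont(a)-\cont(b)|\geq 2$, so evaluating at $z=\cont(a)$ gives the ratio $r^2/(r^2-1)$.

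Your sign worry is moot. The off-diagonal coefficient in Young's orthogonal form is positive, so both sides of the detour identity are positive and squaring suffices.

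Your version gives a self-contained check that does not presuppose existence, block form, or orthogonality of $\mathrm{A}$; orthogonality then follows, as you say. The paper's version is shorter and avoids content combinatorics.

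If you are tempted to shortcut via the lemma quoted right after the diamond lemma in the appendix, be aware that its prefactor as printed should be $\frac{n-1}{n}$, not $\frac{n}{n-1}$. Test it on $\kappa=(1)$, $\lambda=(2)$, $\widetilde{\lambda}=(1,1)$, $\mu=(2,1)$.
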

 
The proofs of \cref{lemma: extended module isomorphism} and \cref{thm: second induction} are given in \cref{app: Mackey proofs}.
As the $\C[S_n]$ modules extend the $\C[S_{n-1}]$-module structure by \cref{lemma: extended module isomorphism}, the embedding operation $\mathrm{A}$ should have the same structure as in \cref{lemma: commutative diagram for embedding operation}.
We find the entries for the blocks $\mathrm{A}_{\lambda, \widetilde{\lambda}}$ of $\mathrm{A}$ by requiring that the linear maps corresponding to both paths are equal, hence
\begin{equation}
    \IndMap[\lambda][n] = \left(\bigoplus_{\widetilde{\lambda} \in \widehat{S}_{n-1}} \mathrm{I}_{d_{\widetilde{\lambda}}} \otimes \mathrm{A}_{\lambda, \widetilde{\lambda}}\right) \cdot
    \left(\mathrm{I}_{d_{\lambda}} \oplus \bigoplus_{\kappa \in N^-(\lambda)}\IndMap[\kappa][n-1]\right) \cdot
    \left(\mathrm{M}_n \otimes \mathrm{I}_{d_\lambda}\right).
\end{equation}
As all matrices in this expression are unitary, and even orthogonal, we have found a decomposition of $\IndMap[\lambda][n]$ as a product of unitary matrices with corresponding to quantum gates.

To turn this theorem into a quantum algorithm, let us look at the three maps in the decomposition in order of application.
We already saw that the Mackey relabeling gate is identity in our chosen encoding. 
For the map $\mathrm{I}_{d_{\lambda}} \oplus \bigoplus_{\kappa \in \N^-(\lambda)}\IndMap[\kappa][n-1]$, we need to slightly modify our definition of $\IndMap$.
While \cref{def: IndMap} of $\IndMap$ corresponds to submatrix after the direct sum, the action on the part before the direct sum is not defined yet.
This corresponds to the transversal register being in the state $\ket{\star}$, encoded as $\ket{n}$, which is an ``invalid'' input, as $\IndMap[][n-1]$ expects state $\ket{i}$ with $i < n$ in the transversal register. 

\begin{definition}[Updated induction map for the QFT of $S_n$]\label{def: IndMap updated for special cases}
    For $\lambda \in \widehat{S}_{n-1}$, we extend the gate $\IndMap[\lambda][n]$ from \cref{def: IndMap} to have an extra \emph{flag register} $\ket{j}$ at the output, and to accept ``invalid'' transversal labels $i$ with $i > n$ at the input.
    \begin{itemize}
        \item If $i \leq n$, then the gate acts as in \cref{def: IndMap}, and sets $j = 1$, so that we can uncompute $j$ easily in the context of \cref{fig: theorem first induction full circuit relative paths};
        \item If $i > n$, which corresponds to an ``invalid'' input:
        \begin{itemize}
            \item Set output path $q' = p$, $q_n = \star$;
            \item Set $\mu = \lambda$. 
            Now, $\mu$ has only $n{-}1$ boxes instead of $n$ boxes;
            \item Set $j = i - n + 1$.
        \end{itemize}
    \end{itemize} 
\end{definition}

Lastly, we have defined the blocks $\mathrm{A}_{\lambda, \widetilde{\lambda}}$ to act on a space with a basis indexed by shared predecessors $\kappa \in \N^-(\lambda) \cap \N^-(\widetilde{\lambda})$ or the special symbol $\star$, and map it to a space with basis indexed by shared succesors $\mu \in \N^+(\lambda) \cap \N^+(\widetilde{\lambda})$.
However, in the circuit, we only have access to $\lambda$ and $\widetilde{\lambda}$, while the information about $\kappa$ should be extracted from Yamanouchi word entry $p_{n-1}$, and the information about $\mu$ should be stored implicitly as a Yamanouchi word entry $q_n$ at the output. 
In the circuit, we therefore define the embedding operation to act on Yamanouchi word entries, instead of entire irrep labels. 

\begin{proposition}\label{prop: second induction induction step circuit}
    The circuit in \cref{fig: second induction induction step} implements $\IndMap[\lambda][n]$ from \cref{def: IndMap updated for special cases}. 
\end{proposition}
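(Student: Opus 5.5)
The circuit in \cref{fig: second induction induction step} will be checked gate by gate against the factorization
\[
\IndMap[\lambda][n] = \Bigl(\bigoplus_{\widetilde{\lambda}} \mathrm{I}_{d_{\widetilde{\lambda}}} \otimes \mathrm{A}_{\lambda,\widetilde{\lambda}}\Bigr)\Bigl(\mathrm{I}_{d_\lambda} \oplus \bigoplus_{\kappa\in\N^-(\lambda)} \IndMap[\kappa][n-1]\Bigr)\bigl(\mathrm{M}_n\otimes \mathrm{I}_{d_\lambda}\bigr),
\]
which holds by \cref{thm: second induction}. It therefore suffices to show that each block of the circuit realizes one factor, in the Yamanouchi-word encoding, on computational basis inputs $\ket{i}\ket{p}\ket{\lambda}$, and then to extend by linearity. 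The special inputs of \cref{def: IndMap updated for special cases}, namely $i>n$ and the flag register, are handled separately at the end.

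\textbf{Mackey step.} Encode $\star$ as $n$ and $t^{(n)}_i$ by $i$. Then $\mathrm{M}_n$ acts as the identity on the transversal register by \cref{prop: Mackey transversals for Sn}, so this step needs no gates.

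\textbf{Recursive step.} Here the circuit splits $p=(p',p_{n-1})$ and computes $\kappa = \lambda - (\text{box in row } p_{n-1})$ into an ancilla. This is the inverse of the $+\square$ gate, so it can be uncomputed. The circuit then applies $\IndMap[][n-1]$ with control register $\kappa$, acting on $\ket{i}\ket{p'}$. For $i<n$ this is $\IndMap[\kappa][n-1]$ by definition. For $i=n$, which is the $\star$ input, the extended definition applied at level $n-1$ (an ``invalid'' label $i>n-1$) returns $q'=p'$, $q_{n-1}=\star$, $\widetilde\lambda=\kappa$ and flag $j=2$. The next step is to verify that, after the wrapping classical gates restore $p_{n-1}$ and $\lambda$, the $\star$ branch carries exactly the basis vector $\ket{P}\otimes\ket{\star}$ of the summand $\mathrm{I}_{d_\lambda}$. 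The $i<n$ branch carries $\ket{\widetilde R}\otimes\ket{\kappa}$ with $\widetilde R\in\Path(\widetilde\lambda)$. This shows that the register content after the recursion is a faithful encoding of the basis of $\bigoplus_{\widetilde\lambda}V_{\widetilde\lambda}\otimes M^-_{\lambda,\widetilde\lambda}$. In particular it includes the detours, where $\widetilde\lambda\neq\lambda$ and the multiplicity label is the unique shared predecessor given by \cref{lemma: diamond}. The $V_{\widetilde\lambda}$ factor sits in the path prefix, and the multiplicity label sits in the last Yamanouchi entry together with the flag.

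\textbf{Embedding step.} The gate $\mathrm{A}$ acts only on the last Yamanouchi entries and the flag, controlled on $(\lambda,\widetilde\lambda)$. It sends the entry encoding $\kappa$ (or $\star$) to the entry $q_n=\mu\setminus\widetilde\lambda$ by the matrix of \cref{def: embedding operation}, and leaves the prefix $\widetilde R$ untouched. By \cref{lemma: commutative diagram for embedding operation} this is exactly $\mathrm{I}_{d_{\widetilde\lambda}}\otimes\mathrm{A}_{\lambda,\widetilde\lambda}$. Afterwards $\mu$ is recomputed from $\widetilde\lambda$ and $q_n$ via $+\square$, and the flag is reset to $1$ as required for $i\le n$. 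The translation between labels and entries is a bijection: $\kappa\leftrightarrow$ the row of $\lambda\setminus\kappa$, and $\mu\leftrightarrow$ the row of $\mu\setminus\widetilde\lambda$. Because of this, the entry-level gate equals the label-level block up to a fixed relabeling that the circuit's classical gates perform.

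\textbf{Invalid inputs.} For $i>n$ the whole circuit should reduce to copying, giving $q'=p$, $q_n=\star$, $\mu=\lambda$ and $j=i-n+1$. This follows from the same case of the recursive definition, provided the $\mathrm{A}$ gate is defined to act trivially when the flag is set. That is checked by inspection, and by induction on $n$ with base case $n=2$.

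\textbf{Main obstacle.} I expect the hardest part to be the bookkeeping in the recursive step. The outputs of $\IndMap[][n-1]$ for the various $\kappa$, including the $\star$ branch and the detours, must be shown to land in distinct, consistently encoded registers. Only then can the subsequent controlled $\mathrm{A}$ gate see $(\lambda,\widetilde\lambda,\text{label})$ unambiguously and uncompute all ancillas, in particular $\kappa$ and the flag, without leaving garbage entangled with the output.
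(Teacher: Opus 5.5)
Your outline matches the paper's: use the factorization from \cref{thm: second induction}, note that $\mathrm{M}_n$ is the identity in the index encoding, and split into $i<n$, $i=n$, $i>n$. The gap is in the part you defer as ``bookkeeping'', which is the actual content of this proposition, since the algebra is already in \cref{thm: second induction}. The $i=n$ and $i>n$ branches both leave the recursive call with $q_{n-1}=\star$, the irrep-label register holding $\kappa$, and a flag $\neq 1$. The flag is $2$ in the first case and $i-(n-1)+1=i-n+2\geq 3$ in the second. The $i=n$ branch must receive the $\star$-column $\sqrt{d_\mu/(n d_\lambda)}$ of $\mathrm{A}_\lambda$, which is the entire $t=e$ contribution. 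The $i>n$ branch must pass through unchanged. Only the flag tells them apart.

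The paper handles this with three classical gates placed \emph{before} the embedding, controlled on the recursive output being $\star$ (equivalently, flag $\neq 1$). First, add the box in row $p_{n-1}$ back to $\kappa$ to get $\lambda$. Second, swap $\star$ with $p_{n-1}$, so the path register holds $p$ and the multiplicity register holds $\star$. Third, subtract $1$ from the flag. This gives $\ket{1}\ket{p}\ket{\lambda}\ket{\star}$ for $i=n$, and $\ket{i-n+1}\ket{p}\ket{\lambda}\ket{\star}$ with $i-n+1\geq 2$ for $i>n$. The embedding is then controlled on flag $=1$.

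Your sketch instead resets the flag only after $\mathrm{A}$, and lets $\mathrm{A}$ act trivially ``when the flag is set''. Read literally, this fails. If ``set'' means $\neq 1$, then $\mathrm{A}$ skips the $\star$ branch. If $\mathrm{A}$ does act there but leaves the flag at $2$, then afterwards that component and the $i<n$ components carry the same labels $(\widetilde R,\widetilde\lambda,q_n)$ and differ only by flag $2$ versus $1$. A later ``reset to $1$'' is then not injective and cannot be a gate.

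Likewise, $j=i-n+1$ for $i>n$ does not ``follow from the same case of the recursive definition'', because that case gives $i-n+2$. It needs the same controlled decrement, plus $+\square$ acting trivially on $q_n=\star$.

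Finally, no induction on $n$ belongs inside this proposition. It takes $\mathrm{U}^{(n-1)}_{\mathrm{Ind}}$ as satisfying \cref{def: IndMap updated for special cases}, and the induction, with base case $m=1$, is carried out in \cref{thm: second induction circuit}.
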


\begin{figure}[H]
    \centering
    \includegraphics[]{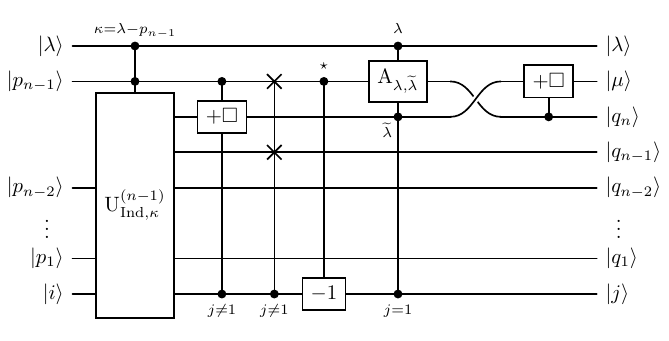}
    \caption{Recursive implentation of $\IndMap[\lambda][n]$ derived from Mackey theory. 
    The extra register $\ket{j}$ at the output is $\ket{1}$ when $i$ encodes an element of $\mathrm{T}_n$, and $\ket{i - n +1}$ otherwise, when $i > n$. }
    \label{fig: second induction induction step}
\end{figure}

\begin{proof}
    When $i \leq n$, this circuit corresponds to the alternative path in the commutative diagram of \cref{thm: second induction}, where we decompose $\IndMap[\lambda][n]$ into a product of unitaries which are Mackey relabeling, $\mathrm{I}_{d_{\lambda}} \oplus \bigoplus_{\kappa \in \N^-(\lambda)}\IndMap[\kappa][n-1]$ and the embedding operation $\mathrm{A}$.
    At the end of the circuit, the embedding operation does not act on the truncated length $n{-}1$ path, which corresponds to the identity in \cref{eq: block structure A}. 
    Also, note that if we encode transversal elements of the transversals defined in \cref{prop: Mackey transversals for Sn} by their indices, the Mackey relabeling map $\mathrm{M}_n$ has a trivial implementation.
    For the rest of the circuit, we separate the proof into cases depending on the value of $i$.
    
    \subsubsection*{Case 1: \texorpdfstring{$i = n$}{i is n}}
    Consider a basis vector $\ket{n}\ket{p}$ with control register $\ket{\lambda}$, where $n$ encodes $t^{(n)}_n = e$, and $p$ encodes a path to $\lambda$.
    Let $\kappa \in \N^-(\lambda)$ be the endpoint of $p'$.
    The Mackey relabeling from \cref{thm: second induction} should change the transversal label to $\star$, but as we encode $\star$ with $n$, we act trivially.
    
    The next map from \cref{thm: second induction} acts trivially, too.
    In our chosen encoding, we should map $\ket{\star}\ket{p} \mapsto \ket{1}\ket{q' = p}\ket{\star}\ket{\widetilde{\lambda} = \lambda}$, where $\ket{1}$ is the flag register.
    In the circuit, the gate $\IndMap[\kappa][n-1]$ sees $i = n$ as an ``invalid input''.
    Setting the flag register to $j = n - (n{-}1) + 1 = 2$, the state becomes $\ket{2}\ket{p'}\ket{\star}\ket{\kappa}\ket{p_{n-1}}$.
    After adding the box back to $\ket{\kappa}$ and swapping $\ket{\star}$ and $\ket{p_{n-1}}$, we get the state $\ket{2}\ket{p}\ket{\lambda}\ket{\star}$. 
    Now, we subtract $1$ from the flag register, which ensures that we have our desired $\ket{1}$ at the flag output. 
    The final gate of \cref{thm: second induction} is the embedding operation $\mathrm{A}$.
    Afterwards, we add box $q_n$ to $\widetilde{\lambda}$ to obtain $\mu$, so that we stay in our chosen encoding.

    \subsubsection*{Case 2: \texorpdfstring{$i < n$}{i lt n}}
    Now, consider a basis vector $\ket{i}\ket{p}$, with control register $\ket{\lambda}$, where $i < n$ encodes $t^{(n)}_i = (i, \dots, n)$, and $p$ encodes a path to $\lambda$.
    In this case, the Mackey relabeling should first map $t^{(n)}_{i}$ to $t^{(n-1)}_i$ according to \cref{prop: Mackey transversals for Sn}. 
    As these transversal elements are encoded using the same state $\ket{i}$, the Mackey step is again trivial as a quantum gate. 
    Then, we should act by $\bigoplus_{\kappa \in \N^-(\lambda)} \IndMap[\kappa][n-1]$, which is a submatrix of $\IndMap[\kappa][n-1]$.
    As $i < n$ encodes a ``valid'' input state of $\IndMap[][n-1]$, the flag register is set to $\ket{1}$.
    As the three controlled gates are now switched of, we end up in our desired state $\ket{1}(\IndMap[\kappa][n-1]\ket{p'})\ket{p_{n-1}}$, where $p_{n-1}$ uniquely identifies $\kappa \in \N^-(\lambda)$ using that $\lambda$ is stored in the control register. 
    Finally, the embedding operation and mapping $\ket{\widetilde{\lambda}}$ to $\ket{\mu}$ finishes the implementation of the second path in \cref{thm: second induction} as a quantum gate.

    \subsubsection*{Case 3: \texorpdfstring{$i > n$}{i gt n}}
    In this case, where inputs are ``invalid'', we do not have to consider \cref{thm: second induction}, but we only need to validate that our circuit acts as described in this proposition.
    Consider a basis state $\ket{i}\ket{p}$ at the input with $i > n$ and $p$ encoding an element of $\Path(\lambda)$.
    First, $\IndMap[\kappa][n-1]$ sets the flag register to $i - (n{-}1) + 1 = i - n + 2$, acts as identity on the path and outputs $\kappa$ in the irrep label register, so that we obtain the state $\ket{i - n + 2}\ket{p'}\ket{\star}\ket{\kappa}\ket{p_{n-1}}$. 
    After the three controlled gates which are all switched on, we obtain the state $\ket{i - n + 1}\ket{p}\ket{\lambda}\ket{\star}$.
    As the embedding operation is switched off, and $+\square$ acts trivially when the control register is $\star$, our final state is $\ket{i - n + 1}\ket{p}\ket{\star}\ket{\lambda}$, which matches the definition of $\IndMap[][n]$.
\end{proof}

\begin{theorem}\label{thm: second induction circuit}
    For $n \geq 1$, the quantum circuit in \cref{fig:2nd induction cycle version} implements $\IndMap[n]$.
\end{theorem}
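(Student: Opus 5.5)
We argue by induction on $n$, unrolling the recursion of \cref{prop: second induction induction step circuit}. Presumably \cref{fig:2nd induction cycle version} is exactly this unrolled form. It is a cascade in which the transversal index $i$ travels down the levels $n, n-1, \dots$: at each level $m$ the circuit peels off the last Yamanouchi entry, recurses, then restores the entry and applies the embedding operation $\mathrm{A}$ for $S_{m-1}\subset S_m$ followed by $+\square$.

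\emph{Base case.} For $n=1$ we have $\T_1=\{e\}$, and $\IndMap[][1]$ maps the empty path on $\varnothing$ to the unique path to $(1)$. The circuit consists only of the trivial initialization of $q_1=1$ and the flag $j=1$, so both sides agree. If the figure begins at $n=2$, we instead check the $2\times 2$ case by hand: there $\mathrm{A}_{\lambda,\lambda}$ for $\lambda=(1)$ is the Hadamard-like matrix with entries $\sqrt{d_\mu/2}$ and $\pm\sqrt{1/2}$ coming from \cref{def: embedding operation}.

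\emph{Induction step.} Assume the cycle-form circuit at level $n-1$ implements the extended map of \cref{def: IndMap updated for special cases}, including its behaviour on invalid inputs $i>n-1$ and its flag output. Substituting this circuit for the box $\IndMap[\kappa][n-1]$ in \cref{fig: second induction induction step}, \cref{prop: second induction induction step circuit} gives that the resulting circuit implements $\IndMap[\lambda][n]$ for every $\lambda$, in the extended sense. It is essential that the hypothesis covers the extended map and not only \cref{def: IndMap}. Case~1 and Case~3 of that proposition feed $i=n$ or $i>n$ into the level-$(n-1)$ gate and rely on its flag value $i-(n-1)+1$ and its identity action on the path. So the induction must carry the full extended specification, and the base case must verify the invalid-input behaviour as well.

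\emph{Matching the circuit in the figure.} The remaining, and I expect main, obstacle is to show that the nested circuit obtained by repeated substitution equals \cref{fig:2nd induction cycle version} gate by gate. Concretely, the controlled gates $q\setminus p$, the swaps, the decrement of the flag, and $+\square$ at consecutive levels must either compose into the stated cycle structure or cancel. A typical case is the uncomputation of an intermediate irrep label: the label produced by $+\square$ at level $m-1$ must serve as the control register for level $m$. I would also check that the flag is decremented once per level, so that an input $i\le n$ ends with $j=1$. This check runs by a secondary induction on the depth of unrolling and tracks the register contents through the three cases $i=m$, $i<m$, and $i>m$ at each level, exactly as in the proof of \cref{prop: second induction induction step circuit}. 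Controls being switched off when the transversal index is invalid ensures that no spurious $\mathrm{A}$ acts.

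Finally, restricting to valid inputs and discarding the flag (which equals $1$) recovers $\IndMap[][n]=\bigoplus_\lambda \IndMap[\lambda][n]$, since the circuit is controlled on $\lambda$.
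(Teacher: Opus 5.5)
Your skeleton matches the paper's: induction on the level, with \cref{prop: second induction induction step circuit} driving the step, and the extended specification of \cref{def: IndMap updated for special cases} (flag register plus invalid inputs) carried in the hypothesis. The gap is that the step you defer as ``the main obstacle'' is essentially the whole of the paper's proof, and you do not supply the idea that makes it go through.

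The circuit of \cref{fig:2nd induction cycle version} is not the nested unrolling of \cref{fig: second induction induction step}. It is a flat sequence $\mathrm{A}'_1,\dots,\mathrm{A}'_{n-1},\mathrm{A}''_n$. The paper needs no secondary induction on unrolling depth. The level-$m$ flat circuit equals the level-$(m-1)$ flat circuit followed by a short tail built from $\mathrm{A}''^{\dagger}_{m-1}$ (which cancels its last block), $\mathrm{A}'_{m-1}$ and $\mathrm{A}''_m$. By the hypothesis, it then suffices to show that this tail, applied after $\IndMap[][m-1]$, acts like the tail of \cref{fig: second induction induction step}. That requires three simplifications:
\begin{enumerate}
\item $\mathrm{A}^{\dagger}_{\kappa,\widetilde{\kappa}}$ cancels against $\mathrm{A}_{\kappa,\widetilde{\kappa}}$.
\item The ancilla that is set to $\kappa$ and then restored to $\lambda$ via $p_{m-1}$ (since $P=P''\to\kappa\to\lambda$) can be removed, so the control $\ket{\lambda}$ drives $\mathrm{A}_{\lambda,\widetilde{\lambda}}$ directly.
\item $+\square$ commutes through the controlled swap.
\end{enumerate}

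The decisive last step is cancelling the leftover $-\square$ against the first $+\square$. This is not a gate identity, so a ``gate by gate'' comparison cannot close the argument. The cancellation holds only on the range of the extended $\IndMap[][m-1]$. There the hypothesis gives the invariant that $q_{m-1}=\star$ exactly when the flag is $\neq 1$: valid inputs yield flag $1$ and a genuine entry $q_{m-1}$, while invalid inputs $i'\geq m$ yield flag $i'-m+2>1$ and $\star$. Since $-\square$ acts trivially when its control is $\star$, the two gates cancel on this subspace. You need to state and use this invariant explicitly, as a second, essential use of the induction hypothesis. Tracking cases on the input $i$ alone does not justify the cancellation for the superpositions that $\IndMap[][m-1]$ outputs.

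Your base case also only checks the valid input. You must verify that for $i>1$ the level-$1$ circuit (a controlled $\mathrm{A}_{\varnothing,\varnothing}$ followed by $+\square$) returns $q_1=\star$, $\mu=\varnothing$ and flag $i$. As you note yourself, the higher levels feed exactly such inputs into it.
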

\begin{figure}[H]
    \centering
    \includegraphics[]{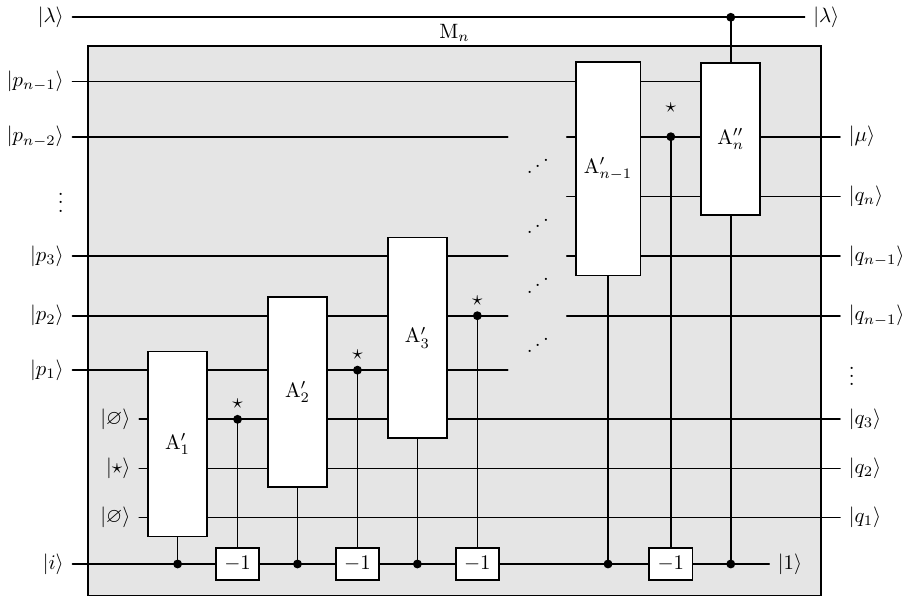}
    \caption{Implementation of $\IndMap[][n]$ derived using Mackey theory. 
    This circuit is a correction and simplification of the circuit by \cite{SnQFT_Speedup}.
    The gates $\mathrm{A}_m'$ and $\mathrm{A}_m''$ are the embedding operations for $S_{m-1} \subset S_m$ wrapped in a small classical circuit, as defined in \cref{fig: A matrix wrapper definitions for second induction}.}
    \label{fig:2nd induction cycle version}
\end{figure}

\begin{figure}[h]
    \centering
    \begin{subfigure}{0.57\textwidth}
        \centering
        \includegraphics[width=\linewidth]{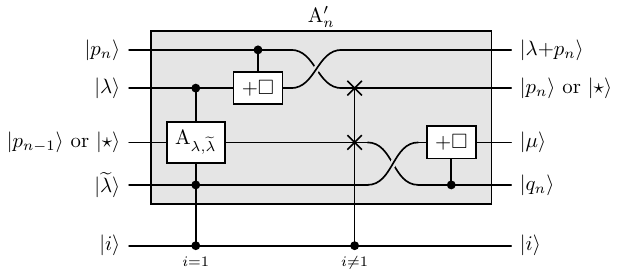}
        \caption{Definition of the blocks used for all steps in the Mackey construction of $\IndMap[][n]$, except for the last step. }
    \end{subfigure}
    \hfill
    \begin{subfigure}{0.42\textwidth}
        \centering
        \includegraphics[width=\linewidth]{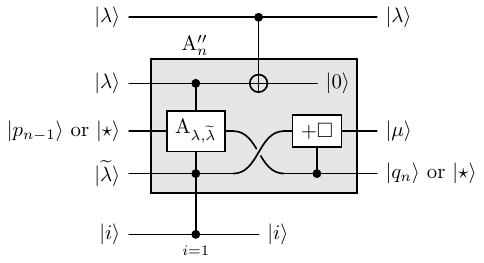}
        \caption{Definition of the last block in the Mackey construction of $\IndMap[][n]$.}
    \end{subfigure}
    \caption{Definitions of the classical wrapper circuits for the embedding operation $\mathrm{A}_{\lambda, \widetilde{\lambda}}$ used in \cref{fig:2nd induction cycle version}.}
    \label{fig: A matrix wrapper definitions for second induction}
\end{figure}

We prove this theorem by induction on $n$, where $\cref{prop: second induction induction step circuit}$ forms the key for proving the induction step.
The full proof can be found in \cref{app: proof second induction circuit}.

\subsection{Recovering Kawano--Sekigawa}

\begin{proposition}\label{prop: induction map with different transversals}
    Let $\T'_n = \{\tau_j \mid j \in [n]\}$ be any transversal for $S_{n-1} \subseteq S_n$.
    Then, there is always a unique re-indexing $f : [n] \to [n]$ such that $\tau_{f(i)} S_{n-1} = t_i S_{n-1}$, where we have $t_{i}^{(n)}$ from \cref{prop: Mackey transversals for Sn}, and unique subgroup elements $h^{(n)}_i \in S_{n-1}$ such that for all $i \in [n]$, it holds that
    \begin{equation}\label{eq: other choice of transversals}
        \tau_{f(i)}^{(n)} = t_{i}^{(n)} h^{(n)}_i.
    \end{equation}
    Denote by $\IndMapDash[\lambda][n]$ the induction map using the new transversal $\T_n'$, and by $\IndMap[\lambda][n]$ the induction map using the transversals from \cref{prop: Mackey transversals for Sn} which can by implemented as in \cref{thm: second induction circuit}.
    Then, we can implement $\IndMapDash[\lambda][n]$ by first applying $f^{-1}$ on the transversal register coherently, and then apply irreducible representations of $h^{(n)}_{i}$ in the Path register, followed by applying the circuit for $\IndMap[\lambda][n]$ as in \cref{thm: second induction circuit}.
\end{proposition}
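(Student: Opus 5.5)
Both transversals index the same set of cosets of $S_{n-1}$ in $S_n$, so the plan has two parts. First I prove existence and uniqueness of $f$ and $h_i^{(n)}$. Then I compare the matrix entries of $\IndMapDash[\lambda][n]$ and $\IndMap[\lambda][n]$ from \cref{def: IndMap}.

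For the first part, $\T_n$ and $\T'_n$ each contain exactly one representative of every left coset of $S_{n-1}$. Hence for each $i$ there is a unique $j$ with $\tau_j S_{n-1} = t_i^{(n)} S_{n-1}$. Setting $f(i) := j$ gives a bijection $[n]\to[n]$, since both sets meet every coset exactly once. Then $h_i^{(n)} := (t_i^{(n)})^{-1}\tau_{f(i)}$ lies in $S_{n-1}$. It is unique by \cref{lemma: decomposing group elements with transversal} applied to $g=\tau_{f(i)}$: the decomposition $g = th$ with $t\in\T_n$, $h \in S_{n-1}$ is unique.

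For the second part, write out the entries for the basis vector $\ket{\tau_{f(i)}}\otimes\ket{P}$:
\begin{equation*}
[\IndMapDash[\lambda][n]]^{Q',\mu}_{\tau_{f(i)},P} = \sqrt{\tfrac{d_\mu}{d_\lambda n}}\,[\Rep_\mu(t_i^{(n)} h_i^{(n)})]^{Q}_{P\to\mu} = \sum_{R} \sqrt{\tfrac{d_\mu}{d_\lambda n}}\,[\Rep_\mu(t_i^{(n)})]^{Q}_{R}\,[\Rep_\mu(h_i^{(n)})]^{R}_{P\to\mu}.
\end{equation*}
This uses the homomorphism property $\Rep_\mu(t h) = \Rep_\mu(t)\Rep_\mu(h)$. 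Since $h_i^{(n)}\in S_{n-1}$ and Young's orthogonal form is subgroup-adapted, \cref{eq: subgroup adapted basis restriction version} and \cref{eq: subgroup adapted basis induction version} show that $[\Rep_\mu(h_i^{(n)})]^{R}_{P\to\mu}$ vanishes unless $R = \widetilde P\to\mu$ with $\widetilde P\in\Path(\lambda)$. In that case it equals $[\Rep_\lambda(h_i^{(n)})]^{\widetilde P}_{P}$, which does not depend on $\mu$. The sum therefore collapses to
\begin{equation*}
\sum_{\widetilde P \in \Path(\lambda)} [\IndMap[\lambda][n]]^{Q',\mu}_{t_i^{(n)},\widetilde P}\,[\Rep_\lambda(h_i^{(n)})]^{\widetilde P}_{P}.
\end{equation*}
As an operator identity this reads
\begin{equation*}
\IndMapDash[\lambda][n] = \IndMap[\lambda][n]\cdot\Bigl(\sum_i \ket{i}\bra{i}\otimes \Rep_\lambda(h_i^{(n)})\Bigr)\cdot\bigl(\text{relabeling } \ket{\tau_{f(i)}}\mapsto\ket{i}\bigr).
\end{equation*}
This is exactly the stated ordering: first the coherent reindexing by $f^{-1}$, then the controlled irrep of $h_i^{(n)}$ on the path register, then the circuit of \cref{thm: second induction circuit}.

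The main obstacle is bookkeeping rather than ideas, in two places. The first is checking that the irrep factor must act as $\Rep_\lambda$ on the incoming length-$(n{-}1)$ path, conditioned on the transversal index $i$ and the control label $\lambda$, rather than as $\Rep_\mu$. This is where subgroup-adaptedness is essential, since it lets the factor be applied before $\mu$ exists. The second is confirming the normalization: $|\T'_n|=|\T_n|=n$, so the prefactor $\sqrt{d_\mu/(d_\lambda n)}$ is unchanged. Finally, I would note that every factor is unitary: $f^{-1}$ is a permutation of basis states, and the controlled irreps are block-diagonal unitaries. So the composition is a valid quantum circuit and matches the Kawano--Sekigawa structure, with the gate $\mathrm{K}_n$ playing the role of the controlled $\Rep_\lambda(h_i^{(n)})$.
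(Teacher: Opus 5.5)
Your proposal is correct and follows essentially the same route as the paper. You define $f$ via the unique coset representatives and set $h_i^{(n)} = (t_i^{(n)})^{-1}\tau_{f(i)}$. You then expand the matrix entries using the homomorphism property and the subgroup-adaptedness of Young's orthogonal form, which factors the primed induction map as the relabeling by $f^{-1}$, then the controlled $\Rep_\lambda(h_i^{(n)})$, then the original induction map. The only cosmetic difference is that you cite \cref{lemma: decomposing group elements with transversal} for uniqueness of $h_i^{(n)}$, whereas the paper reads it off directly from the coset equality.
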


The proof is given in \cref{app: proof induction map with different transversals}. 
As it is possible to write any subgroup element---such as $h^{n}_i$---as a product of transversal elements $t^{(n-1)}_{i_{n-1}} \dots t^{(1)}_{i_1}$, and each $t^{(m)}_{i}$ can be decomposed as $t^{(m)}_{i} = \sigma_{i} \dots \sigma_{m-1}$, we can use $\mathrm{O}(n^2)$ gates $\mathrm{L}(\sigma_{k-1})$ from \cref{subsec: Beals for Sn} to apply irreps of the correcting subgroup elements to the path register.
However, for an arbitrary choice of transversals, compiling the control conditions for these gates might be hard.  
Still, for most applications, it can be assumed that there is structure in the choice of transversals giving a closed form expression for the control conditions.
Then, we use that $\mathrm{L}(\sigma_{k-1})$ can be implemented using $\widetilde{\mathcal{O}}(1)$ elementary gates, so that the corrections can be implemented with $\widetilde{\mathcal{O}}(n^2)$, contributing $\widetilde{\mathcal{O}}(n^3)$ extra gates for the entire QFT.
We will later see that this does not dominate the complexity. 
Alternatively, the correction could also be applied classically outside of the QFT, with similar implications for the gatecount.

In \cite{SnQFT_Speedup}, the transversal for $S_{n-1} \subset S_n$ is defined as 
\begin{equation}
    \T^{(KS)}_n \defeq \{\tau^{(n)}_i = (1, \dots, n)^{i} \mid i \in [n]\}.
\end{equation}
We can show that $\tau^{(n)}_{i}S_{n-1} = t^{(n)}_i S_{n-1}$, which implies that our re-indexing bijection $f$ is identity, and that we can simply ignore it. 
We find the subgroup element for the corrections as 
\begin{equation}
    h^{(n)}_i = (t^{(n)}_{i})^{-1} \tau^{(n)}_{f(i)} = (t^{(n)}_{i})^{-1} \tau^{(n)}_{i} = (1, \dots, n{-}1)^{i - 1},
\end{equation}
where $h^{(n)}_1 = h^{(n)}_n = e$.

Kawano and Sekigawa include the gates which we intepret as corrections in their second induction relation as a gate named $\mathrm{K}_n$ and subtraction of $1$ modulo $n$ at each layer, where $n =2$ acts as the base case. 
$K_n$ applies irreps of subgroup elements on the path register. 
The figure showing the implementation of $K_n$ seems to apply irreps of the inverse group elements compared to the definition of $K_n$ in the text. 
If we replace the subtraction by a subtraction of $1$ mod $n{-}1$ instead of mod $n$, working out the full recursion seems to give a circuit for $\tau^{(n)}_i = (1, \dots, n)^{-i}$ so that $f(i) = -i \bmod n$, which corresponds to \cref{prop: induction map with different transversals}.
Still, this does not imply the need for addition of $1$ mod $n$ after using $\IndMap[][n-1]$ inductively, and their classical gate $P_n$.

Another interpretation may use that $h^{(n)}_i = \tau^{(n-1)}_{i=1}$ for $i \leq n{-}1$, leading to a slightly different second induction relation.
This might be more in line with their embedding of $\IndMap[][n-1]$ into the larger space when it is used recursively. 
This extension of the space is what we interpret as the identity channel in the Mackey decomposition, which we marked with the special symbol $\star$. 
While we solved the embedding by defining the action on ``invalid'' inputs and adding a flag register at the output, their gate $P_n$ seems to achieve the same. 
Most importantly, the definition of the entries and the combinatorics for labeling the input and output of both our embedding operation and the one of \cite{SnQFT_Speedup} seem to perfectly match.

It is hard to reproduce the exact embedding of $\IndMap[][n-1]$ in the construction of $\IndMap[][n]$ by Kawano and Sekigawa and the implementation of $P_n$.
One of the reasons is that they do not explicitly mention the possibility of detours at the output of $\IndMap[][n-1]$, but somehow assume that all paths at the output of this gate go to the original endpoint $\lambda$, instead of any endpoint $\widetilde{\lambda}$ sharing predecessors $\kappa$ with $\lambda$. 
Our solution with the flag register gets rid of the gate $P_n$ entirely, and is a much simpler implementation of the same induction relation as a quantum algorithm. 

\subsection{Implementation of \texorpdfstring{$\mathrm{A}_{\lambda, \widetilde{\lambda}}$}{A}}

Lastly, \cite{SnQFT_Speedup} claim that $\mathrm{A}_{\lambda, \widetilde{\lambda}}$ can be implemented using $\mathcal{O}(n)$ elementary operations.
Their main argument uses that the maximal dimension of the blocks $\mathrm{A}_{\lambda, \widetilde{\lambda}}$ scales as $\mathcal{O}(\sqrt{n})$, and can therefore be implemented using a matrix decomposition method, requiring $\mathcal{O}(n)$ rotations. 
However, computing the $\mathcal{O}(n)$ rotation angles for such a decomposition needs to be carried out coherently, as classical pre-computation for all $\lambda \in \widehat{S}_{n-1}$ would require analyzing $|\widehat{S}_{n-1}| = \mathcal{O}(e^{\sqrt{n}})$ different matrices, which is not efficient.
\cite{SnQFT_Speedup} do not account for the related computational overhead in their complexity analysis. 

\begin{wrapfigure}{r}{4.73cm}
    \begin{center}
        \includegraphics[width = 4.53cm]{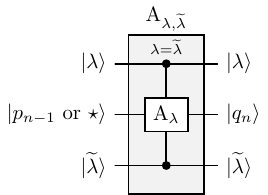}
        \caption{Implementation of $\mathrm{A}_{\lambda, \widetilde{\lambda}}$ using $\mathrm{A}_{\lambda}$.}
        \label{fig: A detour handling circuit}
    \end{center}
\end{wrapfigure}

The upper bound to the dimension of the blocks is given more precisely by \cref{lemma: number of ACs}.
First, let us take care of the detours. 
By \cref{thm: second induction}, we see that $\mathrm{A}_{\lambda, \widetilde{\lambda}}$ is one-dimensional and acts as a bijection if $\lambda \neq \widetilde{\lambda}$.
Moreover, \cref{lemma: diamond} shows that the cell related to Yamanouchi word entry $p_{n-1}$ is exactly the same as $q_n$, so $\mathrm{A}_{\lambda, \widetilde{\lambda}}$ should act trivially.
Denote by $\mathrm{A}_{\lambda} \defeq \mathrm{A}_{\lambda, \lambda}$.
Then, the circuit in \cref{fig: A detour handling circuit} implements $\mathrm{A}_{\lambda, \widetilde{\lambda}}$ using classical gates and $\mathrm{A}_{\lambda}$.

To implement $\mathrm{A}_{\lambda}$, we will run a coherent matrix decomposition algorithm.
An extensive analysis and low-level implementation can be found in \cref{app: low level embedding operation}. 
In summary, we compute the blocks $\mathrm{A}_{\lambda}$ coherently, and decompose it into a product of $\mathcal{O}(n)$ Givens rotations.
After computing each angle, we update the two corresponding columns of the register storing the matrix.
This matrix update step requires $\mathcal{O}(\sqrt{n})$ coherent arithmetic operations for updating all the entries, leading to a total size of $\mathcal{O}(n^{1.5})$ elementary gates for implementing the embedding operation. 
While the depth can be reduced to $\mathcal{O}(n)$ by updating matrix entries simultaneously, the coherent matrix decomposition forms the bottleneck for the size of the $S_n$-QFT, which is not included in the analysis of \cite{SnQFT_Speedup}.

\subsection{Complexity}
\begin{lemma}\label{lemma: IndMap Mackey complexity}
    For $n \geq 1$, the circuit in \cref{fig:2nd induction cycle version} implementing $\IndMap[][n]$ uses $\widetilde{\mathcal{O}}(n^{1.5})$ qubits, $\widetilde{\mathcal{O}}(n^{2.5})$ elementary gates and has depth $\widetilde{\mathcal{O}}(n^{2})$.
\end{lemma}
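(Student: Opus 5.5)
The plan is to unroll the recursion of \cref{thm: second induction circuit}, as drawn in \cref{fig:2nd induction cycle version}, and account for each layer separately. Unrolling $\IndMap[][n]$ via \cref{prop: second induction induction step circuit} gives a sequence of $n-1$ layers indexed by $m = 2, \dots, n$. Layer $m$ consists of a constant number of classical wrapper gates ($q{\setminus}p$, $+\square$, the swap of $\ket{\star}$ with a Yamanouchi entry, the flag decrement, and the comparisons/additions on the transversal index) together with one embedding gate $\mathrm{A}_m'$ or $\mathrm{A}_m''$ from \cref{fig: A matrix wrapper definitions for second induction}. I will bound the cost of a single layer, sum over $m \le n$, and take the maximum for the space.

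\textbf{Classical wrappers.} The wrapper gates act on registers whose total size is dominated by the shape registers, which need $\mathcal{O}(n)$ qubits by \cref{lemma: shape register qubit count}, and on single Yamanouchi entries of $\mathcal{O}(\log n)$ qubits. Adding a box to a shape at a given row, or identifying the row in which $q$ and $p$ differ, can be done with coherent arithmetic using $\widetilde{\mathcal{O}}(m)$ gates and polylogarithmic depth, in the same way as for the Beals gates in \cref{tab:Beals complexities table}. The flag and index manipulations cost only $\widetilde{\mathcal{O}}(1)$. The detour handling of \cref{fig: A detour handling circuit} reduces $\mathrm{A}_{\lambda,\widetilde\lambda}$ to $\mathrm{A}_\lambda$ plus a comparison of two shapes, which costs $\widetilde{\mathcal{O}}(m)$ gates. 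Hence each layer spends $\widetilde{\mathcal{O}}(m)$ gates on classical wrappers.

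\textbf{Embedding operation.} This is the dominant step. By \cref{lemma: number of ACs}, the block $\mathrm{A}_\lambda$ has dimension $k = \mathcal{O}(\sqrt{m})$, so it has $\mathcal{O}(m)$ entries. I would rely on the analysis in \cref{app: low level embedding operation}, which proceeds in four stages.
\begin{itemize}
\item First, coherently compute the contents of the addable and removable cells, costing $\widetilde{\mathcal{O}}(m)$.
\item Second, compute every entry of \cref{def: embedding operation} by rewriting the dimension ratios with \cref{lemma: twiddle factor}. Each entry is a product of $\mathcal{O}(\sqrt m)$ content differences, so all entries together cost $\widetilde{\mathcal{O}}(m^{1.5})$ gates. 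Computing them in parallel gives depth $\widetilde{\mathcal{O}}(\sqrt m)$ using $\widetilde{\mathcal{O}}(m)$ workspace.
\item Third, carry out a Givens decomposition into $\mathcal{O}(k^2) = \mathcal{O}(m)$ rotations. Each rotation needs its angle computed in $\widetilde{\mathcal{O}}(1)$, and the two affected columns must then be updated, which is $\mathcal{O}(k) = \mathcal{O}(\sqrt m)$ arithmetic operations. These updates can be done in parallel, so each rotation has depth $\widetilde{\mathcal{O}}(1)$. In total this stage costs $\widetilde{\mathcal{O}}(m^{1.5})$ gates and depth $\widetilde{\mathcal{O}}(m)$.
\item Finally, apply the controlled rotations to the $\mathcal{O}(\log m)$-qubit Yamanouchi entry and uncompute the stored matrix and angles, which at most doubles the cost.
\end{itemize}
Storing the $\mathcal{O}(m)$ angles to $\mathrm{polylog}$ precision gives the $\widetilde{\mathcal{O}}(m)$ space per layer.

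\textbf{Summation and space.} Summing $\widetilde{\mathcal{O}}(m^{1.5})$ gates over $m \le n$ gives $\widetilde{\mathcal{O}}(n^{2.5})$ gates. Summing $\widetilde{\mathcal{O}}(m)$ depth gives $\widetilde{\mathcal{O}}(n^2)$ depth. For space, a naive count with ancillas uncomputed after each layer would give only $\widetilde{\mathcal{O}}(n)$. The claimed $\widetilde{\mathcal{O}}(n^{1.5})$ must therefore come from registers that persist across layers, namely the intermediate shape registers $\ket{\widetilde\lambda}$, $\ket{\kappa}$ carried along the unrolled recursion. There are $\mathcal{O}(\sqrt n)$ distinct nontrivial shapes (or, in the worst case, ancillas kept per level), each of size $\mathcal{O}(n)$. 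I expect the main obstacle to be pinning this space bound down precisely: deciding exactly which intermediate shape registers must be kept alive rather than uncomputed, and bounding their number. The coherent Givens step only determines the gate count and depth.
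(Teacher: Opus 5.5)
Your gate-count and depth analysis matches the paper's argument. At each level $m$ the single embedding gate dominates, with $\widetilde{\mathcal{O}}(m^{3/2})$ gates and $\widetilde{\mathcal{O}}(m)$ depth coming from the coherent Givens stage, while the wrappers cost at most $\widetilde{\mathcal{O}}(m)$. Summing over $m \le n$ gives $\widetilde{\mathcal{O}}(n^{5/2})$ gates and $\widetilde{\mathcal{O}}(n^{2})$ depth.

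The space part has a genuine gap. You assign only $\widetilde{\mathcal{O}}(m)$ qubits to the Givens stage. That tacitly assumes the two-column update $O \gets O\,G_{i,j}(-\theta_{i,j})$ can be done in place. This is not automatic in reversible fixed-point arithmetic, because a rounded rotation of a pair of registers is not a bijection on register values. The implementation in \cref{app: low level embedding operation} instead writes the updated columns into fresh registers and keeps the old columns as garbage until the final uncomputation. There are $\mathcal{O}(k^2)=\mathcal{O}(m)$ rotations with $k=|\AC(\lambda)|=\mathcal{O}(\sqrt{m})$, and each leaves $\mathcal{O}(\sqrt{m})$ polylog-precision entries behind. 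So one embedding gate needs $\widetilde{\mathcal{O}}(m^{3/2})$ ancillas, and this is the actual source of the $\widetilde{\mathcal{O}}(n^{3/2})$. These ancillas are uncomputed at the end of each $\mathrm{A}'_m$ (and of $\mathrm{A}''_n$), so the total space is the maximum over $m \le n$. On top of that come $\widetilde{\mathcal{O}}(n)$ qubits for the Yamanouchi word and for the constant number of shape registers, at most two by the paper's count, each of size $\mathcal{O}(n)$ by \cref{lemma: shape register qubit count}.

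Your proposed mechanism, intermediate shape registers persisting across layers, does not match the circuit and would not produce the bound anyway. No shape registers persist across layers beyond that constant number, so the claim of ``$\mathcal{O}(\sqrt n)$ distinct nontrivial shapes'' kept alive has no basis. Keeping one $\mathcal{O}(n)$-qubit register per level would cost $\mathcal{O}(n^2)$, not $\mathcal{O}(n^{3/2})$.

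Your inference that ``a naive count gives $\widetilde{\mathcal{O}}(n)$, so the extra space must come from elsewhere'' is also a non sequitur. The lemma is an upper bound, so a justified $\widetilde{\mathcal{O}}(n)$ count would simply prove something stronger. Justifying it would require an exactly invertible in-place rotation (for instance via lifting steps), which you do not supply.

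As written, the space bound is proved neither by your route nor by the correct one. You need to account explicitly for the garbage produced by the coherent Givens decomposition.
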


\begin{proof}
    First, consider the space complexity. 
    We have at most two shape registers $\ket{\lambda}$ and $\ket{\mu}$ at a time, which together need $\widetilde{\mathcal{O}}(n)$ qubits. 
    For the Yamanouchi-word register, we need $\widetilde{\mathcal{O}}(n)$ qubits as well. 
    In both $A_m'$ and $A_{m}''$, with $1 \leq m \leq n$, the space complexity is dominated by $\mathrm{A}_{\lambda, \widetilde{\lambda}}$, which need at most $\widetilde{\mathcal{O}}(n^{3/2})$ qubits.
    For all other gates, the space complexity is constant.
    Hence, the total space complexity is $\widetilde{\mathcal{O}}(n^{3/2})$.

    Second, for the size and depth, note that both $A_m'$ and $A_{m}''$, with $1 \leq m \leq n$, contain a single gate $\mathrm{A}_{\lambda, \widetilde{\lambda}}$, with size $\widetilde{\mathcal{O}}(m^{3/2})$ and depth $\widetilde{\mathcal{O}}(m)$.
    The gate $+\square$ has size and depth $\widetilde{\mathcal{O}}(m)$.
    The big swap gate in $\mathrm{A}_m'$ can be implemented using $\lceil\log_2(m)\rceil$ parallel swap gates, while uncomputing the shape register in $\mathrm{A}_m''$ uses $\widetilde{\mathcal{O}}(m)$ parallel CNOT-gates with depth $\mathcal{O}(1)$. 
    Hence, $\mathrm{A}_{\lambda, \widetilde{\lambda}}$ again dominates all complexities in both $A_m'$ and $A_{m}''$, giving size $\widetilde{\mathcal{O}}(m^{3/2})$ and depth $\widetilde{\mathcal{O}}(m)$.
    As the circuit uses $A_m'$ for $m = 1, \dots, n{-1}$ and $A_n''$, the total size becomes $\sum_{m = 1}^{n} \widetilde{\mathcal{O}}(n^{3/2}) = \widetilde{\mathcal{O}}(n^{5/2})$ and the depth becomes $\sum_{m = 1}^{n} \widetilde{\mathcal{O}}(n) = \widetilde{\mathcal{O}}(n^{2})$.
\end{proof}

\section{Complexity}\label{sec: complexity}
In this chapter, we will prove the main results of this work, \cref{thm: the big complexity theorem Beals,thm: the big complexity theorem KS}.

\thmBealsResult*

\begin{proof}
    Implementing the $S_n$-QFT using the circuit of \cref{fig: theorem first induction full circuit relative paths}, we alternatingly use $\mathrm{T}_m$ and $\IndMap[m]$ once for all $m = 1, \dots, n$.
    \cite{SnQFT_Speedup} propose an implemetation for $\mathrm{T}_m$ with $\widetilde{\mathcal{O}}(m)$ gates and depth and $\widetilde{\mathcal{O}}(1)$ ancillary registers. 
    The gate $\IndMap[][m]$---using the circuit in \cref{fig: simplified Beals circuit for Sn}, which is a simplified version of the algorithm proposed by \cite{Beals}---has a gate count of $\widetilde{\mathcal{O}}(m^2)$, depth of $\widetilde{\mathcal{O}}(m^2)$ and uses $\widetilde{\mathcal{O}}(m)$ ancillary registers by \cref{lemma: IndMap Beals complexity}.
    We find the total depth of the QFT as 
    \begin{equation}
        \sum_{m=1}^{n} \underbrace{\mathcal{O}(m)}_{\mathcal{T}_m} + \underbrace{\mathcal{O}(m^2)}_{\IndMap[][m]} = \sum_{m=1}^{n} \mathcal{O}(m^2) = \mathcal{O}(n^3).
    \end{equation}
    We find the gate count as 
    \begin{equation}
        \sum_{m=1}^{n} \underbrace{\mathcal{O}(m)}_{\mathcal{T}_m} + \underbrace{\mathcal{O}(m^2)}_{\IndMap[][m]} = \sum_{m=1}^{n} \mathcal{O}(m^2) = \mathcal{O}(n^3).
    \end{equation}
    Inside $\IndMap[][m]$, the gate $\mathrm{U}_{\uparrow}$ uses the most ancillary qubits.
    
    This concludes our proof.
\end{proof}
The scaling with target diamond error $\epsilon$ scales polylogarithmically with $\epsilon^{-1}$ and is therefore hidden in the notation $\widetilde{\mathcal{O}}$. 
To motivate this, note that our only approximations come from $\Rep(\sigma_k)$ and $\mathrm{U}_\uparrow$, of which the first one uses $1$ controlled rotation and the second one uses $\mathcal{O}(\sqrt{m})$ controlled rotations.
More precisely, these controlled rotations are $\mathrm{R}_y(\theta)$ gates with the rotation angle $\theta$ stored in a quantum float register. 
When $\theta$ has a maximal numerical error of $\delta$, the corresponding rotation will have a diamond error of at most $\delta$, too.
In total, the $\IndMap[][m]$ circuit uses $\mathcal{O}(m^{1.5})$ of these rotations, making the whole QFT require $\mathcal{O}(n^{2.5})$ rotations, upper bounding the total diamond error by $\epsilon= \mathcal{O}(n^{2.5})\delta$, where $\delta$ is the maximal numerical error of the rotation angles.
As the numerical error of the coherent arithmetic operations goes down exponentially with the number of extra qubits $k$, we say that $\delta \propto 2^{-k}$.
Hence, for a given maximal diamond error $\epsilon$, we require $k = \mathcal{O}\left(\log_2\left(\frac{n^{2.5}}{\epsilon}\right)\right)$ extra qubits.
As the size and depth of coherent arithmetic operations scale polynomially with the number of qubits, the polylogarithmic dependence holds for the depth and gatecount as well.

\thmKSResult*

\begin{proof}
    Just like \cref{thm: the big complexity theorem Beals}, we implement the QFT using \cref{fig: theorem first induction full circuit relative paths}, but now with the implementation of $\IndMap[][m]$ from \cref{thm: second induction}, with the circuit in \cref{fig:2nd induction cycle version}. 
    Then, $\IndMap[m]$ has a gate count of $\widetilde{\mathcal{O}}(m^{2.5})$, depth of $\widetilde{\mathcal{O}}(m^2)$ and uses $\widetilde{\mathcal{O}}(m^{1.5})$ ancillary registers by \cref{lemma: IndMap Beals complexity}.
    We find the total depth of the QFT as 
    \begin{equation}
        \sum_{m=1}^{n} \underbrace{\mathcal{O}(m)}_{\mathcal{T}_m} + \underbrace{\mathcal{O}(m^2)}_{\IndMap[][m]} = \sum_{m=1}^{n} \mathcal{O}(m^2) = \mathcal{O}(n^3).
    \end{equation}
    We find the gate count as 
    \begin{equation}
        \sum_{m=1}^{n} \underbrace{\mathcal{O}(m)}_{\mathcal{T}_m} + \underbrace{\mathcal{O}(m^{2.5})}_{\IndMap[][m]} = \sum_{m=1}^{n} \mathcal{O}(m^2) = \mathcal{O}(n^{3.5}).
    \end{equation}
    The group elements at the input and the Yamanouchi words at the output can be encoded with $\widetilde{\mathcal{O}}(n)$ qubits, by \cref{lemma: Yamanouchi word optimal scaling}. 
    The irrep label register $\ket{\mu}$ at the output requires $\mathcal{O}(n)$ qubits by \cref{lemma: shape register qubit count}. 
    In the circuit of \cref{fig: theorem first induction full circuit relative paths}, we need at most two irrep label registers at a time, so the space complexity is dominated by the embedding operation, requiring $\widetilde{\mathcal{O}}(n^{1.5})$ ancillary qubits.
    This concludes our proof.
\end{proof}
Now, the only approximations come from the implementation of the embedding operations, which we decompose as a sequence of $\mathcal{O}(m)$ controlled $\Rep_y(\theta)$ rotations at level $m$ in \cref{fig:2nd induction cycle version}.
In this implementation, we can show that the QFT uses a total of $\mathcal{O}(n^3)$ of such rotations. 
As \cref{alg: angles}, based on Algorithm 5.1.3 of Golub and Van Loan~\cite{MatrixComputations}, computes the Givens rotations in a numerically stable manner, the required arithmetic precision again grows only logarithmically with the inverse target error.

\section*{Acknowledgements}

This paper originated from the master's thesis of \cite{masterthesis}.
CB would like to thank Peter Bruin for supervision during the master thesis project. 
CB was supported by a QDNL CAT-1 Phase 3 grant.
DG and MO were supported by National Growth Fund grant (NGF.1623.23.025) ``Qudits in theory and experiment''.

\newpage
\appendix

\section{Low level circuits}
This section will elaborate on the implementation and complexity of some of the gates used in our algorithm

\subsection{Coherent Yamanouchi word encoding} 
\begin{lemma}
    For a subgroup chain level $n \geq 1$, the implementation of the gate $T_n$ from \cref{fig: theorem first induction full circuit relative paths} as given by \cite{SnQFT_Speedup} has depth $\widetilde{\mathcal{O}}(n)$, uses $\widetilde{\mathcal{O}}(n)$ gates and $\widetilde{\mathcal{O}}(1)$ ancillary qubits.
\end{lemma}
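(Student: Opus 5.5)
The gate $T_n$ converts the transversal register $\ket{t}$, with $t = t^{(n)}_i$ encoded by its index $i \in [n]$, into the last Yamanouchi-word entry. Concretely, it maps the input $\ket{i}$ of $\IndMap[][n]$ into the register structure of $\ket{q_n}$, conditioned on the endpoint $\lambda$ stored in the control register. Equivalently, it computes the $i$-th addable cell of $\lambda$ by its row number. My plan is to reconstruct the implementation of \cite{SnQFT_Speedup} as a single coherent sweep over the rows of the shape register $\ket{\lambda}$. I will then bound each piece separately: the sweep length, the cost per step, and the scratch space carried along.

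First, I fix the encoding. By \cref{lemma: shape register qubit count}, $\ket{\lambda}$ uses $\mathcal{O}(n)$ qubits, with row lengths $\lambda_1,\dots,\lambda_n$ each stored in $\mathcal{O}(\log n)$ bits. Row $r$ carries an addable cell exactly when $r = 1$ or $\lambda_r < \lambda_{r-1}$. So deciding whether row $r$ is addable costs one comparison of two $\mathcal{O}(\log n)$-bit numbers, which is $\widetilde{\mathcal{O}}(1)$ gates.

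Second, I describe the sweep. For $r = 1,\dots,n$ in sequence, I maintain a counter $c$ of $\mathcal{O}(\log n)$ qubits holding the number of addable rows seen so far, together with the output register for the row number. At step $r$, I compute the addability flag into an ancilla and increment $c$ controlled on it. If $c$ has just become equal to $i$, I write $r$ into the output by a controlled copy. Finally, I uncompute the comparison flag. Each step uses $\widetilde{\mathcal{O}}(1)$ gates and $\widetilde{\mathcal{O}}(1)$ ancillas, and the ancillas are reused across steps. Hence the total is $\widetilde{\mathcal{O}}(n)$ gates and, since the steps are sequential in $c$, depth $\widetilde{\mathcal{O}}(n)$.

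Third, I clean up the scratch registers. After the sweep, $c$ equals the total number of addable cells, which is a function of $\lambda$ alone. I uncompute it by running the counting part of the sweep in reverse, which doubles the cost but does not change the asymptotics. Likewise, the input $\ket{i}$ can be erased reversibly: running the same sweep with the roles of $i$ and the output row swapped recovers $i$ from $(\lambda, q_n)$ and XORs it away. This keeps the ancilla count at $\widetilde{\mathcal{O}}(1)$.

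The main obstacle is ensuring this reversible cleanup does not secretly need $\mathcal{O}(n)$ garbage bits, for example by storing every flag. I avoid that by recomputing each flag locally instead of storing it. I would also check that the rows $r$ beyond the number of rows of $\lambda$ (where $\lambda_r = 0$) are handled uniformly by the same comparison: only the first zero row is addable, and the comparison $\lambda_r < \lambda_{r-1}$ already captures this.
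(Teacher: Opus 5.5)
\textbf{There is a genuine gap: your proposal bounds the cost of a different gate from the one in the statement.}

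\emph{What $\mathrm{T}_n$ actually is.} In the relative-encoding circuit, $\mathrm{T}_n$ is the classical bookkeeping step that accompanies $\IndMap[][n]$. Once $\ket{t}\ket{p}$ has been mapped to the left path $\ket{q}$ with endpoint $\mu \in \N^+(\lambda)$, the right path $\widetilde{r}$ must be extended to $\mu$; this path is output by the $S_{n-1}$ transform and ends at $\lambda$. So $\mathrm{T}_n$ consists of two gates:
\begin{itemize}
\item the gate $q \setminus \widetilde{r}$, which writes the row of the cell $\mu \setminus \lambda$ into a fresh register appended to $\widetilde{r}$;
\item followed by $+\square$, which turns the shape register $\ket{\lambda}$ into $\ket{\mu}$.
\end{itemize}

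\emph{Why your gate does not appear.} There is no classical gate sending a transversal index $i$ to the row of the $i$-th addable cell of $\lambda$. The paper notes that $\ket{t}$ and $\ket{q_n}$ already have the same register structure. The passage from transversal labels to the last Yamanouchi entry is performed by $\IndMap[][n]$ itself, which maps basis states to superpositions.

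\emph{Your map is also not unitary.} The index $i$ ranges over $[n]$, while $|\AC(\lambda)| = \mathcal{O}(\sqrt{n})$. For $i > |\AC(\lambda)|$ your sweep writes nothing, so your reversible erasure of $\ket{i}$ cannot work.

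\emph{What the lemma needs.} The proof is a direct count for the two subroutines above.
\begin{itemize}
\item For $q \setminus \widetilde{r}$: we have $q_1 = \widetilde{r}_1 = 1$, $\sum_m q_m = \sum_j j\mu_j$ and $\sum_m \widetilde{r}_m = \sum_j j\lambda_j$. Hence the row of the new cell is $\sum_{m=2}^{n} q_m - \sum_{m=2}^{n-1} \widetilde{r}_m$. This is exactly the paper's $n-1$ coherent additions and $n-2$ coherent subtractions into one $\mathcal{O}(\log n)$-qubit register. It costs $\widetilde{\mathcal{O}}(n)$ gates and $\widetilde{\mathcal{O}}(1)$ ancillas, and has depth $\widetilde{\mathcal{O}}(n)$ because the operations act sequentially on the same accumulator.
\item For $+\square$: it consists of $n$ increments of $\lambda_j$, $j \in [n]$, each controlled on the new row register being equal to $j$. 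This is again $\widetilde{\mathcal{O}}(n)$ gates and depth.
\end{itemize}

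Your sequential sweep with $\mathcal{O}(\log n)$ reusable scratch qubits yields bounds of the right shape. However, it has to be applied to these two operations, not to an addable-cell lookup that does not occur in the circuit.
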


\begin{proof}
    The gate $q {\setminus} \widetilde{r}$ uses $n-1$ coherent additions and $n-2$ coherent subtractions on an $n$-dimensional register, hence requiring $\widetilde{\mathcal{O}}(n)$ gates and the same depth.
    The gate $+\square$ uses $n$ coherent additions of $1$, controlled by the $n$-dimensional register. 
    The 
\end{proof}

\subsection{Beals' implementation}\label{app: low level circuits Beals}

In this section, we discuss the implementation of $\mathrm{U}_{\uparrow}$, $\mathrm{V}_{k}$ and $\Rep(\sigma_{k-1})$ from \cref{def: U gate Beals,def: V gate beals,def: irrep gate Beals} used for implementing $\IndMap[][]$ with the circuit from \cref{fig: simplified Beals circuit for Sn}.

\subsubsection{Creating a superposition of successors}

We will now discuss the implementation of $\mathrm{U}_{\uparrow}$ at level $n$, where we turn a fresh level-$n$ qudit $\ket{0}$ into a register with a superposition of successor labels for $\lambda \in \widehat{S}_{n-1}$, and a shape register $\ket{\lambda}$ into a superposition of successors as 
\begin{equation*}
    \ket{0}\ket{\lambda}\ket{\lambda} \mapsto \sum_{a \in \AC(\lambda)} \sqrt{\frac{d_{\lambda + a}}{n d_{\lambda}}} \ket{a}\ket{\lambda + a}\ket{\lambda}, 
\end{equation*}
where we encode $\ket{a}$ by its Yamanouchi word entry. 
We will first compute this Yamanouchi word entry, after which we can use the gate $+\square$ from \cref{subsec: Fourier transform with relative paths} to update the shape register. 
Note that the last register $\ket{\lambda}$ acts as a control.

For initializing the Yamanouchi word register, we apply the following steps:
\begin{enumerate}
    \item Conditioned on control register $\ket{\lambda}$, compute the row numbers and contents of all addable and removable cells of $\lambda$, and store it in dedicated quantum registers.
    This is discussed further in \cref{subsec: rows and contents of RC AC}.
    \item For all addable cells $\mathrm{a} \in \AC(\lambda)$, compute $\sqrt{\frac{d_{\lambda + a}}{n d_{\lambda}}}$.
    This ratio is given by \cref{lemma: twiddle factor}.
    \item The isometry $\ket{0} \mapsto \sum_{a \in \AC(\lambda)} \sqrt{\frac{d_{\lambda + a}}{n d_{\lambda}}} \ket{a}$ can be written as a product of $\mathcal{O}(\sqrt{n})$ Givens rotations. 
    Compute the rotation angles coherently using the one-column specialization of \cref{alg: angles}.
    \item Apply the resulting Givens rotations in order to the fresh Yamanouchi-word register $\ket{0}$.
\end{enumerate}

For these steps, we find asymptotic complexities as summarized in \cref{tab: U uparrow complexities}

\begin{table}[H]
\centering
\begin{tabular}{l|ccc}
\textbf{Step} & \textbf{Space} & \textbf{Gatecount} & \textbf{Depth} \\ \hline
1. Rows, contents    &        $\widetilde{\mathcal{O}}(n)$                          &    $\widetilde{\mathcal{O}}(n)$   & $\widetilde{\mathcal{O}}(1)$              \\
2. Dimension ratios &     $\widetilde{\mathcal{O}}(\sqrt{n})$&                            $\widetilde{\mathcal{O}}(n)$ &     $\widetilde{\mathcal{O}}(\sqrt{n})$  \\
3. Coherent Givens decomposition       &    $\widetilde{\mathcal{O}}(n)$   & $\widetilde{\mathcal{O}}(n)$   &               $\widetilde{\mathcal{O}}(\sqrt{n})$ \\
4. Applying Givens rotations   &    -   & $\widetilde{\mathcal{O}}(\sqrt{n})$   &             $\widetilde{\mathcal{O}}(\sqrt{n})$ \\
\end{tabular}
\caption{Complexities of the different steps used in the implementation of $U_\uparrow$ at level $n$.}
\label{tab: U uparrow complexities}
\end{table}

Hence, our implementation of $U_\uparrow$ at level $n$ has a qubit count and gate count of $\widetilde{\mathcal{O}}(n)$ and a depth of $\widetilde{\mathcal{O}}(\sqrt{n})$. 

\subsubsection{CNOT with intricate controls: \texorpdfstring{$V_k$}{Vk}}
Suppose that we use an ancillary qubit for flagging the special state $\ket{\star}$, so that $\ket{t_m}$ is encoded as $\ket{m}\ket{0}$ and $\ket{\star}$ as $\ket{0}\ket{1}$. 
If we flip the flag register when the transversal is $\ket{k}$, then apply transversal $\mathrm{X}$-gates uncomputing $\ket{k}$ to $\ket{0}$ controlled on the flag being $\ket{1}$, and then flip the flag register again when the transversal is $\ket{k}$, we implemented $V_k$ without control.
The valid input states are mapped as follows:
\begin{align}
    m \neq k : \ket{t_m}\ket{0} &\mapsto \ket{t_m}\ket{0} & \ket{t_k}\ket{0} &\mapsto \ket{0}\ket{1} & \ket{0}\ket{1} &\mapsto \ket{t_k}\ket{0}.
\end{align}

To turn this into $\mathrm{V}_k$, we only need to add a control on the two irrep label registers $\ket{\lambda}$, $\ket{\widetilde{\lambda}}$ being equal. 
As these registers use $\mathcal{O}(n)$ qubits by \cref{lemma: shape register qubit count}, such a control cal be implemented with $\mathcal{O}$ elementary gates, $\mathcal{O}(\log(n))$ depth and $\mathcal{O}(n)$ ancillas.

\subsubsection{Applying irreps of generators on a path register}

For the gate $\Rep(\sigma_k)$, we can decompose group elements of transversals into a product of adjacent transposition, for which the irreps are very sparse in Young's orthogonal form. 
Then, we only need to implement $R(\sigma_i)$ for $i \in [n-1]$. 
Fix $i \in [n-1]$, $\lambda \in \widehat{S}_{n-1}$, and let $P \in \Path(\lambda)$.
Consider the corresponding SYT of $P$, and denote by $\sigma_i P $ the diagram obtained by swapping boxes with entries $i$ and $i+1$.
While this does not necessarily have to be a valid SYT, we can still encode it in our quantum registers. 

\begin{lemma}
    The circuit in \cref{fig: circuit for irreps of generators on path register} implements $R(\sigma_i)$ for $i \in [n]$ using $\widetilde{\mathcal{O}}(1)$ elementary operations. 
    When $i = 1$, we can replace the entire circuit by a Pauli-$\mathrm{Z}$ on $\ket{u_2}$.
\end{lemma}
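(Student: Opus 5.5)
The proof reduces the action of $\Rep(\sigma_i)$ on the path register to a two-level rotation controlled by classically computable data. By Young's orthogonal form \eqref{eq: young's orthogonal form}, $\Rep_\lambda(\sigma_i)$ is block diagonal, with blocks indexed by unordered pairs $\{P, \sigma_i P\}$ (or singletons when $|r_P(i)|=1$). In the Yamanouchi-word encoding, $\sigma_i P$ differs from $P$ only in that the entries $p_i$ and $p_{i+1}$ are swapped. So the gate only needs to act on the two subregisters $\ket{p_i}\ket{p_{i+1}}$, together with the auxiliary registers holding the $x$-coordinates of the boxes $i$ and $i+1$ (denoted $\ket{u_i},\ket{u_{i+1}}$, as mentioned in \cref{subsec: Beals for Sn}).

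The plan follows the circuit of \cref{fig: circuit for irreps of generators on path register} in four steps.
\begin{enumerate}
\item Coherently compute $r_P(i)=\cont_P(i{+}1)-\cont_P(i)$ into an ancilla, using \cref{eq: Manhattan distance with contents}. The contents are $x-y$, with the $x$-coordinates read from $\ket{u_i},\ket{u_{i+1}}$ and the $y$-coordinates read from $\ket{p_i},\ket{p_{i+1}}$. This is one subtraction on $\mathcal{O}(\log n)$-bit registers, hence $\widetilde{\mathcal{O}}(1)$ gates.
\item Compute a comparison bit $b=[p_i<p_{i+1}]$, or equivalently a bit fixing the ordering of the pair, so that each two-dimensional block is identified with a qubit.
\item Conditioned on $b$, apply a controlled swap of $(p_i,u_i)$ with $(p_{i+1},u_{i+1})$ to bring the pair into a canonical ordering. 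Then apply an $\mathrm{R}_y(\theta)$ rotation on the qubit $b$, with $\cos\theta = 1/r_P(i)$ computed coherently from the stored $r_P(i)$ and with the sign convention chosen so that the reflection matrix $\begin{pmatrix}1/r & \sqrt{1-1/r^2}\\ \sqrt{1-1/r^2} & -1/r\end{pmatrix}$ is realized. Note that $r_{\sigma_iP}(i)=-r_P(i)$, so the value is consistent across the block, possibly up to a $\mathrm{Z}$ correction. Finally, undo the swap.
\item Uncompute $\theta$, $r$ and $b$. Here the main check is that $|r|$ and the unordered pair are invariant under the block action, so the ancillas uncompute cleanly.
\end{enumerate}
Each step is a constant number of arithmetic operations on $\mathcal{O}(\log n)$-qubit registers, plus one controlled rotation, giving $\widetilde{\mathcal{O}}(1)$ elementary gates in total.

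Several cases need separate checks. When $|r_P(i)|=1$, the entries $i,i+1$ lie in the same row or column, $\sigma_iP$ is not a valid tableau, and the rotation degenerates to $\pm1$ on $\ket{P}$. We have to verify that the circuit never leaves the valid subspace, since the off-diagonal amplitude $\sqrt{1-1/r^2}$ vanishes exactly there. For $i=1$, boxes $1$ and $2$ are either in row $1$ ($r=1$) or in column $1$ ($r=-1$). Then $\Rep(\sigma_1)$ is diagonal with sign determined by whether box $2$ lies in the first column, i.e.\ by $u_2$ ($x$-coordinate $1$ versus $2$), so a single Pauli-$\mathrm{Z}$ on the relevant bit of $\ket{u_2}$ suffices.

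The main obstacle is getting the sign and orientation of the $2\times2$ block exactly right inside the rotation, and ensuring the ancilla $b$ is uncomputed identically on both branches so the result is a clean unitary. I would resolve this by verifying the block on the two basis states $\ket{P},\ket{\sigma_iP}$ directly against \eqref{eq: young's orthogonal form}.
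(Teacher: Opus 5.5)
Your plan matches the paper's proof. You compute $r_P(i)$ from contents, control a swap of the $i$-th and $(i{+}1)$-th path subregisters on a bit distinguishing $P$ from $\sigma_iP$, apply $\mathrm{R}_y$ followed by $\mathrm{Z}$ on that bit, swap back, and uncompute using $r_{\sigma_iP}(i)=-r_P(i)$. On valid tableaux your comparison bit $[p_i<p_{i+1}]$ equals the sign bit of $r_P(i)$ that the paper rotates, and your $i=1$ argument agrees with the paper's.

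The one point you hedge on needs pinning down: the rotation angle must depend only on $|r_P(i)|$, and the distance ancilla must be held as a sign bit plus magnitude, with the sign bit being the rotated qubit (the paper's step 2). A stored signed $r_P(i)$ would give the wrong sign on inputs with $r_P(i)<0$, and it would not uncompute on the $\ket{\sigma_iP}$ branch.
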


\begin{figure}[H]
    \centering
    \includegraphics[width=\textwidth]{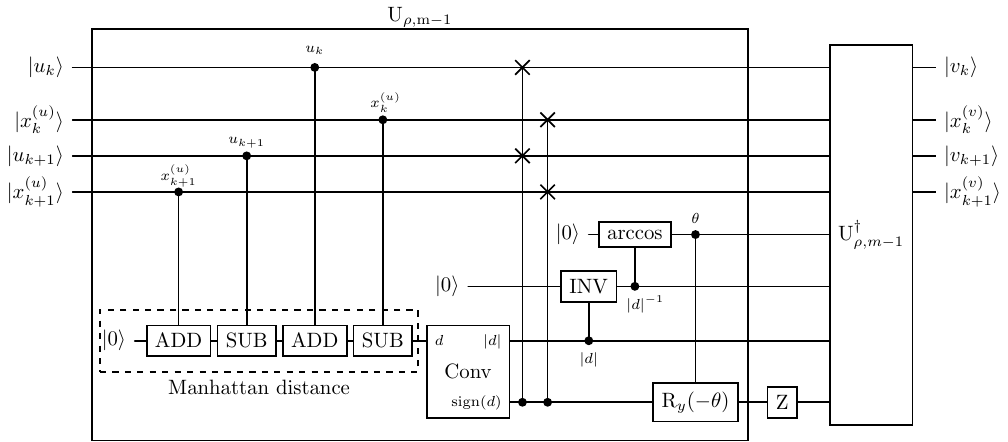}
    \caption{Circuit for applying irreps of generators on path register from the thesis}
    \label{fig: circuit for irreps of generators on path register}
\end{figure}

\begin{proof}
    Consider an incoming path $\ket{U} = \ket{c_1}\ket{c_2}\dots\ket{c_n}$, where $\ket{c_i} = \ket{u_m}\ket{x^{u}_m}$, so that $u_m$ is the Yamanouchi word entry and $x^{u}_m$ the column index of the corresponding cell with entry $m$ in the SYT.
    Then, Young's orthogonal shows that $R(\sigma_i)$ must act on $\ket{U}$ as
    \begin{equation}
        R(\sigma_i) \ket{U} = d_i^{-1}\ket{U} + \sqrt{1- d_i^{-2}}\ket{\sigma_i U},
    \end{equation}
    where in our encoding, we have that $\ket{\sigma_i U} = \ket{c_1}\dots\ket{c_{i-1}}\ket{c_{i+1}}\ket{c_i}\ket{c_{i+2}} \dots \ket{c_n}$. 
    Hence, we obtain $\ket{\sigma_i U}$ by swapping the $i$-th and $i+1$-th register of $\ket{U}$.

    In the algorithm, we distinguish the following steps:
    \begin{enumerate}
        \item Compute the Manhattan distance using \cref{eq: Manhattan distance with contents};
        \item Convert signed integer $d_i$ from two's complement encoding to an absolute value with sign bit;
        \item Compute $\arccos(|d_i|^{-1})$ using coherent arithmetic;
        \item Changing $\ket{U}$ to $\ket{\sigma_i U}$ if $d_i < 0$, using the sign bit as a control bit;
        \item Applying a continuous $Y$-rotation on the sign bit, with angle $-2\arccos(|d_i|^{-1})$ controlled by the corresponding quantum register, followed by a Pauli-$\mathrm{Z}$;
        \item Uncompute steps 4 to 1.
    \end{enumerate}

    We will now look at how the algorithm transforms an incoming state $\ket{U}$.
    Step 1 to 3 act as $\ket{U} \mapsto \ket{d_i}\ket{U} \mapsto \ket{\sign(d_i)}\ket{|d_i|}\ket{U} \mapsto \ket{\sign(d_i)}\ket{\arccos{|d_i|^{-1}}}\ket{garbage}\ket{U}$.

    For analyzing steps 4, 5 and the entire uncomputation, we distinguish two cases. 
    First, we consider $d_i \geq 0$, and second, we consider $d_i < 0$.
    When $d_i \geq 0$, we have $\ket{\sign(d_i)}\ket{|d_i|} = \ket{0}\ket{d_i}$.
    We find
    \begin{align}
        \ket{0}\dots \ket{U} &\mapsto \ket{0}\dots \ket{U}  &&\text{controlled swaps}\\
        &\mapsto |d_i|^{-1}(\ket{0}\dots \ket{U})  - \sqrt{1- |d_i|^{-2}}(\ket{1}\dots \ket{U}) &&\mathrm{R}_y(-2\arccos(|d_i|^{-1})\\
        &\mapsto |d_i|^{-1}(\ket{0}\dots \ket{U})  + \sqrt{1- |d_i|^{-2}}(\ket{1}\dots \ket{U}) &&\text{Pauli-}\mathrm{Z}\\
        &\mapsto |d_i|^{-1}(\ket{0}\dots \ket{U})  + \sqrt{1- |d_i|^{-2}}(\ket{1}\dots \ket{\sigma_i U}) &&\text{controlled swaps}\\
        &\mapsto |d_i|^{-1}(\ket{0}\ket{d_i}\ket{U})  + \sqrt{1- |d_i|^{-2}}(\ket{1}\ket{d_i}\ket{\sigma_i U})&&\text{uncomputing step 3}\\
        &\mapsto |d_i|^{-1}(\ket{d_i}\ket{U})  + \sqrt{1- |d_i|^{-2}}(\ket{-d_i}\ket{\sigma_i U})&&\text{uncomputing step 2}\\
        &= d_i^{-1}(\ket{d_i}\ket{U})  + \sqrt{1- d_i^{-2}}(\ket{-d_i}\ket{\sigma_i U}) \\
        &\mapsto d_i^{-1}\ket{U}  + \sqrt{1- d_i^{-2}}\ket{\sigma_i U}.&&\text{uncomputing step 1}\\
    \end{align}
    In the last step, we used that the Manhattan distance from box $i$ to box $i{+}1$ differs from $U$ and $\sigma_i U$ by a sign. 

    When $d_i < 0$, we have $\ket{\sign(d_i)}\ket{|d_i|} = \ket{0}\ket{-d_i}$ and find
    \begin{align}
        \ket{1}\dots \ket{U} &\mapsto \ket{1}\dots \ket{\sigma_i U}  &&\text{controlled swaps}\\
        &\mapsto |d_i|^{-1}(\ket{1}\dots \ket{\sigma_i U}) + \sqrt{1- |d_i|^{-2}}(\ket{0}\dots \ket{\sigma_i U}) &&\mathrm{R}_y(-2\arccos(|d_i|^{-1})\\
        &\mapsto -|d_i|^{-1}(\ket{1}\dots \ket{\sigma_i U})  + \sqrt{1- |d_i|^{-2}}(\ket{0}\dots \ket{\sigma_i U}) &&\text{Pauli-}\mathrm{Z}\\
        &\mapsto -|d_i|^{-1}(\ket{1}\dots \ket{U})  + \sqrt{1- |d_i|^{-2}}(\ket{0}\dots \ket{\sigma_i U}) &&\text{controlled swaps}\\
        &\mapsto -|d_i|^{-1}(\ket{1}\ket{|d_i|}\ket{U})  + \sqrt{1- |d_i|^{-2}}(\ket{0}\ket{|d_i|}\ket{\sigma_i U})&&\text{uncomputing step 3}\\
        &\mapsto -|d_i|^{-1}(\ket{-|d_i|}\ket{U})  + \sqrt{1- |d_i|^{-2}}(\ket{|d_i|}\ket{\sigma_i U})&&\text{uncomputing step 2}\\
        & = d_i^{-1}(\ket{d_i}\ket{U})  + \sqrt{1- d_i^{-2}}(\ket{-d_i}\ket{\sigma_i U})\\
        &\mapsto d_i^{-1}\ket{U}  + \sqrt{1- d_i^{-2}}\ket{\sigma_i U}.&&\text{uncomputing step 1}\\
    \end{align}

    When $i = 1$, note that the box with entry $2$ is always right next to box $1$, or right below is. 
    In the first case, we have $d_1 = 1$, and in the second case, we have $d_1 = -1$.
    Working out Young's orthogonal form, we find that $R(\sigma_1)$ must act trivially when $\ket{u_2} = \ket{1}$, and as phase flip when $\ket{u_2} = \ket{2}$.
    This can be implemented with a Pauli-$\mathrm{Z}$ gate with some classical gates depending on our chosen encoding into qubits.
\end{proof}

\subsection{Low-level description of the embedding operation}\label{app: low level embedding operation}

The following lemma gives an asymptotic scaling of the maximal dimension of $\mathrm{A}_\lambda$.

\begin{restatable}{lemma}{addableCellsBoundLemma}\label{lemma: number of ACs}
    For all $n \in \mathbb{N}, n \geq 3$, and all $\lambda \in \widehat{S}_{n-1}$, we have 
    \begin{equation}
        |\AC(\lambda)| \leq \Big\lfloor \frac{1}{2}\left(1+ \sqrt{8n-7}\right)\Big\rfloor.
    \end{equation}
\end{restatable}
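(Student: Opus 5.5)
The bound follows from counting distinct row lengths. Let $\lambda \in \widehat{S}_{n-1}$, so $|\lambda| = n-1$. Suppose $\lambda$ has exactly $k$ distinct nonzero part sizes $a_1 > a_2 > \dots > a_k \geq 1$.

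First I will show that $|\AC(\lambda)| = k+1$. A cell can be added in row $j$ exactly when $j = 1$ or $\lambda_{j-1} > \lambda_j$, where we allow $\lambda_j = 0$ for the first empty row. So there is one addable cell at the end of each maximal block of equal rows, placed in the row just below the block's top, i.e.\ at the first row of each new length. Together with the first row, this gives one addable cell per distinct part size plus one. Equivalently, addable cells correspond to the $k+1$ outer corners of the diagram.

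Second, I will bound $k$. Since the $a_i$ are distinct positive integers, $a_i \geq k+1-i$, and hence
\begin{equation*}
    n-1 \;\geq\; \sum_{i=1}^k a_i \;\geq\; \frac{k(k+1)}{2}.
\end{equation*}
Writing $m = |\AC(\lambda)| = k+1$, this reads $m(m-1)/2 \leq n-1$, that is, $m^2 - m - 2(n-1) \leq 0$.

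Third, solving the quadratic gives
\begin{equation*}
    m \;\leq\; \frac{1 + \sqrt{1 + 8(n-1)}}{2} \;=\; \frac{1}{2}\left(1 + \sqrt{8n-7}\right).
\end{equation*}
Since $m$ is an integer, we may take the floor, which is the claimed bound. The case $n-1 = 0$, where $\lambda = \varnothing$ has one addable cell, satisfies the bound as well, although the statement only assumes $n \geq 3$.

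The only step needing care is the correct characterization $|\AC(\lambda)| = (\text{number of distinct parts}) + 1$. I would verify it by walking down the rows: within a block of equal rows, only the first row can take a new cell, and the empty row following the last part always can. The remaining steps are elementary algebra.
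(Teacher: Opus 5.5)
Your proof is correct and follows the paper's argument: both use $|\AC(\lambda)| = (\text{number of distinct part sizes})+1$ together with the fact that $k$ distinct positive parts need at least $k(k+1)/2$ boxes (the paper phrases this via the minimal staircase partition), and then solve the same quadratic inequality. Your direct estimate $a_i \geq k+1-i$ is a slightly cleaner justification than the paper's minimality argument; the only slip is the phrase ``just below the block's top'', which should read ``just below the block's bottom'', although your final count of $k+1$ addable cells is right.
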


Hence, $\dim{\mathrm{A_{\lambda}}} \leq \Big\lfloor \frac{1}{2}\left(1+ \sqrt{8n-7}\right)\Big\rfloor$ for $\lambda \in \widehat{S}_{n}$.

Moreover, we know the following about the determinant of $\mathrm{A}_\lambda$
\begin{restatable}{lemma}{AdetLemma}\label{lemma: determinant of A dash m lambda}
    For $n \geq 2$ and $\lambda \in \widehat{S}_{n-1}$, we have
    \begin{equation}
    \det(\mathrm{A}_{\lambda}) = \begin{cases}
        1 &\text{if } \dim(\mathrm{A}_{\lambda}) \text{ is odd},\\
        -1 &\text{if } \dim(\mathrm{A}_{\lambda}) \text{ is even}.
    \end{cases}
    \end{equation}
\end{restatable}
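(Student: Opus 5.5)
The approach is to use that $\mathrm{A}_\lambda$ is already known to be orthogonal (\cref{def: embedding operation}, with orthogonality guaranteed by \cref{thm: second induction}), so that $\det(\mathrm{A}_\lambda)\in\{\pm1\}$. Only the sign then needs to be determined, and I will read it off from an explicit Cauchy-type determinant.

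\textbf{Setup and factorization.} Let $k\defeq|\RC(\lambda)|$, so that $|\AC(\lambda)|=k+1=\dim(\mathrm{A}_\lambda)$. Order the rows (addable cells $a_1,\dots,a_{k+1}$) and the columns ($\star$, then removable cells $r_1,\dots,r_k$) by increasing row number in the Young diagram, as the Yamanouchi encoding does. Write $\alpha_i\defeq\cont(a_i)$ and $\rho_j\defeq\cont(r_j)$. The standard interlacing of contents of a Young diagram gives
\begin{equation*}
\alpha_1>\rho_1>\alpha_2>\rho_2>\dots>\rho_k>\alpha_{k+1}.
\end{equation*}
Pull the positive factor $\sqrt{d_{\lambda+a_i}/(n d_\lambda)}$ out of each row and the positive factor $\sqrt{(n-1)d_{\lambda-r_j}/d_\lambda}$ out of each column $r_j$. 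This gives
\begin{equation*}
\det(\mathrm{A}_\lambda)=(\text{positive})\cdot\det C,\qquad C=\Bigl[\,\mathbf 1\;\Big|\;\tfrac{1}{\alpha_i-\rho_j}\Bigr].
\end{equation*}
Hence $\sign\det(\mathrm{A}_\lambda)=\sign\det C$.

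\textbf{Cauchy determinant with a column of ones.} I will derive the closed form
\begin{equation*}
\det C=\frac{\prod_{i<i'}(\alpha_i-\alpha_{i'})\,\prod_{j<j'}(\rho_{j'}-\rho_j)}{\prod_{i,j}(\alpha_i-\rho_j)}
\end{equation*}
in one of two ways. The first is as a limit of the classical Cauchy determinant: add a removable content $\rho_0\to-\infty$ and multiply the column for $\rho_0$ by $-\rho_0$, so that its entries tend to $1$. The second is direct: subtract the first row from all others, which turns the ones column into $(1,0,\dots,0)^T$ and leaves a $k\times k$ Cauchy matrix in the remaining entries after factoring. Pinning down the exact global sign convention here is the step I expect to be the main obstacle. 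I will check it against $k=0$ (where $C=(1)$) and $k=1$, where
\begin{equation*}
\det C=\frac{1}{\alpha_2-\rho_1}-\frac{1}{\alpha_1-\rho_1}<0,
\end{equation*}
consistent with the formula.

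\textbf{Counting signs.} By interlacing, $\alpha_i-\alpha_{i'}>0$ for $i<i'$. The factor $\rho_{j'}-\rho_j$ is negative for every pair $j<j'$, contributing $\binom{k}{2}$ negative factors. The factor $\alpha_i-\rho_j$ is negative exactly when $i>j$, contributing $\binom{k+1}{2}$ negative factors. The total number of negative factors is
\begin{equation*}
\tbinom{k}{2}+\tbinom{k+1}{2}=k^2\equiv k \pmod 2.
\end{equation*}
Therefore $\det C>0$ iff $k$ is even, i.e.\ iff $\dim(\mathrm{A}_\lambda)=k+1$ is odd. Combined with $|\det(\mathrm{A}_\lambda)|=1$, this gives the claim.

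As a sanity check independent of orthogonality, one could instead verify $|\det C|\cdot(\text{prefactors})=1$ using \cref{lemma: twiddle factor} for the row factors. For the column factors one would apply the same lemma to $\lambda-r_j$ with addable cell $r_j$, translating $\AC(\lambda-r_j)$ and $\RC(\lambda-r_j)$ back to $\lambda$ via interlacing. I would only carry this out if the orthogonality claim needs independent confirmation.
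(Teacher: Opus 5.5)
Your proof is correct and shares the paper's skeleton. Both arguments factor $\mathrm{A}_\lambda$ into positive diagonal matrices around the bordered matrix $[\mathbf{1}, C_\lambda]$. Both use orthogonality from \cref{thm: second induction} to force $\det(\mathrm{A}_\lambda)=\pm 1$. Both order cells by increasing row (the paper's ``northeast to southwest'') and use the interlacing of contents.

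The difference is how the sign of the bordered determinant is obtained. The paper expands along the column of ones into $k+1$ signed $k\times k$ Cauchy minors and applies the Cauchy formula to each. It then needs a mod-$4$ case analysis to show that all summands have the same sign, because only then is the sign of the sum determined. You instead evaluate the bordered determinant in closed form.

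Your formula, including the global sign you flagged as the main obstacle, is right. Your row-subtraction reduction leaves $\prod_{i\ge 2}(\alpha_1-\alpha_i)\prod_{j}(\alpha_1-\rho_j)^{-1}$ times a single $k\times k$ Cauchy determinant, and this reproduces your formula exactly. The limit $\rho_0\to\pm\infty$ gives the same result. For $\lambda=(2,1)$ one finds your $\det C = 32/9$, while the diagonal prefactors multiply to $9/32$, as required.

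After that, the parity count $\binom{k}{2}+\binom{k+1}{2}=k^2$ takes one line. Your route is shorter and avoids the ``all cofactor terms have equal sign'' argument. It also gives the exact value of the bordered determinant rather than only its sign, which is what makes your optional orthogonality-independent check feasible. The paper's route never states a separate identity for the bordered matrix, but your derivation shows that identity is one row operation away from the textbook Cauchy formula.
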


This gives us an easy way to write $\mathrm{A}_{\lambda}$ as the product of a possible reflection and a special orthogonal matrix.

As we found no extra structure in $\mathrm{A}_\lambda$ allowing for an easy decomposition into elementary operations, we propose to decompose its associated special orthogonal matrix into Givens rotations coherently.
Let $O \in SO(d)$ act on a space with basis $\{\ket{k}\}_{k \in [d]}$. 
Then, we can decompose $O$ as 
\begin{equation*}
    O =G_{d{-}1, d}(\theta_{d{-}1,d})\dots G_{2,d}(\theta_{2,d})\dots G_{2,3}(\theta_{2,3}) G_{1,d}(\theta_{1,d})\dots G_{1,3}(\theta_{1,3})G_{1,2}(\theta_{1,2}),
\end{equation*}
where Givens rotation $G_{i,j}(\theta)$, with $i <j$, acts only on the subspace spanned by $\ket{i}, \ket{j}$ as
\begin{equation*}
    G_{i,j}(\theta) = 
    \begin{pmatrix}
        \cos(\theta) & -\sin(\theta)\\
        \sin(\theta) & \cos(\theta
    \end{pmatrix}.
\end{equation*}
As a quantum gate, we can implement $G_{i,j}(\theta)$ by first mapping $\ket{i}$ to $\ket{0}$ and $\ket{j}$ to $\ket{1}$, then applying a $\mathrm{R}_y(2\theta)$-rotation on the least significant bit controlled on all other bits being zero, and then applying the state swapping operation in reverse. 

To find the angles of the Givens rotations, note that 
\begin{equation}
    OG_{1,2}(-\theta_{1,2})G_{1,3}(-\theta_{1,3})\dots G_{1,d}(-\theta_{1,d})G_{2,3}(-\theta_{2,3})\dots G_{2,d}(-\theta_{2,d})\dots G_{d{-}1, d}(-\theta_{d{-}1,d}) = \mathrm{I}.
\end{equation}
More precisely, each $G_{i,j}$ sets the entry $O^i_j$ to zero, following rows from main diagonal to the right, and going from top row to the bottom row. 
Hence, we can calculate the angle setting the correct entry to zero, update the matrix and continue with computin the next angle. 
If $O'$ is the matrix after updating it using all previously computed angles, then $[O'G_{i,j}(-\theta_{i,j})]^i_j = 0$ gives us 
\begin{equation}
    \sum_{k \in [d]}[O']^i_k [G_{i,j}(-\theta_{i,j})]^k_j = [O']^i_j[G_{i,j}(-\theta_{i,j})]^j_j + [O']^i_i[G_{i,j}(-\theta_{i,j})]^i_j = \cos(\theta_{i,j})[O']^i_j  \sin(\theta_{i,j})[O']^i_i = 0
\end{equation}
We find that 
\begin{equation*}
    \theta = \arctan\left(\frac{[O']^i_j}{[O']^i_i}\right) 
\end{equation*}

We find all angles in a numerically stable manner using the following algorithm, based on Algorithm (5.1.3) of \cite{MatrixComputations}.
With numerically stable, we mean that the round-off error of the angles scales at most linearly with the initial round-off error in the entries of $\mathrm{O}$. 
\begin{algorithmic}\label{alg: angles}
    \State 
    \For{$i = 1, \dots d{-}1$}
        \For{$j = i{+}1, \dots d$}
            \If{$O^{i}_{j} = 0$}
                \State $c_{i,j} \gets 1$, $s_{i,j} \gets 0$, $\theta_{i,j} \gets 0$
            \ElsIf{$|O^i_j| > |O^i_j|$}
                \State $\tau \gets O^i_i / O^i_j$
                \State $s_{i,j} \gets 1 / \sqrt{1 + \tau^2}$
                \State $c_{i,j} \gets s\tau$
                \State $\theta_{i,j} \gets \sign(O^i_j) \arccos(\sign(O^{i}_{j})c_{i,j})$
            \Else
                \State $\tau \gets O^i_j / O^i_i$
                \State $c_{i,j} \gets 1 / \sqrt{1 + \tau^2}$
                \State $s_{i,j} \gets c_{i,j}\tau$
                \State $\theta_{i,j} \gets \arcsin(s)$
            \EndIf
            \State $O \gets OG_{i,j}(-\theta_{i,j})$, using $c_{i,j}$ and $s_{i,j}$ as entries of $G_{i,j}(-\theta_{i,j})$.
        \EndFor
    \EndFor
\end{algorithmic}

Our implementation of $\mathrm{A}_{\lambda}$ consists of the following steps.
\begin{enumerate}
    \item Conditioned on $\ket{\lambda}$, compute the row indices of the addable cells of $\lambda$, together with their contents, and compute the contents of all removable cells, too. 
    \item $\mathrm{U}_{\text{dimensions}}$: conditioned on the content registers, compute $\mathrm{\sqrt{\frac{d_{\lambda + a}}{n d_{\lambda }}}}$ for all addable cells $a \in \AC(\lambda)$ \cref{lemma: twiddle factor}. 
    When when $|\AC(\lambda)|$ is not maximal, use zero padding.
    Note that this register is the first row of $\mathrm{A}_{\lambda}$.
    Similarly, compute $\mathrm{\sqrt{\frac{(n-1)d_{\lambda - r}}{d_{\lambda}}}}$ for all removable cells $r \in \RC(\lambda)$, where we additionally use \cref{lemma: AC and RC of lambda - r}.
    \item $\mathrm{U}_{\mathrm{A}_{\lambda}}$:  compute the rest of matrix $\mathrm{A}_{\lambda}$ coherently using the previously computed dimension ratios and the content registers. 
    \item Run the Givens decomposition method coherently to compute all rotation angles.
    During each step, we only update 2 columns of the matrix register, and store the old register as garbage to be uncomputed later.
    \item Add $1$ mod $m$ to the register storing $\ket{p_{n-1}}$ or $\ket{\star}$, to map $\star$ to $1$ and the rest of the removable cell labels to the index of the addable cell one row below it. 
    Note that the $2$-dimensional subspace correct subspace for each Givens rotation is now stored in the addable cell row register which we coputed in the first step.
    Apply the Givens rotations on the correct subpace, using $\mathrm{R}_y(\theta)$ gates with the rotation angle $\theta$ conditioned on the rotation angle registers.
\end{enumerate}
Of course, we finish by running steps $1$ to $4$ backwards to uncompute all ancillary registers. 
We will now elaborate on the implementation of the different subroutines of the circuit, and their asymptotic space, size and depth.

\subsubsection{Computing row indices and contents of addable and removable cells}\label{subsec: rows and contents of RC AC}
The goal of this step is to prepare 3 $e_n$-wire registers, which we will name $\ket{\mathrm{rows}}$, $\ket{\cont\,\AC}$ and $\ket{\cont\,\RC}$.
We use that $\lambda$ has an addable cell at row $i$ if $\lambda_{i-1} > \lambda_{i}$. 
Also, we always have an addable cell at row $1$. 
We start by preparing a flag register which becomes $1$ at indices where $\lambda_{i-1}$ and $\lambda_i$ are different, and stays $0$ otherwise.
Next to it, we store the row index, and the contents of the associated addable and removable cells. 
We do this with a coherent implementation of the following algorithm:
\begin{algorithmic}
\State Prepare flag, row, AC content and RC content registers $(f_i, r_i, c^{(a)}_i, c^{(r)}_i)$ with $i = 1, \dots, n$.
\State Set $(f_1, r_1) = (1, 1, \lambda_1, \star)$ and the rest to $0$.
\For{$i = 2, \dots n$}
    \If{$\lambda_{i-1} > \lambda_i$}
        \State $f_i \gets 1$
        \State $r_i \gets i$
        \State $c^{(a)}_i \gets \lambda_i - i + 1$
        \State $c^{(r)}_i \gets \lambda_{i-1} - i + 1$
    \EndIf 
\EndFor
\end{algorithmic}

The next step is to sort these registers, so that the first $e_n$ entries contain the information corresponding to addable end removable cells. 
For sorting, we only require the flag registers and row registers $(f_i, r_i)$. 
We say that 
$(f_i, r_i) < (f_j, r_j)$ if $f_i > f_j$ or $f_i = f_j$ and $r_i < r_j$. 
It is possible to construct a constant time quantum gate which compares two registers and swaps them if they are not ordered. 
Using a sorting network,
we can perform this sorting using $\mathcal{O}(n \log^2(n))$ of these compare-and-swap gates, and with a depth of $\mathcal{O}(\log^2(n))$. 

Afterwards, we can copy the first $e_n$ row indices and contents into our final registers if the flag register is 1, and uncompute the previously computed registers if minimizing garbage is a priority. 

For the complexity, computing the flag registers with the extra information creates $\widetilde{\mathcal{O}}(n)$ new registers, and uses $\widetilde{\mathcal{O}}(n)$ gates. 
Note that by running the algorithm for odd and even indices separately, we can implement this step in constant depth. 
We have already seen that the sorting step has size $\widetilde{\mathcal{O}}(n)$ and depths $\widetilde{\mathcal{O}}(1)$. 
Lastly, copying the $e_n = \mathcal{O}(\sqrt{n})$ registers can be performed with $\mathcal{O}(\sqrt{n})$ gates and in constant depth.

\subsubsection{Summary}

\begin{table}[H]
\centering
\begin{tabular}{lccc}
\textbf{Step}               & \textbf{Space}      & \textbf{Gatecount} & \textbf{Depth} \\ \hline
1. Rows, contents           & $\widetilde{\mathcal{O}}(n)$ & $\widetilde{\mathcal{O}}(n)$ & $\widetilde{\mathcal{O}}(1)$\\
2. Dimension ratios         & $\widetilde{\mathcal{O}}(\sqrt{n})$  & $\widetilde{\mathcal{O}}(n)$ & $\widetilde{\mathcal{O}}(\sqrt{n})$          \\
3. Matrix $\mathrm{A}_{\lambda}$ & $\widetilde{\mathcal{O}}(n)$  & $\widetilde{\mathcal{O}}(n)$ & $\leq \widetilde{\mathcal{O}}(n)$\\
4. Coherent Givens decomposition & $\widetilde{\mathcal{O}}(n^{3/2})$ & $\widetilde{\mathcal{O}}(n^{3/2})$ & $\widetilde{\mathcal{O}}(n)$ \\
5. Applying Givens rotations  & - & $\widetilde{\mathcal{O}}(n)$ & $\widetilde{\mathcal{O}}(n)$ \\
 \hline\textbf{Total} & $\widetilde{\mathcal{O}}(n^{3/2})$ & $\widetilde{\mathcal{O}}(n^{3/2})$ & $\widetilde{\mathcal{O}}(n)$ 
\end{tabular}
\end{table}

\section{Proofs}

\subsection{Proofs for \texorpdfstring{\cref{sec: preliminaries}}{preliminaries}}

\begin{lemma}\label{lemma: absolute manhattan distance of 1} For $\lambda \in \widehat{S}_n$ with $n > 1$, let $P \in \mathcal{P}(\lambda)$ and $k \in [n{-}1]$.
Then the absolute Manhattan distance satisfies $|r_P(k)| = 1$ if and only if $\sigma_k P \notin \Path(\lambda)$.
\end{lemma}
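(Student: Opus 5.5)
We will prove the statement by working directly with the standard Young tableau $T$ corresponding to $P$, using the bijection between paths in $\Path(\lambda)$ and SYTs of shape $\lambda$ from \cref{subsec: Young's orthogonal form}. By \cref{eq: Manhattan distance with contents}, $r_P(k) = \cont_P(k{+}1) - \cont_P(k)$. The filling $\sigma_k T$, obtained by exchanging the entries $k$ and $k{+}1$, is a valid SYT exactly when no row or column condition involving these two entries breaks. Let $b_k$ and $b_{k+1}$ be the boxes containing $k$ and $k{+}1$. The key observation is that, since $k$ and $k{+}1$ are consecutive, every other entry compares with $k$ and with $k{+}1$ in the same way. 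So the swap can only break monotonicity between $b_k$ and $b_{k+1}$ themselves, and only if they are adjacent in the same row or the same column. This gives
\[
\sigma_k P \notin \Path(\lambda) \iff b_k, b_{k+1} \text{ are horizontally or vertically adjacent}.
\]

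First, the direction ($\Leftarrow$) and the equivalence above. Suppose $b_k$ and $b_{k+1}$ are not adjacent in a row or column. Then no constraint compares them directly, because a constraint only compares boxes that are adjacent in a row or column. Every other constraint involves a third entry $m \notin \{k,k{+}1\}$, and replacing $k$ by $k{+}1$ or conversely does not change whether $m$ is larger or smaller. Hence $\sigma_k T$ is standard. If instead they are adjacent in a row or column, then $k$ must precede $k{+}1$ there, and the swap reverses this, so $\sigma_k T$ is not standard.

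Second, we relate adjacency to the Manhattan distance. Adjacent boxes clearly have contents differing by exactly $1$, so $|r_P(k)| = 1$. For the converse, suppose $|r_P(k)|=1$, so $b_k$ and $b_{k+1}$ lie on neighbouring diagonals. We must show that they are adjacent. This is the main obstacle, since nearby diagonals can a priori contain non-adjacent boxes.

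To handle it, we use that in $T$ restricted to entries $\le k{+}1$, the box $b_{k+1}$ is a removable corner and $b_k$ is a removable corner of the shape with entries $\le k$. Equivalently, $b_{k+1}$ is an addable cell of the shape $P_k$, and $b_k$ is a removable cell of $P_k$. On any diagonal, a shape has at most one addable or removable cell. Addable cells and removable cells of a shape interlace in content: listed by content, they alternate, with consecutive ones differing by at least $1$. A removable cell $r$ and an addable cell $a$ with $|\cont(a)-\cont(r)|=1$ must be the adjacent pair at the corner, namely $a$ directly right of $r$ or directly below $r$.

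We justify this by writing $r$ at the end of row $y$ and splitting on the sign of $\cont(a)-\cont(r)$. If $\cont(a)=\cont(r)+1$, then $a$ is in row $y' \le y$. Since $a$ is addable, it sits at the end of row $y'$. If $y'<y$, then row $y'$ is longer than row $y$ by one, and the boxes above $r$ would block this, giving a contradiction with the contents. The case $\cont(a)=\cont(r)-1$ is symmetric, with columns in place of rows. Therefore $b_{k+1}$ is adjacent to $b_k$, which completes the equivalence $|r_P(k)|=1 \iff \sigma_k P\notin\Path(\lambda)$.
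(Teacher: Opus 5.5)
Your proof is correct and follows the same route as the paper's. Both show that $\sigma_k P \notin \Path(\lambda)$ exactly when the boxes of $k$ and $k{+}1$ are row- or column-adjacent, since swapping consecutive entries cannot change any comparison with a third entry, and both then identify adjacency with $|r_P(k)|=1$. Your argument for $|r_P(k)|=1 \Rightarrow$ adjacency, via $b_k \in \RC(P_k)$ and $b_{k+1} \in \AC(P_k)$, supplies a step the paper only asserts, but it contains a slip you should fix. Write $\ell_j$ for the row lengths of $P_k$; then the condition $\cont(a)=\cont(r)+1$ reads $\ell_{y'}=\ell_y+(y'-y)$. This contradicts weak decrease of row lengths both for $y'<y$ (row $y'$ would be \emph{shorter} than row $y$, not longer) and for $y'>y$, which you did not justify. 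Hence $y'=y$ and $a$ lies directly right of $r$.
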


\begin{proof}
In the proof, we consider $P$ as an SYT.
First, assume that $|r_P(k)| = 1$.  
This means that the box $k{+}1$ is either directly below $k$ or directly to the right of it. 
In both cases, swapping boxes $k$ and box $k+1$ does not result in a valid SYT, meaning that $\sigma_k P \notin \Path(\lambda)$.

Second, assume that $\sigma_k P \notin \Path(\lambda)$. 
Then, box $k$ and $k{+}1$ of $P$ must be adjacent, meaning that horizontally or vertically, there are no boxes in between them. 
We will show this by contradiction. 
Suppose that box $k$ and $k{+}1$ of $P$ do not ``touch''. 
By definition of a standard Young tableau, the boxes left and on top of both $k$ and $k{+}1$ should have entries smaller than $k$, and the boxes below and to the right of them should have entries larger than $k{+}1$. 
In this case, swapping $k$ and $k{+}1$ to obtain $\sigma_k P$ will still result in a valid SYT, meaning that $Q^{k\leftrightarrow k{+}1} \in \mathcal{P}(\widehat{S}_n)$, which is a contradiction.
Hence, $k$ and $k{+}1$ are adjacent in $P$, which implies that their absolute Manhattan distance is $|r_P(k)|= 1$, which concludes our proof.
\end{proof}

\begin{lemma}\label{lemma: optimal path encoding size}
    Consider a multiplicity-free chain of subgroups $\{e\} = G_1 \subset G_2 \subset G_3 \subset \dots$ such that $G_n$ is finite for all $n \in \mathbb{N}$. 
    Then, the optimal number of qubits $B_{\mathrm{opt}}$ of any encoding of paths in the Bratteli diagram satisfies $B^{(G_n)}_{\mathrm{opt}} = \Theta(\log|G_n|)$.
\end{lemma}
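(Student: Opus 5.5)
The plan is to count paths. First, the total number of paths in the Bratteli diagram up to level $n$ is
\[
|\Path_n| = \sum_{\lambda \in \widehat{G}_n} |\Path(\lambda)| = \sum_{\lambda \in \widehat{G}_n} d_\lambda .
\]
This identity uses that, for a multiplicity-free chain, the paths to $\lambda$ index a basis of $V_\lambda$ (the Gelfand--Tsetlin basis, exactly as in \cref{eq: P spans V}). Any injective encoding of $\Path_n$ into computational basis states of $B$ qubits needs $2^B \geq |\Path_n|$. The encoding that enumerates the paths attains $B = \lceil \log_2 |\Path_n| \rceil$. Hence $B^{(G_n)}_{\mathrm{opt}} = \lceil \log_2 |\Path_n| \rceil$, and it suffices to show $\log|\Path_n| = \Theta(\log|G_n|)$.

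For the upper bound, apply Cauchy--Schwarz to the sum of dimensions:
\[
\sum_{\lambda} d_\lambda \leq \sqrt{|\widehat{G}_n|}\,\sqrt{\sum_\lambda d_\lambda^2} = \sqrt{|\widehat{G}_n|\,|G_n|} \leq |G_n| .
\]
Here $\sum_\lambda d_\lambda^2 = |G_n|$ follows from the Peter--Weyl type decomposition \cref{eq: Peter--Weyl duality Sn}, stated for general finite groups. We also use $|\widehat{G}_n| \leq |G_n|$, since the number of irreps equals the number of conjugacy classes. This gives $\log|\Path_n| \leq \log|G_n|$.

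For the lower bound, use $d_\lambda \geq 1$ together with $\sum_\lambda d_\lambda^2 = |G_n|$:
\[
\Bigl(\sum_\lambda d_\lambda\Bigr)^2 \geq \sum_\lambda d_\lambda^2 = |G_n| .
\]
Hence $|\Path_n| \geq \sqrt{|G_n|}$ and $\log|\Path_n| \geq \tfrac12 \log|G_n|$.

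Combining the two bounds gives $\tfrac12\log_2|G_n| \leq B^{(G_n)}_{\mathrm{opt}} \leq \log_2|G_n| + 1$. This is $\Theta(\log|G_n|)$ whenever $|G_n|\to\infty$, or more generally whenever $|G_n| \geq 2$. For the finitely many small levels, or if the chain stabilizes, the additive $+1$ would need a remark that $\Theta$ is understood up to constants on nontrivial groups.

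The main point needing care is not the inequalities but the identification of the optimal encoding size with $\lceil\log_2|\Path_n|\rceil$. This requires the paths to form a set of distinguishable basis labels, which holds because multiplicity-freeness makes paths to distinct endpoints, and distinct paths to the same endpoint, correspond to distinct orthogonal basis vectors of $\bigoplus_\lambda V_\lambda$.
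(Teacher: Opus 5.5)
Your proof is correct and follows the paper's argument: the same identification $B^{(G_n)}_{\mathrm{opt}} = \lceil \log_2 |\Path_n| \rceil$ with $|\Path_n| = \sum_\lambda d_\lambda$, and the identical lower bound $\bigl(\sum_\lambda d_\lambda\bigr)^2 \geq \sum_\lambda d_\lambda^2 = |G_n|$. The only difference is the upper bound. The paper just uses $d_\lambda \leq d_\lambda^2$, which is all the $\Theta$ claim needs. Your Cauchy--Schwarz step gives the sharper $|\Path_n| \leq \sqrt{|\widehat{G}_n|\,|G_n|}$, which for $S_n$, where $\log|\widehat{S}_n| = \Theta(\sqrt{n})$, pins the constant in front of $\log_2 n!$ to $\tfrac12$.
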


\begin{proof}
    Note that for any encoding, we need at least $B^{(G_n)}_{\mathrm{opt}} = \lceil \log_2|\Path_n|\rceil$ qubits. 
    To find the asymptotic scaling, we start by finding an upper bound for the number of paths $|\Path_n|$.
    For all $\lambda \in \widehat{G}_n$, we have $d_\lambda \geq 1$, so that $d_\lambda \leq d_\lambda^2$.
    We find that 
    \begin{equation}
        \sum_{\lambda \in \widehat{G}_n} d_\lambda \leq \sum_{\lambda \in \widehat{G}_n}d_\lambda^2 = |G_n|.
    \end{equation}
    The last equality can be proven by counting the dimensions of the spaces on both sides of \cref{eq: Peter--Weyl duality Sn}, where $S_n$ is replaced by $G_n$.
    Note that the upper bound is tight if and only if $G_n$ is abelian. 
    
    For finding a lower bound of $|\Path_n|$, note that 
    \begin{align}
        |\Path_n|^2 = \Bigg(\sum_{\lambda \in \widehat{G}_n} d_\lambda\Bigg)^2 = \sum_{\lambda \in \widehat{G}_n} d_\lambda^2 + 2\sum_{\lambda, \mu \in \widehat{G}_n, \lambda < \mu} d_\lambda d_\mu \geq \sum_{\lambda \in \widehat{G}_n} d_\lambda^2 = |G_n|.
    \end{align} 
    Taking the square root of both sides, we find that $\sqrt{|G_n|} \leq |\Path_n|$.

    Putting them together, we find that $\sqrt{|G_n|} \leq |\Path_n| \leq |G_n|$.
    Taking the logarithm and ceiling to get $B^{(G_n)}_{\mathrm{opt}} = \lceil \log_2 |\Path_n|\rceil$, we find 
    \begin{equation}
        \frac{1}{2} \log_2 |G_n| \leq B^{(G_n)}_{\mathrm{opt}} \leq \log_2|G_n| + 1, 
    \end{equation}
    so that $B^{(G_n)}_{\mathrm{opt}} = \Theta(\log|G_n|)$.\qedhere
\end{proof}

\begin{corollary}\label{cor: optimal encoding Sn paths}
    For $S_n$, any register encoding of elements of $\Path_n$ optimally requires $B^{(S_n)}_{\mathrm{opt}} = \Theta (n \log n)$ qubits.
\end{corollary}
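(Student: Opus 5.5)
The plan is to apply \cref{lemma: optimal path encoding size} to the chain $S_1 \subset S_2 \subset \dots$ from \cref{eq: Sn chain of subgroups}. This chain is multiplicity-free (\cref{subsec: group setup}), and each $S_n$ is finite with $S_1 = \{e\}$, so the hypotheses hold. The lemma then gives $B^{(S_n)}_{\mathrm{opt}} = \Theta(\log |S_n|) = \Theta(\log n!)$, and it remains to show that $\log n! = \Theta(n \log n)$.

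For the upper bound, I would use $n! \leq n^n$, which gives $\log_2 n! \leq n \log_2 n$. For the lower bound, I would keep only the largest $\lceil n/2 \rceil$ factors of $n!$, each of which is at least $n/2$. This gives $n! \geq (n/2)^{n/2}$, so $\log_2 n! \geq \tfrac{n}{2}(\log_2 n - 1)$, and this is $\Omega(n \log n)$ for $n \geq 4$. Alternatively, Stirling's formula gives the same conclusion directly. Combining the two bounds with the explicit inequalities $\tfrac12 \log_2|G_n| \leq B_{\mathrm{opt}} \leq \log_2|G_n| + 1$ from the proof of \cref{lemma: optimal path encoding size} yields $B^{(S_n)}_{\mathrm{opt}} = \Theta(n\log n)$.

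There is no real obstacle here. The only point that needs care is that the $\Theta$ is asymptotic, so small $n$ (for example $n = 1$, where $\log n = 0$) should be treated as a trivial base case or excluded by taking $n$ sufficiently large.
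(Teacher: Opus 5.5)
Your proposal is correct and follows the paper's route: apply \cref{lemma: optimal path encoding size} to the chain $S_1 \subset S_2 \subset \cdots$, then show $\log n! = \Theta(n \log n)$. The only difference is in that last step: the paper invokes Stirling's formula, whereas your elementary bounds $(n/2)^{n/2} \leq n! \leq n^n$ reach the same conclusion with explicit constants.
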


\begin{proof}
    Using \cref{lemma: optimal path encoding size} and Stirling's approximation stating that $n! \sim \sqrt{2\pi n}\left(n/e\right)^n$, we find that 
    \begin{align}
        B^{(S_n)}_{\mathrm{opt}} &= \Theta(\log_2|S_n|) \\
        &= \Theta(\log_2(n!)) \\
        &= \Theta\left(\log_2\left(\Theta\left(\sqrt{2\pi n}\left(n/e\right)^n\right)\right)\right)\\
        &= \Theta\left(\log_2\left(\sqrt{2\pi n}\left(n/e\right)^n\right)\right) &&\text{(using that } \lim_{n\to\infty}\sqrt{2\pi n}\left(n/e\right)^n = \infty\text{)}\\
        &=\Theta(n \log_2(n/e) + \log_2(2\pi n)/2)\\
        &= \Theta(n \log_2n).
    \end{align}\qedhere
\end{proof}

\begin{lemma}\label{lemma: Sn absolute path min qubits}
    For the symmetric group, the minimal number of qubits $B^{(S_n)}_{\mathrm{abs}, \min}$ for an absolute path encoding of paths of length $n$ in the Bratteli diagram satisfies
    \begin{equation}
        B^{(S_n)}_{\mathrm{abs}, \min} = \Theta(n^{3/2}).
    \end{equation}
\end{lemma}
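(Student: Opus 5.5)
The statement concerns the \emph{minimal} number of qubits for an absolute encoding, i.e.\ storing each node label $P_m \in \widehat{S}_m$ separately in its own register for $m \in [n]$. I would therefore first reduce the claim to a counting statement:
\begin{equation*}
    B^{(S_n)}_{\mathrm{abs}, \min} = \sum_{m=1}^{n} \lceil \log_2 |\widehat{S}_m| \rceil = \sum_{m=1}^n \lceil \log_2 p(m) \rceil ,
\end{equation*}
where $p(m)$ is the number of integer partitions of $m$. This holds because the $m$-th subregister must be able to hold every element of $\widehat{S}_m$. Every partition of $m$ does occur as the endpoint of some path, and the path $P$ up to level $m$ can be extended to level $n$ by adding boxes in the first row, so each $P_m$ genuinely ranges over all of $\widehat{S}_m$. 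An encoding storing nodes separately cannot do better than $\lceil\log_2 p(m)\rceil$ qubits per level, and it achieves this bound by indexing partitions.

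The remaining task is to show $\sum_{m=1}^n \log_2 p(m) = \Theta(n^{3/2})$; the ceilings contribute at most $n$ extra and are harmless. The key input is the growth $\log p(m) = \Theta(\sqrt{m})$. For the upper bound, I would avoid citing Hardy--Ramanujan and use an elementary bound such as $p(m) \le e^{\pi\sqrt{2m/3}}$. That bound follows from the generating function, via $p(m)x^m \le \prod_k (1-x^k)^{-1}$ and $\log\prod_k(1-x^k)^{-1} \le \frac{\pi^2}{6}\frac{x}{1-x}$, optimised over $x$. Alternatively, one can take the cruder encoding of $\lambda$ by its multiset of at most $O(\sqrt m)$ distinct part sizes, giving $p(m)\le m^{O(\sqrt m)}$, but that loses a $\log$ factor, so I would use the generating function bound. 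For the lower bound, I would exhibit at least $2^{\Omega(\sqrt m)}$ distinct partitions. Take $k=\lfloor\sqrt{m}/2\rfloor$ and any subset $S\subseteq\{1,\dots,k\}$, which has sum at most $k^2 \le m$. Then form the partition with distinct parts $S$, padded with $1$s up to size $m$. Distinct subsets $S$ with $1 \notin S$ give distinct partitions, so $p(m) \ge 2^{k-1}$.

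Summing, $\sum_{m=1}^n c\sqrt{m} = \Theta(n^{3/2})$ for both bounds, using $\sum_{m\le n}\sqrt m \ge \sum_{m> n/2}\sqrt{n/2} = \Omega(n^{3/2})$ and $\le n\sqrt n$.

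The main obstacle is the first step, namely justifying that the \emph{minimum} over absolute encodings is exactly this sum. One might worry that the registers could jointly exploit correlations between consecutive nodes. By definition, however, the absolute encoding stores each $P_m$ individually, so each register must be independently able to represent all of $\widehat{S}_m$, and I would state this interpretation explicitly. The analytic bound on $p(m)$ is standard, and I would present the generating function argument briefly.
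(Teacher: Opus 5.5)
Your proof is correct and has the same skeleton as the paper's: reduce to $\sum_{m=1}^n \lceil \log_2 |\widehat{S}_m|\rceil$, show $\log |\widehat{S}_m| = \Theta(\sqrt{m})$, and sum. The difference is the key analytic input.

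\begin{itemize}
\item The paper cites the Hardy--Ramanujan asymptotic $p(m) \sim \frac{1}{4m\sqrt{3}} e^{\pi\sqrt{2m/3}}$ and takes logarithms.
\item You use two elementary estimates instead. The generating-function bound $p(m) \le e^{\pi\sqrt{2m/3}}$ happens to recover the sharp exponent. The distinct-parts subset construction gives $p(m) \ge 2^{\lfloor \sqrt{m}/2\rfloor - 1}$; this loses a constant factor in the exponent, which is irrelevant for a $\Theta$ statement.
\end{itemize}

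Your route is self-contained and avoids a circle-method result for what is only a crude growth rate. The paper's citation is shorter and would give the precise leading constant of $B^{(S_n)}_{\mathrm{abs},\min}$ if one wanted it.

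You are also more careful on two points the paper leaves implicit. First, you justify that every $\lambda \vdash m$ actually occurs as the $m$-th node of some length-$n$ path, by extending with boxes in the first row. This is what forces the $m$-th subregister to have at least $\lceil \log_2 p(m)\rceil$ qubits. Second, you account for the ceilings explicitly.
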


\begin{proof}
    First, consider a single register $\ket{P_m}$, where $P_m$ can be any element of $\widehat{S}_m$.
    Most efficiently, we need $\lceil \log_2|\widehat{S}_m| \rceil$ qubits. 
    Equation 1.41 of \cite{HardyRamanujan1917} states that the number of integer partitions of $m$ can approximated as $|\widehat{S}_n| \sim \frac{1}{4n\sqrt{3}}e^{\pi\sqrt{\frac{2n}{3}}}$ when $n$ goes to infinity. 
    This implies that $|\widehat{S}_n| = \Theta\left(\frac{1}{n} e^{\pi\sqrt{2n/3}}\right)$.
    We find 
    \begin{align}
        \log_2|\widehat{S}_n| &= \log_2\left(\Theta\left( \frac{1}{n} e^{\pi\sqrt{2n/3}}\right)\right)\\
        &= \Theta\left( \log_2\left(\frac{1}{n} e^{\pi\sqrt{2n/3}}\right)\right)&&\text{(using that }\lim_{n \to \infty}\frac{1}{n} e^{\pi\sqrt{2n/3}} = \infty)\\
        &= \Theta\left( \frac{\pi\sqrt{2n}}{\ln(2)\sqrt{3}} - \log_2 n\right)\\
        &= \Theta(\sqrt{n}).
    \end{align}
    This implies that the minimal number of qubits for $\ket{P_m}$ is $\lceil \log_2|\widehat{S}_m|\rceil = \Theta(\sqrt{m})$, too.
    For the entire path register, we find that
    \begin{equation}
        B_{\mathrm{abs}, \min} = \sum_{m = 1}^n \Theta(\sqrt{m}) = \Theta(n^{3/2}).\qedhere
    \end{equation}
\end{proof}

By \cref{cor: optimal encoding Sn paths}, this implies that the absolute path encoding for $S_n$ does not have an optimal asymptotic space complexity. 
Moreover, in a realistic encoding, the canonical encoding of integer partitions 

\begin{lemma}\label{lemma: shape register qubit count}
    Let $\ket{\lambda}$ be a register storing integer partitions $\lambda$ of at most $n$ using a separate register $\ket{\lambda_k}$ for each row $k = 1, \dots, n$, possibly with zero padding.
    For each row register, we use exactly enough qubits to store any valid integer partition.
    Then, we require $\mathcal{O}(n)$ qubits. 
\end{lemma}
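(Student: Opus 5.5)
The plan is to bound the number of qubits needed for each row register $\ket{\lambda_k}$ separately, and then sum these bounds over $k = 1, \dots, n$. The key observation is that row $k$ of a partition of at most $n$ is itself small: since $\lambda$ is weakly decreasing with $\sum_j \lambda_j \leq n$, we have
\[
k \lambda_k \leq \lambda_1 + \dots + \lambda_k \leq n,
\]
so $\lambda_k \leq \lfloor n/k \rfloor$. Moreover, this bound is attained by the partition consisting of $k$ rows of length $\lfloor n/k\rfloor$, padded by zeros. Hence ``exactly enough qubits to store any valid integer partition'' means that register $k$ uses $\lceil \log_2(\lfloor n/k\rfloor + 1)\rceil$ qubits.

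The total count is therefore
\[
B = \sum_{k=1}^{n} \bigl\lceil \log_2(\lfloor n/k\rfloor + 1)\bigr\rceil \leq \sum_{k=1}^{n} \bigl(1 + \log_2(n/k + 1)\bigr).
\]
The remaining step is to show that this sum is $\mathcal{O}(n)$, and this is the only step needing care: the naive bound $\log_2 n$ per row only gives $\mathcal{O}(n \log n)$.

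To get the linear bound, I would use $\log_2(n/k + 1) \leq \log_2(2n/k) = 1 + \log_2(n/k)$ for $k \leq n$. Then
\[
\sum_{k=1}^{n} \log_2\frac{n}{k} = \log_2 \frac{n^n}{n!} \leq n \log_2 e,
\]
where the final inequality is the standard estimate $n! \geq (n/e)^n$, which also follows from the Stirling approximation already used in \cref{cor: optimal encoding Sn paths}. Combining the pieces gives $B \leq 2n + n\log_2 e = \mathcal{O}(n)$.

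As an alternative to Stirling, one could group the rows dyadically: for $k \in (n/2^{j+1}, n/2^j]$ there are at most $n/2^{j+1}+1$ rows, each needing $\mathcal{O}(j+1)$ qubits. The resulting bound is $\sum_j (j+1) n/2^{j} = \mathcal{O}(n)$, plus an $\mathcal{O}(\log^2 n)$ correction coming from the $+1$ terms.
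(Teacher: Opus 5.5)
Your proposal is correct and follows the same route as the paper: bound each row by $\lambda_k \le n/k$ and sum the resulting logarithms. The only difference is the final estimate. The paper telescopes $\sum_k \log_2(1+n/k)$ exactly into $\log_2\binom{2n}{n}\le 2n$, whereas you loosen each term to $\log_2(2n/k)$ and use $n!\ge (n/e)^n$, which gives a comparable linear constant (and you additionally note that the per-row bound is attained).
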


\begin{proof}
    First, let us find the number of qubits for each row register $\ket{\lambda_k}$.
    As $\lambda$ must be a valid integer partition, we require that $\lambda_1, \dots,\lambda_k \geq \lambda_{k}$.
    This implies that 
    \begin{equation}
        k \lambda_k \leq \sum_{i=1}^k \lambda_i \leq n.
    \end{equation}
    Dividing by $k$, and using that $\lambda_k \geq 0$, we find that $0 \leq \lambda_k \leq \frac{n}{k}$.
    Hence, $\lambda_k$ can take at most $\frac{n}{k}+1$ values, so $\ket{\lambda_k}$ can be implemented using at most
    \begin{equation}
        \left\lceil \log_2\left(\frac{n}{k}+1\right)\right\rceil \leq \log_2\left(\frac{n}{k}+1\right)+1
    \end{equation}
    qubits.
    For the total number of qubits of $\ket{\lambda}$, we find that we need no more than 
    \begin{align}
        \sum_{k=1}^n \left(\log_2\left( \frac{n}{k} + 1\right) + 1\right)
        &= n + \sum_{k=1}^n \left(\log_2(n+k) - \log_2(k)\right)\\
        &= n + \log_2\left(\frac{(2n)!}{(n!)^2}\right)\\
        &= n + \log_2\binom{2n}{n}\\
        &\leq n + 2n\\
        &= \mathcal{O}(n),
    \end{align}
    where we used $\binom{2n}{n} \leq 2^{2n}$.
\end{proof}

Hence, if we store length-$n$ paths by their absolute encoding in their most natural encoding as in \cref{lemma: shape register qubit count}, we even need $\mathcal{O}(n^2)$ qubits, instead of the optimal $\mathcal{O}(n^{1.5})$ qubits from \cref{lemma: Sn absolute path min qubits}.

\begin{lemma}\label{lemma: Yamanouchi word optimal scaling}
    The Yamanouchi word encoding of a path of length $n$ in the Bratteli diagram for $S_n$ requires $B^{(S_n)}_{\mathrm{Ymn}} = \mathcal{O}(n\log n)$ qubits, which is optimal in the sense of \cref{lemma: optimal path encoding size}.
\end{lemma}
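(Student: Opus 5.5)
The plan is to count directly. A Yamanouchi word is a string $p = (p_1, \dots, p_n)$ in which each entry $p_m$ is the row index of the box added at step $m$. The box added at step $m$ goes into a Young diagram $P_{m-1}$ with $m-1$ boxes, and such a diagram has at most $m-1$ rows. The new box therefore lies in a row with index at most $m \leq n$, so $p_m \in [n]$ as stated in the definition.

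I would store each entry $p_m$ in its own subregister of $\lceil \log_2 m \rceil \leq \lceil \log_2 n \rceil$ qubits. A uniform width of $\lceil \log_2 n\rceil$ qubits per entry, matching the register structure used in \cref{fig: theorem first induction full circuit relative paths}, also suffices. Summing over the $n$ entries gives
\begin{equation}
    B^{(S_n)}_{\mathrm{Ymn}} \leq \sum_{m=1}^n \lceil \log_2 n \rceil \leq n(\log_2 n + 1) = \mathcal{O}(n \log n).
\end{equation}
The entry $p_1 = 1$ is fixed and can be omitted, but this does not change the asymptotics.

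For optimality, I would invoke \cref{cor: optimal encoding Sn paths}, which itself rests on \cref{lemma: optimal path encoding size}. It states that every encoding of $\Path_n$ needs $\Theta(n \log n)$ qubits, so in particular at least $\Omega(n\log n)$. Together with the upper bound above, this gives $B^{(S_n)}_{\mathrm{Ymn}} = \Theta(n \log n)$, which matches the optimal scaling up to a constant factor. This is exactly the sense of optimality used in \cref{lemma: optimal path encoding size}.

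I do not expect a real obstacle here. The only point needing care is that the encoding is injective, so that it is a valid encoding at all. This was already shown in \cref{subsec: Yamanouchi words} by induction: $p$ determines $P$, by recovering $P_m = P_{m-1} + a$ from the addable cell in row $p_m$. I would cite that argument rather than reprove it.
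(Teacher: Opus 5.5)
Your proposal is correct and follows essentially the same route as the paper. Both arguments bound $p_m \le m$, sum the register widths to get $\mathcal{O}(n\log n)$, and invoke \cref{lemma: optimal path encoding size} (via \cref{cor: optimal encoding Sn paths}) for the matching lower bound. The only difference is cosmetic: the paper stores $q_m$ in $\lceil\log_2 m\rceil$ qubits and bounds $\sum_m \log_2 m = \log_2(n!)$ using Stirling, whereas your uniform width $\lceil\log_2 n\rceil$ gives the upper bound without Stirling. Note also that the paper's circuits use $m$-dimensional registers for $q_m$, not uniform-width ones, so your side remark about matching the register structure of \cref{fig: theorem first induction full circuit relative paths} is slightly off; this does not affect the proof.
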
 

\begin{proof}
    First, note that $1 \leq q_m \leq m$, because $Q_{m-1}$ has at most $m-1$ boxes, and therefore any addable cell in $\AC(Q_{m-1})$ has at least row index $1$ and at most row index $m$.
    We can store $q_m$ in an $m$-dimensional qudit, implemented with $\lceil \log_2(m)\rceil$ qubits. 
    For the total number of qubits, we find that 
    \begin{equation*}
        \sum_{m = 1}^n \lceil \log_2(m) \rceil \leq n +  \sum_{m = 1}^n \log_2(m) = n + \log_2(n!) = \mathcal{O}(n \log n). 
    \end{equation*}
    In the last step, we used Stirling's approximation, stating that $n! = \Theta(\sqrt{2 \pi n} \left(\frac{n}{e}\right)^n)$, which implies that $\log_2(n!) = \mathcal{O}(n \log n)$. 
    
    By \cref{lemma: optimal path encoding size}, we need $\Theta (\log | S_n|) = \Theta (\log (n!)) = \Theta(n \log n)$ qubits for a path encoding with optimal space complexity, showing that Yamanouchi word encoding has an optimal scaling.
\end{proof}

Combining \cref{cor: optimal encoding Sn paths} and \cref{lemma: Yamanouchi word optimal scaling}, we see that Yamanouchi words encode of paths with optimal asymptotic space complexity.
Even if we add registers for the $x$-coordinates, the number of qubits multiplies by 2, and therefore the encoding still has an optimal space complexity.

\subsection{Proofs for \texorpdfstring{\cref{sec: induction map Mackey}}{Mackey induction map}}\label{app: Mackey proofs}

\subsubsection{Proof of \texorpdfstring{\cref{lemma: Sn-1 double cosets of Sn}}{Sn-1 double cosets of Sn}}\label{app: Sn-1 double cosets of Sn}
\begin{proof}
     Clearly, for any group, the identity element results in a double coset which is just the subgroup. 
     Also, it is clear that $H_e = S_{n-1} \cap e S_{n-1} e^{-1} = S_{n-1} \cap S_{n-1} = S_{n-1}$. 
     Note that $\sigma_{n-1} \notin S_{n-1}$, so that we can use this element as the representative of another $S_{n-1}$-double coset of $S_n$. 
     For $H_{\sigma_{n-1}} = S_{n-1} \cap \sigma_{n-1}S_{n-1}\sigma_{n-1}^{-1}$, first note that $\sigma_{n-1}$ commutes with all elements of $S_{n-2}$, so that $S_{n-2} \subseteq H_{\sigma_{n-1}}$. 
     Second, let $g \in H_{\sigma_{n-1}}$, so that $g = \sigma_{n-1} h \sigma_{n-1}$ for some $h \in S_{n-1}$, and $g \in S_{n-1}$. 
     The last condition implies that $g (n) = n$. 
     Combining this with first condition gives $g (n) = (\sigma_{n-1} h \sigma_{n-1})(n) = (\sigma_{n-1}h)(n{-}1)= \sigma_{n-1}(h(n{-}1)) = n$. 
     Equivalently, we say that $h(n{-}1) = \sigma_{n-1}^{-1}(n) = (n{-}1)$, hence $h \in \mathrm{Stab}(n-1) \cap S_{n-1} = S_{n-2}$, showing that $H_{\sigma_{n-1}} \subseteq S_{n-2}$.
     We conclude that $H_{\sigma_{n-1}} = S_{n-2}$.

     Finally, to show that $\Omega_n = \{e, \sigma_{n-1}\}$ is a complete set of $S_{n-1}$-double coset representatives, it suffices to show that $|\bigoplus_{\omega \in \Omega_n} S_{n-1}\omega  S_{n-1}| = |S_{n}|$, as the double cosets partition $S_n$ by \cref{lemma: double cosets partition group}. 
     The double coset formula from \cref{lemma: double coset formula} gives $|S_{n-1}\sigma_{n-1} S_{n-1}| = |S_{n-1}|^2 / H_{\sigma_{n-1}} = ((n-1)!)^2 / (n-2)! = (n-1)(n-1)!$.
     Hence, $|\bigoplus_{\omega \in \Omega_n} S_{n-1}\omega  S_{n-1}| = |S_{n-1}| + (n-1)(n-1)! = (n-1)! + (n-1)(n-1)! = n (n-1)! = n! = |S_{n}|$, which concludes our proof.
\end{proof}

In our proof, we use the following formula for the size of a double coset:
\begin{lemma}[Double coset formula]\label{lemma: double coset formula}
    Let $H, K \subseteq G$ be subgroups of a finite group $G$. 
    For $g \in G$, we have 
    \begin{equation}
        |HgK| = \frac{|H||K|}{|H_g|},
    \end{equation}
    where we defined subgroup $H_g = H \cap g K g^{-1}$.
\end{lemma}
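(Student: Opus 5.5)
The plan is to prove the double coset formula with an orbit--stabilizer argument for the action of $H \times K$ on $G$. Let $H \times K$ act on $G$ by $(h,k)\cdot x \defeq h x k^{-1}$. This is a group action: the identity acts trivially, and $(h_1,k_1)\cdot((h_2,k_2)\cdot x) = h_1h_2 x k_2^{-1}k_1^{-1} = (h_1h_2,k_1k_2)\cdot x$. Since $K$ is a subgroup, $k^{-1}$ ranges over all of $K$ as $k$ does. Hence the orbit of $g$ is exactly $\{hgk \mid h \in H, k \in K\} = HgK$. Orbit--stabilizer then gives
\[
|HgK| = \frac{|H \times K|}{|\mathrm{Stab}(g)|} = \frac{|H||K|}{|\mathrm{Stab}(g)|}.
\]

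It remains to show that the stabilizer has order $|H_g|$. The stabilizer is $\{(h,k) \in H \times K \mid hgk^{-1} = g\}$. The condition $hgk^{-1} = g$ is equivalent to $h = gkg^{-1}$. So $(h,k)$ lies in the stabilizer exactly when $h \in H \cap gKg^{-1} = H_g$ and $k = g^{-1}hg$.

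Consider the projection $(h,k) \mapsto h$, restricted to the stabilizer. Its image is contained in $H_g$, and it is injective because $k$ is determined by $h$. It is also surjective onto $H_g$: for $h \in H_g$, the element $k \defeq g^{-1}hg$ lies in $K$ because $h \in gKg^{-1}$. Therefore $|\mathrm{Stab}(g)| = |H_g|$, and substituting this into the orbit--stabilizer identity gives the formula. Note that \cref{lemma: Mackey subgroups} already guarantees that $H_g$ is a subgroup.

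There is no real obstacle here. The only care needed is to use $k^{-1}$ in the action so that it is a left action, and to check surjectivity of the bijection onto $H_g$.
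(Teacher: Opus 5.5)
Your proof is correct. It reaches the formula by a slightly different route than the paper.

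The paper works with the multiplication map $\Phi\colon H\times K\to HgK$, $\Phi(h,k)=hgk$. For each $g'\in HgK$ it fixes some $(h,k)\in\Phi^{-1}(g')$ and shows by hand that $x\mapsto (hx,(g^{-1}xg)^{-1}k)$ is a bijection from $H_g$ onto the fiber $\Phi^{-1}(g')$, checking well-definedness, injectivity and surjectivity. It then sums the fiber sizes: $|H||K|=\sum_{g'\in HgK}|\Phi^{-1}(g')|=|HgK|\,|H_g|$.

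You instead package the same structure as a left action of $H\times K$ on $G$ and invoke orbit--stabilizer. The only computation left is then a single stabilizer, $\mathrm{Stab}(g)=\{(h,g^{-1}hg) : h\in H_g\}$, which you identify correctly, including surjectivity onto $H_g$.

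The two arguments are closely related: the paper's fiber bijection is exactly the step ``each fiber is a translate of the stabilizer'' that underlies orbit--stabilizer, written out for every fiber.
\begin{itemize}
\item The paper's version is self-contained and needs no action formalism; $\Phi$ is just a map, so no $k^{-1}$ twist is required. The price is repeating the bijection check for an arbitrary fiber.
\item Your version is shorter and more transparent. It relies on orbit--stabilizer as a known theorem and on making the action a genuine left action, which you handled properly.
\end{itemize}

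As a minor point, your appeal to \cref{lemma: Mackey subgroups} for $H_g$ being a subgroup is harmless but unnecessary: your count only uses the cardinality of the set $H_g$.
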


\begin{proof}
    Consider the surjective map $\Phi : H \times K \to H g K$, $\Phi(h,k) = hgk$. 
    Let $g' \in  H g K$. 
    As $\Phi$ is surjective, the preimage $\Phi^{-1}(g')$ is non-empty, and we can fix an element $(h, k) \in \Phi^{-1}(g')$. 
    We will now construct a bijection $H_g \to \Phi^{-1}(g')$ to show that $|\Phi^{-1}(g')| = |H_g|$, where we define 
    \begin{equation}
        x \mapsto (hx, (g^{-1}xg)^{-1}k).
    \end{equation}
    First, we verify that $hx \in H$, because $h \in H$ and $x \in H_g \subseteq H$. 
    Second, we verify that $(g^{-1}xg)^{-1}k \in K$, because $k \in K$ and $x \in g K g^{-1}$ so that $(g^{-1}xg)^{-1} \in K$.
    Lastly, we verify that $(hx, (g^{-1}xg)^{-1}k) \in \Phi^{-1}(g')$, because
    \begin{equation}
        \Phi(hx, (g^{-1}xg)^{-1}k) = (hx) g ((g^{-1}x g)^{-1}k) = hx g g^{-1}x^{-1}gk = hxx^{-1}gk = hgk = g'.
    \end{equation}

    To show that the map is injective, the $H$-part simply left-multiplies $x$ with $h$, which is injective. 
    The $K$-part can be seen as a composition of three injective maps, where we first assigne the unique element $g^{-1}xg \in K$ to each $x \in H_g$, then takes the inverse, and finally right multiply with $k$.

    To show that the map is surjective, consider $(h', k') \in \Phi^{-1}(g')$. 
    Then, we can always define $x = h^{-1}h' \in H_g$ such that $x \mapsto (h', k')$ 
    First, to show that $x \in H_g = H \cap g K g^{-1}$, we clearly have that $x \in H$. 
    For showing that $x \in g K g^{-1}$, recall that $(h', k') \in \Phi^{-1}(g')$ means that $h'gk' = g' = hgk$.
    Using that $h' = hx$, we rewrite this so that $hxgk' = hgk$.
    Then, we find that $x = gkk'^{-1}g^{-1} \in g K g^{-1}$.
    Second, note that 
    \begin{align}
    x &\mapsto (hx, (g^{-1}xg)^{-1}k)\\
    &= (h(h^{-1}h', (g^{-1}(gkk'^{-1}g^{-1})g)^{-1}k)\\
    &= (h', g^{-1}gk'k^{-1}g^{-1}gk)\\
    &= (h', k'k^{-1}k) = (h', k').
    \end{align}
    
    Finally, we know that the domain of $\Phi$ has cardinality $|H||K|$, which should be equal to $\sum_{g' \in HgK}|\Phi^{-1}(g')|$.
    We find that 
    \begin{equation}
        |H||K| = \sum_{g' \in HgK} |\Phi^{-1}(g')| = \sum_{g' \in HgK} |H_g| = |HgK| |H_g|.
    \end{equation}
    Dividing both sides by $|H_g|$ gives us the double coset formula.
\end{proof}

In this section, we will use the following lemmas.
Note that an integer partition $\lambda = (\lambda_1, \dots, \lambda_t)$ has an addable cell at row $k$ if $\lambda_{k-1} > \lambda_k$, where we consider $\lambda_0 = \infty$ and $\lambda_{t+1} = 0$.
Similarly, it has a removable cell at row $k$ if $\lambda_{k} > \lambda_{k+1}$.

\begin{lemma}\label{lemma: AC of lambda - r}
For $n \in \mathbb{N}_{>0}$, it holds for all $\lambda \pt n$ and for all removable cells $r \in \RC(\lambda)$that $\AC(\lambda - r) \setminus \{r\} \subseteq \AC(\lambda)$.
\end{lemma}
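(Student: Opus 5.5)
The plan is to work directly with the row criterion for addable cells stated just before the lemma. A partition $\lambda=(\lambda_1,\dots,\lambda_t)$ has an addable cell in row $k$ if and only if $\lambda_{k-1}>\lambda_k$, with the conventions $\lambda_0=\infty$ and $\lambda_{t+1}=0$. Since every row contains at most one addable cell, namely the cell $(\lambda_k+1,k)$, an addable cell is determined by its row. So it suffices to compare the row-wise criteria for $\lambda$ and for $\lambda-r$.

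Fix $r\in\RC(\lambda)$ in row $j$, so that $r=(\lambda_j,j)$ and $\lambda_j>\lambda_{j+1}$. Write $\nu=\lambda-r$. Then $\nu_k=\lambda_k$ for $k\neq j$ and $\nu_j=\lambda_j-1$. Let $a\in\AC(\nu)$ lie in row $k$, so $a=(\nu_k+1,k)$ and $\nu_{k-1}>\nu_k$. I will split into cases on $k$.

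\emph{Case $k=j$.} Here $a=(\lambda_j-1+1,j)=r$, which is excluded by hypothesis.

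\emph{Case $k\notin\{j,j+1\}$.} The inequality involves only entries $\nu_{k-1},\nu_k$ that equal $\lambda_{k-1},\lambda_k$. Hence $\lambda_{k-1}>\lambda_k$, and $a=(\lambda_k+1,k)\in\AC(\lambda)$.

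\emph{Case $k=j+1$.} We have $\nu_{j+1}=\lambda_{j+1}$, so $a=(\lambda_{j+1}+1,j+1)$. Removability of $r$ gives $\lambda_j>\lambda_{j+1}$ directly, so row $j+1$ is addable in $\lambda$ with exactly this cell, and $a\in\AC(\lambda)$.

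Two boundary conventions need care. When $k=1$, the convention $\lambda_0=\infty$ makes row $1$ always addable. When $k$ is the first empty row, the convention $\lambda_{t+1}=0$ applies, and the argument still goes through provided $\nu$ and $\lambda$ share that convention. If $r$ empties its row, so $\lambda_j=1$ and $j=t$, then the first empty row of $\nu$ is $j$, which is the case $k=j$ and is excluded. The first empty row of $\lambda$ is $t+1=j+1$, which is handled by the case $k=j+1$.

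The only step I expect to require attention is the case $k=j+1$. It is the one place where the hypothesis that $r$ is removable is actually used; everything else is bookkeeping with the row criterion. This yields $\AC(\lambda-r)\setminus\{r\}\subseteq\AC(\lambda)$.
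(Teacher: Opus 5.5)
Your proof is correct and follows essentially the same route as the paper: a case split on the row $k$ of an addable cell of $\lambda - r$. The case $k=j$ gives $r$ itself, the case $k \notin \{j, j+1\}$ follows because the relevant row lengths are unchanged, and the case $k=j+1$ uses the removability inequality $\lambda_j > \lambda_{j+1}$. Your explicit handling of the boundary conventions ($\lambda_0=\infty$, zero padding) is a small addition the paper leaves implicit, and your write-up avoids the paper's typos, such as $\RC(\lambda - r)$ written where $\AC(\lambda - r)$ is meant.
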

\begin{proof}
    Let $y$ be the row number of removable cell $r \in \RC(\lambda)$.
    Then, we have that $\lambda - r = (\lambda_1, \dots, \lambda_{y-1}, \lambda_{y}-1, \lambda_{y+1}, \dots, \lambda_t)$.
    Let $a \in \RC(\lambda - r) \setminus\{r\}$, and denote by $k$ its row number. 
    As $a$ is an addable cell of $\lambda - r$ and not equal to $r$, we know that $(\lambda-r)_{k-1} > (\lambda - r)_k$ and that $k \neq y$.
    If $k < y$ or $k > y+1$, then $(\lambda-r)_{k-1} = \lambda_{k-1}$ and $(\lambda-r)_k = \lambda_{k}$, so that $(\lambda-r)_k > \lambda_{k}$ meaning that $a \in \AC(\lambda)$.
    Otherwise, if $k = y + 1$, then $\lambda - r$ has an addable cell at row $k$ and column $(\lambda - r)_{k} + 1 = (\lambda-r)_{y + 1} + 1 = \lambda_{y+1} + 1$.
    As $r$ is the removable cell of $\lambda$ at row $y$, we know that $\lambda_{y} > \lambda_{y+1}$, so that $\lambda$ has a removable cell at row $y+1 = k$ and column $\lambda_{y+1} + 1$ as well, so that $a \in \AC(\lambda)$.
    We have shown that all elements of $\AC(\lambda - r) \setminus \{r\}$ are also an element of $\AC(\lambda)$ and conclude that $\AC(\lambda - r) \setminus \{r\} \subseteq \AC(\lambda)$.  
\end{proof}
\begin{lemma}\label{lemma: RC of lambda + a}
For $n \in \mathbb{N}_{\geq 0}$, it holds for all $\lambda \pt n$ and for all addable cells $a \in \AC(\lambda)$ that $\RC(\lambda + a) \setminus \{a\} \subseteq \RC(\lambda)$.
\end{lemma}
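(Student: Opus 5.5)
The plan is to mirror the proof of \cref{lemma: AC of lambda - r}, working directly with the row-length characterisation stated just before it: with $\lambda = (\lambda_1,\dots,\lambda_t)$, $\lambda_0=\infty$ and $\lambda_{t+1}=0$, the partition $\lambda$ has a removable cell at row $k$ if and only if $\lambda_k > \lambda_{k+1}$. Let $y$ be the row of the addable cell $a \in \AC(\lambda)$, so $\lambda_{y-1} > \lambda_y$. Then
\begin{equation*}
\lambda + a = (\lambda_1,\dots,\lambda_{y-1},\lambda_y+1,\lambda_{y+1},\dots),
\end{equation*}
where possibly $y = t+1$ and a new row of length one is created. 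Take $r \in \RC(\lambda+a)\setminus\{a\}$ with row number $k$. Since each row carries at most one removable cell and $a$ is the removable cell of $\lambda+a$ in row $y$, we get $k \neq y$, and $(\lambda+a)_k > (\lambda+a)_{k+1}$.

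The argument then splits into cases on $k$. If $k \notin \{y-1, y\}$, then rows $k$ and $k+1$ are the same in $\lambda$ and in $\lambda+a$. Hence $\lambda_k > \lambda_{k+1}$, and the removable cell of $\lambda$ in row $k$ sits at column $\lambda_k = (\lambda+a)_k$, so it is exactly $r$. The only remaining case is $k = y-1$. Here $(\lambda+a)_{y-1} = \lambda_{y-1}$, but the inequality we obtain is $\lambda_{y-1} > \lambda_y + 1$, which is stronger than what we need. In particular $\lambda_{y-1} > \lambda_y$, which holds anyway because $a$ is addable at row $y$. So $\lambda$ has a removable cell in row $y-1$ at column $\lambda_{y-1}$, and this again coincides with $r$. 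Together the cases give $r \in \RC(\lambda)$.

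The main obstacle is bookkeeping at the boundary, not the mathematics. There are three things to check. First, when $y = t+1$ the conventions $\lambda_{t+1}=0$ and $\lambda_{t+2}=0$ must still make the case analysis valid. Second, when $y=1$ there is no row $y-1$, so that case is vacuous. Third, one must confirm that $r$ is identified as a cell by both its row and its column, not only its row. This is what justifies saying that the removable cell of $\lambda$ in row $k$ \emph{is} $r$, because its column $(\lambda+a)_k$ equals $\lambda_k$ whenever $k \neq y$.

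Alternatively, one could derive the statement from \cref{lemma: AC of lambda - r} via conjugation or via the duality $\lambda \leftrightarrow \lambda + a$. However, that would require relating addable cells of $\lambda$ to removable cells of $\lambda+a$ rather than $\AC(\lambda - r)$, so the direct case analysis is simpler and is the approach I would write down.
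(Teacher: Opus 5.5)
Your proposal is correct and follows essentially the same route as the paper: fix the row $y$ of $a$, take $r$ in a row $k \neq y$, and split into the case where rows $k, k+1$ are unchanged ($k < y-1$ or $k > y$) and the case $k = y-1$, where $\lambda_{y-1} > \lambda_y$ follows from the addability of $a$. Your extra checks, namely why $k \neq y$ and why the removable cell of $\lambda$ in row $k$ has the same column as $r$, spell out steps the paper leaves implicit.
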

\begin{proof}
    Suppose that $a$ is an addable cell at row $y$, so that $\lambda + a = (\lambda_1, \dots, \lambda_{y-1}, \lambda_{y} + 1, \lambda_{y + 1}, \dots, \lambda_t)$.
    Let $r \in \RC(\lambda + a) \setminus \{a\}$ have row index $k \neq y$.
    Hence, we know that $(\lambda_y + a)_{k} > (\lambda_y + a)_{k+1}$.
    If $k < y-1$ or $k > y$, it holds that $(\lambda_y + a)_{k} = \lambda_k$ and $(\lambda_y + a)_{k+1} = \lambda_{k+1}$, so that $\lambda_k > \lambda_{k+1}$ and $r \in \RC(\lambda)$.
    If $k = y - 1$, then $(\lambda + a)_{k} = (\lambda + a)_{y-1} = \lambda_{y-1} = \lambda_k$ and $(\lambda + a)_{k+1} = (\lambda + a)_y = \lambda_y + 1 = \lambda_{k} + 1$.
    As $\lambda$ has an addable cell at row $y$, we know that $\lambda_{y-1} > \lambda_y$, and therefore $r \in \RC(\lambda)$. 
    We have shown that all elements of $\RC(\lambda + a) \setminus \{a\}$ are also an element of $\RC(\lambda)$ and conclude that $\RC(\lambda + a) \setminus \{a\} \subseteq \RC(\lambda)$.  
\end{proof}

\begin{lemma}\label{lemma: AC RC cardinalities}
    For $\lambda \pt n$, we have that 
    \begin{equation*}
        |\AC(\lambda)| = |\RC(\lambda)| + 1.
    \end{equation*}
\end{lemma}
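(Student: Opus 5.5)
The plan is to read off addable and removable cells from the distinct part sizes of $\lambda$. Write the distinct nonzero part sizes of $\lambda$ as $a_1 > a_2 > \dots > a_s > 0$. Let $\ell_j$ be the last row of $\lambda$ whose length equals $a_j$, so that $\ell_1 < \ell_2 < \dots < \ell_s = t$, where $t$ is the number of nonzero rows. We use the row criteria stated just before \cref{lemma: AC of lambda - r}, with the conventions $\lambda_0 = \infty$ and $\lambda_{t+1} = 0$: row $k$ has a removable cell iff $\lambda_k > \lambda_{k+1}$, and row $k$ has an addable cell iff $\lambda_{k-1} > \lambda_k$. Since each row carries at most one addable and at most one removable cell, it suffices to count the qualifying rows.

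First count removable cells. For $1 \le k \le t$ we have $\lambda_k > \lambda_{k+1}$ exactly when $k$ is the last row of a block of equal parts, that is, $k \in \{\ell_1, \dots, \ell_s\}$. Row $t$ qualifies because $\lambda_{t+1} = 0 < \lambda_t$. Rows $k > t$ never qualify, since $\lambda_k = \lambda_{k+1} = 0$. Hence $|\RC(\lambda)| = s$.

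Next count addable cells. Row $1$ always qualifies because $\lambda_0 = \infty$. For $2 \le k \le t$, the condition $\lambda_{k-1} > \lambda_k$ holds exactly when $k-1$ is the last row of a block, so $k \in \{\ell_1 + 1, \dots, \ell_{s-1} + 1\}$. Row $t+1$ qualifies because $\lambda_t > 0 = \lambda_{t+1}$, and this is the case $k = \ell_s + 1$. Rows $k > t+1$ fail. Together these give the rows $\{1, \ell_1 + 1, \dots, \ell_s + 1\}$. These are pairwise distinct: the $\ell_j$ are strictly increasing, and $\ell_1 + 1 \ge 2 > 1$. Hence $|\AC(\lambda)| = s + 1 = |\RC(\lambda)| + 1$.

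The only point that needs care is the edge cases. For $n = 0$ with $\lambda = \varnothing$, we have $s = 0$, one addable cell at row $1$, and no removable cells, so the formula still holds. One must also check that row $1$ and row $\ell_1 + 1$ do not coincide, which holds since $\ell_1 \ge 1$. I would present the argument as a bijection: each removable cell at row $\ell_j$ is mapped to the addable cell at row $\ell_j + 1$, and the extra addable cell at row $1$ is left over. This matches the ``add one row below'' relabeling used in step 5 of the embedding-operation implementation.
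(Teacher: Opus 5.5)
Your proof is correct and is essentially the paper's argument. The paper pairs each removable cell in row $y$ with the addable cell in row $y+1$ and notes that row $1$ is the only addable row not reached, which is exactly the bijection you end with. Your enumeration via the block ends $\ell_1 < \dots < \ell_s$ makes explicit the surjectivity onto addable rows $k \ge 2$ (that $\lambda_{k-1} > \lambda_k$ is precisely the removable condition for row $k-1$), which the paper leaves implicit, and it also covers $\lambda = \varnothing$.
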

\begin{proof}
    Let $r \in \RC(\lambda)$, and denote by $y$ the row number of $r$.
    Hence, $\lambda_{y} > \lambda_{y+1}$, which means that there is an addable cell at row $y+1$.
    We conclude that any removable cell corresponds to an addable cell in the next row. 
    The only row that cannot be reached in this way is row $1$. 
    As there is always an addable cell at row $1$, we conclude that $|\AC(\lambda)| = |\RC(\lambda)| + 1$.
\end{proof}

\begin{wrapfigure}{r}{3.68cm}
    \begin{center}
        \includegraphics[width=3.67cm]{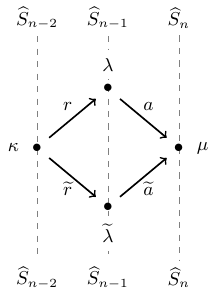}
        \caption{Subgraph of the Bratteli diagram for $S_n$ corresponding to the diamond lemma.}
        \label{fig: diamond lemma}
    \end{center}
\end{wrapfigure}

\begin{lemma}[Diamond lemma]\label{lemma: diamond}
    Let $n \in \mathbb{N}, n \geq 2$. 
    Let $\lambda, \widetilde{\lambda} \pt n{-}1$. 
    Then, we have that 
    \begin{equation}
        \exists \kappa \pt n{-}2: \kappa \to \lambda, \widetilde{\lambda} \quad \Longleftrightarrow \quad  \exists \mu \pt n : \lambda, \widetilde{\lambda} \to \mu.
    \end{equation}
    Moreover, if $\lambda \neq \widetilde{\lambda}$, then $\kappa$ and $\mu$ are unique if they exist.
\end{lemma}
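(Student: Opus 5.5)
I will argue directly with Young diagrams, viewing partitions as sets of cells and using the branching rule \cref{eq: N-}. The statement says $\kappa\to\lambda$ exactly when $\lambda=\kappa+a$ for a single cell $a$, and $\lambda\to\mu$ exactly when $\mu=\lambda+b$ for a single cell $b$. The case $\lambda=\widetilde{\lambda}$ is trivial in both directions. Since $n\ge 2$, $\lambda$ has at least one removable cell, so some $\kappa$ exists. It always has an addable cell, for instance in row $1$, so some $\mu$ exists. From now on I assume $\lambda\neq\widetilde{\lambda}$.

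For ($\Rightarrow$), write $\lambda=\kappa+a$ and $\widetilde{\lambda}=\kappa+\widetilde{a}$ with $a,\widetilde{a}\in\AC(\kappa)$ distinct. Since each row has at most one addable cell, $a$ and $\widetilde{a}$ lie in different rows. I take $\mu\defeq\kappa\cup\{a,\widetilde{a}\}=\lambda\cup\widetilde{\lambda}$. The union of two partitions, taken row-wise as the maximum, is again a partition, so $\mu$ is a partition of $n$ containing both $\lambda$ and $\widetilde{\lambda}$. Each of them differs from $\mu$ by one cell, so $\lambda,\widetilde{\lambda}\in\N^-(\mu)$.

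For ($\Leftarrow$), write $\mu=\lambda+b=\widetilde{\lambda}+\widetilde{b}$. Because $\lambda\neq\widetilde{\lambda}$, we have $b\neq\widetilde{b}$. I take $\kappa\defeq\lambda\cap\widetilde{\lambda}=\mu\setminus\{b,\widetilde{b}\}$. The intersection, taken row-wise as the minimum, is a partition. It has $n-2$ cells, since $|\lambda\cap\widetilde{\lambda}|=|\lambda|+|\widetilde{\lambda}|-|\lambda\cup\widetilde{\lambda}|=2(n-1)-n$, using $\lambda\cup\widetilde{\lambda}=\mu$. So $\kappa\pt n-2$ and $\kappa\in\N^-(\lambda)\cap\N^-(\widetilde{\lambda})$. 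Alternatively, I could cite \cref{lemma: RC of lambda + a}: $\widetilde{b}\in\RC(\mu)\setminus\{b\}\subseteq\RC(\lambda)$, so $\kappa=\lambda-\widetilde{b}$ is valid directly.

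For uniqueness, let $\lambda\neq\widetilde{\lambda}$. Any common predecessor $\kappa$ satisfies $\kappa\subseteq\lambda\cap\widetilde{\lambda}$. The intersection has exactly $n-2$ cells, since $\lambda\cup\widetilde{\lambda}$ strictly contains $\lambda$ and so has at least $n$ cells. Cardinality then forces $\kappa=\lambda\cap\widetilde{\lambda}$. Dually, any common successor satisfies $\mu\supseteq\lambda\cup\widetilde{\lambda}$. The union has at most $n$ cells, because the intersection contains $\kappa$ and so has at least $n-2$ cells. Again by cardinality, $\mu=\lambda\cup\widetilde{\lambda}$.

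The only delicate point is the cell count $|\lambda\cap\widetilde{\lambda}|=n-2$, equivalently $|\lambda\cup\widetilde{\lambda}|=n$. In each direction it follows from inclusion–exclusion together with the existence of the given $\kappa$ or $\mu$, so I expect no real obstacle. I also need the fact that row-wise max and min of partitions are again partitions, which holds because both preserve weak decrease.
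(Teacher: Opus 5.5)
Your proof is correct, but it takes a different route from the paper's.

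\textbf{The paper's route.} It works with addable and removable cells. For ($\Rightarrow$) it writes $\lambda=\kappa+r$ and $\widetilde{\lambda}=\kappa+\widetilde{r}$. It then uses \cref{lemma: AC of lambda - r} to show $r\in\AC(\widetilde{\lambda})$ and $\widetilde{r}\in\AC(\lambda)$, and checks on cell sets that $\lambda+\widetilde{r}=\widetilde{\lambda}+r$. The converse mirrors this with \cref{lemma: RC of lambda + a}; your ``alternatively'' remark for ($\Leftarrow$) is exactly that argument. Uniqueness is largely asserted, from the observation that the cells are determined once $\kappa$ or $\mu$ is fixed.

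\textbf{Your route.} You use the lattice structure of Young diagrams under inclusion, taking $\mu=\lambda\cup\widetilde{\lambda}$ and $\kappa=\lambda\cap\widetilde{\lambda}$ (row-wise max and min). This has three advantages:
\begin{itemize}
\item existence is immediate;
\item you need no auxiliary lemmas about how $\AC$ and $\RC$ change when a cell is added or removed;
\item you get a complete uniqueness argument by cardinality, which the paper leaves implicit.
\end{itemize}
The paper's cell-level bookkeeping gives the explicit identification $\mu\setminus\widetilde{\lambda}=\lambda\setminus\kappa$, which it later needs for the detour blocks $\mathrm{A}_{\lambda,\widetilde{\lambda}}$ (the Yamanouchi entry $p_{n-1}$ coincides with $q_n$). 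Your construction yields this just as directly: $\mu=\kappa\sqcup\{a,\widetilde{a}\}$ gives $\mu\setminus\widetilde{\lambda}=\{a\}=\lambda\setminus\kappa$.

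\textbf{A small presentational point.} In your dual uniqueness step, the bound that forces $\mu=\lambda\cup\widetilde{\lambda}$ is the lower bound $|\lambda\cup\widetilde{\lambda}|\ge n$. You state it one sentence earlier, and it follows from $\lambda\subsetneq\lambda\cup\widetilde{\lambda}$. The upper bound $|\lambda\cup\widetilde{\lambda}|\le n$, which you justify via $\kappa$, is already automatic from $\mu\supseteq\lambda\cup\widetilde{\lambda}$. You can also skip inclusion--exclusion entirely:
\begin{itemize}
\item $\lambda\subsetneq\lambda\cup\widetilde{\lambda}\subseteq\mu$ with $|\mu|=|\lambda|+1$ forces $\mu=\lambda\cup\widetilde{\lambda}$;
\item $\kappa\subseteq\lambda\cap\widetilde{\lambda}\subsetneq\lambda$ with $|\kappa|=|\lambda|-1$ forces $\kappa=\lambda\cap\widetilde{\lambda}$.
\end{itemize}
Both strict inclusions hold because $\lambda\neq\widetilde{\lambda}$ have the same size.
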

In this lemma, we consider $\varnothing \pt 0$ as a node, too.
\cref{fig: diamond lemma} shows the corresponding subgraph of the Bratteli diagram when $\lambda \neq \widetilde{\lambda}$.

\begin{proof}
In this proof, we will identify $\kappa \pt n{-}2$ with a removable cell $r$ in $\RC(\lambda)$ and a removable cell $\widetilde{r}$ in $\RC(\widetilde{\lambda})$, so that $\kappa = \lambda - r = \widetilde{\lambda} - \widetilde{r}$.
Similarly, define addable cells $a \in \AC(\lambda)$ and $\widetilde{a} \in \AC(\widetilde{\lambda})$ such that $\mu = \lambda + a = \widetilde{\lambda} + \widetilde{a}$.
These cells are also indicated in \cref{fig: diamond lemma}.

First, consider $\lambda = \widetilde{\lambda}$.
Then, $\RC(\lambda) = \RC(\widetilde{\lambda})$ and $\AC(\lambda) = \AC(\widetilde \lambda)$.
As $\lambda \pt n{-}1$ with $n \geq 2$, we know that $\lambda$ one cell, so that $|\RC(\lambda)| \geq 1$.
By \cref{lemma: AC RC cardinalities}, this implies that $|\AC(\lambda)| \geq 2$. 
As we can always pick an $r \in \RC(\lambda)$ and $a \in \AC(\lambda)$, we can always find $\kappa$ and $\mu$ as in the lemma, and both sides of the logical equivalence are always true.

When $\lambda \neq \widetilde{\lambda}$, it must hold that $r \neq \widetilde{r}$ and that $a \neq \widetilde{a}$. 
First, assume that there exists a $\kappa \pt n{-}2$ such that $\kappa \to \lambda, \widetilde{\lambda}$.
This uniquely identifies $r$ and $\widetilde{r}$. 
As $\lambda = \kappa + r$, we know that $r \in \AC(\lambda)$.
By \cref{lemma: AC of lambda - r}, we know that $\AC(\kappa) \setminus \{\widetilde{r}\} \subseteq \AC(\widetilde{\lambda})$.
Therefore, $r \in \AC(\widetilde{\lambda})$.
By symmetry, it also holds that $\widetilde{r} \in \AC(\lambda)$.

Integer partitions can be uniquely identified by their set of cells.
We see that $\Cells(\lambda + \widetilde{r}) = \Cells(\lambda) \cup \{\widetilde{r}\} = \Cells(\kappa + r) \cup \{\widetilde{r}\} = \Cells(\kappa) \cup \{r, \widetilde{r}\} = \Cells(\kappa + \widetilde{r}) \cup \{r\} = \Cells(\widetilde{\lambda}) \cup \{r\} = \Cells(\widetilde{\lambda} + r)$.
We may define $\mu = \lambda + \widetilde{r} = \widetilde{\lambda} + r$, which gives us the unique addable cells $a = \widetilde{r}$ and $\widetilde{a} = r$.

Second, assume that there exists a $\mu \pt n$ such that $\lambda, \widetilde{\lambda} \to \mu$.
This uniquely identifies $a$ and $\widetilde{a}$. 
As $\lambda + a = \mu$, we know that $a \in \RC(\mu)$.
By \cref{lemma: RC of lambda + a}, we know that $\RC(\mu) = \RC(\widetilde{\lambda} + \widetilde{a}) \subseteq \RC(\widetilde{\lambda}) \cup \{\widetilde{a}\}$. 
Therefore, $a \in \RC(\widetilde{\lambda})$.
By symmetry, it also holds that $\widetilde{a} \in \RC(\lambda)$.
Like before, we can show that $\Cells(\lambda - \widetilde{a}) = \Cells(\widetilde{\lambda} - a)$.
We find $\kappa = \lambda - \widetilde{a} = \widetilde{\lambda} - a$, which uniquely identifies $r = \widetilde{a}$ and $\widetilde{r} = a$.
\end{proof}

We will also use Lemma 4.3 of \cite{DiagramCombinatorics}, which is as follows.
\begin{lemma}
    Just like in \cref{lemma: diamond}, let $\lambda, \widetilde{\lambda} \pt n{-}1$, $\kappa \pt n{-}2$ and $\mu \pt n$ such that $\kappa \to \lambda, \widetilde{\lambda} \to \mu$ and assume that $\lambda \neq \widetilde{\lambda}$.
    Then, it holds that 
    \begin{equation*}
        \frac{d_\lambda d_{\widetilde{\lambda}}}{d_\mu d_\kappa} = \frac{n}{n-1}\left(1 - \frac{1}{(\cont(a) - \cont(r))^2}\right)
    \end{equation*}
    Here, we have $a \in \AC(\lambda)$ and $r \in \RC(\lambda)$ such that $\lambda + a = \mu$ and $\lambda - r = \kappa$.
\end{lemma}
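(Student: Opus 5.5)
The plan is to reduce the identity to the dimension-ratio formula of \cref{lemma: twiddle factor}, applied twice, and then cancel common factors. Write $\mu = \lambda + a = \widetilde{\lambda} + \widetilde{a}$ and $\kappa = \lambda - r = \widetilde{\lambda} - \widetilde{r}$. By the diamond lemma (\cref{lemma: diamond}) we have $a = \widetilde{r}$ and $\widetilde{a} = r$, so $\widetilde{\lambda} = \kappa + a$ and $\lambda = \kappa + r$. I would then split the left-hand side as
\begin{equation*}
\frac{d_\lambda d_{\widetilde{\lambda}}}{d_\mu d_\kappa} = \frac{d_{\kappa+a}}{d_\kappa} \cdot \frac{d_\lambda}{d_{\lambda+a}}.
\end{equation*}
This writes it as a quotient of two twiddle factors: one at level $n-1$ (adding $a$ to $\kappa$) and one at level $n$ (adding $a$ to $\lambda$).

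Apply \cref{lemma: twiddle factor} to both factors. For $\lambda \pt n-1$ and $a \in \AC(\lambda)$, the ratio $d_{\lambda+a}/(n d_\lambda)$ is a product over $\RC(\lambda)$ divided by a product over $\AC(\lambda) \setminus \{a\}$, with all factors of the form $\cont(x) - \cont(a)$. For $\kappa \pt n-2$ and $a \in \AC(\kappa)$, the analogous formula holds with $n-1$ in place of $n$. The key combinatorial input is how the cell sets change between $\kappa$ and $\lambda = \kappa + r$. By \cref{lemma: AC of lambda - r} and \cref{lemma: RC of lambda + a} (together with the row-by-row description of addable and removable cells), passing from $\kappa$ to $\kappa + r$ has the following effect. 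The cell $r$ leaves $\AC$ and enters $\RC$. At most two neighbouring cells change: the addable cell of $\kappa$ in the row of $r$ is replaced by one in the next row, or is deleted, and similarly for the removable cell in the row above. I would verify case by case that the contents of the cells that appear and disappear satisfy a telescoping pattern. The net effect should be that the two product formulas share every factor except those involving $\cont(r) \pm 1$ and $\cont(r)$ itself.

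After cancellation, the quotient of the two twiddle factors should collapse to an expression involving $\delta := \cont(a) - \cont(r)$ only. Concretely, I expect
\begin{equation*}
\frac{(\delta-1)(\delta+1)}{\delta^2} = 1 - \frac{1}{\delta^2},
\end{equation*}
multiplied by the ratio of the normalizations $n$ and $n-1$ coming from the two applications of \cref{lemma: twiddle factor}.

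The main obstacle is the bookkeeping of this scalar prefactor. Which of $n/(n-1)$ or $(n-1)/n$ emerges depends on exactly which level each dimension sits at, and on which side of the fraction the $n$ from \cref{lemma: twiddle factor} ends up. Before finalizing, I would carry out a sanity check on the smallest case. For $\kappa = (1)$, $\lambda = (2)$, $\widetilde{\lambda} = (1,1)$ and $\mu = (2,1)$, the dimensions are $1, 1, 1, 2$, so the left-hand side equals $1/2$. I would compare this against the right-hand side with $\delta = -2$ to confirm the constant and the convention for $n$ (whether $\lambda \pt n-1$, as stated, or shifted by one level). I expect this check to be where the argument needs the most care: with $\lambda \pt n-1$, the computation above gives the prefactor $(n-1)/n$, so the printed $n/(n-1)$ must be reconciled with the conventions of \cite{DiagramCombinatorics}. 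The content cancellation itself is routine once the cases for the rows adjacent to $r$ and $a$ are enumerated.
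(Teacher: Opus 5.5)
Your computation is correct, and you should commit to it rather than try to reconcile it with the printed statement: the printed prefactor is wrong. Finishing your own sanity check, for $\kappa=(1)$, $\lambda=(2)$, $\widetilde{\lambda}=(1,1)$, $\mu=(2,1)$ we have $n=3$ and $\delta:=\cont(a)-\cont(r)=-1-1=-2$. The left-hand side is $\tfrac12$, the printed right-hand side is $\tfrac32\cdot\tfrac34=\tfrac98$, and $\tfrac{n-1}{n}(1-\delta^{-2})=\tfrac23\cdot\tfrac34=\tfrac12$.

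In fact the printed identity fails for every admissible quadruple. The correct prefactor is $|\lambda|/|\mu|<1$, whereas $n/(n-1)>1$ whichever level $n$ is taken to label, and $1-\delta^{-2}\neq 0$. The $(n-1)/n$ version is also what the paper's own derivation requires. In the detour case of the proof of \cref{thm: second induction}, the quantity $\sqrt{\tfrac{(n-1)d_\mu d_\kappa}{n\,d_\lambda d_{\widetilde{\lambda}}}}\sqrt{1-\delta^{-2}}$ must equal $1$.

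The paper gives no argument of its own here; it only cites Lemma~4.3 of \cite{DiagramCombinatorics}. Your route is a self-contained proof that uses only tools already in the paper. You use the diamond lemma to get $\widetilde{\lambda}=\kappa+a$, then apply \cref{lemma: twiddle factor} at levels $n-1$ and $n$. This shows the left-hand side equals $\tfrac{(n-1)\,T_\kappa(a)}{n\,T_\lambda(a)}$, where $T_\nu(a)$ denotes the right-hand side of that lemma.

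The one step to tighten is the change of cell sets from $\kappa$ to $\lambda=\kappa+r$. Your description is garbled, since the addable cell of $\kappa$ in the row of $r$ is $r$ itself. Also, \cref{lemma: AC of lambda - r} and \cref{lemma: RC of lambda + a} only give one-sided inclusions. The clean statement is as follows:
\begin{itemize}
\item $r$ moves from $\AC$ to $\RC$, contributing $(\cont(r)-\cont(a))^2$ to $T_\lambda(a)/T_\kappa(a)$;
\item at content $\cont(r)+1$, either $r+(1,0)$ becomes addable or $r-(0,1)$ stops being removable;
\item at content $\cont(r)-1$, either $r+(0,1)$ becomes addable or $r-(1,0)$ stops being removable;
\item nothing else changes.
\end{itemize}
In each case both alternatives contribute the same factor $1/(\cont(r)\pm1-\cont(a))$, so no case split is needed. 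You get $T_\lambda(a)/T_\kappa(a)=\delta^2/(\delta^2-1)$, hence the left-hand side equals $\tfrac{n-1}{n}(1-\delta^{-2})$.

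Finally, record that $|\delta|\ge 2$, because distinct addable cells of $\kappa$ have contents at least $2$ apart. This guarantees that $a$ is not among the changed cells and that no factor vanishes.
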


\subsubsection{Proof of \texorpdfstring{\cref{lemma: extended module isomorphism}}{extended module isomorphism}}
\begin{proof}
    By definition of a $\C[H]$-module isomorphism, we that know that $\phi$ is an invertible linear map, and that it is compatible with the action of $H$.
    Hence, for all $h \in H$ and $\ket{v} \in V$, we have that 
    \begin{equation}
        \phi(h \cdot_V \ket{v}) = h \cdot_W \phi \ket{v}
    \end{equation}
    First, for the $\C[G]$-module structure on $W$, define the action of $g \in G$ on $\ket{w} \in W$ as 
    \begin{equation}
        g \cdot_W \ket{w} \defeq \phi(g\cdot_V \phi^{-1}\ket{w}).
    \end{equation}
    This defines a $\C[G]$-module structure, as 
    \begin{enumerate}
        \item the action is a composition of linear maps $\phi^{-1}$, $\cdot_V$ and $\phi$ and therefore linear;
        \item it is compatible with the identity, because for all $\ket{w} \in W$, we have $e \cdot_W \ket{w} = \phi(e \cdot_V \phi^{-1}\ket{w}) = \phi( \phi^{-1}\ket{w}) = \ket{w}$;
        \item it is compatible with multiplication in $G$, because for all $g_1, g_2 \in G$ and $\ket{w} \in W$, we have that 
        \begin{align}
            (g_1 g_2)\cdot_{W}\ket{w} &= \phi((g_1 g_2)\cdot_{V}\phi^{-1}\ket{w})\\
            &= \phi(g_1 \cdot_V (g_2\cdot_{V}\phi^{-1}\ket{w}))\\
            &=  \phi(g_1 \cdot_V \phi^{-1}\phi(g_2\cdot_{V}\phi^{-1}\ket{w}))\\
            &= \phi(g_1 \cdot_V \phi^{-1} (g_2\cdot_{W}\ket{w}))\\
            &= g_1 \cdot_W (g_2\cdot_{W}\ket{w}).
        \end{align}
    \end{enumerate}
    Second, to show that this extends the $\C[H]$-module structure of $W$, denote the original action of $H$ on $W$ by $\cdot^{\text{old}}$.
    Then, for $h \in H \subset G$ and $\ket{w} \in W$, we have 
    \begin{align}
        h \cdot_W \ket{w} = \phi(h \cdot_V \phi^{-1}\ket{w}) = h \cdot_W^{\text{old}} \phi(\phi^{-1}\ket{w}) =h \cdot_W^{\text{old}} \ket{w} .
    \end{align}
    Third, for $\ket{v} \in V$ and $g \in G$, it holds that $ \phi(g\cdot_V \ket{v}) = \phi(g\cdot_V (\phi^{-1} \phi \ket{v})) = g \cdot_W \phi \ket{v}$, which shows that $\phi$ is compatible with the action of $G$ and therefore a $\C[G]$-module isomorphism.
\end{proof}

\subsubsection{Proof of \texorpdfstring{\cref{thm: second induction}}{second induction}}
\begin{proof}
    For the three isomorphisms $\IndMap[\lambda][n]$, $\mathrm{M}_n \otimes \mathrm{I}_{d_\lambda}$ and $\mathrm{I}_{d_\lambda} \oplus \left(\bigoplus_{\kappa \in \N^-(\lambda)}\IndMap[\kappa][n-1]\right)$, we use \cref{lemma: extended module isomorphism} applied to \cref{lemma: commutative diagram for embedding operation} to get the $\C[S_n]$-module structure.
    Then, we know that 
    \begin{equation*}
        \bigoplus_{\widetilde{\lambda}} V_{\widetilde{\lambda}} \otimes M^-_{\lambda, \widetilde{\lambda}} \simeq \bigoplus_{\widetilde{\lambda}} V_{\widetilde{\lambda}} \otimes M^+_{\lambda, \widetilde{\lambda}}
    \end{equation*}
    as $\C[S_n]$-modules.
    However, contrary to \cref{lemma: commutative diagram for embedding operation}, $ \bigoplus_{\widetilde{\lambda}} V_{\widetilde{\lambda}} \otimes M^-_{\lambda, \widetilde{\lambda}}$ is now not a direct sum of simple $\C[S_n]$-modules. 

    Still, the $\C[S_n]$-module structure should extend the $\C[S_{n-1}]$ modules by \cref{lemma: extended module isomorphism}, and therefore the map $\mathrm{A}$ should still have the form of \cref{eq: block structure A} as in \cref{lemma: commutative diagram for embedding operation}.
    To make the diagram commute, we will need to find matrices $\mathrm{A}_{\lambda, \widetilde{\lambda}}$ satisfying
    \begin{equation}\label{eq: objective for finding entries of A}
        \IndMap[\lambda][n] = \Big(\bigoplus_{\widetilde{\lambda} \in \widehat{S}_{n-1}} \mathrm{I}_{d_{\widetilde{\lambda}}} \otimes \mathrm{A}_{\lambda, \widetilde{\lambda}} \Big)\Big(\mathrm{I}_{d_\lambda} \oplus \bigoplus_{\kappa \in \N^-(\lambda)} \IndMap[\kappa][n-1]\Big) \left(\mathrm{M}_n \otimes \mathrm{I}_{d_{\lambda}}\right)
    \end{equation}
    The linear maps both sides of this equation have domain $\C[\T_n] \otimes V_\lambda$. 
    The map on the right hand side will act differently on basis vectors $\ket{t}\otimes\ket{P}$ according to the $S_{n-1}$-double coset to which $t$ belongs.
    As $\T_n = \T_{n-1}\sigma_{n-1} \sqcup \{e\}$ by \cref{prop: Mackey transversals for Sn}, we will split up the rest of this proof into the two corresponding cases, starting with $t = e \in \{e\}$.
    In both cases, we use that all maps besides $\mathrm{A}$ are orthogonal, and therefore $\mathrm{A}$ should be orthogonal, too.

\subsubsection*{Case 1: \texorpdfstring{$t = e$}{t is e}}
    
    By \cref{def: IndMap} of $\IndMap[\lambda]$, the entries of the LHS of \cref{eq: objective for finding entries of A} are
    \begin{equation*}
        [\IndMap[\lambda][n]]_{n, p}^{q, \mu} = \sqrt{\frac{d_\mu}{n d_\lambda}} [\Rep_\mu(e)]^{q}_{p \to \mu} = \sqrt{\frac{d_\mu}{n d_\lambda}} \delta^{q}_{p \to \mu}.
    \end{equation*}
    On the RHS, Mackey decomposition changes the transversal label into the special symbol $\star$.
    The second map acts on this subspace as identity. 
    Equating both sides of the equation gives
    \begin{equation}
        \sqrt{\frac{d_\mu}{n d_\lambda}}\delta^{q}_{p \to \mu} = \delta^{q'}_{p}[\mathrm{A}_{\lambda, \lambda}]_{\star}^{\mu}.
    \end{equation}
    Taking the partial trace over $V_{\lambda}$ and dividing by $d_{\lambda}$ gives
    \begin{equation}
        [\mathrm{A}_{\lambda, \lambda}]_{\star}^{\mu} = \sqrt{\frac{d_\mu}{n d_\lambda}}.
    \end{equation}

\subsubsection*{Case 2: \texorpdfstring{$t \in \T_{n-1} \sigma_{n-1}$}{t in T n minus 2 sigma n minus one}}
    
    \noindent In this case, we have transversal element $t = t_{i}^{(n)} = t_{i}^{n-1}\sigma_{n-1}$, where $i < n$.
    First, we will consider the case of \emph{detours}, where $\widetilde{\lambda} \neq \lambda$.
    In this case, we see that the multiplicity spaces $M^-_{\lambda, \widetilde{\lambda}} = \C^{\mathcal{N}^-(\lambda) \cap \N^-(\widetilde{\lambda})}$ on the LHS and $M^+_{\lambda, \widetilde{\lambda}} = \C^{\mathcal{N}^+(\lambda) \cap \N^+(\widetilde{\lambda})}$ on the RHS are both at most one-dimensional, by \cref{lemma: diamond}.
    To find matrix entries, we only need to consider the one-dimensional case.
    Then, the only possible entries of $\mathrm{A}_{\lambda, \widetilde{\lambda}}$ making the matrix orthogonal are 
    \begin{equation}\label{eq: detour embedding is plusminus one}
        \lambda \neq \widetilde{\lambda} \Rightarrow [\mathrm{A}_{\lambda, \widetilde{\lambda}}]^\mu_{\kappa} = \pm 1.
    \end{equation}
    In the next part of the proof, we will see that we always have that $\lambda \neq \widetilde{\lambda}$ implies that $[\mathrm{A}_{\lambda, \widetilde{\lambda}}]^{\mu}_\kappa = 1$.

    Now, consider \cref{eq: objective for finding entries of A} again.
    As outer indices, fix transversal element $t = t^{(n)}_{i}$ with $i < n$.
    The Mackey relabeling on the RHS maps $t^{(n)}_{i}$ to $t^{(n-1)}_{i}$.
    Further, let Yamanouchi word $p$ encode a path $P \in \Path(\lambda)$ such that $P' \in \Path(\kappa)$ for some $\kappa \in \N^-(\lambda)$ and let $q$ encode $Q \in \Path(\mu)$ for some $\mu \in \N^{+}(\lambda)$ with $Q' \in \Path(\widetilde{\lambda})$ for some $\widetilde{\lambda} \in \widehat{S}_{n-1}$.
    This is summarized in \cref{fig:induction2_paths}.    
    \begin{figure}[H]
       \centering
       \includegraphics[]{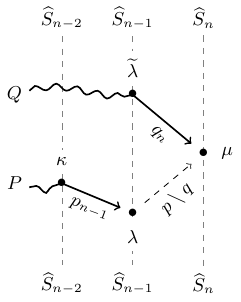}
       \caption{Visualization of the paths appearing in the proof of \cref{thm: second induction}. 
        Here, $\lambda$ and $\widetilde{\lambda}$ are drawn as distinct for clarity, but note that thet may also coincide.}
       \label{fig:induction2_paths}
    \end{figure}
    
    From \cref{eq: objective for finding entries of A}, we find that 
    \begin{align}
         \sqrt{\frac{d_\mu}{n d_\lambda}}[\Rep_\mu (t^{(n)}_i)]^{q}_{p \to \mu} &= \sum_{\widetilde{\lambda}' \in \widehat{S}_{n-1}}\sum_{u : \in \Path(\widetilde{\lambda}')}[\mathrm{A}_{\lambda, \widetilde{\lambda}'}]^{\mu}_{\kappa}\delta^{q'}_{u} \sqrt{\frac{d_{\widetilde{\lambda}'}}{(n{-}1)d_\kappa}} [\Rep_{\widetilde{\lambda}'}(t^{(n-1)}_{i})]^{u}_{p' \to \widetilde{\lambda}'}\\
         &= [\mathrm{A}_{\lambda, \widetilde{\lambda}}]^{\mu}_{\kappa}\sqrt{\frac{d_{\widetilde{\lambda}}}{(n{-}1)d_\kappa}} [\Rep_{\widetilde{\lambda}}(t^{(n-1)}_{i})]^{q'}_{p' \to \widetilde{\lambda}}
    \end{align}
    Here, the entry $p_{n-1}$ makes up the difference between $\kappa$ and $\lambda$.
    As $\lambda$ is fixed, this means that $p_{n-1}$ identifies the input $\kappa$ of $\mathrm{A}_{\lambda, \widetilde{\lambda}}$. 
    If we left-multiply with $\Rep_{\widetilde{\lambda}}((t_i^{(n-1)})^{-1})$, we get
    \begin{align}
         &\sum_{q' \in \Path(\widetilde{\lambda})} [\Rep_{\widetilde{\lambda}}((t^{(n-1)}_i)^{-1})]^{u}_{q'} \sqrt{\frac{d_\mu}{n d_\lambda}}[\Rep_\mu (t^{(n)}_i)]^{q' \to \mu}_{p \to \mu} 
         =\\
         & \qquad \qquad \sum_{q' \in \Path(\widetilde{\lambda})} [\Rep_{\widetilde{\lambda}}((t^{(n-1)}_i)^{-1})]^{u}_{q'} [\mathrm{A}_{\lambda, \widetilde{\lambda}}]^{\mu}_{\kappa}\sqrt{\frac{d_{\widetilde{\lambda}}}{(n{-}1)d_\kappa}} [\Rep_{\widetilde{\lambda}}(t^{(n-1)}_{i})]^{q'}_{p' \to \widetilde{\lambda}}\notag
    \end{align}
    On the RHS, we can use group homomorphism directly. 
    On the LHS, we can replace 
    \begin{equation}
        [\Rep_{\widetilde{\lambda}}((t^{(n-1)}_i)^{-1})]^{u}_{q'} = [\Rep_{\mu}((t^{(n-1)}_i)^{-1})]^{u \to \mu}_{q' \to \mu},
    \end{equation}
    as $\Rep_{\mu}$ is subgroup-adapted and $\mu \in \N^+(\widetilde{\lambda})$.
    Moreover, we can then replace the sum over $q' \to \mu$ with $q' \in \Path(\widetilde{\lambda})$ by a sum over $q \in \Path(\mu)$, because all entries where such $q$ does not go through $\widetilde{\lambda}$ will be zero. 
    Then, we can use group homomorphism on the LHS, too.
    If we also move the scalar factors to the LHS, we find
    \begin{align}
         \sqrt{\frac{(n{-}1)d_\mu d_\kappa}{n d_\lambda d_{\widetilde{\lambda}}}}[\Rep_\mu ((t^{(n-1)}_i)^{-1}t^{(n)}_i)]^{u \to \mu}_{p \to \mu} 
         &= [\mathrm{A}_{\lambda, \widetilde{\lambda}}]^{\mu}_{\kappa} [\Rep_{\widetilde{\lambda}}((t^{(n-1)}_i)^{-1}t^{(n-1)}_{i})]^{u}_{p' \to \widetilde{\lambda}}
    \end{align}
    For the RHS, note that $(t^{(n-1)}_i)^{-1}t^{(n-1)}_{i} = e$.
    For the LHS, note that $t^{(n)}_i = t^{(n-1)}_i \sigma_{n-1}$ by \cref{prop: Mackey transversals for Sn}.
    Hence, we have $(t^{(n-1)}_i)^{-1}t^{(n)}_{i} = (t^{(n-1)}_i)^{-1}t^{(n-1)}_{i}\sigma_{n-1} = \sigma_{n-1}$.
    We find that 
    \begin{align}
         \sqrt{\frac{(n{-}1)d_\mu d_\kappa}{n d_\lambda d_{\widetilde{\lambda}}}}[\Rep_\mu (\sigma_{n-1})]^{u \to \mu}_{p \to \mu} 
         &= [\mathrm{A}_{\lambda, \widetilde{\lambda}}]^{\mu}_{\kappa} \delta^{u'}_{p'}
    \end{align}
    Now, we can insert Young's orthogonal form for $[\Rep_\mu (\sigma_{n-1})]^{u \to \mu}_{p \to \mu}$.
    Using that we can express the Manhattan distance in terms of contents as $r_{P \to \mu}(n{-}1) = \cont(\mu {\setminus} \lambda) - \cont(\lambda{\setminus} \kappa)$, Young's orthogonal form gives
    \begin{equation}
        [\Rep_\mu (\sigma_{n-1})]^{u \to \mu}_{p \to \mu} = \delta^{u'}_{p'}\begin{cases}
            \frac{1}{\cont(\mu {\setminus} \lambda) -\cont(\lambda {\setminus} \kappa)} &\text{if } \lambda = \widetilde{\lambda},\\
            \sqrt{1-\frac{1}{(\cont(\mu {\setminus} \lambda) -\cont(\lambda {\setminus} \kappa ))^2}} &\text{if } \lambda \neq \widetilde{\lambda}.
        \end{cases}
    \end{equation}
    For detours, note that $\sqrt{1-\frac{1}{(\cont(\mu {\setminus} \lambda) -\cont(\lambda {\setminus} \kappa ))^2}} \geq 0$, meaning that \cref{eq: detour embedding is plusminus one} gives entries of $1$ for the detour case. 

    When $\lambda = \widetilde{\lambda}$, we find 
    \begin{align}
         [\mathrm{A}_{\lambda, \widetilde{\lambda}}]^{\mu}_{\kappa} \delta^{u'}_{p'} = \sqrt{\frac{(n{-}1)d_\mu d_\kappa}{n d_\lambda^2}}\delta^{u'}_{p'}
            \frac{1}{\cont(\mu {\setminus} \lambda) -\cont(\lambda {\setminus} \kappa)}.
    \end{align}
    Taking the partial trace over $V_\kappa$, and dividing by $d_\kappa$ gives
    \begin{align}
         [\mathrm{A}_{\lambda, \widetilde{\lambda}}]^{\mu}_{\kappa} = \sqrt{\frac{(n{-}1)d_\mu d_\kappa}{n d_\lambda^2}}
            \frac{1}{\cont(\mu {\setminus} \lambda) -\cont(\lambda {\setminus} \kappa)}.
    \end{align}
   This concludes our proof.
\end{proof}

\subsubsection*{Proof for \texorpdfstring{\cref{thm: second induction circuit}}{second induction circuit}}\label{app: proof second induction circuit}

\begin{proof}      
    We prove that for $1 \leq m \leq n$, the circuit implements $\IndMap[m]$ with the extra flag register as in \cref{prop: second induction induction step circuit}. 
    We do this by induction on $m$. 
    When $m = n$, the flag register is always $\ket{1}$ and can be uncomputed as in the circuit in \cref{fig:2nd induction cycle version}. 

    \subsubsection*{Base case: \texorpdfstring{$m=1$}{m is 1}}
    
    For the base case, we aim to implement a gate that acts as 
    \begin{align}
        i = 1 &\Rightarrow &\IndMap[][1] \ket{1} &= \ket{1}\ket{q_1 = 1}\ket{\lambda=(1)} &\\
        i > 1 &\Rightarrow &\IndMap[][1] \ket{i} &= \ket{i}\ket{q_1 = \star}\ket{\lambda = \varnothing}&
    \end{align}
    The circuit for the base case is as in \cref{fig:2nd induction cycle version base case}.
    \begin{figure}[H]
        \centering
        \includegraphics[]{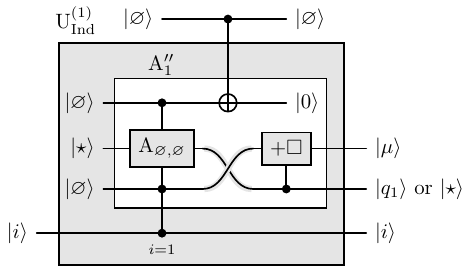}
        \caption{Circuit of \cref{fig:2nd induction cycle version} for the base case of $m=1$.}
        \label{fig:2nd induction cycle version base case}
    \end{figure}
    When $i = 1$, the one-dimensional gate $\mathrm{A}_{\varnothing,\varnothing} : \C^{\{\star\}} \to \C^{\{1\}}$ maps $\star$ to the only possible Yamanouchi word entry $q_1 = 1$ encoding the unique path in $\Path((1))$, with end point $(1) \in \widehat{S}_{m} = \widehat{S}_1$. 
    Hence, $\ket{\mu}$ becomes $\ket{(1)}$ at the output, showing that the circuit correctly implements $\IndMap[][1]$ for $i = 1$.
    For $i > 1$, the gate $\mathrm{A}_{\varnothing,\varnothing}$ is switched off, and we keep the special symbol $\star$ at the output $\ket{q_1}$. 
    As $+\square$ acts trivially, too, we get $\mu = \varnothing$, showing the correctness of the circuit in the base case.

    \subsubsection*{Induction step: \texorpdfstring{$m > 1$}{m gt 1}}

    For the induction step, assume the the circuit for $m {-} 1$ correctly implements $\IndMap[][m-1]$ with the flag register. 
    Then, obtain the circuit for $\IndMap[][m]$ as in \cref{fig:2nd induction cycle version induction step}.
    \begin{figure}[H]
        \centering
        \includegraphics[]{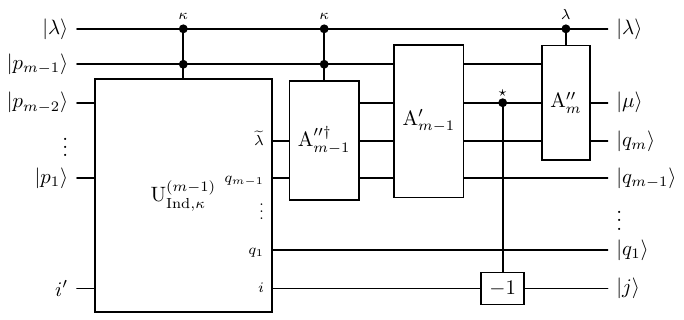}
        \caption{Circuit of \cref{fig:2nd induction cycle version} for $m>1$, where the induction hypothesis implements $\IndMap[][m-1]$ as in \cref{fig:2nd induction cycle version}.
        If we insert this circuit for $\IndMap[][m-1]$, note that canceling $\mathrm{A}''^\dagger_{m-1}$ with the last gate in the implementation of $\IndMap[][m-1]$ gives us the circuit of \cref{fig:2nd induction cycle version}, which should implement $\IndMap[][m]$.}
        \label{fig:2nd induction cycle version induction step}
    \end{figure}

    Let us now focus on the subcircuit after calling $\IndMap[][m-1]$.
    If we start by filling in the definitions from \cref{fig: A matrix wrapper definitions for second induction}, we find that the part after $\IndMap[][m-1]$ becomes as in \cref{fig: proof_full_mackey_circ_ind_step_pt_1}.
    \begin{figure}[H]
        \centering
        \includegraphics[width=\textwidth]{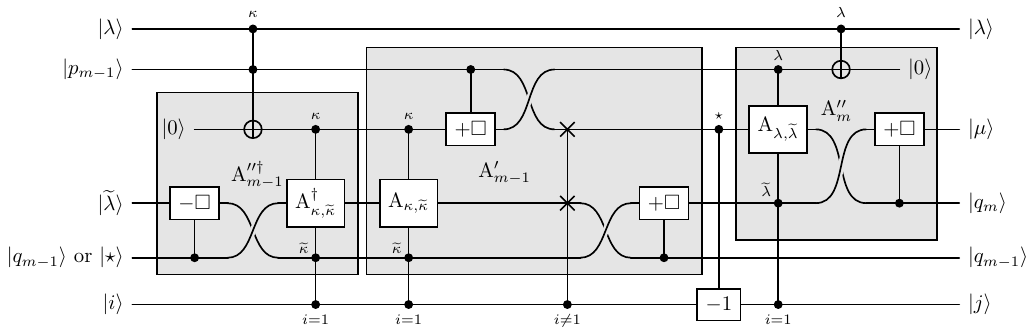}
        \caption{The last four gates of \cref{fig:2nd induction cycle version induction step}, with definitions inserted.}
        \label{fig: proof_full_mackey_circ_ind_step_pt_1}
    \end{figure}

    Then, note that $\mathrm{A}_{\kappa, \widetilde{\kappa}}^\dagger$ and $\mathrm{A}_{\kappa, \widetilde{\kappa}}$ cancel.
    Also, we can ``untangle'' the wires of $\ket{\widetilde{\lambda}}$ and $\ket{q_{m-1}}$, and find a simplified circuit as in \cref{fig: proof_full_mackey_circ_ind_step_pt_2}.
    \begin{figure}[H]
        \centering
        \includegraphics[]{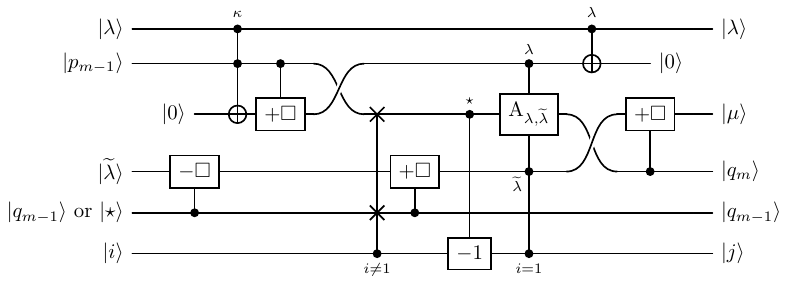}
        \caption{The circuit of \cref{fig: proof_full_mackey_circ_ind_step_pt_1}, after gate cancellations and untangling wires.}
        \label{fig: proof_full_mackey_circ_ind_step_pt_2}
    \end{figure}

    Recall that path $P = P'' \to \kappa \to \lambda$.
    Hence, the ancillary register is always set to $\ket{\kappa}$, and then to $\lambda$ by adding a box at row $p_{m-1}$. 
    As it is also uncomputed at the end of the circuit, we can completely remove it, and instead use the control register $\ket{\lambda}$ as a control for $\mathrm{A}_{\lambda, \widetilde{\lambda}}$.
    We obtain the circuit from \cref{fig: proof_full_mackey_circ_ind_step_pt_3}.

    \begin{figure}[H]
        \centering
        \includegraphics[]{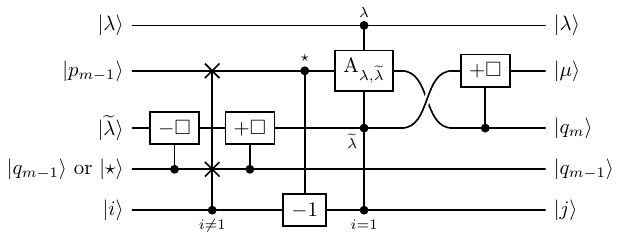}
        \caption{The circuit of \cref{fig: proof_full_mackey_circ_ind_step_pt_2} after removing the ancillary register.}
        \label{fig: proof_full_mackey_circ_ind_step_pt_3}
    \end{figure}

    Then, we commute the gate $+\square$ through the controlled swap as in \cref{fig: proof_full_mackey_circ_ind_step_pt_4}.

    \begin{figure}[H]
        \centering
        \includegraphics[]{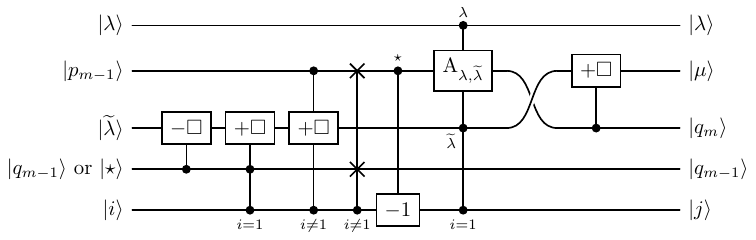}
        \caption{The circuit of \cref{fig: proof_full_mackey_circ_ind_step_pt_3} after commuting the $+\square$-gate through the controlled swap.}
        \label{fig: proof_full_mackey_circ_ind_step_pt_4}
    \end{figure}

    Recall that we did not draw the gate $\IndMap[][m-1]$ in this circuit, but that this gate determines the subspace on which we need to consider the action of our circuit.
    We defined $\IndMap[][m-1]$ such that we can only have $\ket{\star}$ in place of $\ket{q_{m-1}}$ when $\IndMap[m-1]$ sees an ``invalid'' input $\ket{i'}$ with $i' > m{-} 1$.
    In that case, $\ket{i}$ in \cref{fig: proof_full_mackey_circ_ind_step_pt_4} will be $i = i' - m + 2 > 1$.
    Hence, we have $i \neq 1$ if and only if $q_{m-1} = \star$.
    If $q_{m-1} = \star$, then $-\square$ acts trivially, which means that for our subspace, the gates $-\square$ and the first controlled $+\square$ in \cref{fig: proof_full_mackey_circ_ind_step_pt_4} cancel.
    We find the circuit as in \cref{fig: proof_full_mackey_circ_ind_step_pt_5}.

    \begin{figure}[H]
        \centering
        \includegraphics[]{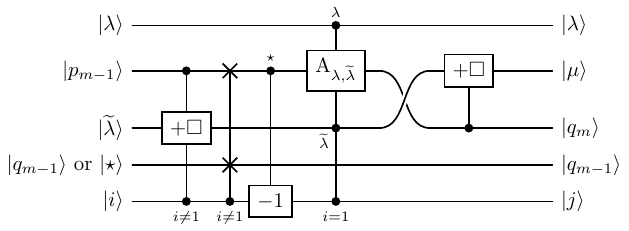}
        \caption{The circuit of \cref{fig: proof_full_mackey_circ_ind_step_pt_4} after using properties of the relevant subspace to cancel $-\square$ and the first $+\square$.}
        \label{fig: proof_full_mackey_circ_ind_step_pt_5}
    \end{figure}

    This is exactly the circuit at the end of $\cref{fig: second induction induction step}$. 
    If we put $\IndMap[][m-1]$ in front, we know by \cref{prop: second induction induction step circuit} that we have correctly implemented $\IndMap[][m]$.
\end{proof}

\subsubsection*{Proof of \texorpdfstring{\cref{prop: induction map with different transversals}}{second induction different transversals}}\label{app: proof induction map with different transversals}

\begin{proof}
    First, note that both $\T'_{n}$ and $\T_n$ contain exactly one representative for each left coset, by definition of a transversal. 
    Hence, for each $i \in [n]$, there is exactly one element $\tau_j \in \T'_n$ such that $t^{(n)}_i S_{n-1} = \tau_j S_{n-1}$, and we define $f(i) = j$.
    Since distinct $t^{(n)}_i$ represent distinct left cosets, $f$ is injective, and hence bijective since $[n]$ is finite.
    Then, we find the unique ``correcting'' subgroup element as $h^{(n)}_i = (t_{i}^{(n)})^{-1}\tau_{f(i)}^{(n)} \in S_{n-1}$ by equality of the cosets.

    Let $\mathrm{U}_{f^{-1}}$ implement $f^{-1}$ on the transversal register, so that for $j \in [n]$, we have
    \begin{equation}
        \mathrm{U}_{f^{-1}}\ket{j} = \ket{f^{-1}(j)}.
    \end{equation}
    Let $\mathrm{U}_{\mathrm{L}}$ apply irreps of $h^{(n)}_{i}$, conditioned on the transversal register, on the path register, so that for $P \in \Path(\lambda)$ with $\lambda \in \widehat{S}_{n-1}$ and $i \in [n]$, we have 
    \begin{equation}
        \mathrm{U}_{\mathrm{L}}\ket{i}\ket{P} = \ket{i}\Rep_{\lambda}\left(h^{(n)}_{i}\right)\ket{P}.
    \end{equation}
    
    For implementing $\IndMapDash[\lambda][n]$, recall from \cref{def: IndMap} that its entries are
    \begin{equation}
        \left[\IndMapDash[\lambda][n]\right]^{Q, \mu}_{j, P} = \sqrt{\frac{d_\mu}{nd_{\lambda}}} \left[\Rep_\mu(\tau^{(n)}_j)\right]^Q_{P \to \mu},
    \end{equation}
    with indices $j \in [n]$ labeling a transversal element $\tau^{(n)}_j \in \T'_n$, $P \in \Path(\lambda)$, $\mu \in \N^+(\lambda)$ and $Q \in \Path(\mu)$.
    From the definitions of $f : [n] \to [n]$ and $h^{(n)} : [n] \to S_{n-1}$, we have 
    \begin{equation}
        \tau^{(n)}_j = t_{f^{-1}(j)}^{(n)}h_{f^{-1}(j)}^{(n)}
    \end{equation}
    Then, we derive that
    \begin{align}
        \left[\IndMapDash[\lambda][n]\right]^{Q, \mu}_{j, P} &= \sqrt{\frac{d_\mu}{nd_{\lambda}}} \left[\Rep_\mu\left(t_{f^{-1}(j)}^{(n)}h_{f^{-1}(j)}^{(n)}\right)\right]^Q_{P \to \mu}&& \text{(\cref{eq: other choice of transversals})}\\
        &= \sum_{U \in \Path(\mu)}\sqrt{\frac{d_\mu}{nd_{\lambda}}} \left[\Rep_\mu\left(t_{f^{-1}(j)}^{(n)}\right)\right]^Q_{U}\left[\Rep_\mu\left(h_{f^{-1}(j)}^{(n)}\right)\right]^U_{P \to \mu} && \text{(group homomorphism)}\\
        &= \sum_{U \in \Path(\lambda)}\sqrt{\frac{d_\mu}{nd_{\lambda}}} \left[\Rep_\mu\left(t_{f^{-1}(j)}^{(n)} \right)\right]^Q_{U \to \mu}\left[\Rep_\lambda\left(h_{f^{-1}(j)}^{(n)}\right)\right]^{U}_{P} && \text{(} \Rep_\mu \text{ subgroup-adapted)}\\
        &= \sum_{U \in \Path(\lambda)}\left[\IndMap[\lambda][n]\right]^{Q, \mu}_{f^{-1}(j), U}\left[\Rep_\lambda\left(h_{f^{-1}(j)}^{(n)}\right)\right]^U_{P} && \text{(def. of } \IndMap[\lambda][n] \text{)}\\
        &= \sum_{U \in \Path(\lambda)}\sum_{i \in [n]}\left[\IndMap[\lambda][n]\right]^{Q, \mu}_{i, U}\left[\Rep_\lambda \left(h_{i}^{(n)}\right)\right]^U_{P}\left[\mathrm{U}_{f^{-1}}\right]^{i}_j && \text{(def. of } \mathrm{U}_{f^{-1}} \text{)}\\
        &= \sum_{U \in \Path(\lambda)}\sum_{i \in [n]}\sum_{k \in [n]}\left[\IndMap[\lambda][n]\right]^{Q, \mu}_{k, U}\left[\mathrm{U}_{\mathrm{L}}\right]^{k,U}_{i, P}\left[\mathrm{U}_{f^{-1}}\right]^{i}_j && \text{(def. of } \mathrm{U}_{\mathrm{L}} \text{)} 
    \end{align}
    This is exactly the decomposition of $\IndMapDash[\lambda][n]$ as in our proposition.
\end{proof}

\subsubsection*{Properties of \texorpdfstring{$\mathrm{A}_\lambda$}{A lambda}}

\addableCellsBoundLemma*
\begin{proof}
Let $k = |\AC(\lambda)|$, and let $\mu$ be an integer partition with $k$ addable cells having the minimal possible number of boxes.
As the number of addable cells is equal to the number of distinct part sizes of a partition plus $1$, we know that $\mu$ has $k-1$ distinct part sizes.
By minimality of the number of boxes, each distinct part size only occurs once in $\mu$.
Moreover, these distinct part sizes must be as small as possible.
Hence, $\mu$ is the ``staircase'' diagram $\mu = (k{-}1,k{-}2,\dots,2,1)$.
This shows that $\mu$ has $1+2+\cdots+(k-1)=\frac{k(k-1)}{2}$ boxes. 

Since $\lambda$ also has $k$ addable cells, and $\mu$ has the minimal possible number of boxes among all partitions with $k$ addable cells, we have $n{-}1 \geq \frac{k(k-1)}{2}$. 
By solving this quadratic inequality for $k$, we find that $k = |\AC(\lambda)| \leq \frac{1+\sqrt{8n-7}}{2}$. 
As the number of addable cells must be an integer, we can take the floor of this upper bound, which concludes our proof.
\end{proof}

\begin{lemma}\label{lemma: determinant of a Cauchy matrix}[Cauchy determinant, \parencite[Eq.~4]{schechter}]
    Let $C$ be an $n \times n$ Cauchy matrix over a field $k$, meaning that the entries of $C$, for $i, j \in [n]$, are given by
    \begin{equation*}
        [C]^{i}_j = \frac{1}{x_i-y_j},
    \end{equation*}
    with $x_i, y_i \in k$ injective sequences and $x_i - y_j \neq 0$ for all $i,j \in [n]$.
    Then, the determinant of $C$ is given by 
    \begin{equation*}
        \det(C) = \frac{\prod_{i=2}^{n}\prod_{j = 1}^{i-1}(x_i - x_j)(y_j - y_i)}{\prod_{i = 1}^{n}\prod_{j=1}^{n}(x_i - y_j)}.
    \end{equation*}
\end{lemma}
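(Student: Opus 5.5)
The statement is the classical Cauchy determinant formula. I will prove it by induction on $n$, using row and column elimination. The case $n=1$ is immediate, since $\det(C)=1/(x_1-y_1)$, the numerator is an empty product, and the denominator is $x_1-y_1$.

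For the induction step, first subtract the last row from each row $i<n$. Entrywise,
\begin{equation*}
\frac{1}{x_i-y_j}-\frac{1}{x_n-y_j}=\frac{x_n-x_i}{(x_i-y_j)(x_n-y_j)}.
\end{equation*}
This lets me pull the factor $(x_n-x_i)$ out of row $i$ for each $i<n$, and the factor $1/(x_n-y_j)$ out of column $j$ for every $j$. Row $n$ then consists entirely of ones, and rows $i<n$ have entries $1/(x_i-y_j)$.

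Next, subtract the last column from each column $j<n$. In row $n$ this leaves $(0,\dots,0,1)$. For $i<n$ the entries become
\begin{equation*}
\frac{1}{x_i-y_j}-\frac{1}{x_i-y_n}=\frac{y_j-y_n}{(x_i-y_j)(x_i-y_n)}.
\end{equation*}
Now pull $(y_j-y_n)$ out of column $j$ for each $j<n$, and $1/(x_i-y_n)$ out of row $i$ for each $i<n$. Expanding along the last row gives
\begin{equation*}
\det C_n=\frac{\prod_{i<n}(x_n-x_i)(y_i-y_n)}{\prod_{j\le n}(x_n-y_j)\prod_{i<n}(x_i-y_n)}\,\det C_{n-1},
\end{equation*}
where $C_{n-1}$ is the Cauchy matrix built from $x_1,\dots,x_{n-1}$ and $y_1,\dots,y_{n-1}$. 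Its sequences are still injective with nonzero differences, so the induction hypothesis applies to it.

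The only step needing care is the bookkeeping. In the numerator, the new factors are exactly the $i=n$ term $\prod_{j=1}^{n-1}(x_n-x_j)(y_j-y_n)$ of the claimed formula, with the index roles matching the statement's ordering. In the denominator, the new factors $\prod_{j\le n}(x_n-y_j)\prod_{i<n}(x_i-y_n)$ supply precisely the last row and last column of the product $\prod_{i,j}(x_i-y_j)$, with $(x_n-y_n)$ counted once. Multiplying by the formula for $\det C_{n-1}$ then gives the claim. The hypothesis $x_i\neq y_j$ guarantees that every division is legitimate.
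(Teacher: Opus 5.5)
Your proof is correct. The paper gives no proof of this lemma. It imports the formula from Schechter's Eq.~4 and uses it only to read off the signs of the minors $\det(C_{\lambda,k})$ when showing $\det(\mathrm{A}_\lambda)=\pm 1$. So the comparison is between a citation and your self-contained argument.

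Your row-then-column elimination with induction on $n$ is the standard derivation, and the bookkeeping checks out:
\begin{itemize}
\item the extracted numerator factors $\prod_{j<n}(x_n-x_j)(y_j-y_n)$ are exactly the $i=n$ slice of the claimed product;
\item the extracted denominator factors $\prod_{j\le n}(x_n-y_j)\prod_{i<n}(x_i-y_n)$ are the last row and last column of $\prod_{i,j}(x_i-y_j)$, with $(x_n-y_n)$ counted once.
\end{itemize}

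One small refinement is worth noting. Pulling $(x_n-x_i)$ and $(y_j-y_n)$ out of rows and columns uses multilinearity, not division. Your argument therefore proves the identity under the sole assumption $x_i\neq y_j$, which is needed just to define the entries. Injectivity only ensures both sides are nonzero: if two $x$'s or two $y$'s coincide, both sides vanish. In the paper's application, nonvanishing follows anyway from the strict interlacing $\cont(a_1)>\cont(r_1)>\cont(a_2)>\dots$.

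The citation buys brevity. Your argument buys independence from the reference, at the cost of a few lines.
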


\AdetLemma*

\begin{proof}
    Let $e_\lambda = |\AC(\lambda)| = \dim(\mathrm{A}_\lambda)$.
    Index the rows of $\mathrm{A}_\lambda$ by $a \in \AC(\lambda)$ and its columns by $r \in \{\star\} \cup \RC(\lambda)$, as in \cref{def: embedding operation}.
    Define the diagonal matrices $D_\lambda^{(1)}$ and $D_\lambda^{(2)}$, and the $e_\lambda \times (e_\lambda-1)$ matrix $C_\lambda$, by
    \begin{align}
        \left[D_\lambda^{(1)}\right]^a_a &\defeq \sqrt{\frac{d_{\lambda+a}}{n d_\lambda}}, \label{eq: A decomp D1}\\
        \left[D_\lambda^{(2)}\right]^r_r &\defeq
        \begin{cases}
            1 & \text{if } r=\star,\\
            \sqrt{\frac{(n-1)d_{\lambda-r}}{d_\lambda}} & \text{if } r \in \RC(\lambda),
        \end{cases} \label{eq: A decomp D2}\\
        \left[C_\lambda\right]^a_r &\defeq \frac{1}{\cont(a)-\cont(r)} \quad \text{for } r \in \RC(\lambda). \label{eq: A decomp C}
    \end{align}
    Direct substitution into \cref{def: embedding operation} gives
    \begin{equation*}
        \mathrm{A}_\lambda = D_\lambda^{(1)} \left[\bm{1}, C_\lambda \right]D_\lambda^{(2)}.
    \end{equation*}
    Therefore,
    \begin{equation*}
        \det \left(\mathrm{A}_\lambda\right) = \det \left(D_\lambda^{(1)}\right)\det\left(\left[\bm{1}, C_\lambda \right]\right) \det \left(D_\lambda^{(2)}\right).
    \end{equation*}
    As $\mathrm{A}_\lambda$ is unitary, there exists a $\phi \in \mathbb{R}$ such that
    \begin{equation*}
        \det \left(\mathrm{A}_\lambda\right) = e^{i\phi}.
    \end{equation*}
    By \cref{eq: A decomp D1}, the entries of $D_\lambda^{(1)}$ are square roots of ratios of positive dimensions divided by $n>0$, so all its diagonal entries are real and positive.
    Similarly, \cref{eq: A decomp D2} shows that the diagonal entries of $D_\lambda^{(2)}$ are real and positive.
    By \cref{eq: A decomp C}, the entries of $C_\lambda$ are real, but not necessarily positive.
    Thus, $\mathrm{A}_\lambda$ is real, and its determinant can only be $1$ or $-1$.
    Moreover, $[\bm{1},C_\lambda]$ is the only factor in the decomposition of $\mathrm{A}_\lambda$ that can have a negative determinant, so
    \begin{equation*}
        \det\left(\mathrm{A}_\lambda\right) = \sign\left(\det\left(\left[\bm{1},C_\lambda\right]\right)\right).
    \end{equation*}
    Index the addable and removable cells from northeast to southwest along the outer boundary of $\lambda$.
    The addable and removable corners alternate, and their contents strictly decrease along this boundary, so
    \begin{equation*}
        \cont(a_1)>\cont(r_1)>\cont(a_2)>\cont(r_2)>\dots>\cont(r_{e_\lambda-1})>\cont(a_{e_\lambda}).
    \end{equation*}
    Hence, for $i \in [e_\lambda]$ and $j \in [e_\lambda-1]$,
    \begin{equation*}
        \left[C_\lambda\right]^{i}_{j} = \frac{1}{\cont(a_i) - \cont(r_j)}.
    \end{equation*}
    Expanding the determinant along the first column gives
    \begin{equation}\label{eq: cofactor expansion for det A}
        \det\left(\left[\bm{1}, C_\lambda\right]\right) = \sum_{k = 1}^{e_\lambda}(-1)^{k+1}\det\left(C_{\lambda,k}\right),
    \end{equation}
    where $C_{\lambda,k}$ denotes the matrix $C_\lambda$ after removal of the $k$-th row, corresponding to $a_k$.
    
    We will now show that the summands in \cref{eq: cofactor expansion for det A} all have an equal sign, depending on $e_\lambda$. 
    Define a function $f_k : [e_\lambda - 1] \to [e_\lambda]$ as
    \begin{equation*}
        f_k (i) = \begin{cases}
            i &\text{if } i < k,\\
            i + 1 &\text{if } i \geq k.
        \end{cases}
    \end{equation*}
    Note that $f_k$ is non-decreasing, so that $i > j \iff f_k(i) > f_k(j)$.
    We use \cref{lemma: determinant of a Cauchy matrix} to write
    \begin{equation}\label{eq: det Ck}
        \det(C_{\lambda,k}) = \frac{\prod_{i=2}^{e_\lambda - 1}\prod_{j = 1}^{i-1}\left(\cont\left(a_{f_k(i)}\right) - \cont\left(a_{f_k(j)}\right)\right)\left(\cont\left(r_j\right) - \cont\left(r_i\right)\right)}{\prod_{i = 1}^{e_\lambda - 1}\prod_{j=1}^{e_\lambda - 1}\left(\cont\left(a_{f_k(i)}\right) -\cont\left(r_j\right)\right)}.
    \end{equation}
    First, note that the sign of the denominator on the right-hand side of \cref{eq: det Ck} equals the sign of the product of all entries of $C_{\lambda,k}$, which is a subset of the entries of $C_\lambda$.
    By the ordering above, for $i \in [e_\lambda]$ and $j \in [e_\lambda - 1]$,
    \begin{equation*}
        \sign\left(\left[C_{\lambda}\right]^i_j\right) = \sign\left(\frac{1}{\cont(a_i) - \cont(r_j)}\right) = \sign\left(\cont(a_i) - \cont(r_j)\right) = \begin{cases}
            1 &\text{if }  i \leq j,\\
            -1 &\text{if } i > j.
        \end{cases}
    \end{equation*}
    In other words, all entries on and above the main diagonal are positive, and all entries under the main diagonal are negative. 
    We find that the total number of negative entries equals $\frac{e_{\lambda}^2 -e_\lambda}{2}$. 
    If we remove row $k$, we remove a total of $k-1$ negative entries, leaving us with a total of $\frac{e_{\lambda}^2 -e_\lambda}{2} - k + 1$ negative entries.
    Hence, the sign of the denominator is
    \begin{equation}\label{eq: sign of denominator in proof sign det A}
        \sign\left(\prod_{i = 1}^{e_\lambda - 1}\prod_{j=1}^{e_\lambda - 1}\left(\cont\left(a_{f_k(i)}\right) -\cont\left(r_j\right)\right)\right) = (-1)^{\left(\frac{e_\lambda^2 - e_\lambda}{2}\right)}(-1)^{\left(- k + 1\right)}.
    \end{equation}
    First, consider the cases where $e_\lambda$ is even. Then, $e_\lambda = 2a$ for some $a \in \mathbb{Z}_{>0}$, and $\frac{e_\lambda^2 - e_\lambda}{2} = 2a^2 - a$, which is only even if $a$ is even. 
    Second, if $e_\lambda$ is odd, then $e_\lambda = 2a + 1$ for some $a \in \mathbb{Z}_{>0}$ because $e_\lambda > 1$. 
    We may rewrite the number of negative entries in $C_\lambda$ as $2a^2 + a$, which is even if and only if $a$ is even. 
    We find that 
    \begin{equation*}
        (-1)^{\left(\frac{e_\lambda^2 - e_\lambda}{2}\right)} = \begin{cases}
            1 &\text{if } e_\lambda \mod 4 \in \{0, 1\},\\
            -1 &\text{if } e_\lambda \mod 4 \in \{2, 3\}.
        \end{cases}
    \end{equation*}
    For the other term on the RHS of \cref{eq: sign of denominator in proof sign det A}, we can rewrite
    \begin{equation*}
        (-1)^{\left(- k + 1\right)} = (-1)^{k+1}.
    \end{equation*}
    
    For the terms in the numerator of the RHS of \cref{eq: det Ck}, note that $i > j$ and therefore $f_k(i) > f_k(j)$. 
    By the same ordering,
    \begin{align*}
        \sign\left(\cont\left(a_{f_k(i)}\right) - \cont(a_{f_k(j)})\right) &= -1 & \sign\left(\cont\left(r_{j}\right) - \cont(r_{i})\right) &= 1.
    \end{align*}
    Hence, the number of negative factors in the numerator is equal to the number of pairs $(i,j)$ in our product, which is 
    \begin{equation*}
        |\{(i,j) \in \mathbb{Z}^2\,|\, 2 \leq i \leq e_\lambda -1, 1 \leq j \leq i-1 \}| = \frac{(e_\lambda-1)^2 - (e_\lambda-1)}{2}.
    \end{equation*}
    Hence, for the sign of the numerator, we find that 
    \begin{align*}
        &\sign\left(\prod_{i=2}^{e_\lambda - 1}\prod_{j = 1}^{i-1}\left(\cont\left(a_{f_k(i)}\right) - \cont\left(a_{f_k(j)}\right)\right)\left(\cont\left(r_j\right) - \cont\left(r_i\right)\right)\right) \\
        &\qquad\qquad\qquad\qquad\qquad\qquad\qquad\qquad\qquad\qquad= \begin{cases}
            1 &\text{if } e_\lambda \mod 4 \in \{1, 2\},\\
            -1 &\text{if } e_\lambda \mod 4 \in \{3, 0\}.
        \end{cases}
    \end{align*}
    For all summands in \cref{eq: cofactor expansion for det A}, we find
    \begin{align*}
        &\sign\left((-1)^{k+1} \det(C_{\lambda,k})\right) \\
        &\qquad\qquad\qquad\qquad = (-1)^{2(k+1)}\underbrace{\left(
        \begin{cases} 1 &\text{if } e_\lambda \mod 4 \in \{1,2\},\\
        -1 & \text{if } e_\lambda \mod 4 \in \{3,0\};
        \end{cases}\right)}_{\text{from numerator of Cauchy determinant}}\underbrace{\left(
        \begin{cases} 1 &\text{if } e_\lambda \mod 4 \in \{0,1\},\\
        -1 & \text{if } e_\lambda \mod 4 \in \{2,3\};
        \end{cases}\right)}_{\text{from denominator of Cauchy determinant}}\\
        & \qquad\qquad\qquad\qquad = \begin{cases}1 &\text{if } e_\lambda \text{ is odd},\\
        -1 &\text{if } e_\lambda \text{ is even}.\end{cases}
    \end{align*} 
    Going back to \cref{eq: cofactor expansion for det A}, we use that all terms in our summation have an equal sign depending on $e_\lambda$ to rewrite
    \begin{equation*}
        \sign\left(\det\left(\left[\bm{1}, C_\lambda\right]\right)\right) = \begin{cases}1 &\text{if } e_\lambda \text{ is odd},\\
        -1 &\text{if } e_\lambda \text{ is even}.\end{cases}
    \end{equation*}
    Finally, we prove our lemma by using that $\det{A_\lambda} = \sign\left(\det\left(\left[\bm{1}, C_\lambda\right]\right)\right)$.
\end{proof}

\begin{lemma}\label{lemma: AC and RC of lambda - r}
    Let $m \in \mathbb{N}, m > 1$. Let $\lambda = (\lambda_1, \hdots, \lambda_t)\in \Lambda_m$, and let $r = (x_r, y_r) \in \mathrm{RC}(\lambda)$. Note that $r$ is the rightmost box of its row, which means that $x_r = \lambda_{y_{r}+1}-1$. 
    Define the truth variables $x_1$ and $x_2$ in $\{0, 1\}$ as
    \begin{align*}
        x_1 &:= \begin{cases}
            1 & \text{ if }\lambda_{y_r+1} = \lambda_{y_r+2} +1 \lor y_r = t-1,\\
            0 & \text{ otherwise};
        \end{cases}& 
        x_2 &:= \begin{cases}
            1 & \text{ if }\lambda_{y_r} = \lambda_{y_r+1} \lor y_r = 0,\\
            0 & \text{ otherwise}.
        \end{cases}
    \end{align*}
    Then 
    \begin{align}
        \AC(\lambda) \setminus \AC(\lambda - r) &= \begin{cases}
        \{r + (1,0)\} &\text{if }x_1 = 0\text{ and } x_2 = 0,\\
        \emptyset 
        &\text{if }x_1 = 0\text{ and } x_2 = 1,\\ 
        \{r+(0,1), r +(1,0)\} &\text{if }x_1 = 1\text{ and } x_2 = 0,\\
        \{r + (0,1)\} &\text{if }x_1 = 1\text{ and } x_2 = 1;
        \end{cases}\\
        \AC(\lambda - r) \setminus \AC(\lambda) &= \{r\};\\
        \RC(\lambda) \setminus \RC(\lambda - r) &= \{r\};\\
        \RC(\lambda -r) \setminus \RC(\lambda) &= \begin{cases}
        \{r - (1,0)\} &\text{if }x_1 = 0\text{ and } x_2 = 0,\\
        \{r - (1,0), r - (0,1)\} 
        &\text{if }x_1 = 0\text{ and } x_2 = 1,\\ 
        \emptyset &\text{if }x_1 = 1\text{ and } x_2 = 0,\\
        \{ r - (0,1)\} &\text{if }x_1 = 1\text{ and } x_2 = 1.
        \end{cases}
    \end{align}
\end{lemma}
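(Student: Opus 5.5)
The plan is to argue directly from the row-length characterizations recalled just before \cref{lemma: AC of lambda - r}. A partition $\lambda=(\lambda_1,\dots,\lambda_t)$ has an addable cell in row $k$ iff $\lambda_{k-1}>\lambda_k$, with the convention $\lambda_0=\infty$. It has a removable cell in row $k$ iff $\lambda_k>\lambda_{k+1}$, with $\lambda_{t+1}=0$. The addable cell sits in column $\lambda_k+1$ and the removable one in column $\lambda_k$. Let $y=y_r$ be the row of $r$. Then $\lambda-r$ agrees with $\lambda$ in every row except row $y$, whose length drops from $\lambda_y$ to $\lambda_y-1$.

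\emph{Step 1 (locality).} The addability condition for row $k$ involves only $\lambda_{k-1},\lambda_k$. So only rows $k\in\{y,y+1\}$ can change addable status. Likewise removability in row $k$ involves $\lambda_k,\lambda_{k+1}$, so only rows $k\in\{y-1,y\}$ can change removable status. In every other row the addable/removable cell, if any, is literally the same cell in $\lambda$ and $\lambda-r$. This reduces all four set differences to inspecting at most two rows each.

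\emph{Step 2 (the two ``easy'' identities).} In row $y$, $\lambda-r$ has addable cell $(\lambda_y,y)=r$, because $\lambda_{y-1}\ge\lambda_y>\lambda_y-1$. This cell is not addable in $\lambda$, since it lies in $\lambda$. Row $y+1$ of $\lambda-r$ can only lose addability. Together these give $\AC(\lambda-r)\setminus\AC(\lambda)=\{r\}$. Symmetrically, $r$ is removable in $\lambda$ but not a cell of $\lambda-r$. Row $y-1$ can only gain removability. Together these give $\RC(\lambda)\setminus\RC(\lambda-r)=\{r\}$.

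\emph{Step 3 (case analysis for the other two).} Addable cells of $\lambda$ that are lost come from two sources:
\begin{itemize}
\item Row $y$: $\lambda$ has the addable cell $(\lambda_y+1,y)=r+(1,0)$ iff $y=1$ or $\lambda_{y-1}>\lambda_y$, and in $\lambda-r$ row $y$'s addable cell is $r$ instead.
\item Row $y+1$: the addable cell $(\lambda_{y+1}+1,y+1)$ (which always exists since $\lambda_y>\lambda_{y+1}$) disappears iff $\lambda_{y+1}=\lambda_y-1$, and in that case it equals $r+(0,1)$.
\end{itemize}
Dually, removable cells of $\lambda-r$ that are new come from two sources:
\begin{itemize}
\item Row $y$: $\lambda-r$ has the removable cell $(\lambda_y-1,y)=r-(1,0)$ iff $\lambda_y-1>\lambda_{y+1}$.
\item Row $y-1$: a new removable cell appears iff $\lambda_{y-1}=\lambda_y$, and it is $(\lambda_{y-1},y-1)=r-(0,1)$.
\end{itemize}
Thus each set difference is governed by two Boolean conditions: ``$\lambda_{y+1}=\lambda_y-1$ (or row $y+1$ is empty and $\lambda_y=1$)'' and ``$\lambda_{y-1}=\lambda_y$ (or $y=1$)''. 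The four cases of the statement then follow by tabulation.

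The main obstacle is bookkeeping rather than mathematics. The statement's definitions of $x_1,x_2$ and the claim $x_r=\lambda_{y_r+1}-1$ use a shifted (apparently $0$-based) row indexing together with boundary conventions ($y_r=t-1$, $y_r=0$). I would first fix a translation, reading $\lambda_{y_r+1}$ as the length of the row of $r$, so that $x_1$ encodes $\lambda_{y+1}=\lambda_y-1$ and $x_2$ encodes $\lambda_{y-1}=\lambda_y$ or $y=1$. I would then check that each of the four rows of both displayed case tables matches Step 3, paying particular attention to the boundary rows (first row, last nonempty row) where the conventions $\lambda_0=\infty$ and $\lambda_{t+1}=0$ must be applied. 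If some entry does not match under the most natural reading, I would adjust the interpretation of $x_1,x_2$ so that it is consistent with Step 3, which is the invariant content.
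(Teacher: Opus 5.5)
Your Steps 1--3 follow the paper's route: only rows $y-1,y,y+1$ are affected, and then each of the two flags is analysed separately. Your row-by-row analysis in Step 3 is correct, and more careful than the paper's figure-based argument. The gap is the final claim that ``the four cases then follow by tabulation''. With the conditions you wrote down, they do not.

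\textbf{The $x_2$ clause.} Your second condition, ``$\lambda_{y-1}=\lambda_y$ (or $y=1$)'', contradicts your own Step 3. For $y=1$ the cell $r+(1,0)$ is always addable in $\lambda$ and is lost. There is also no row $y-1$ in which $r-(0,1)$ could appear. So the top row realizes the $x_2=0$ pattern of the tables, not $x_2=1$.

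\textbf{The $x_1$ clause.} For $x_1$ you silently replaced the statement's clause ``$\lor\, y_r=t-1$'' by ``row $y+1$ empty and $\lambda_y=1$'', which is the right thing to do. You should say that the literal clause is wrong. It forces $x_1=1$ in the last row even when that row has length at least $2$. By Step 3, in that case nothing is lost from the row below and $r-(1,0)$ becomes removable, which is the $x_1=0$ pattern.

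\textbf{Counterexample.} This is not an indexing artifact. For $\lambda=(2)$ the literal definitions give $x_1=x_2=1$. The tables then predict $\AC(\lambda)\setminus\AC(\lambda-r)=\{r+(0,1)\}$ and $\RC(\lambda-r)\setminus\RC(\lambda)=\{r-(0,1)\}$, the latter a cell above the first row. The true sets are $\{r+(1,0)\}$ and $\{r-(1,0)\}$. So the lemma is false as literally stated, and the paper's own proof overlooks this as well.

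\textbf{Fix.} Rather than hedging about ``adjusting the interpretation'', state the corrected flags explicitly and prove that version. Let $y$ be the (1-based) row of $r$.
\begin{itemize}
\item Set $x_1=1$ iff $\lambda_{y+1}=\lambda_y-1$, with the convention $\lambda_{t+1}=0$ and no extra clause.
\item Set $x_2=1$ iff $y\ge 2$ and $\lambda_{y-1}=\lambda_y$ (equivalently, use the convention $\lambda_0=\infty$).
\end{itemize}
Your Step 3 then says exactly the following.
\begin{itemize}
\item $r+(0,1)$ is lost iff $x_1=1$.
\item $r-(1,0)$ becomes removable iff $x_1=0$.
\item $r+(1,0)$ is lost iff $x_2=0$.
\item $r-(0,1)$ becomes removable iff $x_2=1$.
\end{itemize}
By Step 1 these are the only possible contributions to the two nontrivial set differences. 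All eight table entries therefore follow at once.
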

\begin{proof}
If we remove box $r$ from $\lambda$ to obtain $\lambda - r$, only the entry $\lambda_{y_r +1}$ changes. 
From the definition of addable and removable cells in \cref{def: AC and RC implicit}, recall that the existence of addable and removable cells at a certain row depends on the difference in length of the row above it and the row below it.
As removing $r$ only changes the length of row $y_r$ (entry $\lambda_{y_{r}+1}$), removing $r$ will only influence the addable and removable cells at the rows $y_{r}-1$, $y_r$, and $y_r +1$.

First, consider $x_1$. 
\cref{fig:annoying prove with cells x1} illustrates the two possibilities. 
It becomes clear that when $x_1 = 1$, we have $r + (0,1) \in \mathrm{AC}(\lambda)$, but $r + (0,1) \notin \mathrm{AC}(\lambda - r)$. 
When $x_1 = 0$, we have $r - (1,0) \notin \mathrm{RC}(\lambda)$, but $r - (1,0) \in \mathrm{RC}(\lambda)$. 
For the rest, the addable cells and removable cells in all rows below $y_r$ are unchanged.

\begin{figure}[H]
    \centering \includegraphics{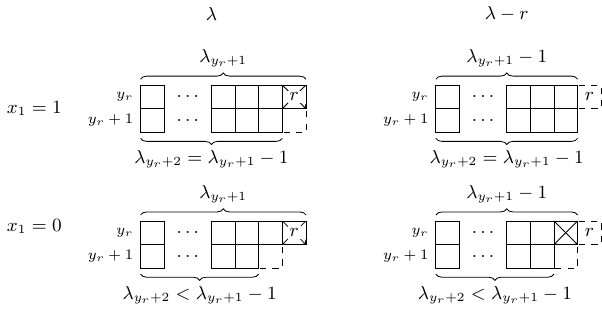}
    \caption{Influence of $x_1$ on $\lambda$ and $\lambda-r$. The brackets indicate the number of boxes in a row. A box with a cross indicates a removable box. A dashed box indicates an addable box.}
    \label{fig:annoying prove with cells x1}
\end{figure}
\noindent 
Second, consider $x_2$. The two possibilities are drawn in \cref{fig:annoying prove with cells x2}. The figure shows that when $x_2 = 1$, we have $r - (0,1) \in \mathrm{RC}(\lambda - r)$, but $r - (0,1) \notin \mathrm{AC}(\lambda)$. When $x_1 = 0$, we have $r - (1,0) \notin \mathrm{RC}(\lambda)$, but $r - (1,0) \in \mathrm{RC}(\lambda)$. For the rest, the addable cells and removable cells in all rows below $y_r$ are unchanged.
\begin{figure}[H]
    \centering
    \includegraphics{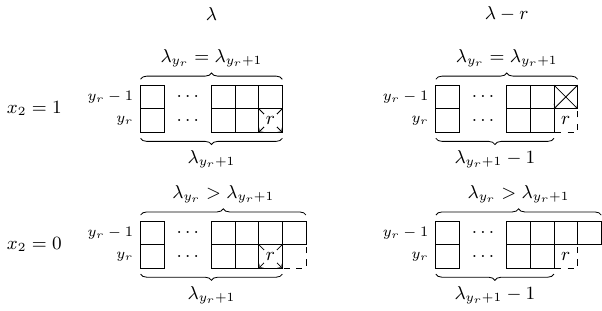}
    \caption{Influence of $x_2$ on $\lambda$ and $\lambda-r$. The brackets indicate the number of boxes in a row. A box with a cross indicates a removable box. A dashed box indicates an addable box.}
    \label{fig:annoying prove with cells x2}
\end{figure}
Lastly, it is clear that $r$ itself is a removable cell of $\lambda$ but not of $\lambda - r$, because $r$ is not a cell of $\lambda$. The other way around, $r$ is an addable cell of $\lambda-r$, but not of $\lambda$.
\end{proof}

\section{On coherent arithmetic}
We will now provide a list of coherent arithmetic operations used in our algorithms, and for each a short discussion of their complexities.
In this, we list complexities for application on an $N$-dimensional register, implemented using $n = \lceil\log_2(n)\rceil$ qubits.
The coherent arithmetic operations in our algorithms are
\begin{itemize}
    \item Addition and subtraction of two integer-valued registers, or of one register and a constant, can be implemented, for example, using the construction of \cite{DraperAdder} with $\mathcal{O}(n^2)$ gates, the same asymptotic depth, and no ancillary qubits.
    The implementations in \cref{fig: theorem first induction full circuit relative paths,fig: circuit for irreps of generators on path register} use this operation.
    \item Finding multiplicative inverses (\cref{fig: circuit for irreps of generators on path register}), as implemented by \cite{qrisp}.
    \item Finding square roots, as implemented by \cite{sqrt}.
\end{itemize}

\printbibliography
\end{document}